\newcommand{\notocsection}[1]{
\refstepcounter{section}
\section*{\thesection \quad #1}}
\newcommand{\notocsubsection}[1]{
\refstepcounter{subsection}
\subsection*{\thesubsection \quad #1}}
\newcommand{\notocsubsubsection}[1]{
\refstepcounter{subsubsection}
\subsection*{\thesubsubsection \quad #1}}
\renewcommand{\l@section}{%
  \@dottedtocline{1}{1em}{2em}}
\renewcommand{\l@subsection}{%
  \@dottedtocline{2}{4em}{3em}}%
\newcommand{\iid}{\overset{\text{i.i.d.}}{\sim}}
\newcommand{\tauhat}{\widehat{\tau}}
\newcommand{\NN}{\mathbb{N}}
\newcommand{\PP}{\mathbb{P}}
\newcommand{\RR}{\mathbb{R}}
\newcommand{\EE}{\mathbb{E}}
\newcommand{\N}{\text{N}}
\newcommand\ind{\mathbbm{1}}
\newcommand\normm[1]{\left \lVert #1 \right \rVert}
\newcommand\normmo[1]{\left \lVert #1 \right \rVert_{\Psi_2}}
\newcommand\normmop[1]{\left \lVert #1 \right \rVert_{\mathrm{op}}}
\newcommand{\Tr}{\mathrm{Tr}}
\newcommand{\PSD}{\mathrm{PSD}}
\newcommand{\dyn}{}
\theoremstyle{plain}
\newtheorem{theorem}{Theorem}
\newtheorem{proposition}{Proposition}
\newtheorem{lemma}{Lemma}
\theoremstyle{plain}
\newtheorem{assumption}{Assumption}
\newtheorem{example}{Example}
\begin{document}

\title{A grid-based methodology for fast online changepoint detection}

\author{Per August Jarval Moen$^{1}$}

\date{$^1$Department of Mathematics, University of Oslo}

\maketitle

\begin{abstract}
We propose a grid-based methodology for online changepoint detection that allows offline changepoint tests to be applied to sequentially observed data. The methodology achieves low update and storage costs by testing for changepoints over a dynamically updating grid of candidate changepoint locations. For a broad class of test statistics, including those based on empirical averages and certain likelihood ratios, we show that the resulting online procedure has update and storage costs that grow at most logarithmically with the sample size. We further show that finite-sample power guarantees for the offline test translate directly into non-asymptotic upper bounds on the detection delay, under a mild robustness assumption.
Building upon the methodology, we construct methods for detecting changes in the mean and in the covariance matrix of multivariate data, and prove near-optimal non-asymptotic upper bounds on their detection delays. The effectiveness of the methodology is supported by a simulation study, where we compare its performance for detecting mean changes with that of state-of-the-art online methods. To illustrate its practical applicability, we use the methodology to detect structural changes in currency exchange rates in real time.\end{abstract}

\notocsection{Introduction}\label{sec:intro}
The technological advancements of recent decades have resulted in an unprecedented explosion of data collection and availability, presenting novel challenges and opportunities. Among these is the problem of determining whether the distribution of a data sequence is constant, or if it is changing. As distributional changes may signal the onset of new regimes, or possibly anomalous periods, changepoint detection is a field of significant interest. In particular, a considerable volume of research has been devoted to \textit{offline} changepoint detection, where data sets of fixed size are scanned for changepoints retrospectively. For examples of recent works, see for instance \cite{pelt}, \citet{wbs}, \citet{inspect}, \citet{covariancecusum}, \citet{kovacs2022}, and \citet{pilliat}.

Due to the ubiquitous adoption of sensor-based technologies in scientific, industrial and residential settings, data nowadays are often collected in the form of streams, in which the data arrive sequentially over time. To detect changepoints in such data  necessitates \textit{online} algorithms---methods capable of scanning streaming data for changepoints in real-time. Online methods must operate under strict computational constraints, and the update cost (the computational cost of processing a new data point) and the storage cost (the amount of data stored in memory) should be minimal. Moreover, they should guarantee control over false alarms and detect changepoints with as small a delay as possible. In time-critical applications such as condition monitoring \citep{wind}, health monitoring \citep{health} and finance \citep{finance}, timely and accurate identification of changepoints can lead to significant improvements in efficiency, safety and informed decision-making. Still, online changepoint detection is arguably less explored than its offline counterpart, and is receiving increasing attention.

The study of online changepoint detection dates back at least to \citet{page1954}, who proposed a likelihood‑ratio based method for detecting a change in distribution in univariate data. Page’s method can be adapted to a variety of parametric models, enjoys constant‑order update and storage costs, and is optimal in a certain asymptotic sense \citep{pageoptimal}. Its main limitation is the requirement that both the pre‑ and post‑change distributions be known in advance. Several more recent contributions also utilize the likelihood ratio, similar to Page's method, although under less stringent assumptions. For example, Generalised Likelihood‑Ratio (GLR) procedures \citep[e.g.][]{lai_2010} maximise the likelihood ratio over the parameters of the model to allow for unknown pre‑ and post‑change parameters, but they typically incur update and storage costs that grow linearly (or faster) in the number of observations. The ocd method of \citet{chen_high-dimensional_2022} detects changes in the mean of high-dimensional data by applying Page's CUSUM to each individual series over a dyadic grid of possible change magnitudes, yielding constant-order update and storage costs. The method of \cite{mei2010} is of a similar flavour. Other contributions, such as those of \cite{xs2013}, \cite{chan2017} and \cite{anote}, do not build upon the likelihood ratio, but instead utilize test statistics tailored for the mean-change model. Most comparable to the methodology to be introduced in this paper is the recent work of \cite{computationalgeometry}, whose method MdFOCuS leverages computational geometry to compute the GLR exactly, with poly-logarithmic update and storage costs in the sample size for exponential family models (see also \citealt{focus}). Whereas MdFOCuS is designed specifically to compute the GLR, our methodology can wrap a broader class of test statistics. 

As the above works highlight, designing online methods that are both computationally efficient and endowed with strong statistical guarantees is a delicate task. A seemingly natural alternative is to re‑apply an offline changepoint detection method each time new data arrive. However, this strategy becomes quickly impractical: since the offline method must be re-run whenever new data arrive, the per-update processing time may quickly exceed the data's rate of arrival, and memory resources will be exhausted (see also the discussion in \citealt{chen_high-dimensional_2022}). Still, the numerous offline changepoint tests in the statistical literature should ideally be adaptable for online purposes. In this paper, we propose a general methodology that allows the user to apply offline changepoint tests within an online framework to monitor for changepoints in real-time. The main novelty of the proposed methodology is to scan for changes over a dynamically updating geometric grid of candidate changepoint locations. The grid is designed to have update and storage costs that grow at most logarithmically in the sample size, and we present specific conditions for when this is achieved. Moreover, by exploiting the geometric spacing of the proposed grid, we show that finite-sample power guarantees for the test translate to non-asymptotic upper bounds on the detection delay, as long as the test is robust to constant-order misspecification of the changepoint location. For the special cases of a change in the mean or in the covariance matrix of multivariate data, our methodology attains provably near-optimal performance in terms of the delay between the occurrence and detection of a changepoint. In our simulations, the proposed methodology is competitive in terms of both statistical and computational performance.

The paper is organised as follows. In Section \ref{sec:unimean} we study a univariate change-in-mean problem to illustrate the main idea underlying the proposed methodology, and we present a fast online changepoint detection method with minimax rate optimal performance. In Section \ref{sec:general}, we outline the general methodology and state conditions under which logarithmic update and storage costs are achieved, analyse the statistical properties of the resulting method, and provide examples. 
In Section \ref{sec:theoryspec} we rigorously analyse the performance of the methodology in two special cases, namely for detecting (i) a multivariate change in the mean, and (ii) a multivariate change in the covariance matrix. In both cases, the methodology has near-optimal theoretical statistical performance.  
In Section \ref{sec:simulations} we investigate the performance of the methodology for detecting changes in the mean vector in a simulation study, comparing it with state-of-the-art methods. In Section \ref{sec:realdata} we apply the methodology to detect covariance changes in currency exchange rate data in real-time. Further theoretical results, discussions, simulation studies, as well as proofs of the main results, are provided in the supplementary material.

\notocsubsection{Problem formulation and notation}
We now formalise the online changepoint problem. Focusing on the \textit{detection} of changepoints, we restrict attention to models with a single changepoint, since methods for such models can simply be restarted after a changepoint is detected. We remark that the online changepoint problem can naturally be extended to include e.g.~post-detection inference \citep[see, e.g.][]{onlineinference}, although this is beyond the scope of the paper. 

Let $(Y_i)_{i\in \NN}$ be an infinite sequence of (possibly multivariate) independent random variables. The sequence of $Y_i$ is assumed to have a change in distribution at some unknown time index $\tau \in \NN \cup \{\infty\}$, so that $Y_i \sim P_1$ for $i \leq \tau$ and $Y_i \sim P_2$ for $i > \tau$, where $P_1, P_2 \in \mathcal{P}$ denote (possibly) unknown pre- and post-change distributions, respectively, and $\mathcal{P}$ is some class of distributions. The case where $\tau = \infty$ is interpreted as there being no changepoint, and otherwise $\tau$ is interpreted as the changepoint location. We remark that regression models with random covariates are captured by this setup, while regression models with fixed covariate vectors $x_i$ can also be included by replacing $P_1$ and $P_2$ above by $P_1(x_i)$ and $P_2(x_i)$, respectively.

For any value of $\tau$, we let $\PP_{\tau}$ denote the joint distribution of $(Y_i)_{i \in \NN}$, and we let $\EE_{\tau}$ denote the expectation under this distribution. An online changepoint detector is defined as an extended stopping time $\tauhat$ with respect to the natural filtration $( \mathcal{F}_t)_{t \in \NN}$ of the data. That is, $\tauhat$ takes values in $\NN\cup \{\infty\}$, and for any $t \in \NN$, the event $\{\tauhat = t\}$ is $\mathcal{F}_t$-measurable, where $\mathcal{F}_t$ is the $\sigma$-algebra generated by $Y_1, \ldots, Y_t$. Similarly to \citet{anote}, we define the \textit{false alarm probability} of $\tauhat$ to be
\begin{align}
\text{FA}(\tauhat) &= \PP_{\infty}(\tauhat <\infty)  ,
\end{align}
which is our chosen measure of false alarm frequency. Note that this measure  is stricter than the Average Run Length, which is used by e.g.~\citet{chen_high-dimensional_2022} and \citet{li_online_2023}. 
Whenever $\tau < \infty$, we define the \textit{detection delay} of $\tauhat$ to be the random variable $\tauhat - \tau$, which should ideally be as close to $1$ as possible (since $\tau$ is the index of the last pre-change observation). 

We let the \textit{update cost} $\mathrm{UC}(\tauhat, t)$ of $\tauhat$ denote the number of unit‑cost operations (scalar arithmetic or comparisons) required to decide whether the event $\{\tauhat = t\}$ has occurred, given that the event $\{\tauhat =  t-1\}$ has already been evaluated, with all computations performed at machine precision. We also define the \textit{storage cost} $\mathrm{SC}(\tauhat, t)$ of $\tauhat$ to be the number of scalars stored at time $t$ for the online changepoint detector to continue running indefinitely.
We emphasise that a low storage cost is important not only to prevent exhausting memory, but also because hierarchical memory structures (e.g., cache, RAM, disk) become much slower at larger scales \citep[see, e.g.,][]{memory}. Therefore, in practice, large storage costs can substantially affect processing times.

We use the following notation throughout the paper. For any $n \in \NN$ we let $[n] = \{1, \ldots, n\}$.
For any pair of integers $i,j$, we let $i \bmod j = i - j \lfloor i/j \rfloor$ denote the remainder when dividing $i$ by $j$.
For any pair $x,y$ of real numbers, we let $x \vee y = \max(x,y)$ and $x \wedge y = \min(x,y)$, and we let $\lfloor x \rfloor$ denote the largest integer no larger than $x$, and $\lceil x \rceil$ the smallest integer no smaller than $x$. Given a set $\mathcal{X}$ of real numbers, we define $x - \mathcal{X} = \left\{ x-s  \  : \ s \in \mathcal{X}\right\}$, and we let $\left| \mathcal{X}\right|$ denote the cardinality of $\mathcal{X}$.
For any vector $v \in \RR^p$ and $q \geq 1$, we let $v(j)$ denote the $j$-th entry of $v$, we let $\normm{v}_q = \{ \sum_{i=1}^p |v(i)|^{q}\}^{1/q}$ denote the $\ell_q$ norm of $v$, we let $\normm{v}_0$ denote the number of non-zero entries in $v$, and we let $\normm{v}_{\infty} = \max_{1\leq j \leq p} |v(j)|$ denote the $\ell_{\infty}$ norm of $v$. 
For any matrix $A$ we define the operator norm of $A$ as $\normmop{A} =\sup_{\|v\|_2 = 1} \normm{Av}_2$, i.e., its largest singular value.  
For any real-valued random variable $X$ with mean zero, we let $\normmo{X} = \inf \left\{ t>0: \ \EE \exp(X^2/t^2)\leq 2 \right\}$ denote the Orlicz-$\Psi_2$ norm (also known as the sub-Gaussian norm) of $X$, and we say that $X$ is sub-Gaussian if $\normmo{X}< \infty$.
For any $p$-dimensional random vector $X$ with mean zero, we define $\normmo{X} = \sup_{v \in \mathbb{S}^{p-1}} \normmo{v^\top X}$, where $\mathbb{S}^{p-1}$ denotes the unit sphere in $\RR^p$ with the Euclidean metric, and we also say that $X$ is sub-Gaussian if $\normmo{X}< \infty$.
Finally, we use the convention that $\inf \emptyset = \infty$.
\notocsection{Methodology}\label{sec:meth}
\notocsubsection{Univariate change in mean}\label{sec:unimean}
We first consider a univariate change-in-mean problem. Let the pre- and post-change distributions $P_1$ and $P_2$ be the distributions of $Z_1 + \mu_1, Z_2+\mu_2$, respectively, where $\mu_1, \mu_2 \in \RR$ are unknown, $\mu_1\neq \mu_2$, and the $Z_i$ are mean-zero sub-Gaussian random variables satisfying $\normmo{Z_1}, \normmo{Z_2} \leq \sigma$ for some known $\sigma>0$.

Suppose we have observed $Y_1, \ldots, Y_t$ for some $t\geq 2$, suspecting a change to have occurred $g \in \{1, \ldots, t-1\}$ time steps before the last observation (i.e., suspecting that $\tau = t-g$). To measure the discrepancy between the means of the data before and after the suspected changepoint, a widely used statistic in the offline changepoint literature is the CUSUM statistic \citep[see, e.g.,][]{cusumandoptimality}, given by
\begin{align}
    C_g^{(t)} &= \left\{  \frac{g}{t(t-g)}   \right\}^{1/2}  \sum_{i=1}^{t-g} Y_i - \   \left\{ \frac{t-g}{tg}\right\}^{1/2} \sum_{i=t-g+1}^t Y_i,  \label{cusum}
\end{align}
which measures the difference between the (weighted) empirical averages before and after candidate changepoint at location $t-g$.  A natural test statistic for a change in mean having occurred $g$ time steps before the last observation is therefore given by 
\begin{align}
        T^{(t)}_{g} &=  \ind \left \{ \left(C_g^{(t)} \right)^2 > \xi^{(t)} \right \}\label{cusumtest},
\end{align}
which rejects the null hypothesis of no change whenever the squared CUSUM statistic exceeds some time-dependent critical value $\xi^{(t)}>0$.

Of course, the true value of $\tau$ is not known. If $\tau<t$, so that a change has occurred within the current sample, the test statistic in \eqref{cusumtest} will only have high power whenever $g$ is close to $t-\tau$. To have power over all possible changepoint locations, we therefore apply the test $T_g^{(t)}$ over all $g$'s in some set $G^{(t)}$ of steps backward in time. Then, an overall test for a changepoint  before time $t$ is given by 
\begin{align}
T^{(t)} = \underset{g \in G^{(t)}}{\max} \ T^{(t)}_g, \label{ttdef}
\end{align} resulting in the online changepoint detection detector
\begin{align}
    \widehat{\tau} = {\inf} \ \left \{   t \in \NN   \ : \ t\geq 2, \ T^{(t)} = 1  \right\} \label{stoppingtime}.
\end{align}

If we choose $G^{(t)} = [t-1]$, the test $T^{(t)}$ is equivalent to the Generalised Likelihood-Ratio test for a Gaussian change in mean. However, in the offline changepoint literature, a common choice of $G^{(t)}$ is of the form
\begin{align}
    G^{(t)}_{\mathrm{stat}} = \{1, 2, \ldots, 2^{\left\lfloor \log_2(t-1) \right \rfloor} \}, \label{naivegrid}
\end{align} 
which we call the \textit{static geometric grid}. For example,  \citet{liu_minimax_2021}, \citet{li2023robust} and \citet{moen2024minimax} use grids akin to \eqref{naivegrid} to construct minimax rate optimal testing procedures in the offline changepoint setting. The motivation behind their choice is as follows. Firstly, the logarithmic cardinality of $G^{(t)}_{\mathrm{stat}}$ makes it easier to control the tails of $T^{(t)}$ under the null.
Secondly, under the alternative, the CUSUM's signal $\EE_{\tau}\{(C_g^{(t)})^2\}$ is maximized at $g=t-\tau$, but is robust to constant-factor misspecification of $g$. For instance, if $(t-\tau)/2 \leq g\leq (t-\tau)$ and $t\leq 2\tau$, then the signal $\EE_{\tau}\{(C_g^{(t)})^2\}$ is within a constant fraction of its maximum, thus preserving most of the CUSUM's power. Due to its geometric spacing, one may always choose such a $g$ from $G^{(t)}_{\mathrm{stat}}$.

For the online setting, the static geometric grid $G^{(t)}_{\mathrm{stat}}$  enables fast computation of the test statistic $T^{(t)}$ in \eqref{ttdef}, or equivalently, fast evaluation of the event $\{\widehat{\tau} = t \}$. Indeed, letting $S_j = \sum_{i=1}^j Y_i$, a close inspection of the CUSUM in \eqref{cusum} reveals that only the cumulative sums $S_t$ and $S_{t-g}$ are needed to compute $T^{(t)}_g$ in \eqref{cusumtest}. Thus, if $S_t$ and $\{S_{t-g} \, : \, g \in G^{(t)}_{\mathrm{stat}}\}$ are stored in memory, computing $T^{(t)}$ requires only
$\mathcal{O}(|G^{(t)}_{\mathrm{stat}}|) = \mathcal{O}(\log t)$ unit‑cost operations.
However, the static geometric grid in \eqref{naivegrid} induces a storage cost that is linear in $t$,  making it inadequate for long sequences of data. 
Indeed, to compute $T^{(t)}$, we must have $S_t$ and $\{S_{t-g} \, : \, g \in G^{(t)}_{\mathrm{stat}}\} = \{S_j \, : \, j \in t - G^{(t)}_{\mathrm{stat}}\}$ available in memory. The reversed grid $t - G^{(t)}_{\mathrm{stat}}$, which contains the indices of the necessary cumulative sums, shifts one step to the right as $t$ increments. Moreover, when $t-1$ reaches a power of two, the grid $G^{(t)}_{\mathrm{stat}}$ acquires a new largest element $t-1$, so that the reversed grid $t - G^{(t)}_{\mathrm{stat}}$ acquires a new smallest element $1$. This is illustrated in Figure \ref{fig:changelocationsstatic}, which shows the elements of $t - G^{(t)}_{\mathrm{stat}}$ indicated by numbered ticks for $t=9,10,11,12$. As a result, at any time $t$, all $S_j$ for $j\leq t$ will be required in memory for evaluating either $T^{(t)}$ or $T^{(t')}$ at some time $t'>t$ in the future.

\begin{figure}
\centering
\begin{tikzpicture}
\draw[solid,thick] (1,-7) -- (12,-7);
\node at (0,-7) (t9) {$t=9$:};
\draw[thick] (1,-6.8) -- ++ (0,-0.2) node[below] {$1$};
\draw[thick] (2,-6.8) -- ++ (0,-0.2) node[below] {};
\draw[thick] (3,-6.8) -- ++ (0,-0.2) node[below] {};
\draw[thick] (4,-6.8) -- ++ (0,-0.2) node[below] {};
\draw[thick] (5,-6.8) -- ++ (0,-0.2) node[below] {$5$};
\draw[thick] (6,-6.8) -- ++ (0,-0.2) node[below] {};
\draw[thick] (7,-6.8) -- ++ (0,-0.2) node[below] {$7$};
\draw[thick] (8,-6.8) -- ++ (0,-0.2) node[below] {$8$};
\draw[thick] (9,-6.8) -- ++ (0,-0.2) node[below] {$t$};
\draw[thick] (10,-6.8) -- ++ (0,-0.2) node[below] {};
\draw[thick] (11,-6.8) -- ++ (0,-0.2) node[below] {};
\draw[thick] (12,-6.8) -- ++ (0,-0.2) node[below] {};

\draw[solid,thick] (1,-8) -- (12,-8);
\node at (0,-8) (t10) {$t=10$:};
\draw[thick] (1,-7.8) -- ++ (0,-0.2) node[below] {};
\draw[thick] (2,-7.8) -- ++ (0,-0.2) node[below] {$2$};
\draw[thick] (3,-7.8) -- ++ (0,-0.2) node[below] {};
\draw[thick] (4,-7.8) -- ++ (0,-0.2) node[below] {};
\draw[thick] (5,-7.8) -- ++ (0,-0.2) node[below] {};
\draw[thick] (6,-7.8) -- ++ (0,-0.2) node[below] {$6$};
\draw[thick] (7,-7.8) -- ++ (0,-0.2) node[below] {};
\draw[thick] (8,-7.8) -- ++ (0,-0.2) node[below] {$8$};
\draw[thick] (9,-7.8) -- ++ (0,-0.2) node[below] {$9$};
\draw[thick] (10,-7.8) -- ++ (0,-0.2) node[below] {$t$};
\draw[thick] (11,-7.8) -- ++ (0,-0.2) node[below] {};
\draw[thick] (12,-7.8) -- ++ (0,-0.2) node[below] {};

\draw[solid,thick] (1,-9) -- (12,-9);
\node at (0,-9) (t11) {$t=11$:};
\draw[thick] (1,-8.8) -- ++ (0,-0.2) node[below] {};
\draw[thick] (2,-8.8) -- ++ (0,-0.2) node[below] {};
\draw[thick] (3,-8.8) -- ++ (0,-0.2) node[below] {$3$};
\draw[thick] (4,-8.8) -- ++ (0,-0.2) node[below] {};
\draw[thick] (5,-8.8) -- ++ (0,-0.2) node[below] {};
\draw[thick] (6,-8.8) -- ++ (0,-0.2) node[below] {};
\draw[thick] (7,-8.8) -- ++ (0,-0.2) node[below] {$7$};
\draw[thick] (8,-8.8) -- ++ (0,-0.2) node[below] {};
\draw[thick] (9,-8.8) -- ++ (0,-0.2) node[below] {$9$};
\draw[thick] (10,-8.8) -- ++ (0,-0.2) node[below] {$10$};
\draw[thick] (11,-8.8) -- ++ (0,-0.2) node[below] {$t$};
\draw[thick] (12,-8.8) -- ++ (0,-0.2) node[below] {};

\draw[solid,thick] (1,-10) -- (12,-10);
\node at (0,-10) (t12) {$t=12$:};
\draw[thick] (1,-9.8) -- ++ (0,-0.2) node[below] {};
\draw[thick] (2,-9.8) -- ++ (0,-0.2) node[below] {};
\draw[thick] (3,-9.8) -- ++ (0,-0.2) node[below] {};
\draw[thick] (4,-9.8) -- ++ (0,-0.2) node[below] {$4$};
\draw[thick] (5,-9.8) -- ++ (0,-0.2) node[below] {};
\draw[thick] (6,-9.8) -- ++ (0,-0.2) node[below] {};
\draw[thick] (7,-9.8) -- ++ (0,-0.2) node[below] {};
\draw[thick] (8,-9.8) -- ++ (0,-0.2) node[below] {$8$};
\draw[thick] (9,-9.8) -- ++ (0,-0.2) node[below] {};
\draw[thick] (10,-9.8) -- ++ (0,-0.2) node[below] {$10$};
\draw[thick] (11,-9.8) -- ++ (0,-0.2) node[below] {$11$};
\draw[thick] (12,-9.8) -- ++ (0,-0.2) node[below] {$t$};

\end{tikzpicture}
\caption{Plot of the elements of the reversed static geometric grid $t - G^{(t)}_{\mathrm{stat}}$ for $t = 9, \ldots, 12$.}\label{fig:changelocationsstatic}
\end{figure}

The large storage cost induced by the static geometric grid motivates the construction of a new grid that recycles the cumulative sums $S_j$ while retaining geometric growth for low update costs and high statistical power. To this end, we propose a novel \textit{dynamic geometric grid}. The main idea behind this grid is to partition the set $\left\{1, 2,3, \ldots, t-1\right\}$ into successive intervals of width $1,2,4,8,\ldots$, where each interval contributes precisely two elements to $G^{(t)}$. Once a new data point arrives, the elements of the intervals are either shifted cyclically one to the right, or deleted, so that the indices $t-G^{(t)}$ of the relevant cumulative sums stay in place as $t$ increments. 
Formally, for $t\geq 2$, we define the dynamic geometric grid as
\begin{align}
G^{(t)}\dyn &= \{1\} \cup \bigcup_{j=1}^{\left \lfloor \log_2\left\{(t-1)/3\right\}\right\rfloor + 1}\left\{g^{(t)}_{\text{L}, j}\right\} \cup \bigcup_{j=1}^{\left \lfloor \log_2(t-1) \right\rfloor-1}\left\{ g_{\text{R},j}^{(t)} \right\},\label{thegrid}
\end{align}
where $g^{(t)}_{\text{L},j} = 2^j + \left\{(t-1)\bmod 2^{j-1} \right\}$ is the contribution to $G^{(t)}\dyn$ from the left half of the interval $[2^j, 2^{j+1}-1]$, and $g^{(t)}_{\text{R},j} =g^{(t)}_{\text{L},j} + 2^{j-1}$ is the contribution from the right half. The evolution of $G^{(t)}\dyn$ as $t$ increases from $17$ to $20$ is illustrated in Figure \ref{fig:gridplot}, in which the intervals $[2^j, 2^{j+1}-1]$ are drawn for $j=1,2,3$ in the bottom of the figure. Here, arrows indicate elements that are shifted when $t$ increments, and elements with no outgoing arrows are discarded when $t$ increments. 

\begin{figure}
\begin{adjustbox}{width=\textwidth}
\begin{tikzpicture}
\draw[solid,thick] (1,-9) -- (15,-9);
\node at (0,-9) (t17) {$t=17$};
\draw[thick] (1,-8.8) -- ++ (0,-0.2) node[below] {$1$};
\draw[thick] (2,-8.8) -- ++ (0,-0.2) node[below] {$2$};
\draw[thick] (3,-8.8) -- ++ (0,-0.2) node[below] {$3$};
\draw[thick] (4,-8.8) -- ++ (0,-0.2) node[below] {$4$};
\draw[thick] (5,-8.8) -- ++ (0,-0.2) node[below] {};
\draw[thick] (6,-8.8) -- ++ (0,-0.2) node[below] {$6$};
\draw[thick] (7,-8.8) -- ++ (0,-0.2) node[below] {};
\draw[thick] (8,-8.8) -- ++ (0,-0.2) node[below] {$8$};
\draw[thick] (9,-8.8) -- ++ (0,-0.2) node[below] {};
\draw[thick] (10,-8.8) -- ++ (0,-0.2) node[below] {};
\draw[thick] (11,-8.8) -- ++ (0,-0.2) node[below] {};
\draw[thick] (12,-8.8) -- ++ (0,-0.2) node[below] {$12$};
\draw[thick] (13,-8.8) -- ++ (0,-0.2) node[below] {};
\draw[thick] (14,-8.8) -- ++ (0,-0.2) node[below] {};
\draw[thick] (15,-8.8) -- ++ (0,-0.2) node[below] {};

\draw[solid,thick] (1,-10) -- (15,-10);
\node at (0,-10) (t18) {$t=18$};
\draw[thick] (1,-9.8) -- ++ (0,-0.2) node[below] {$1$};
\draw[thick] (2,-9.8) -- ++ (0,-0.2) node[below] {$2$};
\draw[thick] (3,-9.8) -- ++ (0,-0.2) node[below] {$3$};
\draw[thick] (4,-9.8) -- ++ (0,-0.2) node[below] {};
\draw[thick] (5,-9.8) -- ++ (0,-0.2) node[below] {$5$};
\draw[thick] (6,-9.8) -- ++ (0,-0.2) node[below] {};
\draw[thick] (7,-9.8) -- ++ (0,-0.2) node[below] {$7$};
\draw[thick] (8,-9.8) -- ++ (0,-0.2) node[below] {};
\draw[thick] (9,-9.8) -- ++ (0,-0.2) node[below] {$9$};
\draw[thick] (10,-9.8) -- ++ (0,-0.2) node[below] {};
\draw[thick] (11,-9.8) -- ++ (0,-0.2) node[below] {};
\draw[thick] (12,-9.8) -- ++ (0,-0.2) node[below] {};
\draw[thick] (13,-9.8) -- ++ (0,-0.2) node[below] {$13$};
\draw[thick] (14,-9.8) -- ++ (0,-0.2) node[below] {};
\draw[thick] (15,-9.8) -- ++ (0,-0.2) node[below] {};

\draw[solid,thick] (1,-11) -- (15,-11);
\node at (0,-11) (t19) {$t=19$};
\draw[thick] (1,-10.8) -- ++ (0,-0.2) node[below] {$1$};
\draw[thick] (2,-10.8) -- ++ (0,-0.2) node[below] {$2$};
\draw[thick] (3,-10.8) -- ++ (0,-0.2) node[below] {$3$};
\draw[thick] (4,-10.8) -- ++ (0,-0.2) node[below] {$4$};
\draw[thick] (5,-10.8) -- ++ (0,-0.2) node[below] {};
\draw[thick] (6,-10.8) -- ++ (0,-0.2) node[below] {$6$};
\draw[thick] (7,-10.8) -- ++ (0,-0.2) node[below] {};
\draw[thick] (8,-10.8) -- ++ (0,-0.2) node[below] {};
\draw[thick] (9,-10.8) -- ++ (0,-0.2) node[below] {};
\draw[thick] (10,-10.8) -- ++ (0,-0.2) node[below] {$10$};
\draw[thick] (11,-10.8) -- ++ (0,-0.2) node[below] {};
\draw[thick] (12,-10.8) -- ++ (0,-0.2) node[below] {};
\draw[thick] (13,-10.8) -- ++ (0,-0.2) node[below] {};
\draw[thick] (14,-10.8) -- ++ (0,-0.2) node[below] {$14$};
\draw[thick] (15,-10.8) -- ++ (0,-0.2) node[below] {};

\draw[solid,thick] (1,-12) -- (15,-12);
\node at (0,-12) (t18) {$t=20$};
\draw[thick] (1,-11.8) -- ++ (0,-0.2) node[below] {$1$};
\draw[thick] (2,-11.8) -- ++ (0,-0.2) node[below] {$2$};
\draw[thick] (3,-11.8) -- ++ (0,-0.2) node[below] {$3$};
\draw[thick] (4,-11.8) -- ++ (0,-0.2) node[below] {};
\draw[thick] (5,-11.8) -- ++ (0,-0.2) node[below] {$5$};
\draw[thick] (6,-11.8) -- ++ (0,-0.2) node[below] {};
\draw[thick] (7,-11.8) -- ++ (0,-0.2) node[below] {$7$};
\draw[thick] (8,-11.8) -- ++ (0,-0.2) node[below] {};
\draw[thick] (9,-11.8) -- ++ (0,-0.2) node[below] {};
\draw[thick] (10,-11.8) -- ++ (0,-0.2) node[below] {};
\draw[thick] (11,-11.8) -- ++ (0,-0.2) node[below] {$11$};
\draw[thick] (12,-11.8) -- ++ (0,-0.2) node[below] {};
\draw[thick] (13,-11.8) -- ++ (0,-0.2) node[below] {};
\draw[thick] (14,-11.8) -- ++ (0,-0.2) node[below] {};
\draw[thick] (15,-11.8) -- ++ (0,-0.2) node[below] {$15$};

\draw[solid,thick] (2,-13) -- (3,-13);
\node at (0,-13) (t102) {};
\draw[thick] (2,-12.8) -- ++ (0,-0.2) node[below] {};
\draw[thick] (3,-12.8) -- ++ (0,-0.2) node[below] {};

\draw[solid,thick] (4,-13) -- (7,-13);
\node at (0,-13) (t103) {};
\draw[thick] (4,-12.8) -- ++ (0,-0.2) node[below] {};
\draw[thick] (7,-12.8) -- ++ (0,-0.2) node[below] {};

\draw[solid,thick] (8,-13) -- (15,-13);
\node at (0,-13) (t104) {};
\draw[thick] (8,-12.8) -- ++ (0,-0.2) node[below] {};
\draw[thick] (15,-12.8) -- ++ (0,-0.2) node[below] {};

\draw[->]        (1.2,-9.3)   -- (1.7,-9.8);
\draw[->]        (2.2,-9.3)   -- (2.7,-9.8);
\draw[->]        (4.2,-9.3)   -- (4.7,-9.8);
\draw[->]        (6.2,-9.3)   -- (6.7,-9.8);
\draw[->]        (8.2,-9.3)   -- (8.7,-9.8);
\draw[->]        (12.3,-9.3)   -- (12.8,-9.8);

\draw[->]        (1.2,-10.3)   -- (1.7,-10.8);
\draw[->]        (2.2,-10.3)   -- (2.7,-10.8);
\draw[->]        (3.2,-10.3)   -- (3.7,-10.8);
\draw[->]        (5.2,-10.3)   -- (5.7,-10.8);
\draw[->]        (9.2,-10.3)   -- (9.7,-10.8);
\draw[->]        (13.3,-10.3)   -- (13.8,-10.8);

\draw[->]        (1.2,-11.3)   -- (1.7,-11.8);
\draw[->]        (2.2,-11.3)   -- (2.7,-11.8);
\draw[->]        (4.2,-11.3)   -- (4.7,-11.8);
\draw[->]        (6.2,-11.3)   -- (6.7,-11.8);
\draw[->]        (10.3,-11.3)   -- (10.8,-11.8);
\draw[->]        (14.3,-11.3)   -- (14.8,-11.8);

\end{tikzpicture}
\end{adjustbox}
\caption{Evolution of the grid $G^{(t)}\dyn$ in \eqref{thegrid} for $t = 17, \ldots, 20$.}\label{fig:gridplot}
\end{figure}

The properties of the dynamic geometric grid are stated in the following lemma. 

\begin{lemma}\label{gridlemma}
    For all $t \geq 2$, the grid $G^{(t)}$ in \eqref{thegrid} satisfies
    \begin{enumerate}
        \item Geometric spacing: For any integer $d \leq t/2$, there exists some $g \in G^{(t)}$ such that $d/2 \leq g \leq d$; \label{gridlemmaclaim1}
        \item Logarithmic cardinality: $| G^{(t)}| < 3\log t$; \label{gridlemmalclaim2}
        \item Recycling property: $G^{(t+1)}\setminus \{1\} -1 \subseteq G^{(t)}$, or equivalently, $(t+1) - G^{(t+1)} \subseteq ( t - G^{(t)} ) \cup \{t\}$. \label{gridlemmaclaim3}
    \end{enumerate}
\end{lemma}

Just like the static grid $G^{(t)}_{\mathrm{stat}}$ in \eqref{naivegrid}, the dynamic geometric grid is geometrically spaced and logarithmic in size.
Additionally, the dynamic geometric grid has the recycling property that $\{ S_j \, : \, j \in (t+1)-G^{(t+1)}\} \subseteq \{ S_j \, : \, j \in t-G^{(t)}\} \cup\{S_t\}$. Thus, at time step $t+1$, we may re-use the cumulative sums from the previous time step $t$. The recycling property of the dynamic geometric grid is illustrated in Figure \ref{fig:changelocations}, in which $t - G^{(t)}\dyn$ is plotted, demonstrating a stark contrast to $t-G^{(t)}_{\mathrm{stat}}$  in Figure \ref{fig:changelocationsstatic}. For example, the cumulative sum $S_3$ only needs to be stored in memory until time step $t=10$, and can be discarded from then on. Combining recycling with a logarithmic cardinality, the dynamic geometric grid $G^{(t)}$ in \eqref{thegrid} achieves a storage cost of order $\mathcal{O}(\log t)$. 

The following theorem formally states the computational and statistical guarantees of the resulting online changepoint detector.

\begin{figure}
\centering
\begin{tikzpicture}

\draw[solid,thick] (1,-12) -- (12,-12);
\node at (0,-12) (t9) {$t=9$:};
\draw[thick] (1,-11.8) -- ++ (0,-0.2) node[below] {};
\draw[thick] (2,-11.8) -- ++ (0,-0.2) node[below] {};
\draw[thick] (3,-11.8) -- ++ (0,-0.2) node[below] {$3$};
\draw[thick] (4,-11.8) -- ++ (0,-0.2) node[below] {};
\draw[thick] (5,-11.8) -- ++ (0,-0.2) node[below] {$5$};
\draw[thick] (6,-11.8) -- ++ (0,-0.2) node[below] {$6$};
\draw[thick] (7,-11.8) -- ++ (0,-0.2) node[below] {$7$};
\draw[thick] (8,-11.8) -- ++ (0,-0.2) node[below] {$8$};
\draw[thick] (9,-11.8) -- ++ (0,-0.2) node[below] {$t$};
\draw[thick] (10,-11.8) -- ++ (0,-0.2) node[below] {};
\draw[thick] (11,-11.8) -- ++ (0,-0.2) node[below] {};
\draw[thick] (12,-11.8) -- ++ (0,-0.2) node[below] {};

\draw[solid,thick] (1,-13) -- (12,-13);
\node at (0,-13) (t10) {$t=10$:};
\draw[thick] (1,-12.8) -- ++ (0,-0.2) node[below] {};
\draw[thick] (2,-12.8) -- ++ (0,-0.2) node[below] {};
\draw[thick] (3,-12.8) -- ++ (0,-0.2) node[below] {$3$};
\draw[thick] (4,-12.8) -- ++ (0,-0.2) node[below] {};
\draw[thick] (5,-12.8) -- ++ (0,-0.2) node[below] {$5$};
\draw[thick] (6,-12.8) -- ++ (0,-0.2) node[below] {};
\draw[thick] (7,-12.8) -- ++ (0,-0.2) node[below] {$7$};
\draw[thick] (8,-12.8) -- ++ (0,-0.2) node[below] {$8$};
\draw[thick] (9,-12.8) -- ++ (0,-0.2) node[below] {$9$};
\draw[thick] (10,-12.8) -- ++ (0,-0.2) node[below] {$t$};
\draw[thick] (11,-12.8) -- ++ (0,-0.2) node[below] {};
\draw[thick] (12,-12.8) -- ++ (0,-0.2) node[below] {};

\draw[solid,thick] (1,-14) -- (12,-14);
\node at (0,-14) (t11) {$t=11$:};
\draw[thick] (1,-13.8) -- ++ (0,-0.2) node[below] {};
\draw[thick] (2,-13.8) -- ++ (0,-0.2) node[below] {};
\draw[thick] (3,-13.8) -- ++ (0,-0.2) node[below] {};
\draw[thick] (4,-13.8) -- ++ (0,-0.2) node[below] {};
\draw[thick] (5,-13.8) -- ++ (0,-0.2) node[below] {$5$};
\draw[thick] (6,-13.8) -- ++ (0,-0.2) node[below] {};
\draw[thick] (7,-13.8) -- ++ (0,-0.2) node[below] {$7$};
\draw[thick] (8,-13.8) -- ++ (0,-0.2) node[below] {$8$};
\draw[thick] (9,-13.8) -- ++ (0,-0.2) node[below] {$9$};
\draw[thick] (10,-13.8) -- ++ (0,-0.2) node[below] {$10$};
\draw[thick] (11,-13.8) -- ++ (0,-0.2) node[below] {$t$};
\draw[thick] (12,-13.8) -- ++ (0,-0.2) node[below] {};

\draw[solid,thick] (1,-15) -- (12,-15);
\node at (0,-15) (t12) {$t=12$:};
\draw[thick] (1,-14.8) -- ++ (0,-0.2) node[below] {};
\draw[thick] (2,-14.8) -- ++ (0,-0.2) node[below] {};
\draw[thick] (3,-14.8) -- ++ (0,-0.2) node[below] {};
\draw[thick] (4,-14.8) -- ++ (0,-0.2) node[below] {};
\draw[thick] (5,-14.8) -- ++ (0,-0.2) node[below] {$5$};
\draw[thick] (6,-14.8) -- ++ (0,-0.2) node[below] {};
\draw[thick] (7,-14.8) -- ++ (0,-0.2) node[below] {$7$};
\draw[thick] (8,-14.8) -- ++ (0,-0.2) node[below] {};
\draw[thick] (9,-14.8) -- ++ (0,-0.2) node[below] {$9$};
\draw[thick] (10,-14.8) -- ++ (0,-0.2) node[below] {$10$};
\draw[thick] (11,-14.8) -- ++ (0,-0.2) node[below] {$11$};
\draw[thick] (12,-14.8) -- ++ (0,-0.2) node[below] {$t$};

\end{tikzpicture}
\caption{Plot of the elements of the reversed dynamic geometric grid  $t - G^{(t)}\dyn$ for $t = 9, \ldots, 12$.}\label{fig:changelocations}
\end{figure}

\begin{theorem}\label{theorem1}
    Let $\phi = |\mu_1- \mu_2|$. Fix any $\delta \in (0,1)$, and let $T^{(t)}_g$ be defined as in  
    \eqref{cusumtest} with critical value $\xi^{(t)} = \lambda \sigma^2 \log(t/\delta)$ for some $\lambda>0$. Let $T^{(t)}$ be defined as in 
    \eqref{ttdef} using the grid $G^{(t)}$ given in 
    \eqref{thegrid}, and let $\widehat{\tau}$ be defined as in 
    \eqref{stoppingtime}. It then holds that $\mathrm{UC}(\widehat{\tau}, t) = \mathcal{O}(\log t)$, and $\mathrm{SC}(\widehat{\tau},t) = \mathcal{O}(\log t)$.
    
    Moreover, there exists an absolute constant $C_1>0$, and a constant $C_2>0$ depending only on $\lambda$, such that if $\lambda \geq C_1$, then $\mathrm{FA}(\widehat{\tau}) \leq \delta$, and if $\tau < \infty$ and $\tau \phi^2 / \sigma^2 \geq 2 C_2  \log(\tau / \delta) $, then 
$$
\PP_{\tau} \left(  \tauhat  \leq  \tau+ \left \lceil C_2  \frac{\sigma^2 \log(\tau / \delta) }{\phi^2} \right \rceil \right) \geq 1-\delta.
$$
\end{theorem}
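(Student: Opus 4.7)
The plan is to address the four assertions of Theorem \ref{theorem1} in sequence. For the \emph{computational costs}, the cardinality $|G^{(t)}_{\mathrm{dyn}}|$ is immediately $\mathcal{O}(\log t)$ from \eqref{thegrid} as a union of $\mathcal{O}(\log t)$ singletons. The essential combinatorial fact is the recursion
\begin{equation*}
t+1 - G^{(t+1)}_{\mathrm{dyn}} \subseteq (t - G^{(t)}_{\mathrm{dyn}}) \cup \{t\},
\end{equation*}
which I would establish via a short case analysis based on the explicit formulas for $g^{(t)}_{\text{L},j}$ and $g^{(t)}_{\text{R},j}$ together with basic properties of the modular operation. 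This lets the algorithm store only the cumulative sums $S_i$ for $i \in \{t\} \cup (t - G^{(t)}_{\mathrm{dyn}})$, giving $\mathrm{SC}(\tauhat, t) = \mathcal{O}(\log t)$; once these are stored, forming $S_{t+1}$ and evaluating each $T^{(t+1)}_g$ cost $\mathcal{O}(1)$ arithmetic operations, so $\mathrm{UC}(\tauhat, t) = \mathcal{O}(\log t)$ as well.

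For \emph{false alarm control}, observe that under $\PP_\infty$ the constant $\mu_1$ cancels out of $C_g^{(t)}$, which becomes a linear combination of the centered sub-Gaussian $Z_i$ with squared coefficients summing to $1$; consequently $\normmo{C_g^{(t)}} \leq c_0 \sigma$ for an absolute constant $c_0$. A standard sub-Gaussian tail bound gives $\PP_\infty((C_g^{(t)})^2 > \lambda \sigma^2 \log(t/\delta)) \leq 2(\delta/t)^{c_1 \lambda}$, and a union bound over $g \in G^{(t)}_{\mathrm{dyn}}$ and $t \geq 2$ yields
\begin{equation*}
\text{FA}(\tauhat) \leq \sum_{t=2}^\infty |G^{(t)}_{\mathrm{dyn}}| \cdot 2\left(\frac{\delta}{t}\right)^{c_1 \lambda},
\end{equation*}
which can be bounded by $\delta$ once $\lambda \geq C_1$ for a sufficiently large absolute constant $C_1$---the power $c_1 \lambda$ then absorbs the $\mathcal{O}(\log t)$ grid factor and makes the series sum to at most $\delta$.

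For the \emph{detection delay}, write $\phi = |\mu_1 - \mu_2|$, set $T = \lceil C_2 \sigma^2 \log(\tau/\delta)/\phi^2 \rceil$, and take $t^* = \tau + T$; the hypothesis $\tau \phi^2/\sigma^2 \geq 2 C_2 \log(\tau/\delta)$ forces $\tau$ to be at least of the same order as $T$. Targeting $g_0 = 2T$, the dyadic structure of $G^{(t^*)}_{\mathrm{dyn}}$ guarantees a $g^* \in G^{(t^*)}_{\mathrm{dyn}}$ with $g^* \in [T, 3T]$: the level $j$ with $2^j \leq g_0 < 2^{j+1}$ contributes either $g^{(t^*)}_{\text{L},j}$ or $g^{(t^*)}_{\text{R},j}$ lying within $2^{j-1}$ of $g_0$. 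Then $t^* - g^* \in [\tau - 2T, \tau]$ lies in the pre-change region, and a direct computation using the pre- and post-change means gives
\begin{equation*}
|\EE_\tau C_{g^*}^{(t^*)}| = (t^* - \tau) \phi \sqrt{\frac{t^* - g^*}{t^* g^*}} \geq c_2 \phi \sqrt{T} \geq c_2 \sigma \sqrt{C_2 \log(\tau/\delta)}
\end{equation*}
for an absolute constant $c_2$. Taking $C_2$ large enough in terms of $\lambda$ makes $|\EE_\tau C_{g^*}^{(t^*)}|^2 \geq 4 \xi^{(t^*)}$; a sub-Gaussian concentration bound applied to $|C_{g^*}^{(t^*)} - \EE_\tau C_{g^*}^{(t^*)}|$ at deviation $|\EE_\tau C_{g^*}^{(t^*)}|/2$ then delivers $(C_{g^*}^{(t^*)})^2 > \xi^{(t^*)}$ with probability at least $1 - \delta$, so $\tauhat \leq t^* = \tau + T$. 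The main obstacle here is the grid analysis itself: both the storage recursion and the dyadic-density property rest on careful case-by-case manipulation of the modular-arithmetic definitions of $g^{(t)}_{\text{L},j}$ and $g^{(t)}_{\text{R},j}$, with additional attention required for boundary cases such as when $\tau$ only marginally exceeds $2T$.
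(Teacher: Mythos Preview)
Your treatment of the computational costs and of the false alarm bound is correct and matches the paper's argument (the paper appeals to Proposition~\ref{enkelprop} for the former and to a sub-Gaussian tail plus double union bound for the latter; your union bound over $g\in G^{(t)}_{\mathrm{dyn}}$ rather than over all $g\in[t-1]$ is a harmless variation).

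The detection-delay argument, however, has a genuine gap. You target $g_0=2T$ and extract $g^*\in[T,3T]$, so that the split point $t^*-g^*$ lies in the \emph{pre}-change region; you then claim $|\EE_\tau C^{(t^*)}_{g^*}|\ge c_2\phi\sqrt{T}$ for an absolute constant $c_2$. But your own formula gives
\[
|\EE_\tau C^{(t^*)}_{g^*}|^2 \;=\; T^2\phi^2\,\frac{t^*-g^*}{t^*g^*}
\;\ge\; \frac{T\phi^2}{3}\cdot\frac{\tau-2T}{\tau+T}\qquad\text{(worst case }g^*=3T\text{)},
\]
and the signal-strength hypothesis $\tau\phi^2/\sigma^2\ge 2C_2\log(\tau/\delta)$ only guarantees $\tau\ge 2T-O(1)$. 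In the boundary regime $\tau\approx 2T$ the factor $(\tau-2T)/(\tau+T)$ can be arbitrarily small (even nonpositive), so the claimed lower bound $c_2\phi\sqrt{T}$ fails and you cannot conclude $(C^{(t^*)}_{g^*})^2>\xi^{(t^*)}$. Your closing remark that the boundary case ``requires additional attention'' is therefore not a technicality but the crux of the matter: the argument as written does not go through there.

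The paper avoids this by choosing the grid point on the \emph{other} side of the change. Setting $d=T$ (not $2T$) and invoking the ratio-two density of $G^{(t^*)}_{\mathrm{dyn}}$ yields $g_0\in[T/2,\,T]$; then $g_0\le T=t^*-\tau$, so the split point $t^*-g_0\ge\tau$ is post-change, and Lemma~\ref{cusumlemma} gives
\[
(\theta^{(t^*)}_{g_0})^2=\frac{g_0\tau^2}{t^*(t^*-g_0)}\,\phi^2\ge \frac{g_0\phi^2}{4}\ge\frac{T\phi^2}{8},
\]
using only $t^*\le 2\tau$ and $t^*-g_0\le 2\tau$. This bound is uniform in the boundary regime and directly yields the detection delay. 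The fix to your argument is thus a one-line change: target $g_0=T$ rather than $g_0=2T$.
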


Theorem \ref{theorem1} guarantees that the changepoint detector $\tauhat$ in \eqref{stoppingtime} has update and storage costs that grow only logarithmically with the sample size $t$, as long as $G^{(t)}$ is chosen as in \eqref{thegrid}.  Furthermore, with a suitable time-varying critical value $\xi^{(t)}$, the false alarm probability is controlled, and the detection delay is of order $(\sigma^2 / \phi^2) \log (\tau/\delta)$. This detection delay rate matches precisely the minimax lower bound in \citet[][Proposition 4.1]{anote}, and is thus minimax rate optimal. Notably, the detector proposed by \citet{anote} achieves the same detection delay rate, although with a storage cost of order $\mathcal{O}(t)$. Thus, $\tauhat$ has the same theoretical guarantees with strictly smaller data storage requirements. 

We remark that Theorem \ref{theorem1} implies that $\tauhat$ is fully multiscale, in the sense that it achieves optimal statistical performance for all admissible values of $\phi$ and $\tau$. In particular, its performance is theoretically optimal in detecting a small change occurring after a long stretch under the null, since the geometric spacing of $G^{(t)}$ in \eqref{thegrid} preserves high power despite large spacings among its largest elements (in absolute terms). Of course, since the detector evaluates the CUSUM over the geometric grid $G^{(t)}$ in \eqref{thegrid} rather than over all $g\in[t-1]$, we may in practice incur some loss of detection power compared with a full search. Nonetheless, a univariate simulation study in the supplementary material (Section \ref{sec:univariatesimulation}) shows that the detection delay using $G^{(t)}$ in \eqref{thegrid} is close to that obtained with a full scan when $G^{(t)}=[t-1]$.

\notocsubsection{General methodology}\label{sec:general}
We now extend the ideas from the univariate mean-change problem to a general methodology applicable to a wider range of models.  
Assume that we have observed $p$-dimensional data $Y_1, \ldots, Y_t$ for some $t\geq 2$. For any $g = 1, \ldots, t-1$, let $T_g^{(t)} =T_g^{(t)}(Y_1, \ldots, Y_t)$ be some test (possibly from the offline literature), taking values in $\{0,1\}$, for testing if a change in distribution occurred $g$ time steps before $t$. Letting $G^{(t)}$ denote the dynamic grid from \eqref{thegrid}, define
\begin{align}
    \widehat{\tau} &= \inf \ \left\{ t \in \NN  \ : \ t\geq 2, \ \underset{g \in G^{(t)}}{\max} \ T^{(t)}_g=1\right\}\label{taudef}.
    \end{align}
We analyse the computational and statistical properties of $\widehat{\tau}$ separately. 
\notocsubsubsection{Computational properties}

We make the following assumption on the test statistic $T^{(t)}_g$.

\begin{assumption}\label{asscomp}
There exists a collection $\{S_g^{(t)} = S_g^{(t)}(Y_1, \ldots, Y_t) \, : \, g \in [t]\}_{t\in \NN}$ of sets of real numbers,  and non-negative functions $C_T(p), M_S(p), C_S(p)$, such that for each $t\geq 2$ and $g \in G^{(t)}$: 
\begin{enumerate}[label={{\Alph*: }},
  ref={\theassumption.\Alph*}]
    \item \label{asscomp-a}
        The test $T^{(t)}_g$ can be computed from $S^{(t)}_g$ using at most $C_T(p)$ unit‑cost operations;
    \item \label{asscomp-b}
        The set $S^{(t)}_g$ 
has cardinality at most $M_S(p)$ and can be computed from 
$S^{(t-1)}_{(g-1)\vee 1}$ and $Y_t$ using at most $C_S(p)$ unit‑cost operations. 
\end{enumerate}
\end{assumption}
In Assumption~\ref{asscomp}, the quantities $C_T(p)$, $M_S(p)$ and $C_S(p)$ are required to be independent of both $g$ and $t$. In particular, \ref{asscomp-a} requires that $T^{(t)}_g$ can be computed from a set $S^{(t)}_g$ of summary statistics in time that does not grow with $t$. Assumption~\ref{asscomp-b} requires that $S^{(t)}_g$ itself can be updated from a summary set at the previous time step, together with the new observation $Y_t$, again in time independent of $t$, and with cardinality that does not grow with $t$. When these conditions hold, the combination of (i) the bounded size of each $S^{(t)}_g$ and (ii) the recycling property and logarithmic size of the dynamic geometric grid $G^{(t)}$ in \eqref{thegrid} (see Lemma~\ref{gridlemma}) yields logarithmic update and storage costs in $t$. These guarantees, including their dependence on the dimension $p$, are formalised in the following straightforward proposition.

\begin{proposition}\label{theorem2}
    Let $\widehat{\tau}$ be as in \eqref{taudef}, with $G^{(t)}$ as in \eqref{thegrid}. If $T^{(t)}_g$ satisfies Assumption \ref{asscomp}, then $\mathrm{UC}(\widehat{\tau}, t) = \mathcal{O}[\{C_T(p) + C_S(p)\}\log t ]$ and $\mathrm{SC}(\widehat{\tau}, t) = \mathcal{O}\{p + M_S(p) \log t\}$. 
\end{proposition}

We remark that the $p$ term in $\mathrm{SC}(\tauhat,t)$ accounts for storing the current observation $Y_t$, while the $\log t$ term reflects the storage required for the summary sets $\{S_g^{(t)}\}_{g \in G^{(t)}}$.

In practice, Assumption~\ref{asscomp} is often satisfied when the test statistic $T^{(t)}_g$ depends on sums or averages, such as estimated model parameters. As seen in Section~\ref{sec:unimean}, for example, the CUSUM statistic $C_g^{(t)}$ can be computed in $\mathcal{O}(1)$ time from two cumulative sums, which play the role of the summary statistics in Assumption~\ref{asscomp}. The next example shows that this mechanism extends to any parameter estimated via empirical averages.

\begin{example}\label{compexample1}
    Suppose we wish to detect a change in a parameter $\theta = \EE\{h(Y_i)\}$, where $h \ : \ \RR^p \rightarrow \RR^v$ and $v = v(p)$ may depend on the dimension $p$. Given $t$ and $g$, we may estimate the pre- and post-change values of $\theta$ by  
\begin{align} &\widehat{\theta}_{1,g}^{(t)} = (t-g)^{-1}\sum_{i=1}^{t-g} h(Y_i), &  \widehat{\theta}_{2,g}^{(t)}= g^{-1}\sum_{i=t-g+1}^{t} h(Y_i),\end{align} and define the test
$$
T^{(t)}_g = \ind \left \{  \left \lVert \widehat{\theta}_{1,g}^{(t)} - \widehat{\theta}_{2,g}^{(t)}\right \rVert \geq \xi_{g}^{(t)}\right\},\label{Texample1}
$$
where $\xi_g^{(t)}$ is some critical value computable using at most $\mathcal{O}(1)$ unit-cost operations, and $\normm{\cdot}$ is some norm on $\RR^v$ that can be computed using at most $\mathcal{O}\{v(p)\}$ unit-cost operations. Define the set $S_g^{(t)}$ of summary statistics as $S_1^{(1)} = \{h(Y_1)\}$ and $S_g^{(t)} = \{\sum_{i=1}^{t-g}h(Y_i), \, \sum_{i=1}^t h(Y_i)\}$. 
Then $T^{(t)}_g$ can be computed from $S_g^{(t)}$ using at most $\mathcal{O}\{v(p)\} + C_h(p)$ unit-cost operations, where $C_h(p)$ denotes the maximum number of unit‑cost operations required to compute $h(\cdot)$. Thus, Assumption \ref{asscomp-a} holds with $C_T(p) = \mathcal{O}\{v(p)\} + C_h(p)$. 

Since $h(Y_i) \in \RR^{v}$, storing each $S_g^{(t)}$ entails storing $\mathcal{O}\{v(p)\}$ scalars. Moreover, for $t \geq 2$, we have that $\sum_{i=1}^{t-g}h(Y_i) \in S^{(t-1)}_{(g-1)\vee 1}$ and $\sum_{i=1}^t h(Y_i) = h(Y_t)+\sum_{i=1}^{t-1}h(Y_i) $, where $\sum_{i=1}^{t-1}h(Y_i) \in S_{(g-1)\vee 1}^{(t-1)}$, and $h(Y_t)$ can be computed from $Y_t$ using at most $C_h(p)$ unit-cost operations. Therefore, Assumption \ref{asscomp-b} holds with $M_S(p) = \mathcal{O}\{v(p)\}$ and $C_S(p) = \mathcal{O}\{v(p)\} + C_h(p)$. Proposition \ref{theorem2} then yields that $\mathrm{UC}(\tauhat, t) = \mathcal{O}[\{v(p) + C_h(p)\}\log t ]$ and $\mathrm{SC}(\tauhat,t) = \mathcal{O}\{ p + v(p) \log t\}$.

\end{example}
Example \ref{compexample1} covers tests based on the distance between sums (averages) of $h(Y_i)$. The next result shows that essentially the same argument applies to any test that can be written in terms of two such sums.

\begin{proposition}\label{enkelprop}
 Assume that the test $T^{(t)}_g$ can be written as
\begin{align}
    T^{(t)}_g = f^{(t)}_g \left( \sum_{i=1}^{t-g} h(Y_i), \sum_{i=t-g+1}^t h(Y_i) \right),\label{enkelpropt}
\end{align}
for all $t \geq 2$ and $g \in G^{(t)}$, where $h \, : \, \RR^p \rightarrow \RR^v$ can be computed using at most $C_h(p)$ unit‑cost operations, $v = v(p)$ may depend on the dimension $p$, and $f_g^{(t)} \, : \,  \RR^v \times \RR^v\rightarrow \{0,1\}$ can be computed using at most $C_f(p)$ unit‑cost operations. Then 
$\mathrm{UC}(\widehat{\tau}, t) = \mathcal{O}[\{v(p) + C_h(p) +  C_f(p) \}\log t]$ and $\mathrm{SC}(\widehat{\tau}, t) = \mathcal{O}\{p + v(p)\log t\}$.
\end{proposition}

The following example shows that Proposition \ref{enkelprop} applies to the likelihood-ratio test when the pre- and post-change distributions are members of an exponential family of distributions. In the supplementary material (Section \ref{sec:regression}), we also show that Proposition \ref{enkelprop} may also be applied in regression settings.

\begin{example}\label{examplecomp2}
    Assume that the pre-change distribution $P_1$ and post-change distribution $P_2$ are members of an exponential family of distributions in canonical form \citep[see, e.g.,][]{LehmannCasella1998}, so that $P_1$ and $P_2$ have densities
    $$
    f(y\, ;\, \theta_j) = \eta(y)\exp \left\{ \theta_j^\top h(y) - A(\theta_j)      \right\},
    $$
    respectively for $j=1,2$, where $\theta_j \in \Theta \subseteq\RR^v$ are unknown parameters, $h \, : \, \RR^p \rightarrow \RR^v$ is a sufficient statistic, and $v = v(p)$. Then the likelihood ratio of no change versus the alternative of a change at $t - g$ is given by 
    \begin{align}
    \mathrm{LR}_g^{(t)} &= (\widehat{\theta}_{1,g}^{(t)} - \widehat{\theta}_{0}^{(t)})^\top \sum_{i=1}^{t-g} h(Y_i) + (\widehat{\theta}_{2,g}^{(t)} - \widehat{\theta}_0^{(t)})^\top \sum_{i=t-g+1}^{t} h(Y_i) \\
    &\quad - (t-g)\{  A(\widehat{\theta}_{1,g}^{(t)}) - A(\widehat{\theta}_0^{(t)})\} - g \{A(\widehat{\theta}_{2,g}^{(t)}) - A(\widehat{\theta}_0^{(t)})\},\end{align}
    where $\widehat{\theta}_{1,g}^{(t)}, \ \widehat{\theta}_{2,g}^{(t)}$  are ML estimates given by 
    \begin{align}
        \widehat{\theta}_{1,g}^{(t)} &= \underset{\theta\in\Theta}{\arg \max } \ \left\{ - (t-g) A(\theta) + \theta^\top \sum_{i=1}^{t-g} h(Y_i) \right\}, \\
        \widehat{\theta}_{2,g}^{(t)} &= \underset{\theta\in\Theta}{\arg \max } \ \left\{ -g A(\theta) + \theta^\top \sum_{i=t-g+1}^{t} h(Y_i)\right\}, 
    \end{align}
    and $\widehat{\theta}_{0}^{(t)} = \widehat{\theta}_{1,0}^{(t)}$. 
    Define the test $T^{(t)}_g = \ind \{ \mathrm{LR}^{(t)}_g > \xi^{(t)}\}$, where $\xi^{(t)}$ is a time-dependent critical value computable within at most $\mathcal{O}(1)$ unit-cost operations. Then, since $\widehat{\theta}_{1,g}^{(t)}, \ \widehat{\theta}_{2,g}^{(t)}$ and $\widehat{\theta}_{0}^{(t)}$ depend only on $\sum_{i=1}^{t-g}h(Y_i)$ and $\sum_{i=t-g+1}^t h(Y_i)$, the test $T^{(t)}_g$ is of the form \eqref{enkelpropt}. Proposition \ref{enkelprop} therefore implies that $\text{UC}(\tauhat, t) = \mathcal{O}[ \{v(p) + C_h(p) + C_M(p)\}\log t ]$ and $\text{SC}(\tauhat, t)  = \mathcal{O}\{p + v(p)\log t\}$, where $C_M(p)$ denotes the maximum number of unit-cost operations required to compute $\arg\max_{\theta\in\Theta} \theta^\top \Lambda - A(\theta)$ for any $\Lambda \in \RR^v$, and $C_h(p)$ denotes the maximum number of unit-cost operations required to compute $h(\cdot)$.
\end{example}

Interestingly, the class of distributions considered in Example \ref{examplecomp2} coincides with that assumed by \cite{computationalgeometry}. For this class, their detector MdFOCuS computes the GLR $\max_{g \in [t-1]}\mathrm{LR}_g^{(t)}$ over all $g \in [t-1]$. By contrast, the grid-based methodology proposed here only computes the maximum over the grid $G^{(t)}$, and can thus be considered an approximation of a full GLR scan. This approximation leads to a different computational trade-off. For exponential-family models, MdFOCuS has per‑iteration cost that is poly‑logarithmic in $t$, with the exponent increasing in the dimension. In particular, they show that the expected number of candidate changepoints is of order $\log^p t$, and expected total computational cost $\mathcal{O}(t\log^{p+1} t)$.\footnote{See \citet{computationalgeometry}, Theorem 6 and Lemma 1.} Due to the exponential scaling in the dimension, they propose an approximate algorithm that projects onto at most $\tilde{p}$ coordinates when $p\geq 5$, storing  $\mathcal{O}(p \log^{\tilde p} t)$ candidate changepoint locations in total. By contrast, the proposed grid-based methodology has update and storage costs scaling as $\log t$, up to factors depending on $p$ and $v(p)$. Thus, in high-dimensional settings both approaches rely on approximations in practice, but the grid-based methodology may achieve a more favourable scaling with dimension. We remark that the simulation study in Section \ref{sec:simulations} compares the proposed method with the approximate variant of MdFOCuS.

As in \citet{computationalgeometry}, a key ingredient in achieving logarithmic update and storage costs is that the test can be expressed in terms of a small number of easily updated summary statistics. This also clarifies which classes of test statistics are unlikely to satisfy Assumption~\ref{asscomp} or the conditions of Proposition~\ref{enkelprop}, and hence are unlikely to admit logarithmic update or storage costs. Immediate examples include tests based on order statistics, such as the Mann–Whitney U test \citep{rankbased} or the Kolmogorov–Smirnov test \citep{padillanonparametric}, since maintaining the full order statistics requires storing all past observations, leading to linear storage costs. Another example is the permutation test \citep{permutationtest}, where each permutation step is linear in the sample size and again requires storing all past observations. A third example comprises tests based on pairwise distances, such as certain U‑statistics \citep{ustatistics1}, which depend on all pairwise differences and therefore induce linear update and storage costs.

\notocsubsubsection{Statistical properties}\label{sec:statproperties}
Next, we analyse the statistical properties of the changepoint detector $\tauhat$ in \eqref{taudef}. To this end, we make the following assumption about the Type I and Type II errors of the test $T^{(t)}_g$, letting $\delta \in (0,1)$ denote some fixed desired false alarm probability.

\begin{assumption}\label{assstat} \phantom{linebreak} 
\begin{enumerate}[label={{\Alph*: }},
  ref={\theassumption.\Alph*}]
    \item \label{assstat-a}
        For any $t,g$, we have $\PP_{\infty}( T_g^{(t)} = 1) \leq \delta /(3 t^2 \log t)$.
    \item \label{assstat-b}
        For some non-negative functional $\rho \, : \, \mathcal{P}\times \mathcal{P} \rightarrow [0,\infty)$ such that $\rho(P_1, P_2)>0$ whenever $P_1\neq P_2$, and some function $r(t,\delta)>0$ non-decreasing in $t$, the following holds. 
        If $\tau<\infty$,  $\tau < t \leq 2\tau$, and $(t-\tau) \rho(P_1, P_2) \geq r(t, \delta)$,   we have that
        $$
        \PP_{\tau}(T^{(t)}_g = 1) \geq 1-\delta
        $$
        for any $g$ such that $(t - \tau)/2 \leq g \leq t-\tau$.
\end{enumerate}
\end{assumption}

Assumption \ref{assstat-a} is simply a requirement of uniform Type I error control, so that a union bound yields $ \mathrm{FA}(\tauhat)\leq \sum_{t\geq2}|G^{(t)}|\PP_{\infty}( T_g^{(t)} = 1) \leq \delta$. Thus, the specific bound $\delta / (3t^2\log t)$ in Assumption \ref{assstat-a} may be replaced by any sequence $\epsilon_{t,g}$ such that\newline $\sum_{t\geq 2} \sum_{g\in G^{(t)}} \epsilon_{t,g}\leq \delta$. Assumption \ref{assstat-b} is a local power condition, asserting that Type II error control can be obtained whenever the signal-strength condition $(t-\tau)\rho(P_1, P_2)\geq r(t,\delta)$ holds, for changes occurring in the latter half of the sample. Heuristically, $\rho(P_1, P_2)$ measures the signal contained in a single post-change observation, and thus $(t-\tau)\rho(P_1, P_2)$ represents the accumulated signal after the changepoint. Such conditions are standard in offline changepoint testing, for instance for mean-change models \citep{cusumandoptimality}, covariance-change models \citep{cusumandoptimality}, regression models \citep{cho2024detectioninferencechangeshighdimensional} and even nonparametric settings \citep{padillanonparametric}. 

Assumption~\ref{assstat-b} additionally requires robustness of the power to misspecification of the number of post-change samples $g$, or equivalently, the changepoint location $\tau$. Specifically, the test $T^{(t)}_g$ must have power whenever $g \in [(t-\tau)/2, t-\tau]$, i.e., when $g$ is within a factor of 2 of the true lag $t-\tau$. We remark that this condition on $g$ is tailored to the geometric spacing of the grid $G^{(t)}$ in \eqref{thegrid}, since such a $g$ can always be chosen from $G^{(t)}$, due to Lemma \ref{gridlemma}. Moreover, the robustness requirement is a direct consequence of the recycling property of $G^{(t)}$. Since candidate changepoint locations are recycled, one cannot guarantee that the framework will test for a change precisely at the true changepoint location, even as $t$ increases. The robustness requirement therefore appears to be inherent to the proposed framework. Interestingly, the robustness requirement may be removed entirely if the dynamic geometric grid $G^{(t)}$ is replaced by the static geometric grid in \eqref{naivegrid} (as shown in the supplementary material, Section \ref{sec:staticgrid}), although at the cost of a linear storage cost.

The following result shows that the power condition in Assumption \ref{assstat-b} translates directly to an upper bound on the detection delay of $\widehat{\tau}$.

\begin{proposition}\label{generalddprop}
    Let $\tauhat$ be chosen as in \eqref{taudef}, with $T_g^{(t)}$  satisfying Assumption \ref{assstat} and $G^{(t)}$ chosen as in \eqref{thegrid}.
    Then $\mathrm{FA}(\tauhat)\leq \delta$ and, for any $\tau \in \NN$, if $\tau \rho(P_1, P_2)\geq r(2\tau,\delta)$, then
    $$
    \PP_{\tau} \left\{ \widehat{\tau}  \leq \tau + \left \lceil \frac{r(2\tau, \delta)}{\rho(P_1, P_2)} \right\rceil \right\}\geq 1- \delta.
    $$
\end{proposition}

Proposition \ref{generalddprop} shows that, as long as the global signal-to-noise condition $\tau \rho(P_1, P_2)\geq r(2\tau,\delta)$ is met, the detector raises an alarm within at most $\mathcal{O}\{r(2\tau,\delta) / \rho(P_1, P_2)\}$ post-change samples, with probability at least $1-\delta$. Below we continue Example \ref{compexample1} by applying Proposition \ref{generalddprop} under an assumption of concentration. 

\addtocounter{example}{-2}
\begin{example}[continued]\label{example1cont}
Continuing Example \ref{compexample1}, assume that the transformed observations $h(Y_i)$ concentrate around the true parameter under both the pre- and post change distributions, in the sense that 
\begin{align}
P\left\{   \normm{n^{-1} \sum_{i=1}^n h(Y_i) - \theta(P)} \geq \frac{\xi(n,\delta)}{\sqrt{n}}     \right\} \leq \delta / (6n^2\log n)\label{concentrationass},
\end{align}
for all $n\in \NN$ and $P\in \mathcal{P}$, where $\theta(P) = \EE_P\{h(Y_i)\}$ is such that $\theta(P_1)\neq \theta(P_2)$ whenever $P_1 \neq P_2$, and $\xi(n,\delta)>0$ is some non-decreasing function in $n$. The concentration inequality in \eqref{concentrationass} may for instance be a result of a sub-Gaussian or more generally a sub-Weibull class of distributions \citep{subweibull}. In the former case, if $v(p)=1$, $h(\cdot)$ is the identity and $\normm{\cdot} = |\cdot|$, we may take $\xi(n,\delta) = C \normm{Y_i}_{\Psi_2}\sqrt{\log(n/\delta)}$ for some absolute constant $C>0$. Due to \eqref{concentrationass} and a union bound, it follows immediately that the test given by 
\begin{align}T^{(t)}_g = \ind\left \{ \normm{\widehat{\theta}_{1,g}^{(t)} - \widehat{\theta}_{2,g}^{(t)}}> \frac{2 \xi(t,\delta)}{\sqrt{g\wedge( t-g)}} \right\}\label{testexample1}\end{align}
satisfies Assumption \ref{assstat-a}. Moreover, one may show that Assumption \ref{assstat-b} is also satisfied with $\rho(P_1, P_2) = \normm{\theta(P_1) - \theta(P_2)}^2$ and $r(n,\delta) = 88 \xi^2(n,\delta)$, resulting in the following Proposition.
\begin{proposition}\label{example1statprop}
    Let $\tauhat$ be chosen as in \eqref{taudef}, with $T_g^{(t)}$ as in \eqref{testexample1} and $G^{(t)}$ as in \eqref{thegrid}. If \eqref{concentrationass} holds, then $\mathrm{FA}(\tauhat) \leq \delta$, and for any $\tau \in \NN$ such that $\tau \normm{\theta(P_1) - \theta(P_2)}^2 \geq 88 \xi^2 (2\tau, \delta)$, we have
$$
\PP_{\tau} \left\{ \widehat{\tau}  \leq \tau + \left \lceil \frac{ 88 \xi^2(2\tau, \delta)}{\normm{\theta(P_1) - \theta(P_2)}^2} \right\rceil \right\}\geq 1- \delta.
    $$
\end{proposition}
\end{example}
\addtocounter{example}{2}

Beyond average-type test statistics as in Example~\ref{example1cont}, for which concentration yields explicit finite-sample control, verifying Assumption \ref{assstat}---and in particular its robustness requirement---appears to be case-specific. Indeed, it appears challenging to identify broad classes of test statistics and models for which Assumption~\ref{assstat} holds in full generality. This difficulty is not primarily due to the robustness condition in Assumption \ref{assstat-b}, but more immediately due to the finite-sample power requirements imposed on $T^{(t)}_g$. 
For the exponential-family class in Example \ref{examplecomp2}, for instance, the finite-sample distribution of the likelihood ratio $\mathrm{LR}_g^{(t)}$ is not available in closed form, so even Assumption \ref{assstat-a} is difficult to verify with a closed-form critical value without additional assumptions. Similar finite-sample issues arise in other likelihood-based sequential procedures, including MdFOCuS, where the main statistical guarantees are developed under Gaussian assumptions within the mean-change model. 

We emphasise that Assumption \ref{assstat} is only needed to obtain theoretical guarantees on both the false alarm probability and the detection delay. In practice, Monte Carlo calibration can be used to achieve either finite-horizon false alarm control (as in the simulation study in Section~\ref{sec:simulations}) or average run length control \citep[see][]{chen_high-dimensional_2022}. Moreover, Proposition \ref{generalddprop} indicates that tests with high power generally yield small detection delays, at least when they are robust to slight misspecification of $g$. In the supplementary material (Section \ref{simulationpoisson}), we illustrate this for the likelihood-ratio test for a change in a Poisson rate: a simulation study shows that the detection delay of the grid-based methodology is very close to that of a full scan with $G^{(t)} =[t-1]$, which the approaches of \citet{computationalgeometry} and \citet{focus} can compute exactly with logarithmic update and storage cost.

In the next section, we show that in multivariate mean- and covariance-change settings, Assumption \ref{assstat} can be verified and near-optimal detection delay bounds can be obtained.

\notocsection{Statistical theory for special-case models}\label{sec:theoryspec}
\notocsubsection{Multivariate change in mean}\label{sec:multimean}
Let us return to the problem of detecting a change in the mean, now assuming that the $Y_i$ are independent and $p$-dimensional with independent Gaussian entries. Possible relaxations of these assumptions, including to temporal dependence or sub-Weibull noise terms, are discussed in the supplementary material (Section \ref{sec:meandiscussion}). Let $P_1 = \N_p(\mu_1, \sigma^2I)$ and $P_2 = \N_p(\mu_2, \sigma^2I)$ respectively denote the pre- and post-change distributions of the $Y_i$, with unknown respective mean vectors $\mu_1, \mu_2 \in \RR^p$ and known variance $\sigma^2>0$. Whenever $\tau < \infty$, let $k = \normm{\mu_2 - \mu_1}_0$ denote the \textit{sparsity} of the change, i.e., the number of affected entries in the mean vector, and let $\phi = \normm{\mu_2 - \mu_1}_2$ denote the magnitude of the change, both taken to be unknown.

The sparsity $k$ is known to have a substantial impact on the detectability of a changepoint \citep[see, e.g.,][]{liu_minimax_2021,enikeeva_high-dimensional_2019}. Aiming to detect changepoints with arbitrary sparsity, we will here embed the offline changepoint test proposed by \citet{liu_minimax_2021} in the methodology from Section \ref{sec:general}. In the offline setting, this test attains minimax rate optimal performance over all possible values of $k$ by thresholding and summing CUSUM-like quantities over a grid of potential values of $k$. By adjusting the threshold and critical values for stronger Type I error control, it can be used online in the following manner.  
Upon observing data $Y_1, \ldots, Y_t$ for $t\geq 2$, with a suspected changepoint occurring $g$ time steps before $t$,  define the test
    \begin{align}
        T^{(t)}_{g} &=  \ind \left \{ \underset{s \in \mathcal{S}}{\max}\ \frac{A_{s,g}}{\xi_s} > 1 \right\} \label{tdef},
    \end{align}
    which rejects the null of no changepoint whenever the statistic $A_{s,g}=A^{(t)}_{s,g}$ (defined shortly) exceeds some critical value $\xi_s= \xi_s^{(t)}$ at some sparsity level $s$ in a grid $\mathcal{S} = \mathcal{S}^{(t)} = \{ 1,2,4, \ldots, 2^{\log_2\left( \sqrt{p\log t} \ \wedge \ p \right)}\} \cup\{ p\}$ of sparsities. We note that $\mathcal{S}^{(t)}$ is slightly larger than the corresponding grid in \citet{liu_minimax_2021}. The statistic $A^{(t)}_{s,g}$ is the result of variance-rescaling and thresholding the CUSUM statistics of each coordinate of the observed data, tailored for a specific sparsity level $s$, and is given by 
\begin{align}
       A_{s,g}^{(t)} &= \sum_{j=1}^p \left\{   C_g(j)^2/\sigma^2 - \nu_{a(s,t)}   \right\} \ind\{ \left|C_g(j) \right|/\sigma > a(s,t)\}\label{adef},
        \intertext{where $C_g = C^{(t)}_g$ is the CUSUM vector given by}
      C_g^{(t)} &= \left\{  \frac{g}{t(t-g)}   \right\}^{1/2}  \sum_{i=1}^{t-g} Y_i- \   \left\{ \frac{t-g}{tg}\right\}^{1/2} \sum_{i=t-g+1}^t Y_i.
       \label{ydef}
\end{align}
In \eqref{adef}, the term $a(s,t)$ is a threshold value which depends on the candidate sparsity $s$ and the sample size $t$, given by $a^2(s,t) = 4 \log (ep s^{-2} \log t) \ind\{ s \leq \sqrt{p \log t}\}$, and is slightly larger than the threshold value in \citet{liu_minimax_2021}. The threshold $a(s,t)$ decreases with the candidate sparsity $s$, and when $s > \sqrt{p\log t}$ (corresponding to a dense change), $a(s,t)=0$ and no thresholding takes place at all. After thresholding, each entry of $C_g^{(t)}$ in \eqref{adef} is mean-centred by the conditional expectation 
$\nu_{a(s,t)} = \EE \{Z^2 \ | \ |Z| > a(s,t)\}$, where $Z\sim \mathrm{N}(0,1)$. We remark that the original offline test in \citet{liu_minimax_2021} uses a CUSUM-like quantity that is slightly different from the CUSUM in \eqref{ydef}. This could have been used in place of \eqref{ydef}, although we opted for the CUSUM in \eqref{ydef} for convenience.

Let the critical value $\xi^{(t)}_s$ be given by $\xi^{(t)}_s= \lambda z(s,p,t)$, where $\lambda>0$ is a tuning parameter, and 
\begin{align}
    z(s,p,t) = \begin{cases}
        \sqrt{p\log t}, & \text{if } s > \sqrt{p\log t},\\
        s\log \left( \frac{ep \log t}{s^2} \right) \vee \log t, & \text{otherwise.}
    \end{cases} \label{rdef}
\end{align}

Let $\tauhat$ be as given in \eqref{taudef} with $G^{(t)}$ chosen as in \eqref{thegrid}, and let $T^{(t)}_g$ be chosen as in \eqref{tdef}. The resulting changepoint detector has the following theoretical performance.

\begin{theorem}\label{theorem3}
    Let $\tauhat$ be defined as above. It then holds that $\mathrm{UC}(\widehat{\tau},t) = \mathcal{O}(p \log p \log t )$, and  $\mathrm{SC}(\widehat{\tau},t) = \mathcal{O}(p \log t)$.
    Moreover, for any $\delta \in (0,1)$, there exists a constant $C_1>0$ depending only on $\delta$, and constant $C_2>0$ depending only on $\lambda$ and $\delta$, such that if $\lambda \geq C_1$, then $\mathrm{FA}(\widehat{\tau}) \leq \delta$, and if $\tau<\infty$ and $\tau \phi^2 / \sigma^2 \geq C_2  z(k,p,2\tau) $, then
\begin{align}
\PP_{\tau} \left\{  \tauhat - \tau  \leq  \left \lceil C_2  \frac{\sigma^2  }{\phi^2} z(k,p,2\tau) \right \rceil \right\} \geq 1-\delta. \label{ddmean}
\end{align}

\end{theorem}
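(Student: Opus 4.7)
The plan is to break the proof into three essentially independent parts corresponding to the three conclusions: the computational costs, the false-alarm control, and the detection-delay bound.

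For the computational costs, I would verify that the summary statistic needed to run the procedure consists of the vector-valued partial sums $\{S_{t-g} : g \in G_{\mathrm{dyn}}^{(t)}\} \cup \{S_t\}$ in $\RR^p$, where $S_j = \sum_{i=1}^j Y_i$. By the recycling property $t+1 - G_{\mathrm{dyn}}^{(t+1)} \subseteq (t - G_{\mathrm{dyn}}^{(t)})\cup\{t\}$ noted after \eqref{thegrid}, this summary statistic updates with one new length-$p$ sum and the deletion of at most $\mathcal{O}(1)$ others at each step. Since $|G_{\mathrm{dyn}}^{(t)}| = \mathcal{O}(\log t)$, the storage cost is $\mathcal{O}(p \log t)$, giving (ii). For each $g \in G_{\mathrm{dyn}}^{(t)}$, the CUSUM vector $C_g^{(t)} \in \RR^p$ is computed in $\mathcal{O}(p)$ operations from the stored partial sums, and then $A_{s,g}^{(t)}$ for all $s \in \mathcal{S}$ costs $\mathcal{O}(p|\mathcal{S}|) = \mathcal{O}(p \log p)$. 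Multiplying by $|G_{\mathrm{dyn}}^{(t)}|$ yields the update cost $\mathcal{O}(p \log p \log t)$. This is essentially an extension of Proposition \ref{enkelprop}, with the vector-valued functional $h(y)=y$.

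For the false-alarm probability, the strategy is to bound $\PP_\infty(T^{(t)}_g = 1)$ uniformly and then union bound. Under $\PP_\infty$ the coordinates $C_g^{(t)}(j)/\sigma$ are i.i.d.\ standard Gaussians, so $A_{s,g}^{(t)}$ is a sum of $p$ i.i.d.\ random variables of the form $(Z^2 - \nu_{a(s,t)})\ind\{|Z| > a(s,t)\}$ with $Z \sim \mathrm{N}(0,1)$. Each summand is centred and has a sub-exponential tail whose scale depends on $a(s,t)$; a Bernstein inequality combined with the explicit form $a^2(s,t) = 4\log(ep s^{-2}\log t)\ind\{s\leq \sqrt{p\log t}\}$ yields $\PP_\infty(A_{s,g}^{(t)} > \lambda r(s,p,t)) \leq \delta t^{-2}|G_{\mathrm{dyn}}^{(t)}|^{-1}|\mathcal{S}|^{-1}$ once $\lambda \geq C_1$. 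This is essentially the calibration argument of \citet{liu_minimax_2021}, adapted to the enlarged threshold that absorbs the extra $\log t$ factors from the online union bound. Summing over $s \in \mathcal{S}$, $g \in G_{\mathrm{dyn}}^{(t)}$ and $t \geq 2$, and using $\sum_{t\geq 2} t^{-2} \leq 1$, delivers $\mathrm{FA}(\tauhat) \leq \delta$.

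For the detection-delay bound, set $D = \lceil C_2 \sigma^2 r(k,p,e\tau)/\phi^2\rceil$ and $t = \tau + D$. The signal-strength assumption $\tau\phi^2/\sigma^2 \geq 2 C_2 r(k,p,e\tau)$ ensures $D \leq \tau/2$, so $t - \tau = D \leq t-1$ and by Assumption \ref{ass1} there exists $g^* \in G_{\mathrm{dyn}}^{(t)}$ with $D/2 \leq g^* \leq D$; pick also $s^* \in \mathcal{S}$ with $s^* \asymp k \wedge \sqrt{p\log t}$ (or $s^*=p$ in the dense regime). On the event that $g^* \leq \tau$, the CUSUM $C_{g^*}^{(t)}$ has mean vector proportional to $\mu_2 - \mu_1$ with magnitude $\sqrt{g^*(t-g^*)/t}\,\phi \asymp \sqrt{D}\,\phi$, so $\|E[C_{g^*}^{(t)}]\|_2^2/\sigma^2 \gtrsim D\phi^2/\sigma^2 \asymp r(s^*,p,t)$. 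The remaining step is to show that this mean contribution, after coordinate-wise thresholding at level $a(s^*,t)$ and centring by $\nu_{a(s^*,t)}$, survives with high probability; this is exactly the signal-lower-bound computation in \citet{liu_minimax_2021}, and it implies $A_{s^*,g^*}^{(t)} \geq 2\lambda r(s^*,p,t) = 2\xi_{s^*}^{(t)}$ with probability at least $1-\delta$ once $C_2$ is large enough in terms of $\lambda, \delta$.

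The main obstacle is the last step: controlling the thresholded sum under the alternative so that its deterministic signal part dominates both the critical value and the fluctuations from the unaffected coordinates, uniformly across the three regimes of $r(s,p,t)$ (dense, moderately sparse, and very sparse with the $\log t$ floor). This requires careful book-keeping because (a) the chosen $s^*$ is only approximately equal to $k$, so one must show the anti-concentration of the signal at the mismatched threshold $a(s^*,t)$, and (b) the online enlargement of the threshold by an additive $\log t$ penalty propagates through the signal lower bound and must be matched by the inflated rate $r(s,p,t)$ in \eqref{rdef}. I would import the relevant technical lemmas of \citet{liu_minimax_2021} verbatim, adjust their constants to accommodate the $\log t$ terms, and then glue everything together via a union bound over the (logarithmically many) sparsity levels.
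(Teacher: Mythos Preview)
Your proposal is correct and follows essentially the same three-part structure as the paper's proof: computational costs via the cumulative-sum summary statistic and Proposition~\ref{enkelprop}; false-alarm control via a union bound together with the Liu--Gao--Samworth concentration inequality for thresholded chi-squared sums (packaged in the paper as Lemma~\ref{lhihghprobupper}); and detection delay by fixing $t=\tau+D$, selecting $g^*\in G^{(t)}_{\mathrm{dyn}}$ with $D/2\leq g^*\leq D$ and an appropriate $s^*\in\mathcal{S}$, then lower-bounding the signal in $A^{(t)}_{s^*,g^*}$ (Lemma~\ref{lhihghproblower} in the paper).

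One small correction worth noting: your expression for the CUSUM mean magnitude, $\sqrt{g^*(t-g^*)/t}\,\phi$, is the value one would get if the changepoint were exactly at $t-g^*$. Since in general $g^*\leq D=t-\tau$ with possible strict inequality, the correct magnitude is $\tau\sqrt{g^*/\{t(t-g^*)\}}\,\phi$ (cf.\ Lemma~\ref{cusumlemma}). Both expressions are of order $\sqrt{D}\,\phi$ under the assumption $D\leq\tau$, so your subsequent bound $\|\EE C_{g^*}^{(t)}\|_2^2/\sigma^2\gtrsim D\phi^2/\sigma^2$ and the rest of the argument go through unchanged; just use the right formula when you write it up.
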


Theorem \ref{theorem3} implies that the update and storage costs of $\tauhat$ grow logarithmically with the sample size $t$, and log-linearly with the dimension $p$, with a detection delay of order
\begin{align}
    \frac{\sigma^2}{\phi^2} 
    \begin{cases}
        \sqrt{p\log (2\tau)}, & \text{if } k > \sqrt{p\log (2\tau)},\\
        k\log \left\{ \frac{ep \log (2\tau)}{k^2} \right\} \vee \log (2\tau), & \text{otherwise,}
    \end{cases} \label{detectiondelaymean}
\end{align}
whenever the signal strength condition in Theorem \ref{theorem3} is satisfied. 
Importantly, the sparsity $k$ is taken to be unknown, and thus the detector is adaptive to this quantity. When $p=1$, the detection delay is of order $\sigma^2 \log(2\tau) /\phi^2$, matching the rate of the univariate detector in Section \ref{sec:unimean} (ignoring $\delta$). When $p>1$, the detection delay depends considerably on the sparsity $k$. When $k=1$, corresponding to a ``needle in a haystack'' problem, the detection delay is of order $\sigma^2 \log( 2\tau \vee ep) /\phi^2$, which is larger than in the univariate case by a factor of only $\log(2\tau \vee  ep) / \log (\tau)$. In the other extreme when $k = p$, the order of the detection delay is as large as $\sigma^2 \sqrt{p\log (2\tau)} /\phi^2$. In the supplementary material (Section \ref{sec:optimalitymean}), we show that the detection delay and signal strength condition are minimax rate optimal for any fixed changepoint location $\tau \in \NN$ and sparsity $k\in [p]$, save for a factor bounded from above by $\log (2\tau)$. There, we also briefly discuss the gap between the lower and upper bounds.  

Let us now compare the above detector to the Online Changepoint Detection (ocd) method of \citet{chen_high-dimensional_2022} and a sparsity-adaptive variant of MdFOCuS \citep[][Section 4.5]{computationalgeometry}. Both of these changepoint detectors are developed under the same Gaussian model as above (although the ocd method requires the pre-change mean $\mu_1$ to be zero). The detection delays guarantees of ocd and MdFOCuS are stated under the assumption that their average run lengths satisfy $\EE_{\infty}(\tauhat_{\text{ocd}}),\, \EE_{\infty}( \tauhat_{\text{MdFOCuS}}) \geq \gamma$ for some user-specified $\gamma>0$, where $\tauhat_{\text{ocd}}$ and $\tauhat_{\text{MdFOCuS}}$ respectively denote their outputs.
For a meaningful comparison with the changepoint detector defined above, we will in the following set $\gamma = \tau$, as ocd and MdFOCuS would otherwise be expected to raise a false alarm even before a changepoint occurs. For brevity, we will in the following refer to the changepoint detector $\tauhat$ in \eqref{taudef} (with $T^{(t)}_g$ defined in \eqref{tdef}) as CHAD (\textbf{CHA}nge \textbf{D}etector), which is the name of its corresponding R package \citep{CHAD}.

The detection delay rates of the three changepoint detectors are given in Table \ref{tab:ratetable}, together with their respective update and storage costs.\footnote{For MdFOCuS, the reported update cost is a per-iteration arithmetic average of the expected total run time from Lemma 1 in \cite{computationalgeometry}. The reported storage cost is derived from Theorem 6, which provides an upper bound on the expected number of candidate change locations. For the two other methods, the stated update and storage costs are worst-case. The stated detection delay for ocd is a lower bound on the theoretic rate provided in Theorem 6 in \cite{chen_high-dimensional_2022}. The stated detection delay of MdFOCuS is a lower bound on the theoretic rate provided in Section 4.6 in \cite{computationalgeometry}. } In Table \ref{tab:ratetable}, the variance is assumed to be $\sigma^2=1$, and the parameter $\beta>0$ is a user-provided lower bound on the magnitude of the change $\phi$, which the ocd method requires. Moreover, $k_0$ denotes \citeauthor{chen_high-dimensional_2022}'s notion of \textit{effective sparsity}, which is slightly different from the definition of $k = \normm{\mu_1 - \mu_2}_0$. When, for instance, all non-zero coordinates of $\mu_2 - \mu_1$ have the same magnitude, it holds that $k = k_0$, in which case all three rates are directly comparable.  

\begin{table}[h]
\centering
\caption{Comparison of detection delay, update cost, and storage cost for $\tauhat$ in \eqref{taudef} (denoted CHAD), OCD \citep{chen_high-dimensional_2022}, and MdFOCuS \citep{computationalgeometry}.}
\label{tab:ratetable}
\resizebox{\textwidth}{!}{
\begin{tabular}{l c c c}
\hline
Method & Detection delay & Update cost & Storage cost \\
\hline
CHAD &
$\displaystyle
\frac{1}{\phi^{2}}
\begin{cases}
\sqrt{p \log (2\tau)}, & \text{if } k > \sqrt{p \log (2\tau)},\\[4pt]
k \log \!\left\{ \dfrac{e p \log (2\tau)}{k^{2}} \right\} \vee \log (2\tau), & \text{otherwise.}
\end{cases}
$ &
$\mathcal{O}\big(p \log p \log t\big)$ &
$\mathcal{O}\big(p \log t\big)$ \\
\\[-2pt]
ocd &
$\displaystyle
\quad \quad  \quad \hspace{-0.2em} \begin{cases}
\dfrac{\sqrt{p}\,\log(e p \tau)}{\phi^{2}}, \quad\quad\quad\quad\quad \quad\quad \hspace{-0.3em}  \  & \text{if } k_{0} \ge \sqrt{p}\,\log^{-1}(e p),\\[6pt]
\dfrac{k_{0}\,\log(e p \tau)\,\log(e p)}{\beta^{2}}, & \text{otherwise.}
\end{cases}
$ &
$\mathcal{O}\big\{p^{2} \log (e p)\big\}$ &
$\mathcal{O}\big\{p^{2} \log (e p)\big\}$ \\
\\[-2pt]
MdFOCuS &
$\hspace{-6.3em}\displaystyle
\frac{1}{\phi^2} \left( \log \tau + \sqrt{k\log\tau} + \ind\{k<p\} k\log p\right)
$ &
$\mathcal{O}\big\{\log^{p+1} (t)\big\}$ &
$\mathcal{O}\big\{\log^{p} (t)\big\}$ \\
\hline
\end{tabular}
}
\end{table}

Some remarks are in order. Firstly, the detection delays of both ocd and MdFOCuS are of no smaller order than that of CHAD, at least when $k_0=k$ and the rates are directly comparable.  Indeed, a comparison provided in the supplementary material (Section \ref{sec:ratecomp}) reveals that the detection delay rates of ocd and MdFOCuS are of strictly greater order than that of CHAD in certain regimes, while of the same order in other regimes. Secondly, in terms of computational guarantees, the ocd method is the only changepoint detector with update and storage costs constant in the sample size. In comparison, the update and storage costs of CHAD scale with the sample size $t$ by a factor of $\log t$, while those of MdFOCuS scale by a factor of $\log^p(t)$. When it comes to the dimension $p$, however, the update and storage cost grows the slowest for CHAD, whose dependence is log-linear, compared to log-quadratic for ocd and exponential for MdFOCuS. Finally, we remark that the empirical performances of the three methods are compared in the simulation study in Section \ref{sec:simulations}. There, MdFOCuS is replaced by a computationally feasible approximate algorithm.

\notocsubsection{Multivariate change in covariance}\label{sec:covariance}
Let us now consider the problem of detecting a change in the covariance matrix of $p$-dimensional sub-Gaussian vectors. 
Let $P_1$ and $P_2$ be pre- and post-change distributions of the $Y_i$, with positive definite covariance matrices $\Sigma_1$ and $\Sigma_2$, so that $\EE Y_i Y_i^\top = \Sigma_1$ for $i\leq \tau$ and $\EE Y_i Y_i^\top = \Sigma_2$ for $i > \tau$. We impose the following assumption on the distribution of the $Y_i$.
\begin{assumption}\label{assmultivariate2} \phantom{linebreak} 
\begin{enumerate}[label={{\Alph*: }},
  ref={\theassumption.\Alph*}]
    \item \label{assmultivariate2-a}
        The $Y_i$ are independent and mean-zero for all $i \in \NN$.
    \item \label{assmultivariate2-b}
        For some $w>0$, all $i \in \NN$ and all $v \in \mathbb{S}^{p-1}$, the random variable \\$v^\top Y_i /  \{ \EE (v^\top Y_i Y_i^\top v)\}^{1/2}$ has a continuous density bounded from above by $w$.
    \item \label{assmultivariate2-c}
        For some $u>0$, all $i \in \NN$, and all $v \in \mathbb{S}^{p-1}$, we have\\$ \lVert v^\top Y_i \rVert_{\Psi_2}^2 \leq u \EE  \{ (v^\top Y_i)^2 \}$.
\end{enumerate}
\end{assumption}
Here, Assumption \ref{assmultivariate2-b} ensures that the data are bounded away from zero with high probability along any axis of variation (needed for variance estimation), while Assumption \ref{assmultivariate2-c} ensures that the sub-Gaussian norm of the data is of the same order as the variance along any axis of variation. We remark that these assumptions are satisfied for Gaussian data. 

To test for a change in covariance online, we use a variant of the offline test proposed by \citet{moen2024minimax}, which is a slightly modified variant of test in \citet{covariancecusum}. We remark that latter test could also have been used, yielding similar theoretical performance as below, but would require $\normmop{\Sigma_1} \vee \normmop{\Sigma_2}$ to be known. Upon observing $Y_1, \ldots, Y_t$ for some $t\geq 2$, with a suspected changepoint occurring $g$ time steps before $t$, define
    \begin{align}
        T^{(t)}_{g} &=  \ind \left \{ \lVert \widehat{\Sigma}_{1,g}^{(t)} - \widehat{\Sigma}_{2,g}^{(t)} \rVert_{\mathrm{op}} \  (\widehat{\sigma}^{(t)}_g)^{-2} > \xi_g^{(t)}  \right\} \label{tvardef2},
    \end{align}
which rejects the null of no changepoint whenever the operator norm of the difference between the empirical covariances
\begingroup
\mathtoolsset{showonlyrefs=false}
\begin{align}
   & \widehat{\Sigma}_{1,g}^{(t)} = (t-g)^{-1}\sum_{i=1}^{t-g} Y_i Y_i^\top, &\widehat{\Sigma}_{2,g}^{(t)} = g^{-1} \sum_{i=t-g+1}^t Y_{i} Y_i^\top,\label{sigmahatsdef} 
\end{align}
\endgroup
after being normalised by the estimated pre-change noise level $\widehat{\sigma}^{(t)}_g= \lVert \widehat{\Sigma}_{1,g}^{(t)}\rVert_{\mathrm{op}}^{1/2}$, exceeds a critical value given by 
\begin{align}\xi_g^{(t)} &= \lambda \left\{ \frac{p \vee \log t}{g\wedge(t-g)} \vee \sqrt{\frac{p \vee \log t}{g\wedge (t-g)}} \right\},\label{xicovar}
\end{align}
where $\lambda >0$ is a tuning parameter. 

For the purpose of theoretical analysis, define the signal strength parameter
\begin{align}
    \omega = \frac{\normmop{\Sigma_1 - \Sigma_2}}{\normmop{\Sigma_1}\vee \normmop{\Sigma_2}}.\label{kappadefcovar}
\end{align}
Let $\tauhat$ be as given in \eqref{taudef} with $G^{(t)}$ chosen as in \eqref{thegrid}, and let $T^{(t)}_g$ be chosen as in \eqref{tvardef2}. The resulting changepoint detector has the following theoretical performance.

\begin{theorem}\label{theorem5}
    Let $\tauhat$ be defined as above. It then holds that $\mathrm{UC}(\widehat{\tau},t) = \mathcal{O}\left( p^3 \log t\right)$ and $\mathrm{SC}(\widehat{\tau},t) = \mathcal{O}\left( p^2 \log t\right)$. Moreover, if Assumption \ref{assmultivariate2} is satisfied for some $w,u>0$, 
    then for any $\delta \in (0,1)$, there exists a constant $C_1>0$ depending only on $\delta,w,u$, and a constant $C_2>0$ depending only on $\delta,w,u$ and $\lambda$, such that if $\lambda \geq C_1$, then $\mathrm{FA}(\widehat{\tau}) \leq \delta$, and if $\tau< \infty$ and $\tau \omega ^2 \geq C_2 \{ p \vee \log (2\tau) \}$, then
$$
\PP_{\tau} \left\{  \tauhat \leq \tau + \left \lceil C_2  \frac{ p \vee \log (2\tau )}{\omega^2} \right \rceil \right\} \geq 1-\delta.
$$ 
\end{theorem}

Theorem \ref{theorem5} implies that the update and storage costs of $\tauhat$ grow logarithmically with the sample size $t$, and respectively cubically and quadratically with the dimension $p$. 
Moreover, the detection delay of $\tauhat$ is of order $( \normmop{\Sigma_1}^2\vee \normmop{\Sigma_2}^2)\normmop{\Sigma_1 - \Sigma_2}^{-2} \{p \vee \log (2\tau) \}$ whenever the signal strength condition in Theorem \ref{theorem5} is satisfied. In the supplementary material (Section \ref{sec:optimalitycovariance}) we show that the detection delay and signal strength condition are minimax rate optimal up to a factor of at most $\log (2\tau)$, for any fixed changepoint location $\tau$, whenever the relative magnitude of the covariance change is small to moderate. However, this optimality only holds when the change in covariance is dense, and the detection delay of $\tauhat$ grows as much as linearly with $p$. As such, the theoretically guaranteed detection delay of $\widehat{\tau}$ may be unacceptably large for high-dimensional problems where the change in covariance is sparse, i.e., when few entries of the entries of the data are affected. In the supplementary material (Section \ref{sec:covariancesupp}), we propose an alternative detector using sparse eigenvalues with a smaller detection delay rate for sparse changes. 

Among existing methods, the one most similar to that above is the detector introduced by \citet{li_online_2023}, which uses the Frobenius norm to measure the distance between covariance matrices, a rolling window approach to control computational cost, and even allows for temporal dependence. This detector's detection delay was asymptotically derived, containing implicit constants depending on the desired Average Run Length, making direct comparisons to Theorem \ref{theorem5} challenging. However, there is a qualitative similarity between the two detectors, in that both their detection delays depend on a variance ratio, although for the detector of \citet{li_online_2023}, this ratio is measured in terms of the Frobenius norm.

\notocsection{Simulation study}\label{sec:simulations}
We now empirically investigate the performance of the paper's proposed methodology in a simulation study. We consider the model from Section \ref{sec:multimean}, where the goal is to detect a change in the mean vector of multivariate Gaussian variables with covariance matrix $\sigma^2 I$. We compare the performance of the changepoint detector from Section \ref{sec:multimean} to
ocd \citep{chen_high-dimensional_2022}, a computationally feasible variant of MdFOCuS \citep[][Section 4.5]{computationalgeometry} using projections onto $\widetilde{p}=2$ variates (which the authors recommend for $p\geq 5$), and the detectors proposed by \cite{mei2010}, \cite{xs2013} and \cite{chan2017}. The detector proposed in Section \ref{sec:multimean}, hereby denoted CHAD, is efficiently implemented in the R package \texttt{CHAD} (\citeauthor{CHAD}, \citeyear{CHAD}), available on GitHub\footnote{The source code for the simulation study is found in the subdirectory \textit{inst} of the R package CHAD.}. MdFOCuS is implemented in the publicly available code from \cite{computationalgeometry}, and the remaining methods are implemented in the R package \texttt{ocd} \citep{ocdpackage}, available on CRAN. 

\notocsubsection{Statistical performance}\label{sec:simulations_statistics}
We first investigate the detectors' statistical performance, i.e., their ability to quickly detect a changepoint once it has occurred. Throughout we take the pre-change mean $\mu_1$ to be zero and known, as all detectors except for CHAD and MdFOCuS are designed under this assumption. A similar simulation study involving an unknown pre-change mean is covered in the supplementary material (Section \ref{simnonzeromean}). To adapt CHAD to the assumption that $\mu_1 =0$, the detector was modified by replacing the CUSUM in \eqref{ydef} by $C_g^{(t)} = g^{-1/2} \sum_{i=t-g+1}^t Y_i$, while the code for MdFOCuS includes a dedicated routine for $\mu_1=0$.

While detectors CHAD and MdFOCuS are designed to control the false alarm probability over an infinite sequence of data points, the implementations of the remaining detectors control the Average Run Length, i.e., the expected number of observations under the null until a change is declared. To balance these two distinct measures of false alarm control, we chose a compromise approach by calibrating all changepoint detectors via Monte Carlo simulation to achieve a false alarm probability of approximately $5\%$ after processing $N= 2000$ observations. Specifically, the detectors' critical values were chosen as the upper $95\%$ empirical quantiles of their respective test statistics' maximum value over a data stream of length $N=2000$, computed using $1000$ Monte Carlo samples with no changepoint. For the detectors using multiple test statistics, Bonferroni corrections were applied. Details on the critical value calibration can be found in the supplementary material (Section \ref{sec:tuningdetails}).

Fixing dimension $p = 100$ and variance $\sigma^2=1$ (assumed to be known), we set the true changepoint location to $\tau= \lceil N/3\rceil = 667$. The calibrated detectors were then applied to $1000$ independent data sets with post-change mean vector $\mu_2 = \phi k^{-1/2} (1_k^\top, 0_{p-k}^\top)^\top$, where $1_k$ is a $k$-dimensional vector of ones and $0_{k-p}$ is a $(k-p)$-dimensional vector of zeros, for $\phi \in \{0, 0.4, 0.8, \ldots, 8\}$ and $k  \in \{1,5,10,100\}$.
All detectors except CHAD and MdFOCuS require tuning parameters, and these were all taken to be the default values provided in the package \texttt{ocd}, justified in \cite{chen_high-dimensional_2022}. In particular, the window size of the detectors using sliding windows was set to $200$.

\begin{figure}
    \centering
    \includegraphics[width=\textwidth]{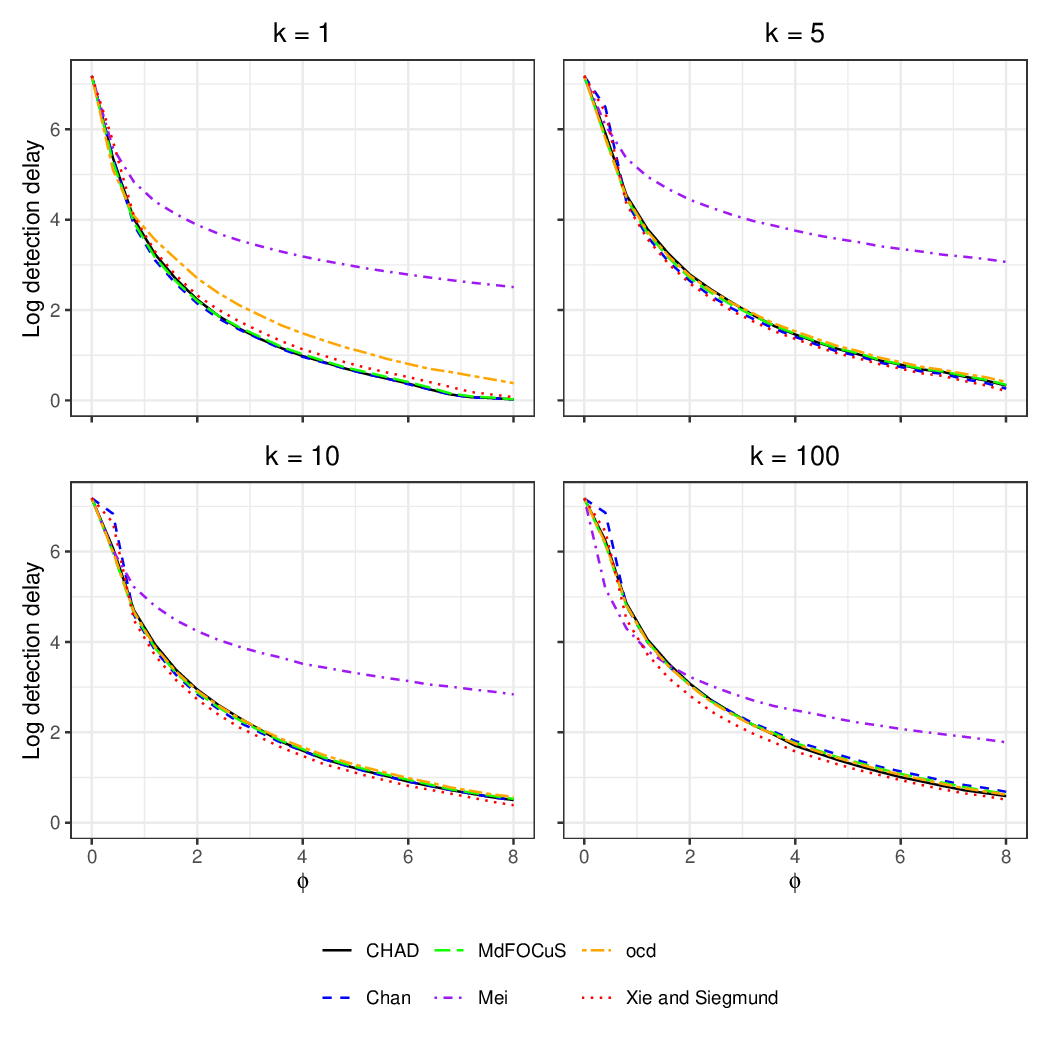}
    \caption{Average detection delay of the detectors (on log scale) for varying change magnitudes ($\phi$) and changepoint sparsities $k = 1,5,10,100$.}
    \label{fig:combined_p=100_log}
\end{figure}

Figure \ref{fig:combined_p=100_log} displays the natural logarithm of the average detection delays of the changepoint detectors, as functions of the change magnitude $\phi$, for the four different values of the sparsity $k$. Note that premature changepoint declarations ($\tauhat \leq \tau$) were excluded from the average, so that the logarithm of the estimate of $\EE (\ \tauhat \wedge N - \tau \ | \ \tauhat > \tau)$ is plotted for each detector. Figure \ref{fig:combined_p=100_log} suggests that the detectors perform rather similarly in most sparsity regimes, with the exception of \cite{mei2010} (purple curve), which has a much larger detection delay for large signal strengths than the remaining detectors. Interestingly, the detection delays of CHAD (black curve) and MdFOCuS (green curve) behave very similarly across all sparsity regimes and all signal strengths. These two detectors also behave very similarly to ocd, except for the ultra-sparse case where $k=1$, where ocd has a slightly greater detection delay.  For $\phi=0$, corresponding to no changepoint, the rate of false alarms (i.e., the frequency of $\tauhat \leq N$) were $5.7\%$ for CHAD, $4.9\%$ for MdFOCuS, $3.9\%$ for ocd, $4.2\%$ for the detector of \cite{mei2010}, $5\%$ for the detector of \cite{xs2013}, and $4.5\%$ for the detector of \cite{chan2017}. 

A plot of the detection delays on the original (linear) scale can be found in the supplementary material (Section \ref{sec:mainsim_nonlog}). Simulation studies were also run for the pairs $(p,N) = (10, 5000), \, (1000,200)$, which are covered in the supplementary material (Sections \ref{sec:smallplargeN}, \ref{sec:largepsmallN}), 
which yielded qualitatively similar results as above, although the detectors of \cite{xs2013} and \cite{chan2017} appear to be superior for large dimensions and short sequences of data. The supplementary material also covers a similar simulation study for the setting with an unknown pre-change mean (Section \ref{simnonzeromean}), in which CHAD, MdFOCuS and the detector of \cite{xs2013} appear to outperform the competitors. 
The supplementary material also covers a simulation study 
for the univariate setting (Section \ref{sec:simdetails}), which compares the univariate detector from Section \ref{sec:unimean} to the likelihood-ratio test for sequences of length $N = 20\, 000$, in addition to a simulation study for rate changes in Poisson data (Section \ref{simulationpoisson}).

\notocsubsection{Computational performance}\label{sec:simulationcomp}
To evaluate the computational performance of the detectors, we measured their update times (the time to process a new data point) and their memory consumption for varying values of $t$, the sample size, and $p$, the data dimension. 
To measure the dependence of the computational performance on $t$, we fixed $p=100$ and recorded the detector's processing time and memory consumption upon the arrival of the $t$-th observation, for $t \in \{400,800,\ldots, 10\,000\}$ over $20$ independent runs. To avoid rounding to zero, the update times for the $t$-th observation were estimated using the average (per-observation) processing time for the $200$ most recent observations up to and including the $t$-th observation. Due to the absence of a sequential implementation of MdFOCuS, its update time was taken as the average (per-observation) processing time of all observations up to including the $t$-th observation, and its memory consumption could not be measured in a meaningful way. To measure the dependence on $p$, we recorded the average (per-observation) processing times and memory consumptions of the detectors for $p \in \{8, 16, 24, \ldots, 200\}$, taking an average over observations $t=1,2,\ldots,500$ over $20$ independent runs. As before, the memory consumption of MdFOCuS was not recorded. All detectors were run on an Apple MacBook Pro with an M1 CPU and 16 gigabytes of memory. We remark that the empirical results below naturally depend on the implementations of the detectors. 

\begin{figure}
    \centering
    \includegraphics[width=\textwidth]{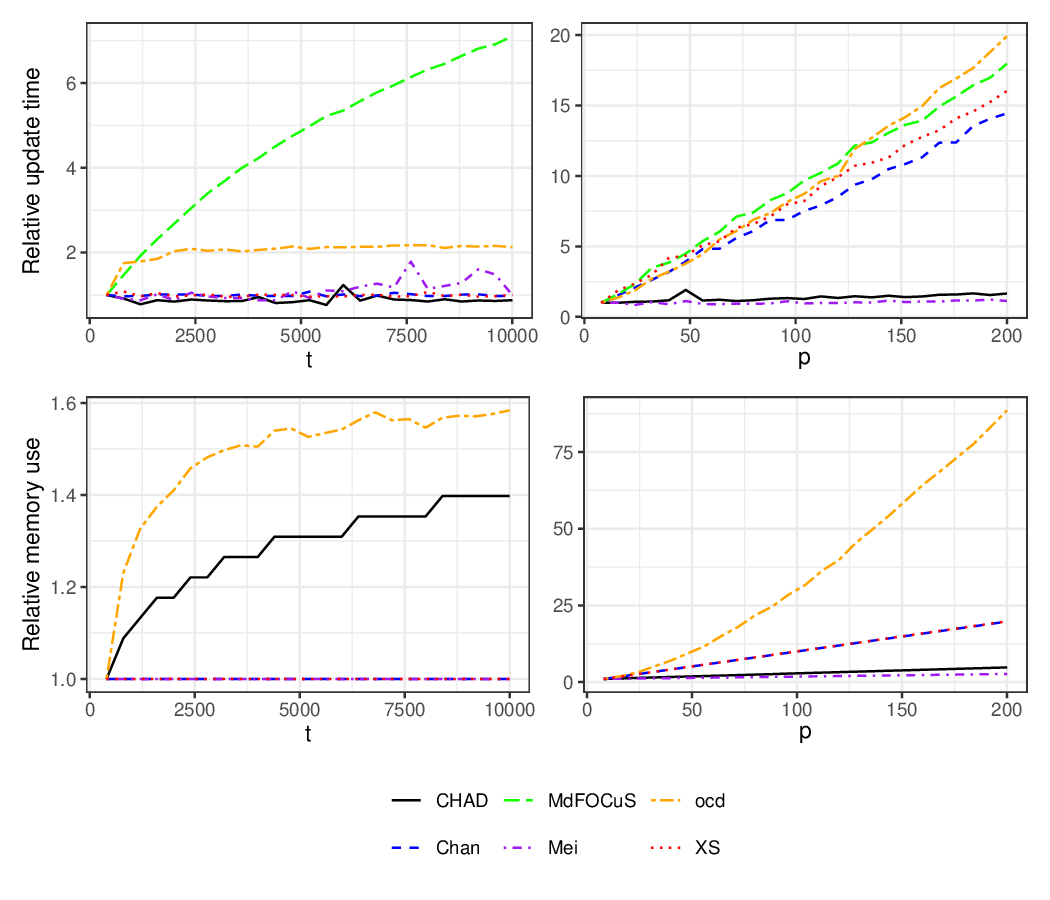}
    \caption{Relative update time (top) and memory consumption (bottom) of the detectors, shown as a function of $t$ (left) and $p$ (right).}
    \label{fig:combined_runtime}
\end{figure}

Figure \ref{fig:combined_runtime} displays the detectors' average update times (top) and memory consumption (bottom) as a function of $t$ (left) and $p$ (right), normalized by their values at the smallest value of $t$ and $p$, respectively. With respect to the sample size $t$, all methods except for MdFOCuS display nearly constant-order update times. The update time of ocd initially increases with $t$, but appears to converge as $t$ increases. Interestingly, the update time of CHAD grows so slowly with $t$ that the logarithmic dependence on $t$ it not visible, which may in part be due to an efficient implementation. By contrast, MdFOCuS has an update time that clearly grows with the sample size $t$. The detectors also have constant memory consumptions with respect to $t$, with the exception of CHAD, where a logarithmic increase in $t$ is apparent. For ocd, the memory consumption initially grows with $t$, but appears to converge as $t$ increases. With respect to $p$, the update time and memory consumption of the detectors grow approximately linearly, with the exception of ocd, which grows seemingly super-linearly. This is not surprising, as the ocd detector has worst-case update time and storage cost scaling at least quadratically with respect to $p$ \citep[][]{chen_high-dimensional_2022}. Worth noting is also that both the update times and memory consumptions of CHAD and the detector of \cite{mei2010} have substantially less steep slopes with respect to $p$ than the remaining detectors. The supplementary material (Section \ref{sec:runtimesabsolute}) features a plot with the non-normalized run times and memory consumptions, revealing that CHAD is highly competitive in terms of absolute update time and storage cost. In absolute terms, CHAD’s update times are among the smallest across all methods.

\notocsection{Real data example}\label{sec:realdata}
The online changepoint detector from Section \ref{sec:covariance} was applied to a historical data set of exchange rates to detect covariance changes,\footnote{The data and source code for the real data example is found in the subdirectory \textit{inst} of the R package CHAD \citep{CHAD}} which can indicate market instability, shifts in volatility, and structural changes in economic relationships. 
The data set consists of daily exchange rates for the ten most traded currencies\footnote{See ``Triennial Central Bank Survey Foreign exchange turnover in April 2022''. Bank for International Settlements. p. 13. 2022. \href{https://www.bis.org/statistics/rpfx22_fx.pdf}{https://www.bis.org/statistics/rpfx22\_fx.pdf}.} (excluding the US dollar) from 3 January 2000 to 16 January 2026, sourced from the US Federal Reserve\footnote{\href{https://www.federalreserve.gov/datadownload/default.htm}{https://www.federalreserve.gov/datadownload/default.htm}}, where each row reflects the value of one US Dollar expressed in terms of the selected currencies at a given day. To standardise the exchange rates to a comparable level, each series was normalised using its value on 3 January 2000 as the baseline. Given the autoregressive nature of exchange rates and the potential presence of evolving means, first-order differencing was applied to each time series, yielding a $10$-dimensional vector $Y_i$ of normalised and differenced exchange rates for $i=1, 2,\ldots, 6529$. The nominal noise level $\sigma^2 = \normmop{\mathrm{Cov}(Y_i)}$  was estimated using the first year of data, and the real-time estimator $\widehat{\sigma}^{(t)}_g$ used in \eqref{tvardef2} was replaced by this estimate. The detector from Section \ref{sec:covariance} was then calibrated to have false alarm probability at approximately $5\%$, by choosing the leading constant $\lambda$ of the time-dependent critical value in \eqref{xicovar} via Monte Carlo simulation, drawing $K=1000$ sequences of length $N = 1000$ consisting of independent Gaussian variables with covariance matrix $I$. The detector was then applied (sequentially) to the exchange rate data, resetting each time after a changepoint was detected. 

\begin{figure}
    \centering
    \includegraphics[width=\textwidth]{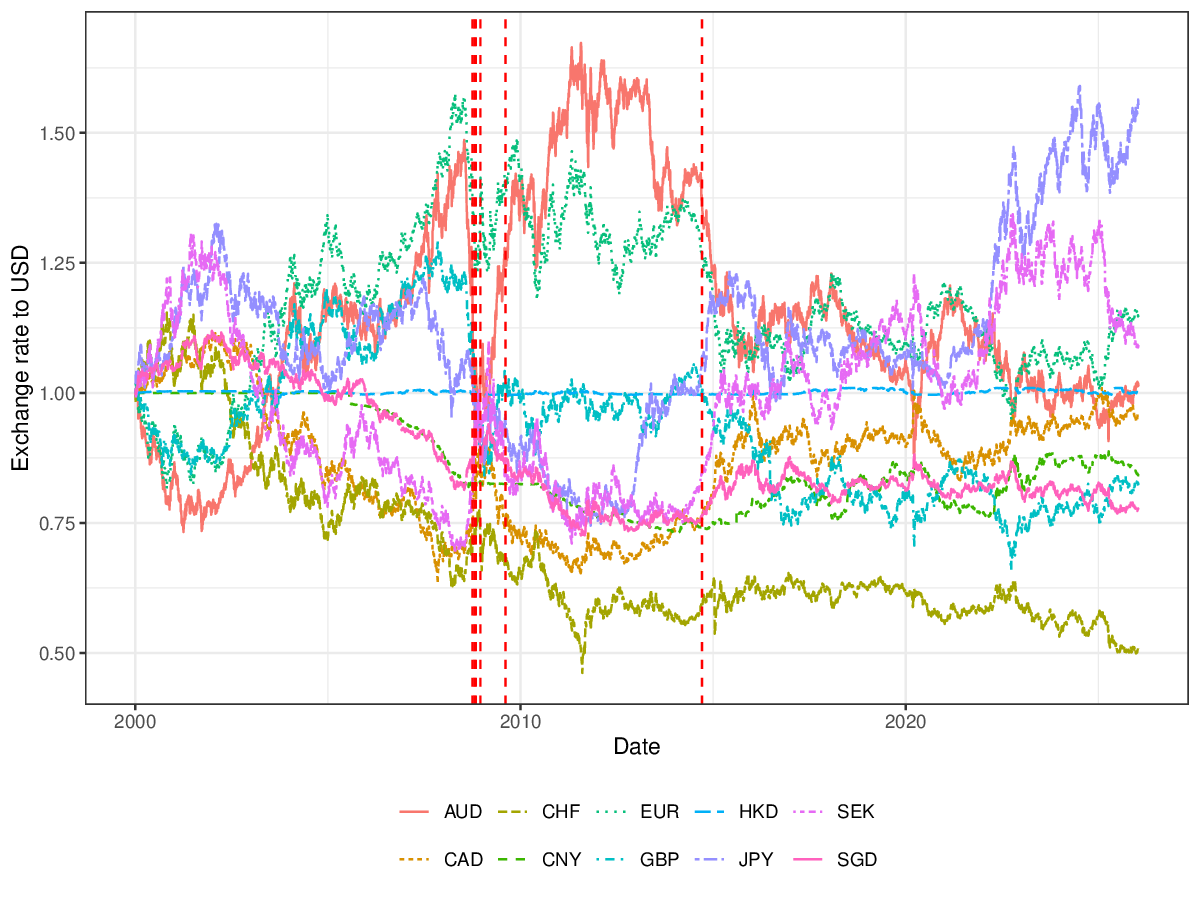}
    \caption{Normalised values of one US Dollar in terms of ten selected currencies from 3 January 2000 to 16 January 2026. Times at which changepoints were detected are indicated by dashed red vertical lines. }
    \label{fig:currencies}
\end{figure}

The detector declared six changepoints during the observed period. Plotted against the normalised (but not differenced) exchange rates,\footnote{The currencies are abbreviated as follows. AUD: Australian Dollar, CHF: Swiss Franc, EUR: Euro, HKD: Hong Kong Dollar, SEK: Swedish Krona, CAD: Canadian Dollar, CNY: Chinese Yuan, GBP: British Pound, JPY: Japanese Yen, SGD: Singapore Dollar.} the dates at which these changepoints were detected are indicated by horizontal dashed lines in Figure \ref{fig:currencies}. The detector raised an alarm at the following dates: 3 October 2008, 20 October 2008, 3 November 2008, 16 December 2008, 12 August 2009, and 17 September 2014. We emphasise that the labelled dates are the times at which the null of no change was rejected, and may not necessarily be good estimates of when changes occurred. Moreover, since the changepoint detector was reset after each detection, we emphasise that no single changepoint was detected multiple times. The detected changepoints largely appear to be linked to the global financial crisis that began in 2008. For example, the first changepoint was declared on 3 October 2008, roughly two weeks after Lehman Brothers Inc.~filed for bankruptcy.\footnote{\href{https://www.reuters.com/article/business/lehman-to-file-for-bankruptcy-plans-to-sell-units-idUSN15469897/}{https://www.reuters.com/article/business/lehman-to-file-for-bankruptcy-plans-to-sell-units-idUSN15469897/}} The three subsequent changepoint declarations occurred during the height of the financial crisis, and coincided with major fiscal and monetary government measures, such as the coordinated interest rate cuts on 17 October 2008\footnote{\href{https://www.federalreserve.gov/newsevents/pressreleases/monetary20081008a.htm}{https://www.federalreserve.gov/newsevents/pressreleases/monetary20081008a.htm}} and the American government's bailout of Citigroup on 24 October 2008.\footnote{\href{https://www.reuters.com/article/world/citigroup-gets-massive-government-bailout-idUSTRE4AJ45G/}{https://www.reuters.com/article/world/citigroup-gets-massive-government-bailout-idUSTRE4AJ45G/}} Interestingly, no changes were detected during the COVID-19 pandemic nor the aftermath of the "Liberation Day" tariffs introduced by the United States on 2 April 2025.\footnote{\href{https://www.reuters.com/world/us/us-businesses-brace-more-pain-liberation-day-tariffs-loom-2025-04-02/}{https://www.reuters.com/world/us/us-businesses-brace-more-pain-liberation-day-tariffs-loom-2025-04-02/}}

The changepoint detector was calibrated assuming Gaussian data, which may be optimistic. It was also calibrated using heavy-tailed data, covered in the supplementary material (Section \ref{heavytaildataexample}), which resulted in only two changepoint declarations.

\notocsection{Acknowledgments}
The author gratefully acknowledges Ingrid Kristine Glad and Martin Tveten for their constructive feedback and insightful discussions.

\appendix
\renewcommand{\theequation}{S\arabic{equation}}
\renewcommand{\thefigure}{S\arabic{figure}}
\renewcommand{\thesection}{S\arabic{section}}
\renewcommand{\thesubsection}{S\arabic{section}.\arabic{subsection}}
\renewcommand{\thesubsubsection}{S\arabic{section}.\arabic{subsection}.\arabic{subsubsection}}
\clearpage
\begin{center}
    \textbf{\LARGE Supplementary Material}
\end{center}

\noindent This is the online supplementary material for the manuscript ``A grid-based methodology for fast online changepoint detection'', which is hereby referred to as the main text. References to equations in this supplementary material follow a distinct numbering scheme denoted as (S1), (S2), etc., to differentiate them from those in the main text, which are referenced as (1), (2), etc. The numbering of Theorems and Propositions in the supplementary material is a continuation of their numbering in the main text. As such, Theorems and Propositions cited in this supplementary material may refer to either the main text or the supplementary material itself. The supplementary material has the following contents. 

\tableofcontents

\section{Online changepoint detection using the static grid}\label{sec:staticgrid}
In the main text we introduced and analysed online changepoint detectors utilizing the dynamic geometric grid $G^{(t)}$ in \eqref{thegrid}. Here, we instead focus on the static geometric grid, which we recall is defined by
\begin{align}
    G_{\mathrm{stat}}^{(t)} = \{1,2,\ldots, 2^{\lfloor \log_2 (t-1)\rfloor}\}.\label{naivegrid2}
\end{align}
Given a test statistic $T^{(t)}_g$, an online changepoint detector based on the static geometric grid is given by
\begin{align}
    \tauhat_{\mathrm{stat}} = \inf \ \left \{t \geq 2 \ : \ \underset{g \in G^{(t)}_{\mathrm{stat}}}{\max} \ T^{(t)}_g >0\right\}\label{taustatdef}
\end{align}

In the following, we will briefly analyse the properties of $\tauhat_{\mathrm{stat}}$ in a similar fashion as for the dynamic geometric grid in Section \ref{sec:general} in the main text. 
\subsection{Computational properties}
As alluded to in Section \ref{sec:unimean}, the static geometric grid will in general lead to storage costs growing at least linearly with the sample size $t$.  However, as we will now see, logarithmic update costs may still be achieved under certain classes of test statistics. To show this, we make the following assumption about the test $T^{(t)}_g$. 

\begin{assumption}\label{asscompstatic}
There exists a collection $\{S^{(t)} = S^{(t)}(Y_1, \ldots, Y_t) \}_{t\in \NN}$ of sets of real numbers such that for each $t\in \NN$: 
\begin{enumerate}[label={{\Alph*: }},
  ref={\theassumption.\Alph*}]
    \item \label{asscompstatic-a}
        For each $g \in [t-1]$, the test $T^{(t)}_g$ can be computed from $S_1,S_2,\ldots, S_t$ using at most $C_T(p)$ unit‑cost operations; 
    \item \label{asscompstatic-b}
        The set $S^{(t)}$ 
has cardinality at most $M_S(p)$ and can be computed from $Y_t$ and $S_1,S_2,\ldots, S_{t-1}$ using at most $C_S(p)\log t$ unit‑cost operations. 
\end{enumerate}
\end{assumption}

We remark that Assumption \ref{asscompstatic} differs slightly from Assumption \ref{asscomp} in the main text. This discrepancy is mainly due to the recycling property of the dynamic geometric grid, which the static geometric grid lacks, and which Assumption \ref{asscomp} implicitly hinges on. Indeed, the recycling property (as defined in Lemma \ref{gridlemma}) of the dynamic geometric grid ensures that only $\mathcal{O}(\log t)$ number of summary sets $S_g^{(t)}$ (as defined in Assumption \ref{asscomp}) are need to be stored at time $t$ under Assumption \ref{asscomp}. Since the summary sets of the previous iteration may be re-used from the previous iteration by updating them using a constant-order number of unit-cost operations, the recycling property thus enables logarithmic update-times.  In contrast, under Assumption \ref{asscomp}, the static geometric grid would require storing all of the summary sets $S_1^{(t)}, S_2^{(t)}, \ldots, S_{t-1}^{(t)}$ at time $t$. In the worst case, under Assumption \ref{asscomp}, the number of unit-cost operations required to update the summary sets using the static geometric grid would then be linear in $t$. 

Instead of relying on the recycling property, Assumption \ref{asscompstatic} requires the existence of a (constant)  sequence $S^{(1)}, \ldots, S^{(t)}$ of sets from which $T^{(t)}_g$ can be easily computed. Moreover, given $S^{(1)}, \ldots, S^{(t)}$, Assumption \ref{asscompstatic} requires $S^{(t)}$ to be easily computable from the previous sets and $Y_t$, specifically using at a logarithmic number of unit-cost operations (so that the update cost of the summaries does not exceed the cost of evaluating the test). We remark that all the test statistics considered in the main text (including the examples) can be shown to satisfy Assumption \ref{asscompstatic}. For the CUSUM test from Section \ref{sec:unimean}, for example, Assumption \ref{asscompstatic} is satisfied with $S^{(t)} = \{\sum_{i=1}^t Y_i\}$ for each $t \in \NN$.

The following straight-forward result shows that $\tauhat_{\mathrm{stat}}$ achieves a  logarithmic update cost and linear storage cost under Assumption \ref{asscompstatic}. 

\begin{proposition}\label{compstatthm}
    Let $\widehat{\tau}_{\mathrm{stat}}$ be as in \eqref{taustatdef}, with $G_{\mathrm{stat}}^{(t)}$ as in \eqref{naivegrid2}. If $T^{(t)}_g$ satisfies Assumption \ref{asscompstatic}, it holds that $\mathrm{UC}(\widehat{\tau}_{\mathrm{stat}}, t) = \mathcal{O}[\{C_T(p) + C_S(p)\}\log t ]$ and $\mathrm{SC}(\widehat{\tau}_{\mathrm{stat}}, t) = \mathcal{O}\{p + M_S(p)  t\}$. 
\end{proposition}

\subsection{Statistical properties}
While the computational properties of changepoint detectors based on the dynamic geometric grid outperform those based on the static geometric grid, this is not the case when it comes to statistical performance. Indeed, as alluded to in Section \ref{sec:statproperties}, one may obtain bounds on the detection delay of $\widehat{\tau}_{\mathrm{stat}}$ under slightly weaker assumptions than for $\tauhat$ in \eqref{taudef} in the main text. 

We make the following relaxation of Assumption \ref{assstat} in the main text. 

\begin{assumption}\label{assstatstatic} \phantom{linebreak} 
\begin{enumerate}[label={{\Alph*: }},
  ref={\theassumption.\Alph*}]
    \item \label{assstatstatic-a}
        For any $t,g$, we have $\PP_{\infty}( T_g^{(t)} = 1) \leq \delta /\{t^2 (1+\log_2 t)\}$.
    \item \label{assstatstatic-b}
        For some non-negative functional $\rho \, : \, \mathcal{P}\times \mathcal{P} \rightarrow [0,\infty)$ such that $\rho(P_1, P_2)>0$ whenever $P_1\neq P_2$, and some function $r(t,\delta)>0$ non-decreasing in $t$, the following holds. 
        If $\tau<\infty$,  $\tau < t \leq 2\tau$, and $(t-\tau) \rho(P_1, P_2) \geq r(t, \delta)$,   we have that
        $$
        \PP_{\tau}(T^{(t)}_{t-\tau} = 1) \geq 1-\delta.
        $$
\end{enumerate}
\end{assumption}
As with Assumption \ref{assstat} in the main text, Assumption \ref{assstatstatic-a} is simply a requirement of uniform Type I error control, so that a union bound can guarantee $\mathrm{FA}(\tauhat_{\mathrm{stat}})\leq \delta$. Therefore, in Assumption \ref{assstatstatic-a}, the term  $\delta / \{t^2(1+\log_2 t)\}$ may be replaced by any sequence $\epsilon_{t,g}$ such that $\sum_{t\geq2} \epsilon_{t,g} (1 + \log_2 t)\leq \delta$. Assumption \ref{assstatstatic-b}, however, is weaker than Assumption \ref{assstat-b}. Indeed, the latter requires the test to have high power whenever $g$ is sufficiently close to $t-\tau$, so that power is guaranteed even when the candidate changepoint location $t-g$ is slightly misspecified. In contrast, Assumption \ref{assstatstatic-b} only requires the test to have power when the candidate changepoint location $t-g =t-\tau$ is correctly specified. 

Under Assumption \ref{assstatstatic}, we have the following result. 

\begin{proposition}\label{generalddpropstatic}
    Let $\tauhat_{\mathrm{stat}}$ be chosen as in \eqref{taustatdef}, with $T_g^{(t)}$ satisfying Assumption \ref{assstatstatic} and $G^{(t)}_{\mathrm{stat}}$ given in \eqref{naivegrid2}.
    Then $\mathrm{FA}(\tauhat_{\mathrm{stat}})\leq \delta$ and, for any $\tau \in \NN$, if $\tau \rho(P_1, P_2)\geq r(3\tau,\delta)$, then
    \begin{equation}
    \PP_{\tau} \left\{ \widehat{\tau}_{\mathrm{stat}}  \leq \tau + 2 \left \lceil \frac{r(3\tau, \delta)}{\rho(P_1, P_2)} \right\rceil \right\}\geq 1- \delta.\label{ddelaystatic}
    \end{equation}
\end{proposition}

We remark that the upper bound on the detection delay in Proposition \ref{generalddpropstatic} involves slightly larger constants than that in Proposition \ref{generalddprop} in the main text. These increased constants are a direct result of Assumption \ref{assstatstatic-b} being weaker than Assumption \ref{assstat-b}. Heuristically, the detection delay in \eqref{ddelaystatic} involves larger constants because one might have to observe more samples before the reversed grid $t - G^{(t)}_{\mathrm{stat}}$ overlaps the true change location $\tau$. That is, to ensure the existence of some $g \in G^{(t)}_{\mathrm{stat}}$ such that the candidate change location is correctly specified, i.e., $t-g = \tau$. If Assumption \ref{assstatstatic-b} is replaced by Assumption \ref{assstat-b}, one can show that $\tauhat_{\mathrm{stat}}$ achieves precisely the same upper bound on the detection delay as $\tauhat$ in the main text, using the same arguments as in the proof of Proposition \ref{generalddprop}. 

\section{Some more online detectors detecting changes in covariance matrices}\label{sec:covariancesupp}
\subsection{Sparse changes in covariance}\label{sec:sparsecovariance}
The online covariance changepoint detector in Section \ref{sec:covariance} in the main text attains near minimax rate optimal performance for dense changepoints. Still, the signal strength requirement and detection delay in Theorem \ref{theorem5} grow linearly with $p$, which may be unacceptable in high-dimensional settings, where possibly few entries of the $Y_i$ are affected by the covariance change. To account for sparsity, we here return to the setup in Section \ref{sec:covariance} in the main text. We now consider online application of the test statistic proposed by \citet{moen2024minimax}, which is adaptive to a certain notion sparsity, defined shortly. This test uses approximate sparse eigenvalues to measure discrepancies between pre- and post-change covariance matrices. Specifically, for any $s \in [p]$, we define the largest $s$-sparse eigenvalue of a symmetric matrix $A$ as
\begin{align}
    \lambda_{\max}^s(A) &= \underset{v \in \mathbb{S}^{p-1}_s}{\sup} \ | v^\top A v |, \label{firsteigsparse}
\end{align}
where $\mathbb{S}_s^{p-1}$ denotes the subspace of the $p$-dimensional Euclidean unit sphere $\mathbb{S}^{p-1}$ containing only vectors with at most $s$ non-zero entries. Recalling that $\Sigma_1$ and $\Sigma_2$ respectively denote the pre- and post-change covariances of the data, the $s$-sparse eigenvalue $\lambda^s_{\max}(\Sigma_1 - \Sigma_2)$ measures the largest change in variance along any space spanned by an $s$-sparse unit vector. Note that the $s$-sparse eigenvalue recovers the operator norm when $s=p$. 

Since the $s$-sparse eigenvalue is NP-hard to compute, the $s$-sparse eigenvalue can be approximated by the convex relaxation of the implicit optimization problem (see e.g.~\citealt{berthet_optimal}), given by
\begin{align}
    \widehat{\lambda}_{\max}^s(A) &= \underset{\substack{Z \in N(p,s) }}{\sup} \  | \mathrm{Tr}(AZ) | \label{conveigdef}, 
\end{align}
where $N(p,s) = \{ Z \in \RR^{p \times p} \ ; \ Z \succcurlyeq 0, \ \Tr(Z) = 1, \ \normm{Z}_1 \leq s\}$. Being a semidefinite program and thus a convex optimization problem, it can be solved efficiently using for instance first order methods \citep[see][]{bach2010convex}, whose computational cost scales polynomially with $p$.

An online version of the test in \citet{moen2024minimax} for a sparse covariance change is given by 
    \begin{align}
        T^{(t)}_{g} &=  \ind \left \{  \underset{s \in \mathcal{S}}{\max} \ \frac{\widehat{\lambda}_{\max}^{s} (\widetilde{\Sigma}_{1,g}^{(t)} - \widetilde{\Sigma}_{2,g}^{(t)} )}{(\widehat{\sigma}^{(t)}_{g})^{2}\xi_{g,s}^{(t)}}  > 1  \right\} \label{tvardef3},
    \end{align}
    where 
    \begin{align}
    &\widetilde{\Sigma}_{1,g}^{(t)} = 2^{- \lfloor \log_2 g \rfloor} \sum_{i=1}^{2^{ \lfloor \log_2 g \rfloor}} Y_i Y_i^\top, &\widetilde{\Sigma}_{2,g}^{(t)} = g^{-1} \sum_{i=t-g+1}^t Y_{i} Y_i^\top, \label{sigmatilde}
\end{align}
and
    \begin{align}
        \widehat{\sigma}^{(t)}_{g} &= \widehat{\lambda}_{\max}^1(\widetilde{\Sigma}_{1,g}^{(t)})^{1/2}. \label{varestsparse}
    \end{align}
The test $T^{(t)}_g$ in \eqref{tvardef3} rejects the null hypothesis of no change whenever the approximate $s$-sparse eigenvalue of $\widehat{\Sigma}_{1,g}^{(t)} - \widehat{\Sigma}_{2,g}^{(t)}$, normalised by an estimated nominal noise level, exceeds a critical value $\xi_{g,s}^{(t)}$ for some $s \in \mathcal{S}$, where the grid $\mathcal{S}$ of sparsities is given by 
    \begin{align}
    \mathcal{S} &= \left\{ 2^0, 2^1, \ldots, 2^{\lfloor \log_2 p \rfloor}\right\}\cup \left\{ p\right\}.\label{mathcals}
\end{align}

We remark that the pre-change covariance $\Sigma_1$ is estimated differently in \eqref{sigmatilde} than in Section \ref{sec:covariance} in the main text. Specifically, $\widetilde{\Sigma}_{1,g}^{(t)}$  estimates $\Sigma_1$ using only the first $2^{\lfloor \log_2 g\rfloor}$ samples, as opposed to $\widehat{\Sigma}_{1,g}^{(t)}$ in \eqref{sigmahatsdef} in the main text, which uses the first $t-g$ samples. This choice is primarily for mathematical convenience. Indeed, by using only the first $2^{\lfloor \log_2 g\rfloor}\leq g$ samples to estimate $\Sigma_1$, one can  ensure that $\Sigma_1$ and $\Sigma_2$ are estimated using only pre- and post-change samples. This avoids contaminating $\widetilde{\Sigma}_{1,g}^{(t)}$ with post-change samples, and thus simplifies the theoretical analysis. Moreover, from the theoretical perspective, this choice does not affect statistical performance, since the noise level of $\widehat{\lambda}_{\max}^{s} (\widetilde{\Sigma}_{1,g}^{(t)} - \widetilde{\Sigma}_{2,g}^{(t)} )$ is dominated by the noise from the covariance estimate in \eqref{sigmatilde} with the smallest sample size. Moreover, rounding $g$ down to a power of $2$ in the pre-change estimate in \eqref{tvardef2} ensures that Assumption \ref{asscomp} holds, while preserving the signal (up to constants). However, we remark that the mathematical results below could also have been proven with the covariance estimates in \eqref{sigmatilde} replaced by those in \eqref{sigmahatsdef} in the main text.

Let the critical value $\xi_{g,s}^{(t)}$ be given by
    \begin{align}
        \xi_{g,s}^{(t)} = \lambda s \left\{   \sqrt{\frac{\log(p \vee t)}{g}} \vee \frac{\log(p \vee t)}{g}  \right\} \label{xisparsecovar},
    \end{align}
where $\lambda>0$ is a tuning parameter. The resulting online changepoint detector is given by
    \begin{align}
    \widehat{\tau} = \inf \ \left\{ t \geq 2 \, : \,  \max_{ g \in G^{(t)}, \ g\leq t/2} \  T_g^{(t)} >0\right\}\label{taudefcovarsparse},
\end{align}
where $G^{(t)}$ is defined in \eqref{thegrid}, and $T^{(t)}_g$ is defined in \eqref{tvardef3}.  For the purpose of theoretical analysis, for any $s \in [p]$, define
\begin{align}
    \omega_s = \frac{\lambda_{\max}^s({\Sigma_1 - \Sigma_2})}{\lambda_{\max}^s({\Sigma_1})\vee \lambda_{\max}^s ({\Sigma_2})},\label{kappakdefcovar}
\end{align}
which measures the relative magnitude of the covariance change in terms of the $s$-sparse eigenvalue. The changepoint detector $\tauhat$ has the following theoretical performance.

\begin{theorem}\label{theorem6}
     Let $\tauhat$ be defined as in \eqref{taudefcovarsparse}. It then holds that $\mathrm{UC}(\widehat{\tau},t) = \mathcal{O}\left( p^{a} \log t\right)$ for some $a\geq 2$ and $\mathrm{SC}(\widehat{\tau},t) = \mathcal{O}\left( p^2 \log t\right)$. 
    Moreover, if Assumption \ref{assmultivariate2} is satisfied for some $w,u>0$, then for any $\delta \in (0,1)$, there exist a constant $C_1>0$ depending only on $\delta,w,u$, and a constant $C_2>0$ depending only on $\delta,w,u$ and $\lambda$, such that if $\lambda \geq C_1$, then $\mathrm{FA}(\widehat{\tau}) \leq \delta$, and if $\tau < \infty$ and for some $k \in [p]$, we have $\tau \omega_k^2 \geq C_2 k^2 \log(p \vee 2\tau)$, then 
$$
\PP_{\tau} \left(  \tauhat \leq \tau + \left \lceil C_2  k^2 \frac{ \log(p \vee 2\tau) }{\omega_k^2} \right \rceil \right) \geq 1-\delta.
$$ \label{pointtwo}
\end{theorem}

Theorem \ref{theorem6} implies that the update and storage costs of $\tauhat$ in \eqref{taudefcovarsparse} grow logarithmically with the sample size $t$, and respectively polynomially and quadratically with the dimension $p$. 
Moreover, the detection delay of $\tauhat$ is of order $\omega_k^2 k^2 \log(p \vee \tau)$, where $k$ is any value for which $\omega_k^2 \geq C_2 k^2 \log(p \vee 2\tau)$. In Section \ref{sec:optimality} below, we show that this is minimax rate optimal up to a factor of at most $\log(2\tau)$ for small to moderately sized relative changes in covariance when $k=1$, but not necessarily so for larger values of $k$. 

\subsection{A likelihood-ratio detector for a change in mean or covariance in Gaussian data}\label{sec:lrtprocedure}
We now consider an online changepoint detector for a change in either the mean or covariance, based on a likelihood ratio test for Gaussian data. For any dimension $p \in \NN$, let $P_1 = \N_p(\mu_1, \Sigma_1)$ and $P_2 = \N_p(\mu_2, \Sigma_2)$ be $p$-dimensional Gaussian distributions with (unknown) respective mean vectors $\mu_1, \mu_2 \in \RR^p$ and (unknown) covariance matrices $\Sigma_1, \Sigma_2 \in \RR^{p \times p}$. For this model, the pre- and post-change distributions are members of a common exponential family of distributions, and thus the likelihood-ratio statistic for a change in distribution satisfies the conditions of Assumption \ref{asscomp}, yielding logarithmic-in-$t$ update and storage costs. Here, we will show that the likelihood-ratio test can be expressed in closed form.

In the offline setting, the likelihood-ratio test for a change in mean or covariance was studied by \citet{chen_gupta}, yielding a closed-form expression for the likelihood-ratio test. Upon observing the first $t$ data points $Y_1, \ldots, Y_t$ for $t\geq 2$, the likelihood-ratio test for a change in mean or covariance occurring $g$ time steps before the last observation,  for $p \leq g\leq t-p$, is given by
    \begin{align}
        T^{(t)}_{g} &=  \ind \left \{ \mathrm{LR}_g^{(t)}  > \xi^{(t)} \right\},\label{lrmeanvar}
    \end{align}
        where $\xi^{(t)}>0$ is some critical value and the likelihood ratio $\mathrm{LR}_g^{(t)}$ is given by
    \begin{align}
        \mathrm{LR}_g^{(t)} =  \begin{cases} t \log \det  \left(    \widehat{\Sigma}^{(t)}  \right) - (t-g) \log \det \left( \widehat{\Sigma}_{g,1}^{(t)}  \right) - g \log \det \left(   \widehat{\Sigma}_{g,2}^{(t)}  \right),& \text{ if } p\leq g \leq t-p\\
        0, & \text{ otherwise.} \end{cases}\label{likratio}
    \end{align}
    where
    \begin{align}
    \widehat{\Sigma}^{(t)} &= \frac{1}{t} \left\{  V_t - \frac{1}{t} S_t S_t^{\top} \right\},\\
        \widehat{\Sigma}_{g,1}^{(t)} &= \frac{1}{t-g} \left\{ V_{t-g}  - \frac{1}{t-g} S_{t-g} S_{t-g}^{\top}   \right\},\\
        \widehat{\Sigma}_{g,2}^{(t)} &= \frac{1}{g} \left\{ V_t - V_{t-g} - \frac{1}{g} \left(S_t - S_{t-g}\right) \left(S_t - S_{t-g}\right)^\top   \right\}, 
    \end{align}
    and $S_j = \sum_{i=1}^j Y_i$, $V_j = \sum_{i=1}^j Y_i Y_i^{\top}$ for any $j \in \NN$. Here, the $S_j$ and the $V_j$ together form the sufficient statistics. We remark that the likelihood ratio is set to zero in the case where one of the covariance estimates are rank deficient. 

    Now let 
    \begin{align}
        \tauhat = \inf \ \left\{ t \geq 2 \, : \, \underset{g \in G^{(t)}}{\max} \ T_g^{(t)}\right\},  
    \end{align}
    where $G^{(t)}$ is given in \eqref{thegrid}. Then Proposition \ref{enkelprop} implies that $\mathrm{UC}(\tauhat,t) = \mathcal{O}(p^3 \log t)$ and $\mathrm{SC}(\tauhat,t)=\mathcal{O}(p^2\log t)$. Indeed, in the notation of Proposition \ref{enkelprop}, we may take $h(y) = (y, y y^\top)$ (i.e., the sufficient statistics), which can be computed using $C_h(p) = \mathcal{O}(p^2)$ unit-cost operations and represented as a vector with $v(p) = p^2+p$ entries. Since the determinant of a $p\times p$ matrix, as well as products and sums of pairs of $p\times p$ matrices, can be computed using $\mathcal{O}(p^3)$ unit-cost operations, the likelihood-ratio test in \eqref{lrmeanvar} can be computed from the $S_j$ and $V_j$ using at most $C_f(p)= \mathcal{O}(p^3)$ unit-cost operations.

    As for the critical value $\xi^{(t)}$, no closed-form finite-sample distribution of $\mathrm{LR}_g^{(t)}$ is known, as far as the author is aware. However, an approximate choice can be made using the asymptotic distribution of the likelihood ratio. Due to Wilks' Theorem, whenever $t$, $g$ and $t-g$ are large, the likelihood ratio will have the approximate distribution
    \begin{align}
        \mathrm{LR}_g^{(t)} \mathrel{\dot\sim} \chi^2_{p(p+3)/2}, \label{meanvarlrapprox}
    \end{align}
    which motivates the choice $\xi^{(t)} = \chi^2_{p(p+3)/2, 1 - \delta t^{-2} |G^{(t)}|^{-1}}$, i.e. the upper $\delta t^{-2} |G^{(t)}|^{-2}$ quantile of the Chi-squared distribution with $p(p+3)/2$ degrees of freedom, to obtain $\mathrm{FA}(\tauhat, \tau) \lessapprox \delta$. Since the approximation in \eqref{meanvarlrapprox} may be poor when $g$ is close to $p$ or $t-p$, the target false alarm probability may be closer to the desired $\delta$ when constraining the $g$'s in the grid $G^{(t)}$ to be further away from $p$ and $t-p$.

\section{Some online detectors for changes in regression coefficients}\label{sec:regression}
\subsection{A direct approach}\label{simple}
We now briefly demonstrate how the methodology from Section \ref{sec:meth} may be used to detect changes in regression coefficients. Let $\epsilon_i \iid \text{N}(0,\sigma^2)$ for $i \in \NN$, where we for simplicity assume that $\sigma^2$ is known. Assume that the sequence of regression model responses $Y_i$ has a change in regression coefficient at time $\tau \in \NN \cup \{\infty\}$: 
\begin{align}
    Y_i &= x_i^\top \beta_{1} + \epsilon_i, \text{ if } i \leq \tau\\
    Y_i &= x_i^\top \beta_{2} + \epsilon_i,  \text{ if } i > \tau, 
\end{align}
where the $x_i \in \RR^q$ are (fixed) covariate vectors and $\beta_1, \beta_2 \in \RR^q$ are unknown pre- and post-change regression coefficients. Upon observing the response variables $Y_1, \ldots, Y_t$ with corresponding covariate vectors $x_1, \ldots, x_t$, and suspecting changepoint having occurred $g$ steps before the last observation, the pre- and post-change regression coefficients can be estimated by 
\begin{align}
    \widehat{\beta}_{1,g}^{(t)} &= \left( M^{(t)}_{1,g}\right)^{-1} \sum_{i=1}^{t-g} x_i Y_i, & \widehat{\beta}_{2,g}^{(t)} &= \left( M^{(t)}_{2,g}\right)^{-1} \sum_{i=t-g+1}^{t} x_i Y_i, \label{regcoef}
    \intertext{assuming that the matrices}
    M^{(t)}_{1,g} &=  \sum_{i=1}^{t-g} x_i x_i^\top, & M^{(t)}_{2,g} &=  \sum_{i=t-g+1}^{t} x_i x_i^\top \label{mdef}
\end{align}
are invertible, in particular constraining $g \geq q$ and $g \leq t-q$.  If no changepoint is present ($\tau \geq t$) and the estimators in \eqref{regcoef} are defined, then $\widehat{\beta}_{1,g}^{(t)}$ and $\widehat{\beta}_{2,g}^{(t)}$ are independent with distributions 
\begin{align}
    \widehat{\beta}_{i,g}^{(t)} &\sim \N\left ( \beta_1, \sigma^2 \left( M^{(t)}_{i,g}\right)^{-1}\right),
\end{align}
for $i=1,2$.
To test for a change in regression coefficients occurring $g$ time steps before the last observation, a direct and simple approach is then to take
\begin{align}
    T_{g}^{(t)} &= \ind  \left \{ D_g^{(t)} > \xi^{(t)}\right\},\label{regressiontest}\end{align}
    where \begin{align}
    D_{g}^{(t)} &= \begin{cases} \frac{ \left \lVert \left\{ (M^{(t)}_{1,g})^{1/2} \widehat{\beta}_{1,g}^{(t)} - (M^{(t)}_{2,g})^{1/2} \widehat{\beta}_{2,g}^{(t)} \right\}\right\rVert_2^2}{2\sigma^2}, & \text{if } M_{i,g}^{(t)} \text{ is invertible for } i=1,2,\\
    0, & \text{otherwise,}\end{cases}\ \ \ \label{ddefregression}
\end{align}
which rejects the null of no changepoint whenever the estimates in \eqref{regcoef} are defined and the Euclidean distance between the covariance-rescaled regression estimates exceeds some critical value $\xi^{(t)}$. In \eqref{ddefregression}, $(M^{(t)}_{1,g})^{1/2}$ and $(M^{(t)}_{2,g})^{1/2}$ are the (unique) square roots of $M^{(t)}_{1,g}$ and $M^{(t)}_{2,g}$, respectively, defined whenever $M^{(t)}_{1,g}$ and $M^{(t)}_{2,g}$ are invertible. Due to the covariance-rescaling of the regression coefficient estimates, we will have that $D_g^{(t)} \sim \chi_q^2$ whenever $M^{(t)}_{1,g}$ and $M^{(t)}_{2,g}$ are invertible. Now let $G^{(t)}$ be as in \eqref{thegrid}, $\delta \in (0,1)$, and set 
$$\xi^{(t)} = \chi^2_{q, \delta t^{-2} |G^{(t)}|^{-1}},$$
i.e. the upper $\delta t^{-2} |G^{(t)}|^{-1}$ quantile of the Chi-squared distribution with $q$ degrees of freedom. With this choice of grid and critical value, let $\tauhat$ be defined as in \eqref{taudef} using the test in \eqref{regressiontest}. Then we have the following control over false alarms and update and storage costs.  
\begin{proposition}\label{propreg}
    Let $\tauhat$ be defined as above. Then it holds that
    \begin{enumerate}
\item $\mathrm{FA}(\widehat{\tau}) \leq \delta$, and
\item $\mathrm{UC}(\widehat{\tau}, t) = \mathcal{O}(q^3\log t)$, and $\mathrm{SC}(\widehat{\tau},t) = \mathcal{O}(q^2 \log t)$.
\end{enumerate}
\end{proposition}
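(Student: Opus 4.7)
The plan is to prove the two parts of Proposition \ref{propreg} separately, treating the false alarm control via the chi-squared null distribution plus a double union bound, and the computational costs via the recycling structure of the dynamic grid together with a per-$g$ linear-algebraic accounting.

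For part 1, the starting point is the exact null distribution of $D_g^{(t)}$. Under $\PP_{\infty}$, when both $M_{1,g}^{(t)}$ and $M_{2,g}^{(t)}$ are invertible, the vector $(M_{1,g}^{(t)})^{1/2}\widehat{\beta}_{1,g}^{(t)} - (M_{2,g}^{(t)})^{1/2}\widehat{\beta}_{2,g}^{(t)}$ is Gaussian with covariance $2\sigma^2 I_p$ (being the difference of two independent Gaussian vectors each with covariance $\sigma^2 I_p$), so $D_g^{(t)} \sim \chi_p^2$ by the author's preceding remark. By definition $D_g^{(t)}=0$ whenever one of the $M_{i,g}^{(t)}$ fails to be invertible, so $T_g^{(t)}$ never fires in that case. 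The choice $\xi^{(t)} = \chi^2_{p,\,\delta t^{-2}|G^{(t)}|^{-1}}$ then yields $\PP_\infty(T_g^{(t)}=1) \le \delta t^{-2}|G^{(t)}|^{-1}$ for every $g \in G^{(t)}$. A union bound over $g \in G^{(t)}$ gives $\PP_\infty(T^{(t)}=1) \le \delta/t^2$, and a second union bound over $t \ge 2$ produces
\begin{equation*}
\mathrm{FA}(\widehat\tau) = \PP_\infty(\widehat\tau < \infty) \le \sum_{t=2}^{\infty} \frac{\delta}{t^2} = \delta\left(\frac{\pi^2}{6}-1\right) < \delta.
\end{equation*}

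For part 2, I would exploit the recycling property of the dynamic grid, namely $|G^{(t)}_{\mathrm{dyn}}| = \mathcal{O}(\log t)$ and $t+1 - G^{(t+1)}_{\mathrm{dyn}} \subseteq (t - G^{(t)}_{\mathrm{dyn}}) \cup \{t\}$. Because the matrices in \eqref{mdef} and the vectors appearing in \eqref{regcoef} can be written as differences of cumulative sums $S^{xx}_c = \sum_{i=1}^c x_i x_i^\top$ and $S^{xY}_c = \sum_{i=1}^c x_i Y_i$ evaluated at the candidate locations $c \in t - G^{(t)}_{\mathrm{dyn}}$, the method needs only to store these $\mathcal{O}(\log t)$ summary pairs together with the running totals at time $t$. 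Each $S^{xx}_c$ is symmetric $p\times p$ and each $S^{xY}_c$ is $p$-dimensional, totalling $\mathcal{O}(p^2)$ floats per stored location, hence the storage bound $\mathrm{SC}(\widehat\tau,t) = \mathcal{O}(p^2 \log t)$. For the update, incorporating the new datum $Y_t$ into the running totals costs $\mathcal{O}(p^2)$, and for each of the $\mathcal{O}(\log t)$ grid elements $g$ we must (i) form $M_{1,g}^{(t)}$ and $M_{2,g}^{(t)}$ as differences of stored sums ($\mathcal{O}(p^2)$), (ii) compute their inverses and square roots via eigendecomposition or Cholesky ($\mathcal{O}(p^3)$), (iii) assemble the regression estimates and the rescaled vectors ($\mathcal{O}(p^2)$), and (iv) evaluate the squared norm and threshold ($\mathcal{O}(p)$). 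Summing $\mathcal{O}(p^3)$ over $\mathcal{O}(\log t)$ elements gives $\mathrm{UC}(\widehat\tau,t) = \mathcal{O}(p^3 \log t)$.

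There is no substantial obstacle here: the proof is a direct composition of the stated null distribution, a routine two-level union bound using $\sum_{t \ge 2} t^{-2} < 1$, and an elementary accounting of storage and floating-point operations built on top of the dynamic grid structure established earlier. The only point that deserves some care is verifying that the $\mathcal{O}(p^2)$-dimensional summary pairs at the $\mathcal{O}(\log t)$ retained candidate locations, together with the running totals, genuinely suffice to recompute $M_{i,g}^{(t)}$ and $\widehat\beta_{i,g}^{(t)}$ for every $g \in G^{(t)}_{\mathrm{dyn}}$, and that the matrix square root in the test statistic is computable in $\mathcal{O}(p^3)$ operations — both of which are standard.
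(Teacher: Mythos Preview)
Your proposal is correct and follows essentially the same route as the paper's proof: the false alarm bound via the $\chi_p^2$ null law of $D_g^{(t)}$ together with the two-level union bound $\sum_{t\ge 2}\sum_{g\in G^{(t)}}\delta t^{-2}|G^{(t)}|^{-1}\le\delta$, and the computational bounds via storing the cumulative sums $\sum_{i=1}^{t-g}x_ix_i^\top$ and $\sum_{i=1}^{t-g}x_iY_i$ at the $\mathcal{O}(\log t)$ grid locations and charging $\mathcal{O}(p^3)$ per grid element for the matrix inverse square roots. Your write-up is in fact slightly more explicit than the paper's in two places --- you note that the non-invertible case is harmless because $D_g^{(t)}=0$ there, and you invoke the recycling property $t+1-G^{(t+1)}_{\mathrm{dyn}}\subseteq (t-G^{(t)}_{\mathrm{dyn}})\cup\{t\}$ to justify why the stored summaries carry over --- but the argument is otherwise identical.
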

We remark that the computational properties in Proposition \ref{propreg} are proved by appealing to Proposition \ref{enkelprop}. 
For a closed-form critical value $\xi^{(t)}$, since $|G^{(t)}| = \mathcal{O}(\log t)$, one could alternatively take
\begin{align}
    \xi^{(t)} =  q + \lambda (\sqrt{q\log t} \vee \log t),
\end{align}
which for a sufficiently large $\lambda>0$, depending only on $\delta$, also ensures $\mathrm{FA}(\tauhat)\leq \delta$ due to Lemma \ref{birgelemma}. We leave it for future research to upper bound the detection delay of the above changepoint detector.

\subsection{Discussion: High-dimensional data, non-Gaussianity and temporal dependence}
Although simple and intuitive, the detector in Section \ref{simple} for changes in regression coefficients is restrictive. By estimating the pre- and post-change regression coefficients $\widehat{\beta}_{i,g}^{(t)}$ directly for $i=1,2$ using least squares, these estimates are only well defined when the matrices $M_{i,g}^{(t)}$ are invertible for $i=1,2$. In particular, the estimation of the pre- and post-change regression coefficients requires a minimum sample size of $q$ (the dimension of the covariate vector) for both the pre- and post-change sample. In high-dimensional settings, where $q$ may be exceedingly large, this mechanically results in a large detection delay. One option is to replace the least squares estimates of the pre- and post-change regression coefficients by $\ell_1$-penalised Lasso estimates, as is commonly done in the offline changepoint literature, for instance in \citet{leonardi2016computationallyefficientchangepoint} and \citet{xu2024changepointinferenceinhighdimensional}. However, it is unclear whether these Lasso estimates can be computed with computational cost independent of the sample size, and thus unclear whether Assumption \ref{asscomp} or the conditions of Proposition \ref{enkelprop} are satisfied. Indeed, the computational cost when computing LASSO estimates using e.g.~the LARS algorithm \citep{lars} depends linearly on the number of samples, resulting in an update cost that is prohibitively large in an online setting. 

Instead, a solution is to detect regression coefficient changes by monitoring the covariance $\mathrm{Cov}(Y_i, x_i)$ between the response $Y_i$ and the covariate $x_i$. This is the approach taken in \citet{cho2024detectioninferencechangeshighdimensional}, who propose an offline method, McScan, for multiple changepoint detection and localization. Reformulated within our notation, the test statistic they use for detecting a (possibly sparse) change in regression coefficient is of the form
\begin{align}
    T_g^{(t)}= \left \{ \sqrt{\frac{g(t-g)}{t}} \left\lVert (t-g)^{-1} \sum_{i=1}^{t-g} x_iY_i - g^{-1}\sum_{i=t-g+1}^t x_i Y_i \right\rVert_{\infty} > \xi^{(t)} \right\}.\label{regcovar}
\end{align}

Used in combination with the grid $G^{(t)}$ in \eqref{thegrid}, Proposition \ref{enkelprop} implies that the online changepoint detector $\widehat{\tau}$ based on $T_g^{(t)}$ in \eqref{regcovar} has update cost $\mathrm{UC}(\widehat{\tau}, t) = \mathcal{O}(q \log t)$ and storage cost $\mathrm{SC}(\widehat{\tau}, t) = \mathcal{O}(q \log t)$, i.e. both are linear in the covariate dimension $q$. To establish detection–delay guarantees for $\widehat{\tau}$, one could appeal to Proposition \ref{generalddprop}, provided an appropriate time-dependent critical value $\xi^{(t)}$ is chosen to satisfy Assumption \ref{assstat-a}, together with a local power bound as in Assumption \ref{assstat-b}. For a range of models allowing for non-Gaussian and temporally dependent errors, \citet{cho2024detectioninferencechangeshighdimensional} already derive explicit thresholds $\xi^{(t)}$ that control the Type I error in finite samples, as well as implicit power guarantees. These results suggest that, under their assumptions, it should be possible to modify their critical values and power guarantees to match the conditions of Proposition \ref{generalddprop}, and thus obtain non-asymptotic detection delay bounds for the corresponding online procedure. Given the technical nature of their analysis and the variety of dependence structures considered, we do not develop this extension here, and leave a full transfer of their guarantees to the online setting for future work.

\section{Comparison of rates for the multivariate mean-change problem}\label{sec:ratecomp}
In the following, we compare the detection delay rates reported in Table \ref{tab:ratetable}. While we do not attempt a full characterization, we show that the detection delay rate of the changepoint detector in Section \ref{sec:multimean} is never greater than that of either ocd \citep{chen_high-dimensional_2022} or MdFOCuS \citep{computationalgeometry}, and is strictly smaller in certain regimes.

We begin by comparing the rate of the online changepoint detector from Section \ref{sec:multimean} with that of ocd. We recall that the guaranteed detection delay of the method from Section \ref{sec:multimean} (denoted CHAD) is given by 
\begin{align}
     \mathrm{DD}(\mathrm{CHAD}) &= \frac{1}{\phi^2} 
    \begin{cases}
        \sqrt{p\log (2\tau)}, & \text{if } k > \sqrt{p\log (2\tau)},\\
        k\log \left\{ \frac{ep \log (2\tau)}{k^2} \right\} \vee \log (2\tau), & \text{otherwise,}
    \end{cases} \label{detectiondelaymeansupp}
\end{align}
ignoring constants and assuming that $\sigma=1$, where $k=\normm{\mu_1-\mu_2}_0$ denotes the sparsity and $\phi=\normm{\mu_1-\mu_2}$ denotes the magnitude of the mean change under the alternative. 
\subsection{Comparison with ocd}
Recall from Section \ref{sec:multimean} that the detection delay of the ocd method is bounded from above by at least
\begin{align}
    \mathrm{DD}(\mathrm{ocd}) = \begin{cases}
         \frac{\sqrt{p}\log(ep\tau)}{\phi^2}, &\text{if } k \geq \sqrt{p} \log^{-1}(ep)\\
          \frac{k \log(ep\tau)\log(ep)}{\beta^2}, &\text{otherwise,}
    \end{cases}\label{ocdddsupp}
\end{align}
whenever the average run length is at least $\tau$ and all non-zero entries of $\mu_1 - \mu_2$ have the same magnitude, and $\beta\leq \phi$ is some known lower bound on $\phi$. 

We now compare the rate in \eqref{ocdddsupp} with that in \eqref{detectiondelaymeansupp}. We consider three different cases. 

\paragraph{Case 1: } $k\geq \sqrt{p}\log^{-1}(ep)$ and $k >\sqrt{p\log (2\tau)}$. \newline\newline
In this case, 
\begin{align}
    \frac{\mathrm{DD}(\mathrm{ocd})}{\mathrm{DD}(\mathrm{CHAD})} &= \frac{\log(ep\tau)}{\sqrt{\log 2\tau}}\geq 1,
\end{align}
which diverges as $p\rightarrow \infty$ or $\tau\rightarrow \infty$. 

\paragraph{Case 2: } $k\geq \sqrt{p}\log^{-1}(ep)$ and $k \leq\sqrt{p\log (2\tau)}$. \newline\newline
In this case, note that
\begin{align}
    \mathrm{DD}(\mathrm{CHAD}) &= \phi^{-2}k\log \left\{ \frac{ep \log (2\tau)}{k^2} \right\} \vee \log (2\tau) \\
    &\leq \phi^{-2}\frac{2}{\sqrt{e}}\sqrt{p\log (2\tau)} \vee \log(2\tau),
\end{align}
which follows by maximizing $k\log \left\{ ep \log (2\tau){k^{-2}} \right\} \vee \log (2\tau)$ with respect to $k$. Ignoring constant factors, follows that
\begin{align}
    \frac{\mathrm{DD}(\mathrm{ocd})}{\mathrm{DD}(\mathrm{CHAD})} &\geq  \frac{\log(ep\tau)}{\sqrt{\log 2\tau}} \wedge \frac{\sqrt{p}\log(ep\tau)}{\log(2\tau)}\geq 1,
\end{align}
which diverges as $p\rightarrow \infty$.

\paragraph{Case 3: } $k<\sqrt{p}\log^{-1}(ep)$. \newline\newline
In this case, we have that
\begin{align}
    \mathrm{DD}(\mathrm{ocd})&\geq \frac{k\log(ep\tau)\log(ep)}{\beta^2}\\
&\geq \frac{k\log(ep\tau)\log(ep)}{\phi^2}. 
\end{align}
Moreover, since we must have that $k\leq \sqrt{p\log (2\tau)}$, it holds that
\begin{align}
    \frac{\mathrm{DD}(\mathrm{ocd})}{\mathrm{DD}(\mathrm{CHAD})} &\geq  \frac{\log(ep\tau)\log(ep)}{\log(ep) + \log\log (2\tau) - 2\log k} \wedge \frac{k\log(ep\tau)\log(ep)}{\log(2\tau)}\geq 1,
\end{align}
for $\tau\geq 2$ (ignoring constant factors) which diverges as $p\rightarrow \infty$. 

\subsection{Comparison with MdFOCuS}
Recall from Section \ref{sec:multimean} that the detection delay of the MdFOCuS is bounded form above by at least
\begin{align}
    \mathrm{DD}(\mathrm{MdFOCuS}) = \frac{1}{\phi^2} \left( \log \tau + \sqrt{k\log\tau} + \ind\{k<p\} k\log p\right)\label{mdfocusddsupp}
\end{align}
whenever the average run length is at least $\tau$. 

We now compare the rate in \eqref{mdfocusddsupp} with that in \eqref{detectiondelaymeansupp}. We consider four different cases. 

\paragraph{Case 1:} $k = p$ and $k\geq \sqrt{p\log (2\tau)}$. \newline\newline
In this case, 
\begin{align}
    \mathrm{DD}(\mathrm{MdFOCuS}) \geq \phi^{-2}\sqrt{p\log \tau},
\end{align}
and
\begin{align}
    \mathrm{DD}(\mathrm{CHAD}) = \phi^{-2}\sqrt{p\log (2\tau)},
\end{align}
and so the rates are identical in this case. 

\paragraph{Case 2:} $k = p$ and $k< \sqrt{p\log (2\tau)}$.\newline\newline
In this case, we have
\begin{align}
    \mathrm{DD}(\mathrm{MdFOCuS}) \geq \phi^{-2}\left\{ \sqrt{p\log (2\tau)} \vee \log (2\tau)\right\},
\end{align}
ignoring constant factors and constant terms (in particular replacing $\tau$ by $2\tau$). Due to the assumption that $k = p <\sqrt{p\log (2\tau)}$, we have that $p < \log(2\tau)$ and thus $\sqrt{p\log(2\tau)} < \log(2\tau)$, and so 
\begin{align}
    \mathrm{DD}(\mathrm{MdFOCuS}) \geq \phi^{-2}\log (2\tau).
\end{align}

We also have that
\begin{align}
    \mathrm{DD}(\mathrm{CHAD}) &= \phi^{-2}\left[ k \log \left\{ \frac{ep\log (2\tau)}{k^2} \right\} \vee \log (2\tau)\right]\\
    &=\phi^{-2}\left[ p \log \left\{ \frac{e\log (2\tau)}{p} \right\} \vee \log (2\tau)\right]
\end{align}
 Now, let $L = \log(2\tau)$ and $f(p) = p\log(eL/p)$. Then $f$ is strictly increasing on $(0,L)$, and $f(L) = L$, so that
\begin{align}
p \log \left\{ \frac{e\log (2\tau)}{p} \right\} &< \log(2\tau). 
\end{align}
Therefore, $\mathrm{DD}(\mathrm{CHAD}) = \phi^{-2}\log(2\tau)$, and thus $\mathrm{DD}(\mathrm{CHAD})$ and $\mathrm{DD}(\mathrm{MdFOCuS})$ have the same rate in this case. 

\paragraph{Case 3:} $k <p$ and $k\geq \sqrt{p\log (2\tau)}$.\newline\newline
In this case, we have
\begin{align}
    \mathrm{DD}(\mathrm{MdFOCuS}) \geq \phi^{-2}\left\{ \sqrt{p\log(2\tau)}\log (ep) + \log(2\tau) \right\},
\end{align}
for $p>1$, ignoring constant factors and constant terms (in particular replacing $\tau$ by $2\tau$ and $p$ by $ep$). Moreover, 
\begin{align}
    \mathrm{DD}(\mathrm{CHAD}) &= \sqrt{p\log (2\tau)},
\end{align}
and so 
\begin{align}
    \frac{\mathrm{DD}(\mathrm{MdFOCuS})}{\mathrm{DD}(\mathrm{CHAD})} &\geq \log(ep) \geq 1,
\end{align}
which diverges as $p\rightarrow \infty$. 

\paragraph{Case 4:} $k <p$ and $k<  \sqrt{p\log (2\tau)}$.\newline\newline
In this case, we have
\begin{align}
    \mathrm{DD}(\mathrm{MdFOCuS}) \geq \phi^{-2}\left\{ k\log (ep) + \log(2\tau) + \sqrt{k\log(2\tau)}\right\},
\end{align}
for $p>1$, ignoring constant factors and constant terms (in particular replacing $\tau$ by $2\tau$ and $p$ by $ep$). Moreover,  
\begin{align}
    \mathrm{DD}(\mathrm{CHAD})&\leq \phi^{-2}\left[ k\log(ep) + \log (2\tau) + k\log \left\{ \frac{\log(2\tau)}{k^2}\right\}\right]. 
\end{align}
Therefore, 
\begin{align}
    \frac{\mathrm{DD}(\mathrm{MdFOCuS})}{\mathrm{DD}(\mathrm{CHAD})} &\geq \frac{k\log (ep) + \log(2\tau) + \sqrt{k\log(2\tau)}}{k\log(ep) + \log (2\tau) + k\log \left\{ \frac{\log(2\tau)}{k^2}\right\}}. \label{tmpcomp1}
\end{align}
Now, if $k> \sqrt{\log (2\tau)}$, then $k\log \left\{ {\log(2\tau)} k^{-2}\right\}<0$. In this case, the right-hand side of \eqref{tmpcomp1} is no less than $1$. If, on the other hand, $k\leq \sqrt{\log (2\tau)}$, then $k\log \left\{ {\log(2\tau)} k^{-2}\right\}\leq \sqrt{k} \log^{1/4}(2\tau) \log\log(2\tau)$, which is of smaller order than $\sqrt{k\log(2\tau)}$ with respect to $\tau$, so that the right-hand side of \eqref{tmpcomp1} is bounded away from zero in this case as well. 

Thus, in the case where $k<p$ and $k<\sqrt{p\log(2\tau)}$, we conclude that the rate of $\mathrm{DD}(\mathrm{MdFOCuS})$ is at least that of $\mathrm{DD}(\mathrm{CHAD})$.

\section{Discussion: weakening of assumptions in the multivariate mean-change problem}\label{sec:meandiscussion}
The online changepoint detector presented in Section \ref{sec:multimean} in the main text attains near-optimal performance with independent and isotropic Gaussian noise (see Section \ref{sec:optimalitymean}). However, this performance is not guaranteed under weaker distributional assumptions. Here, we discuss how these assumptions may be weakened by using other offline test statistics from the literature that are known to have optimal performance, yet under weaker assumptions. First, we discuss a possible relaxation of the Gaussianity assumption. Then, we discuss a possible relaxation of the independence assumption, to either spatial or temporal Gaussian noise. For the sake of brevity, we only discuss how the alternative test statistics can be used within the general framework in Section \ref{sec:general} in the main text to attain logarithmic storage and update costs with respect to the sample size $t$. The choice of critical values and resulting statistical guarantees is left for future research. 

\subsection{Sub-Weibull noise}\label{sec:meandiscussionweibull}
Recently, \cite{li2023robust} proposed a modified variant of the test of \cite{liu_minimax_2021} with near-optimal performance under a relaxed assumption of independent sub-Weibull error terms. Note that the $Y_i$ are still assumed to be independent, with independent entries. We now outline how this test can be adapted into our online framework with similar update and storage costs as in Section \ref{sec:multimean} in the main text. Specifically, we will demonstrate how the test of \cite{li2023robust} can be refitted with small modifications to satisfy Assumption \ref{asscomp}, yielding logarithmic update and storage costs with respect to the sample size $t$. 

In the offline case, when testing for a dense changepoint, \cite{li2023robust} use precisely the same test statistic as in \cite{liu_minimax_2021}. Within our notation (and without any modifications), the test for a change in mean occurring $g$ time steps before the last observation, with $t$ observations, is given by :
\begin{align}
    T^{(t)}_{g,\text{dense}} &= \ind \left\{  A^{(t)}_g > \xi^{(t)}_{\text{dense}} \right\}\label{testdenseunmodified},
\intertext{where $\xi^{(t)}_{\text{dense}}$ is a critical value, and}
    A^{(t)}_g &= \sum_{j=1}^p \left\{ Z_g^{(t)}(j)^2 -1 \right\},
    \intertext{is the result of aggregating squared and centered CUSUM-like quantities given by}
    Z_g^{(t)} &= \frac{\sum_{i=1}^{g} Y_i - \sum_{i=t-g+1}^{t}Y_i}{\sqrt{2g}}.\label{firstz}
\end{align}

For the test $T^{(t)}_{g,\mathrm{dense}}$ in \eqref{testdenseunmodified} to satisfy Assumption \ref{asscomp}, one may replace $Z_g^{(t)}$ by 
\begin{align}
        \widetilde{Z}_g^{(t)} &=  \frac{\sum_{i=1}^{t-g} Y_i }{\sqrt{2(t-g)}}- \frac{\sum_{i=t-g+1}^{t}Y_i}{\sqrt{2g}}. \label{ztilde}
\end{align}
Clearly, with this minor modification, the test $T^{(t)}_{g,\mathrm{dense}}$ will satisfy Assumption \ref{asscomp} due to Proposition \ref{theorem2}. Moreover, the signal is preserved, in the sense that $\EE\{(\widetilde{Z}^{(t)})^2\}$ is equal to $\EE\{(Z_g^{(t)})^2\}$ in \eqref{firstz} under the null, and of the same order under the alternative.

For sparse changepoints, the theoretical properties of the test in \cite{liu_minimax_2021} (and the changepoint detector from Section \ref{sec:multimean}) rely on the tractability of a truncated chi-squared distribution, which is not guaranteed for non-parametric classes of distribution. As a remedy, the modification of \cite{li2023robust} introduces sample splitting. Here we outline how sample splitting may be performed online in a fashion that adheres to Assumption \ref{asscomp}. For the data sequence $Y_1, Y_2, \ldots$, define
\begin{align}
    \widetilde{Y}_i &= \begin{cases} Y_i, &\text{ if } i \text{ is odd,}\\
    0, &\text{otherwise,}\end{cases}
\intertext{and,}
\widehat{Y}_i &= \begin{cases} Y_i, &\text{ if } i \text{ is even,}\\
    0, &\text{otherwise.}\end{cases}
\end{align}
As sample-splitted variants of $\widetilde{Z}_g^{(t)}$ in \eqref{ztilde} for $2\leq g \leq t-2$, define 
\begin{align}
    \widetilde{Z}_{g,1}^{(t)} &= \frac{\sum_{i=1}^{t-g} \widetilde{Y}_i}{\sqrt{2\sum_{i=1}^{t-g} \ind\{\widetilde{Y}_i\neq 0\}}} -
    \frac{\sum_{i=t-g+1}^{t}\widetilde{Y}_i}{\sqrt{2\sum_{i=t-g+1}^{t} \ind\{\widetilde{Y}_i>0\}}},\label{samplesplit1}
    \intertext{and}
    \widetilde{Z}_{g,2}^{(t)} &= \frac{\sum_{i=1}^{t-g} \widehat{Y}_i}{\sqrt{2\sum_{i=1}^{t-g} \ind\{\widehat{Y}_i\neq 0\}}} -
    \frac{\sum_{i=t-g+1}^{t}\widehat{Y}_i}{\sqrt{2\sum_{i=t-g+1}^{t} \ind\{\widehat{Y}_i>0\}}}.\label{samplesplit2}
\end{align}
We remark that this form of sample splitting preserves the signal in a similar way as before, in the sense that $\EE\{(\widetilde{Z}_{1,g}^{(t)})^2\}$ and $\EE\{(\widetilde{Z}_{2,g}^{(t)})^2\}$ are equal to $\EE\{(Z_g^{(t)})^2\}$ in \eqref{firstz} under the null ($\tau=\infty$), and of the same order under the alternative.

Given any $t\geq 2$, $g\in [t-1]$, a set $\mathcal{S} = \mathcal{S}^{(t)}$ of sparsity levels, and sparsity-dependent critical value $\xi_s^{(t)}$, a refitted testing procedure of \cite{li2023robust} is given by 
\begin{align}
    T^{(t)}_{g, \text{sparse}} &=  \ind \left\{ \underset{s \in \mathcal{S}}{\max} \ A^{(t)}_{g,s} > \xi^{(t)}_{s,\mathrm{sparse}} \right\},
    \end{align}
where $ A^{(t)}_{g,s}$ is the result of aggregating and thresholding the $\widetilde{Z}_g^{(t)}$ using sample splitting, 
\begin{align}
    A^{(t)}_{g,s} &= \begin{cases}
    \sum_{j=1}^p \left\{ \widetilde{Z}_{1,g}^{(t)}(j)^2 -1 \right\} \ind\{ |\widetilde{Z}_{g,2}(j)|>a_s\}, &\text{ if } t\geq 2,\\
    \sum_{j=1}^p \left\{ \widetilde{Z}_{g}^{(t)}(j)^2 -1 \right\} \ind\{ |\widetilde{Z}_{g}(j)|>a_s\}, &\text{ otherwise,}\end{cases}
\end{align}
where the $a_s$ are sparsity-specific choices of thresholding values. As long as $a_s$ and $\xi^{(t)}_{s,\mathrm{sparse}}$ require a constant-order number of unit-cost operations to compute, the test $T^{(t)}_{g,\mathrm{sparse}}$ satisfies Assumption \ref{asscomp} when taking the set 
\begin{align}S_g^{(t)} = \Bigg \{&\sum_{i=1}^{t-g} \widetilde{Y}_i, \sum_{i=1}^{t-g} \widehat{Y}_i,  \sum_{i=1}^{t} \widetilde{Y}_i, \sum_{i=1}^{t} \widehat{Y}_i,\sum_{i=1}^{t-g} \ind\{\widetilde{Y}_i\neq 0\}, \sum_{i=1}^{t-g} \ind\{\widehat{Y}_i\neq 0\}, \\&\sum_{i=1}^{t} \ind\{\widetilde{Y}_i\neq 0\}, \sum_{i=1}^{t} \ind\{\widehat{Y}_i\neq 0\}\Bigg\}\end{align}
as the summary set required in the Assumption.

\subsection{Temporal and spatial dependence}
For dense changepoints, \cite{liu_minimax_2021} proved that the offline variant of the test in \eqref{testdenseunmodified} attains minimax rate performance even with spatial or temporal dependence, although under the assumption that the $Y_i$ are jointly Gaussian. This optimal performance is achieved by adjusting the critical value $\xi^{(t)}_{\mathrm{dense}}$; for spatial dependence, the optimal critical value depends on the Trace, Frobenius norm and operator norm of $\mathrm{Cov}(Y_i)$, assumed to be constant. For temporal dependence, the critical value depends on a mixing constant $B$, which for a fixed sample size $t$ is the smallest $b$ for which $\sum_{i \in [t]\setminus\{j\}} \normmop{\mathrm{Cov}(Y_i, Y_j)} \leq b$ for all $j=1,2,\ldots,t$. Naturally, this can be extended to an online case by instead defining $B$ to be the smallest $b$ for which $\sum_{i \in [t]\setminus\{j\}} \normmop{\mathrm{Cov}(Y_i, Y_j)} \leq b$ for all $j\in \NN$. 

Since the test in \eqref{testdenseunmodified} can be refitted to both temporal and spatial dependence by adjusting only the critical value, it can be used online within the framework in Section \ref{sec:general} in the main text with logarithmic update and storage costs, as discussed in \ref{sec:meandiscussionweibull}. We leave it for future research to formalise and prove performance guarantees for this detector, as well as how to estimate $\mathrm{Cov}(Y_i)$ and the mixing constant $B$ in a computationally efficient online fashion.

\section{Optimality of the detectors in Section \ref{sec:theoryspec} in the main text}\label{sec:optimality}
We now assess the optimality of the two online changepoint detectors presented in Section \ref{sec:theoryspec} in the main text, by establishing minimax lower bounds.

\subsection{Change in mean}\label{sec:optimalitymean}
Consider first the problem of online detection of a change in the mean in a sequence $Y = (Y_i)_{i \in \NN}$ of independent $p$-dimensional Gaussian vectors,
\begin{align}
    Y_i \overset{\mathrm{i.i.d.}}{\sim} \mathrm{N}_p(\theta_i, \sigma^2 I), \ i \in \NN, \label{modeldefminimax}
\end{align}
for some fixed dimension $p \in \NN$ and fixed variance $\sigma^2>0$.  We will show that the signal strength requirement and the detection delay in Theorem \ref{theorem3} are rate optimal up to at most a logarithmic factor. 

Let $\theta = (\theta_i)_{i \in \NN}$ denote the sequence of means of the $Y_i$, and for such $\theta$, let $\PP_{\theta}$ denote the probability measure under the data generating mechanism in \eqref{modeldefminimax}. Let $(\mathcal{F}_t)_{t \in \NN}$ denote the natural filtration of the $Y_i$,\footnote{That is, for each $t \in \NN$, $\mathcal{F}_t$ is the $\sigma$-algebra generated by the first $t$ observations of the $Y_i$, i.e. $\mathcal{F}_t = \sigma(Y_1, Y_2, \ldots, Y_t) \ \forall \ t \in \NN$.} and for any $\delta>0$, let 
\begin{align}
    \mathcal{T}(\delta) = \Big\{ \widehat{\tau} : \ &\tauhat \text{ is an extended stopping time with respect to } (\mathcal{F}_t)_{t \in \NN}, \\ &\PP_{\theta} ( \widehat{\tau} < \infty) \leq \delta \text{ for any } \theta \in \Theta_0(p)\Big\}, \label{mathcaltmean}
\end{align}
which is the set of of all extended stopping times with respect to $(\mathcal{F}_t)_{t\in\NN}$ for which the false alarm probability is no larger than $\delta$ for any $\theta \in \Theta_0(p)$, where
\begin{align}
\Theta_0(p) 
= &\left\{ \theta = (\theta_i)_{i \in \NN} : \theta_i = \mu \text{ for all } i \in \NN \text{ and some } \mu\in\mathbb{R}^p \right\}\label{Theta0}
\end{align}
is the parameter space for all mean sequences $\theta$ that contain no changepoints.

For any $\tau \in \NN$, $\phi>0$ and $k\in [p]$, define
\begin{align}
\Theta(k, p , \tau, \phi) &= \bigg\{ \theta =(\theta_i)_{i \in \NN} : \exists \ \mu_1, \mu_2 \in \RR^p \text{ such that }
\\ & \quad\quad \theta_i=\mu_1 \text{ for all } 1\leq i\leq \tau,\ \theta_i=\mu_2 \text{ for all } i\geq \tau+1, \\
& \quad\quad \normm{\mu_1 - \mu_2} = \phi, \ \normm{\mu_1 - \mu_2}_0 \leq k  \bigg\}, \label{Theta}
\end{align}
which is the parameter space of all mean sequences $\theta$ containing a single change in the mean at time $\tau$, magnitude $\phi$ and sparsity $k$.  Finally, define the function
\begin{align}
    v(k, p) =
    \begin{cases}
        \sqrt{p}, & \text{if } k > \sqrt{p},\\
        k\log \left( \frac{ep}{k^2} \right) , & \text{otherwise.}
    \end{cases} \label{vdef}
\end{align}
We then have the following result, which can be seen as a Corollary of Proposition 3 in \cite{liu_minimax_2021}.
\begin{proposition}\label{prop:meanminimax}
    For any $\delta \in (0,1)$ and $\epsilon \in (0, 1-\delta)$ , there exist a constant $c>0$ depending only on $\epsilon$, such that the following holds for any $\tau \in \NN$, $p\in \NN$, $k\in [p]$, $\phi>0$ and $\sigma>0$:
    \begin{enumerate}
    \item If $(\phi^2 / \sigma^2) \tau \leq c v(k,p)$, then for any $n\geq \tau$ we have
    \begin{align}
        \underset{\tauhat \in \mathcal{T}(\delta)}{\inf} \ \underset{\theta \in \Theta(k,p,\tau, \phi)}{\sup} \PP_{\theta} \left\{ \tauhat -\tau  >  n 
        \right\} \geq 1- \delta - \epsilon. \label{meanminimaxeq}
    \end{align}
    \item Conversely, if $(\phi^2 / \sigma^2) \tau > c v(k,p)$, we have
    \begin{align}
        \underset{\tauhat \in \mathcal{T}(\delta)}{\inf} \ \underset{\theta \in \Theta(k,p,\tau, \phi)}{\sup} \PP_{\theta}\left\{ \tauhat -\tau  >  c \frac{\sigma^2}{\phi^2} v(k,p) 
        \right\} \geq 1- \delta - \epsilon. \label{meanminimaxeq2}
    \end{align}
    \end{enumerate}
\end{proposition}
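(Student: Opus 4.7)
The strategy is to reduce the online detection problem to an offline two-hypothesis testing problem, and then invoke the offline minimax lower bound given in Proposition 3 of \citet{liu_minimax_2021}.

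The reduction step is standard. Fix any $\tauhat \in \mathcal{T}(\delta)$ and any integer $N \geq 1$. Since $\tauhat$ is a stopping time with respect to $(\mathcal{F}_t)_{t\in\NN}$, the event $\{\tauhat \leq N\}$ is $\mathcal{F}_N$-measurable, so $\psi_N = \ind\{\tauhat \leq N\}$ defines a valid $\{0,1\}$-valued offline test based on $Y_1, \ldots, Y_N$. For any $\theta_0 \in \Theta_0(p)$, membership in $\mathcal{T}(\delta)$ gives $\PP_{\theta_0}(\psi_N = 1) \leq \PP_{\theta_0}(\tauhat < \infty) \leq \delta$, while for $\theta \in \Theta(k,p,\tau,\phi)$ we have $\PP_\theta(\tauhat - \tau > N-\tau) = \PP_\theta(\psi_N = 0)$.

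Next I would invoke the offline minimax result. For any $\epsilon \in (0, 1 - \delta)$, Proposition 3 of \citet{liu_minimax_2021} (suitably rephrased in terms of our parameter spaces) guarantees the existence of a constant $c = c(\epsilon) > 0$ such that, for every $N \in \NN$ and $\tau \in \{1,\ldots,N-1\}$,
\begin{equation*}
\inf_{\psi} \left\{ \sup_{\theta_0 \in \Theta_0(p)} \PP_{\theta_0}(\psi = 1) + \sup_{\theta \in \Theta(k,p,\tau,\phi)} \PP_{\theta}(\psi = 0) \right\} \geq 1 - \epsilon
\end{equation*}
whenever $\phi^2 \{\tau \wedge (N - \tau)\}/\sigma^2 \leq c\, v(k,p)$, where the infimum is over all measurable offline tests on $Y_1, \ldots, Y_N$. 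Combining this with the offline Type I bound on $\psi_N$ yields
\begin{equation*}
\sup_{\theta \in \Theta(k,p,\tau,\phi)} \PP_\theta(\tauhat > N) \geq 1 - \delta - \epsilon
\end{equation*}
under the same signal condition.

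The two parts of the proposition then follow by plugging in the appropriate $N$. For part 1, take $N = \tau + n$ with $n \geq \tau$; then $\tau \wedge (N-\tau) = \tau$ and the offline signal hypothesis is exactly $\phi^2 \tau/\sigma^2 \leq c\,v(k,p)$. For part 2, take $N = \tau + \lceil c\sigma^2 v(k,p)/\phi^2 \rceil$; the assumption $\phi^2 \tau / \sigma^2 > c\,v(k,p)$ forces $\tau > N - \tau$, so $\tau \wedge (N-\tau) = N - \tau$, and the offline signal hypothesis $\phi^2(N-\tau)/\sigma^2 \leq c\, v(k,p)$ holds by construction of $N$.

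The main obstacle is ensuring the offline lower bound is invoked with the sharp asymmetric dependence on $\tau \wedge (N-\tau)$ rather than a symmetric $N$-dependence. If \citet{liu_minimax_2021} state their result only in a balanced form, one would reprove it via Le Cam's two-point method: placing a uniform prior over $\binom{p}{k}$ sign-configured sparse directions on the sphere of radius $\phi$, sufficient statistics reduce to sample means over the pre- and post-change windows, whose effective Fisher information scales with $\tau \wedge (N-\tau)$; bounding the resulting chi-squared divergence between the mixture alternative and the null by a constant when $\phi^2\{\tau \wedge (N-\tau)\}/\sigma^2 \lesssim v(k,p)$ then delivers the required testing lower bound in both the dense ($k > \sqrt{p}$) and sparse ($k \leq \sqrt{p}$) regimes, matching the definition of $v(k,p)$ in \eqref{vdef}.
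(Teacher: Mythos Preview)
Your approach is essentially the same as the paper's: reduce to an offline two-point testing problem via the stopping-time indicator $\psi_N = \ind\{\tauhat \le N\}$, then invoke (or, as the paper does, reprove following Lemmas 8 and 10 of \citet{liu_minimax_2021}) the chi-squared/prior construction with the signal placed in whichever of the pre- or post-change windows is shorter, yielding exactly the $\tau \wedge (N-\tau)$ dependence you identify. One small fix: in part~2 take $N = \tau + \lfloor c\sigma^2 v(k,p)/\phi^2 \rfloor$ (floor, not ceiling) so that $\phi^2(N-\tau)/\sigma^2 \le c\,v(k,p)$ holds as required---the paper does precisely this via $l = \tau + \lfloor x \rfloor$.
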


Proposition \ref{prop:meanminimax} implies that the signal strength requirement and the detection delay in Theorem \ref{theorem3} are rate-optimal up to a factor at most logarithmic in the changepoint location $\tau$. Indeed, consider first the signal strength requirement in Theorem \ref{theorem3}, which for any fixed $\delta \in (0,1)$ requires that the signal strength $\tau \phi^2 \sigma^{-2}$ satisfies $\tau \phi^2 \sigma^{-2} \geq C z(k,p, 2\tau)$ for the changepoint detector in Section \ref{sec:multimean} in the main text to be guaranteed detection of a changepoint occurring at time $\tau$ within $2\tau$ observations with probability at least $1-\delta$, for some $C>0$, where the function $z$ is defined in \eqref{rdef}. In comparison, for any $\epsilon>0$, Proposition \ref{prop:meanminimax} guarantees that, by choosing $\tau \phi^2 \sigma^{-2}$ to be a sufficiently small non-zero factor of $v(k,p)$, then for any extended stopping time $\tauhat \in \mathcal{T}(\delta)$ (such as the detector from Section \ref{sec:multimean} in the main text), the worst-case probability of observing $\tauhat > \tau + n$ for any $n\geq \tau$ is
at least $1- \delta - \epsilon$. 
The signal strength requirement in Theorem \ref{theorem3} is therefore rate optimal up to a factor of at most $z(k,p, 2\tau) / v(k,p) \leq \log(2\tau)$. Moreover, the detection delay in Theorem \ref{theorem3} is for any $\delta \in (0,1)$ of order at most $\phi^2 \sigma^{-2} z(k,p, 2\tau)$ with probability at least $1-\delta$. In comparison, for any $\epsilon>0$, Proposition \ref{prop:meanminimax} guarantees a detection delay of order at least $\phi^2 \sigma^{-2} v(k,p)$ with probability at least $1 - \delta - \epsilon$ for any $\tauhat \in \mathcal{T}(\delta)$. Thus, the detection delay in Theorem \ref{theorem3} is also rate optimal up to a factor of at most $z(k,p, 2\tau) / v(k,p) \leq \log(2\tau)$ (ignoring constant factors).

The factor $\log(2\tau)$, which constitutes the maximum  gap between the rate in Theorem \ref{theorem3} and the minimax lower bound in Proposition \ref{prop:meanminimax}, may likely be the result of a loose lower bound, at least in part. In Theorem \ref{theorem3}, the factor $\log(2\tau)$ in $z(k,p,2\tau)$ ensures uniformly controlled Type I errors of the test $T^{(t)}_g$, specifically ensuring that $\PP_{\infty}(\max_{g \in G^{(t)}} T_g^{(t)})\leq \delta/t^2$ for any $t\geq 2$, and thus a false alarm probability bounded from above by $\delta$. In particular, the changepoint detector is agnostic to the location $\tau$ of the changepoint, and Theorem \ref{theorem3} applies regardless of the value of $\tau$. In comparison, the minimax lower bound in Proposition \ref{prop:meanminimax} takes the changepoint location $\tau$ as fixed. In fact, a core argument used to prove Proposition \ref{prop:meanminimax} is to show a weakened variant of Proposition 3 in \citet{liu_minimax_2021}, which is an (offline) minimax lower bound for data with fixed sample size. Thus, the additional hardness of the problem induced by sequential testing and not knowing the value of $\tau$ is not taken into account in Proposition \ref{prop:meanminimax}. 

A comparison between Proposition \ref{prop:meanminimax} and Proposition 4.1 in \citet{anote} reveals that the minimax lower bound in Proposition \ref{prop:meanminimax} for the detection delay is provably loose by a factor $\log(\tau)$ in the univariate case, i.e.~when $p = 1$. Moreover, in the univariate case, the rate in Theorem \ref{theorem3} is optimal. 
We leave it for future research to pinpoint the exact minimax detection delay rate in the multivariate case, where $p>1$.

\subsection{Change in covariance}\label{sec:optimalitycovariance}
Consider the problem of online detection of a change in the covariance in a sequence $Y = (Y_i)_{i \in \NN}$ of independent, mean-zero $p$-dimensional sub-Gaussian vectors. It suffices to assume the $Y_i$ to be independent Gaussian vectors with positive definite covariance matrices, as Assumption \ref{assmultivariate2} can then be shown to hold for some $u,w>0$. Assume that 
\begin{align}
    Y_i \overset{\mathrm{i.i.d.}}{\sim} \mathrm{N}(0, \gamma_i), \ i \in \NN, \label{modeldefminimaxcovariance}
\end{align}
for some fixed dimension $p \in \NN$.  We will show that both the signal strength requirement and the detection delay in Theorem \ref{theorem3} are rate optimal up to at most a logarithmic factor for small to moderately sized changes. 

Let $\gamma = (\gamma_i)_{i \in \NN}$ denote the sequence of covariance matrices of the $Y_i$, and for such $\gamma$, let $\PP_{\gamma}$ denote the probability measure under the data generating mechanism in \eqref{modeldefminimaxcovariance}. Let $(\mathcal{F}_t)_{t \in \NN}$ denote the natural filtration of the $Y_i$, and similar to before, for any $\delta>0$, let 
\begin{align}
    \mathcal{T}(\delta) = \Big\{ \widehat{\tau} : \ &\tauhat \text{ is an extended stopping time with respect to } (\mathcal{F}_t)_{t \in \NN}, \\ &\PP_{\gamma} ( \widehat{\tau} < \infty) \leq \delta \text{ for any } \gamma \in \Gamma_0(p)\Big\}, \label{mathcaltcovariance}
\end{align}
which is the set of of all extended stopping times with respect to $(\mathcal{F}_t)_{t\in\NN}$ for which the false alarm probability is no larger than $\delta$ for any $\gamma\in \Gamma_0(p)$, where
\begin{align}
\Gamma_0(p) 
= &\left\{ \gamma = (\gamma_i)_{i \in \NN} : \gamma_i = \Sigma \text{ for all } i \in \NN \text{ and some } \Sigma \in\mathbb{R}^{p\times p} \ \Sigma \succ 0 \right\},\label{Gamma0}
\end{align}
is the parameter space for all covariance sequences $\gamma$ that contain no changepoints.

For any $\tau \in \NN$, $\omega \in (0,1/2]$ and $k\in [p]$, define
\begin{align}
\Gamma(k, p , \tau, \omega) &= \bigg\{ \gamma =(\gamma_i)_{i \in \NN} : \exists \ \Sigma_1, \Sigma_2 \in \RR^{p\times p}, \Sigma_1 \succ 0, \Sigma_2 \succ 0, \text{ such that }
\\ & \quad\quad \gamma_i=\Sigma_1 \text{ for all } 1\leq i\leq \tau,\ \gamma_i=\Sigma_2 \text{ for all } i\geq \tau+1, \\
& \quad\quad \frac{\lambda_{\max}^k(\Sigma_1 - \Sigma_2)}{\lambda_{\max}^k(\Sigma_1) \vee \lambda_{\max}^k(\Sigma_2)} = \omega \bigg\}, \label{Gamma}
\end{align}
which is the parameter space of all covariance matrix sequences $\gamma$ containing a single change at location $\tau$ with relative covariance change magnitude, measured in terms of the $k$-sparse eigenvalue from \eqref{firsteigsparse}, given by $\lambda_{\max}^k(\Sigma_1 - \Sigma_2) \{\lambda_{\max}^k(\Sigma_1) \vee \lambda_{\max}^k(\Sigma_2)\}^{-1} = \omega$. Note that when $k=p$, we have $\lambda_{\max}^k(\Sigma_1 - \Sigma_2) \{\lambda_{\max}^k(\Sigma_1) \vee \lambda_{\max}^k(\Sigma_2)\}^{-1} = \normmop{\Sigma_1 - \Sigma_2} (\normmop{\Sigma_1} \vee \normmop{\Sigma_2})^{-1}$.
We then have the following result, which can be seen as a Corollary of Proposition 3 in \cite{moen2024minimax} or Theorem 5.1 in \cite{berthet_optimal}.

\begin{proposition}\label{prop:covarianceminimax}
    For any $\delta \in (0,1)$ and $\epsilon \in (0, 1-\delta)$ , there exist a constant $c>0$ depending only on $\epsilon$, such that the following holds for any $\tau \in \NN$, $p\in \NN$, $k\in [p]$, and $\omega \in (0, 1/2]$:
    \begin{enumerate}
    \item If $\omega^2 \tau \leq c k \log \left(\frac{ep}{k}\right)$, then for any $n\geq \tau$ we have
    \begin{align}
        \underset{\tauhat \in \mathcal{T}(\delta)}{\inf} \ \underset{\gamma \in \Gamma(k,p,\tau, \omega)}{\sup} \PP_{\gamma} \left\{ \tauhat -\tau  >  n 
        \right\} \geq 1- \delta - \epsilon. \label{covarianceminimaxeq1}
    \end{align}
    \item Conversely, if $\omega^2 \tau > c k \log \left(\frac{ep}{k}\right)$, we have
    \begin{align}
        \underset{\tauhat \in \mathcal{T}(\delta)}{\inf} \ \underset{\gamma \in \Gamma(k,p,\tau, \omega)}{\sup} \PP_{\gamma}\left\{ \tauhat -\tau  >  c \frac{k \log\left(\frac{ep}{k}\right)}{\omega^2} 
        \right\} \geq 1- \delta - \epsilon. \label{covarianceminimaxeq2}
    \end{align}
    \end{enumerate}
\end{proposition}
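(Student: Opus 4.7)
The plan is to reduce the online detection problem to an offline two-sample sparse-covariance-change testing problem, and then invoke an offline minimax testing lower bound of the kind proved in \cite{moen2024minimax} (Proposition 3) or \cite{berthet_optimal} (Theorem 5.1). Throughout I let $c > 0$ denote a generic constant depending only on $\epsilon$, possibly changing from line to line.

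The first step is the reduction. For any $\tauhat \in \mathcal{T}(\delta)$ and any $n \in \NN$, the indicator $\psi_n = \ind\{\tauhat \leq \tau + n\}$ is $\mathcal{F}_{\tau + n}$-measurable and therefore a valid test based on $(Y_1, \ldots, Y_{\tau + n})$ for distinguishing $H_0 : \gamma \in \Gamma_0(p)$ from $H_1 : \gamma \in \Gamma(k, p, \tau, \omega)$. By the defining property of $\mathcal{T}(\delta)$, any $\gamma \in \Gamma_0(p)$ satisfies $\EE_\gamma \psi_n \leq \PP_\gamma(\tauhat < \infty) \leq \delta$, so $\psi_n$ has Type I error at most $\delta$ under the composite null. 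For $\gamma \in \Gamma(k, p, \tau, \omega)$, $\PP_\gamma(\psi_n = 0) = \PP_\gamma(\tauhat - \tau > n)$, so it suffices to exhibit, for the appropriate $n$, some $\gamma^\ast \in \Gamma(k, p, \tau, \omega)$ with $\PP_{\gamma^\ast}(\psi_n = 0) \geq 1 - \delta - \epsilon$.

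The second step is to invoke the offline minimax lower bound. The required statement is: for some $c > 0$ depending only on $\epsilon$, whenever
\begin{equation}
\omega^2 \min(\tau, n) \leq c\, k \log(ep/k),
\end{equation}
every test $\psi$ on $(Y_1, \ldots, Y_{\tau+n})$ with $\sup_{\gamma \in \Gamma_0(p)} \EE_\gamma \psi \leq \delta$ has worst-case Type II error at least $1 - \delta - \epsilon$ over $\Gamma(k, p, \tau, \omega)$. This is shown by the standard Le Cam / Ingster--Suslina method: take a composite alternative by sampling $v$ uniformly from a packing of $k$-sparse unit vectors and setting $\Sigma_1 = I$, $\Sigma_2 = I + \omega' v v^\top$, with $\omega'$ chosen so that $\lambda_{\max}^k(\Sigma_1 - \Sigma_2)/\{\lambda_{\max}^k(\Sigma_1) \vee \lambda_{\max}^k(\Sigma_2)\}$ equals $\omega$ exactly. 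A direct Gaussian moment-generating-function computation then bounds the $\chi^2$-divergence between the resulting mixture and the null product $N(0, I)^{\otimes(\tau + n)}$ in terms of $\omega^2 \min(\tau, n)$ rather than $\omega^2 (\tau + n)$, because the likelihood ratio factorises into pre- and post-change contributions and the divergence is dominated by the smaller of the two sample sizes.

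Combining the two steps yields both claims. For part (i), the hypothesis $\omega^2 \tau \leq c k \log(ep/k)$ together with $n \geq \tau$ gives $\omega^2 \min(\tau, n) = \omega^2 \tau \leq c k \log(ep/k)$, so the offline bound applies directly. For part (ii), set $n_0 = \lceil c k \log(ep/k)/\omega^2 \rceil$; the hypothesis $\omega^2 \tau > c k \log(ep/k)$ forces $n_0 \leq \tau$, so $\min(\tau, n_0) = n_0$ and $\omega^2 n_0 \leq 2 c k \log(ep/k)$, which triggers the offline bound after absorbing constants. The main obstacle I expect is the exact normalisation of the hard instances so they lie \emph{in} $\Gamma(k, p, \tau, \omega)$: the scalar $\omega'$ has to be chosen so that the sparse-eigenvalue ratio equals $\omega$ precisely rather than approximately, and the standing restriction $\omega \leq 1/2$ is essential both to keep $\Sigma_2$ positive definite and well conditioned and to make the $\chi^2$-bound scale cleanly as $\omega^2 \min(\tau, n)$ in the small-signal regime.
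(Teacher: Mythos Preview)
Your reduction in Step~1 is exactly the paper's starting point, and the overall Le Cam/Ingster strategy is right. The gap is in Step~2: with your fixed construction $\Sigma_1 = I$, $\Sigma_2 = I + \omega' v v^\top$, the likelihood ratio against the null $N(0,I)^{\otimes(\tau+n)}$ is identically $1$ on the first $\tau$ observations and depends only on the last $n$. Hence the $\chi^2$-divergence scales as $\omega^2 n$, not $\omega^2 \min(\tau,n)$; there is no ``domination by the smaller sample size'' when the pre-change law already matches the null. This breaks part~(i): there $n \geq \tau$ is arbitrary, so $\omega^2 n$ can be made as large as you like and the $\chi^2$ bound blows up.

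The paper repairs this by making the side that carries the perturbation depend on the regime. In case~(i) it sets the \emph{pre}-change covariance to $\sigma^2 I - \sigma^2 \omega u u^\top$ and the post-change covariance to $\sigma^2 I$; the likelihood ratio then depends only on $Y_1,\ldots,Y_\tau$, and the $\chi^2$ bound involves $\omega^2\tau$, which is small by hypothesis. In case~(ii) it swaps roles, placing the perturbation after $\tau$, so the bound involves $\omega^2(l-\tau) = \omega^2\lfloor c k\log(ep/k)/\omega^2\rfloor$. The subtraction (rather than your addition) also makes the normalisation clean: $\lambda_{\max}^k(\sigma^2 I) \vee \lambda_{\max}^k(\sigma^2 I - \sigma^2\omega u u^\top) = \sigma^2$ and $\lambda_{\max}^k$ of the difference is $\sigma^2\omega$, so the ratio is exactly $\omega$ without solving for an auxiliary $\omega'$. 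Once the prior is set up this way, the paper applies Lemma~9 of \cite{moen2024minimax} (using $\omega \leq 1/2$ to control $(\omega/(1-\omega))^2 \leq 4\omega^2$) and then Lemma~1 of \cite{cai_optimal_2015} to bound the resulting Rademacher--hypergeometric expectation.
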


Proposition \ref{prop:covarianceminimax} implies that the signal strength requirement and the detection delay in Theorem \ref{theorem5} are rate-optimal up to a factor at most logarithmic in the changepoint location $\tau$ when the change is dense. Indeed, consider first the signal strength requirement in Theorem \ref{theorem5}, which for any fixed $\delta \in (0,1)$ requires that $\tau \omega^2 \geq C \{p \vee \log(2\tau)\}$ in order for the changepoint detector in Section \ref{sec:covariance} in the main text to be guaranteed detection of a changepoint occurring at time $\tau$ within $2\tau$ observations with probability at least $1-\delta$, for some $C>0$, where $\omega = \normmop{\Sigma_1 - \Sigma_2} ( \normmop{\Sigma_1}\vee \normmop{\Sigma_2})^{-1}$. In comparison, when $\omega \leq 1/2$, then for any $\epsilon>0$, Proposition \ref{prop:covarianceminimax} guarantees that, by choosing $\tau \omega^2$ to be a sufficiently small non-zero factor of $p$, then for any extended stopping time $\tauhat \in \mathcal{T}(\delta)$ (such as the changepoint detector from Section \ref{sec:covariance} in the main text), the
the worst-case probability of observing $\tauhat > \tau + n$ for any $n\geq \tau$ is
at least $1- \delta - \epsilon$. 
The signal strength requirement in Theorem \ref{theorem5} is therefore rate optimal up to a factor of at most $1 \vee p^{-1}\log(2\tau)$. Moreover, the detection delay in Theorem \ref{theorem5} is for any $\delta \in (0,1)$ of order $\omega^{-2} \{ p \vee \log(2\tau)\}$ with probability at least $1-\delta$. In comparison, for any $\epsilon>0$, Proposition \ref{prop:covarianceminimax} guarantees a detection delay of order at least $\omega^2 p$ with probability at least $1 - \delta - \epsilon$ for any $\tauhat \in \mathcal{T}(\delta)$. Thus, the detection delay in Theorem \ref{theorem3} is also rate optimal up to a factor of at most $1 \vee p^{-1}\log(2\tau)$.

Next, consider the sparsity adaptive changepoint detector in \ref{sec:sparsecovariance}. For any sparsity $k\in[p]$, the signal strength requirement in Theorem \ref{theorem6} is that $\tau \omega_k^2 \geq C k^2 \log(p \vee 2\tau)$, under which the detection delay is of order $\omega_k^2 k^2 \log(p\vee 2\tau)$ with high probability, for some $C>0$, where $\omega_k = \lambda_{\max}^k(\Sigma_1 - \Sigma_2) ( \lambda_{\max}^k(\Sigma_1)\vee \lambda_{\max}^k(\Sigma_2))^{-1}$, and $\lambda_{\max}^k(\cdot)$ is defined in \eqref{firsteigsparse}. In comparison, Proposition \ref{prop:covarianceminimax} guarantees that all extended stopping times with bounded false alarm probability will take arbitrarily long to detect the changepoint (with high probability) whenever $\tau \omega_k^2 \leq c k\log(ep/k)$. Moreover, Proposition \ref{prop:covarianceminimax} guarantees any extended stopping time with bounded false alarm probability has a detection delay of is of order $\omega_k^2 k \log(ep/k)$ with high probability, as long as $\omega_k \leq 1/2$. Thus, the signal strength requirement and detection delay in Theorem \ref{theorem6} are only close to minimax rate optimal for very small values of $k$ and when the relative magnitude of the covariance change is small to moderate. When $k=1$, the signal strength requirement and detection delay in Theorem \ref{theorem6} are rate optimal up to a factor of at most $\log(2\tau)$.

Similar to the minimax lower bound for the mean-change problem, the proof of Proposition \ref{prop:covarianceminimax} builds on an existing (offline) minimax lower bound for data with fixed sample size. Thus, the additional hardness of the problem induced by sequential testing is not taken into account in Proposition \ref{prop:covarianceminimax}, which may result in a loose lower bound . We leave it for future research to pinpoint the exact minimax detection delay rate in the multivariate case when $p>1$.

\section{Additional results and details from the simulation study}\label{sec:simdetails}
\subsection{Details on how the changepoint detectors were calibrated in Section \ref{sec:simulations_statistics}}\label{sec:tuningdetails}
We now provide more details on how the changepoint detectors in the simulation study from Section \ref{sec:simulations_statistics} in the main text were calibrated to attain approximately $5\%$ false alarm probability after processing the first $N$ observations. We first consider the changepoint detector from Section \ref{sec:multimean}, which is designed to control false alarm over infinite data sequences, and thus needed some minor modifications. In particular, both the threshold values $a(s,t)$ and mean-centring terms $\nu_{a(s,t)}$ in \eqref{adef}, and the grid $\mathcal{S}^{(t)}$ of sparsities, grow with $t$. This also applies to the critical value $\xi_s^{(t)}$ suggested in Theorem \ref{theorem3}, causing the detector to be potentially overly conservative in finite-sample settings. For the simulation study, the detector from Section \ref{sec:multimean} was therefore modified by replacing $a(s,t)$ by $a(s,2)$, $\nu_{a(s,t)}$ by $\nu_{a(s,2)}$,  and $\mathcal{S}^{(t)}$ by $\mathcal{S}^{(2)}$, ensuring these to be constant with respect to $t$. The critical value $\xi_s^{(t)} = \xi_s$ was also chosen independently of $t$, as 
\begin{align}
    {\xi}_s &=  \begin{cases}
        \widehat{\lambda}_1 \widetilde{z}(s,p,2)  & \text{ for } s > \sqrt{p \log 2} \\
        \widehat{\lambda}_2 \widetilde{z}(s,p,2)  & \text{ for } s \leq \sqrt{p \log 2}
    \end{cases}, \label{xiempirical}
    \intertext{where}
    \widetilde{z}(s,p,t) &= s \log \left(1+\frac{\sqrt{p \log t}}{s}\right) + \log t,
\end{align}
and the leading constants $\widehat{\lambda}_1$ and $\widehat{\lambda}_2$ were chosen by Monte Carlo simulation, described shortly. First, two remarks are in order. As the first remark, note that the function $z(s,p,t)$, present in critical value suggested by Theorem \ref{theorem3}, is in \eqref{xiempirical} replaced by a variant $\widetilde{z}(s,p,t)$, where $t$ is set to $2$ (to ensure that $\xi_s$ in \eqref{xiempirical} to be constant in $t$). Moreover, as opposed to $z(s,p,t)$, the function $\widetilde{z}(s,p,t)$ is monotonically increasing in $s$, which is preferable in practice, seeing as the statistic $A^{(t)}_{s,g}$ in \eqref{adef} is (deterministically) non-decreasing in $s$. The function $\widetilde{z}(s,p,t)$ can therefore be seen as a monotonically increasing variant of the function ${z}(s,p,t)$, as there can be shown to exist constants $c,C>0$, independent of $s,p$ and $t$, such that $c z(s,p,t) < \widetilde{z}(s,p,t) < C z(s,p,t)$. As the second remark,  note that the critical value $\xi_s$ in \eqref{xiempirical} has two distinct leading constants, one for the sparse regime ($s \leq \sqrt{p \log 2}$) and one for the dense regime ($s > \sqrt{p \log 2}$). This was done for practical purposes, after experiencing that a single leading constant for all sparsity regimes resulted in a slightly conservative choice of critical value.  

The leading constants $\widehat{\lambda}_1$ and $\widehat{\lambda}_2$ were chosen by Monte Carlo simulations as follows. For $k = 1, 2, \ldots, K$, we sampled a data set $Y^{(k)} = (Y_1^{k}, \ldots, Y_N^{(k)})$ from the model described in Section \ref{sec:multimean} in the main text, with no changepoint, $\mu_1 = 0$ and noise level $\sigma=1$. For each data set $Y^{(k)}$, the statistic $A_{s,g}^{(t)} / \xi_s$ was computed for each $s \in \mathcal{S}^{(2)}$, $g \in G^{(t)}$ from \eqref{thegrid}, and $t=2, 3, \ldots, N$, with $\xi_s$ given as in \eqref{xiempirical} and $A_{s,g}^{(t)}$ defined in \eqref{adef}. By applying a Bonferroni correction, $\lambda_1$ was chosen to be the upper $2.5\%$ empirical quantile of 
$$\underset{s \in \mathcal{S}^{(2)},\  s > \sqrt{p\log 2}}{\max} \quad  \underset{t =2,3, \ldots, N}{\max} \ \underset{g \in G^{(t)}}{\max}  \ \frac{A_{s,g}^{(t)}}{\xi_s},$$
computed from the $K$ Monte Carlo sampled data sets. Similarly, $\lambda_2$ was chosen as the upper $2.5\%$ empirical quantile of 
$$\underset{s \in \mathcal{S}^{(2)}, \ s \leq \sqrt{p\log 2}}{\max} \quad \underset{t =2,3, \ldots, N}{\max} \ \underset{g \in G^{(t)}}{\max}  \ \frac{A_{s,g}^{(t)}}{\xi_s},$$
also computed from the $K$ data sets. 

The ocd detector of \cite{chen_high-dimensional_2022} was calibrated as follows. For $k = 1, 2, \ldots, K$, we sampled a data set $Y^{(k)} = (Y_1^{k}, \ldots, Y_N^{(k)})$ from the model described in Section \ref{sec:multimean} in the main text, with no changepoint, $\mu_1 = 0$ and noise level $\sigma=1$. For each data set $Y^{(k)}$, the statistics $S^{\text{diag}}_t$, $S^{\text{off}, \text{s}}_t$ and $S^{\text{off}, \text{d}}_t$ (defined in \citealt{chen_high-dimensional_2022}) where computed for $t=2,3, \ldots, N$. The corresponding critical values $T^{\text{diag}}$, $T^{\text{off}, \text{s}}$ and $T^{\text{off}, \text{d}}$ were then respectively taken as the empirical upper $(5/3)\%$ quantiles of $\max_{t=2,3,\ldots, N} \ S^{\text{diag}}_t$, $\max_{t=2,3,\ldots, N} \ S^{\text{off}, \text{s}}_t$ and $\max_{t=2,3,\ldots, N} \ S^{\text{off}, \text{d}}_t$ from the $K$ simulated data sets. 

The remaining changepoint detectors were calibrated similarly as the ocd detector, where Bonferroni corrections were applied to the detectors using more than one test statistic to test for a changepoint.

\newpage
\subsection{Detection delays (original scale) from Section \ref{sec:simulations_statistics}}\label{sec:mainsim_nonlog}
Figure \ref{fig:combined_p=100} below displays the average detection delays from the simulation study in Section \ref{sec:simulations_statistics} on a linear scale. 
\begin{figure}[h!]
    \centering
    \includegraphics[width=\textwidth]{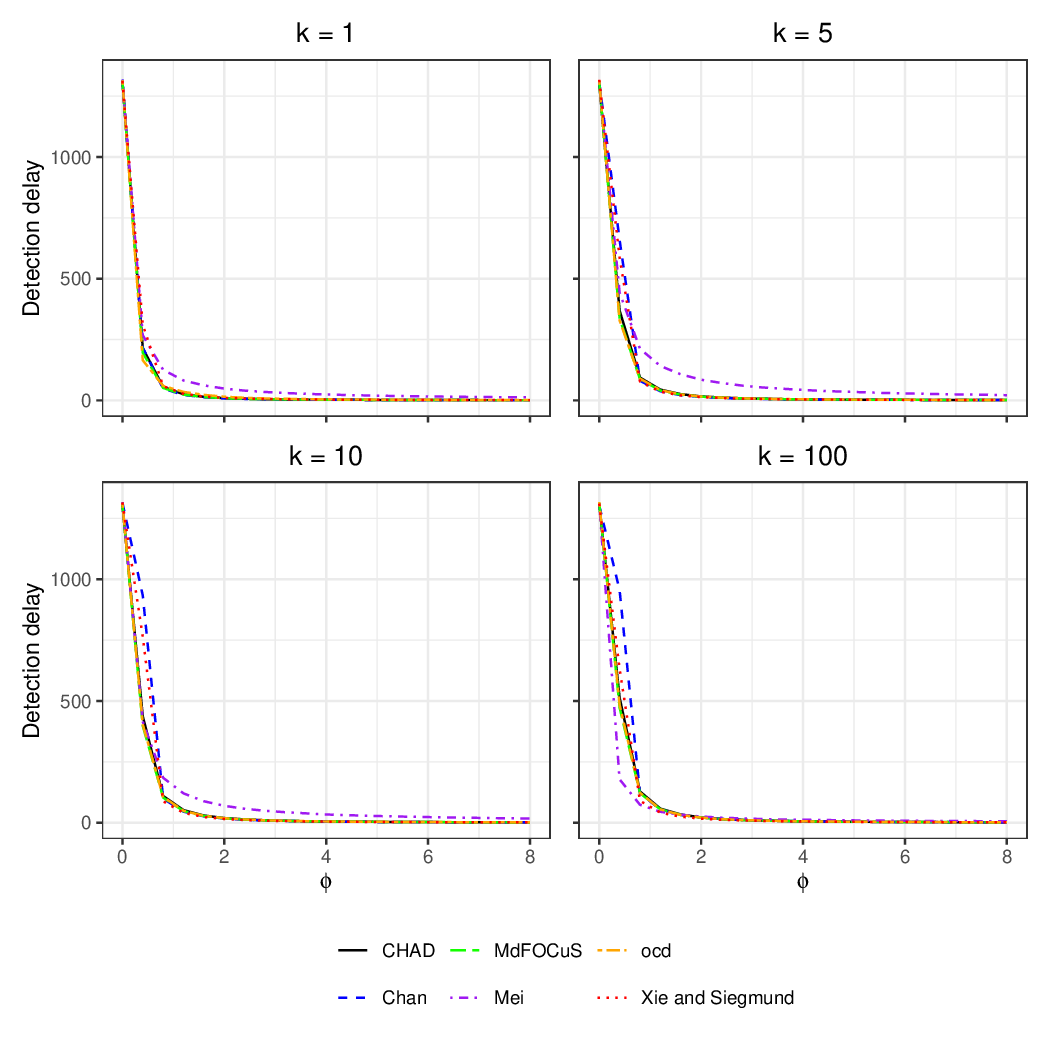}
    \caption{Average detection delay of the detectors for varying change magnitudes ($\phi$) and changepoint sparsities $k = 1,5,10,100$.}
    \label{fig:combined_p=100}
\end{figure}
\newpage
\subsection{Run times and memory consumption (original scale) from Section \ref{sec:simulationcomp}}\label{sec:runtimesabsolute}
Figure \ref{fig:combined_runtime_absolute} below displays the average update times and memory consumption from the simulation study in Section \ref{sec:simulationcomp} on a linear scale. 
\begin{figure}[h!]
    \centering
    \includegraphics[width=\textwidth]{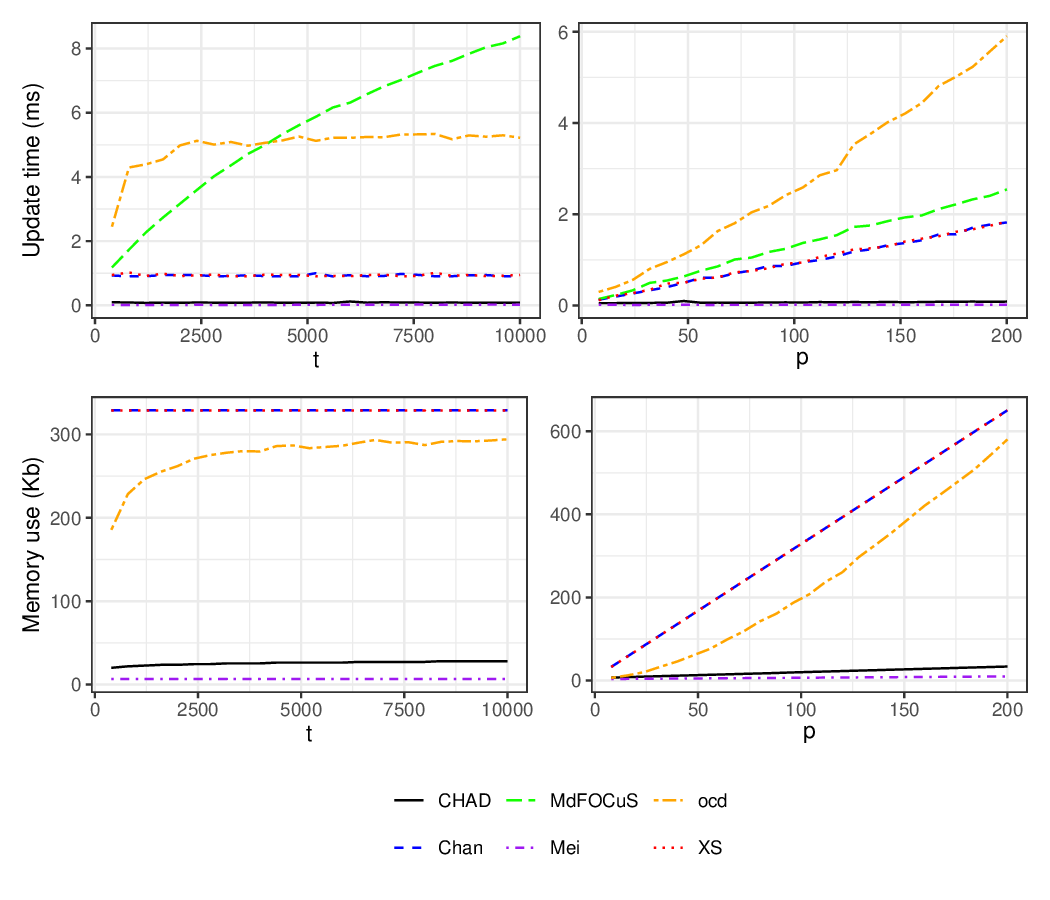}
    \caption{Update time in milliseconds (top) and memory consumption in kilobytes (bottom) of the changepoint detectors, shown as a function of $t$ (left), and as a function of $p$ (right).}
    \label{fig:combined_runtime_absolute}
\end{figure}
\newpage\subsection{Smaller $p$, larger $N$}\label{sec:smallplargeN}
The simulation study from Section \ref{sec:simulations_statistics} was also performed with a larger maximum stream length ($N=5000$). To make the simulation study computationally feasible, the dimension was reduced to $p=10$, and the sparsity levels were set to $k\in \{1,2,5,10\}$. The parameters were otherwise the same as in Section \ref{sec:simulationcomp}, although the methods were re-calibrated due to the change in stream length and dimension. The resulting detection delays are shown in Figure \ref{fig:combined_p=10_log} (on a log scale) and in Figure \ref{fig:combined_p=10} (linear scale). Here, we see that CHAD, MdFOCuS and the detector of \cite{xs2013} have the lowest detection delays across all sparsity regimes. However, all the detectors appear largely comparable, with the exception of \cite{mei2010}, which has a larger detection delay for strong signals than the competitors. For $\phi=0$, corresponding to no changepoint, the rate of false alarms (i.e.~the frequency of $\tauhat \leq N$) were $4.1\%$ for CHAD, $2.7\%$ for MdFOCuS, $4.2\%$ for ocd, $4.6\%$ for the detector of \cite{mei2010}, $5.7\%$ for the detector of \cite{xs2013}, and $5.7\%$ for the detector of \cite{chan2017}. 

\begin{figure}[h!]
    \centering
    \includegraphics[width=\textwidth]{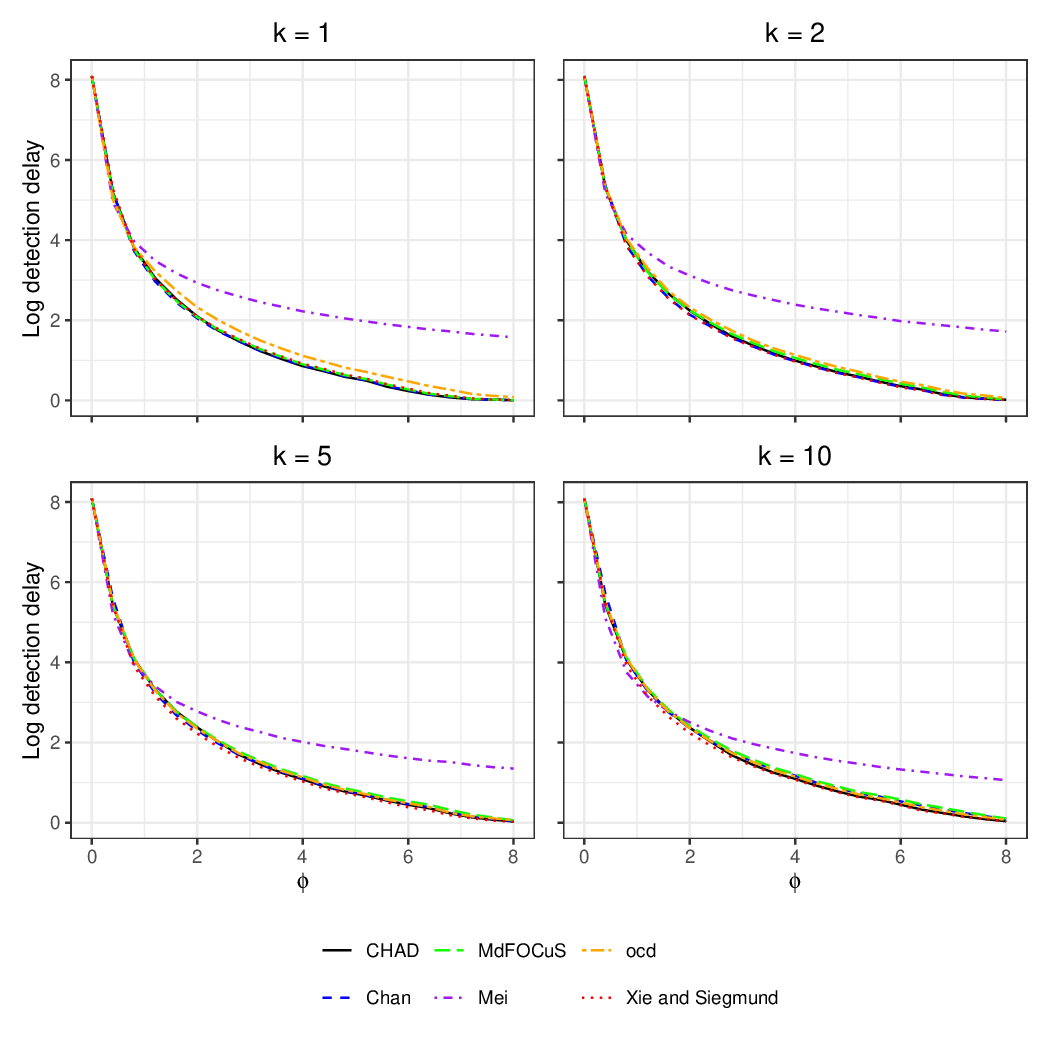}
    \caption{Average detection delay of the detectors (on log scale) for $p=10$ and varying change magnitudes ($\phi$) and changepoint sparsities $k = 1,2,5,10$.}
    \label{fig:combined_p=10_log}
\end{figure}
\newpage

\begin{figure}[h!]
    \centering
    \includegraphics[width=\textwidth]{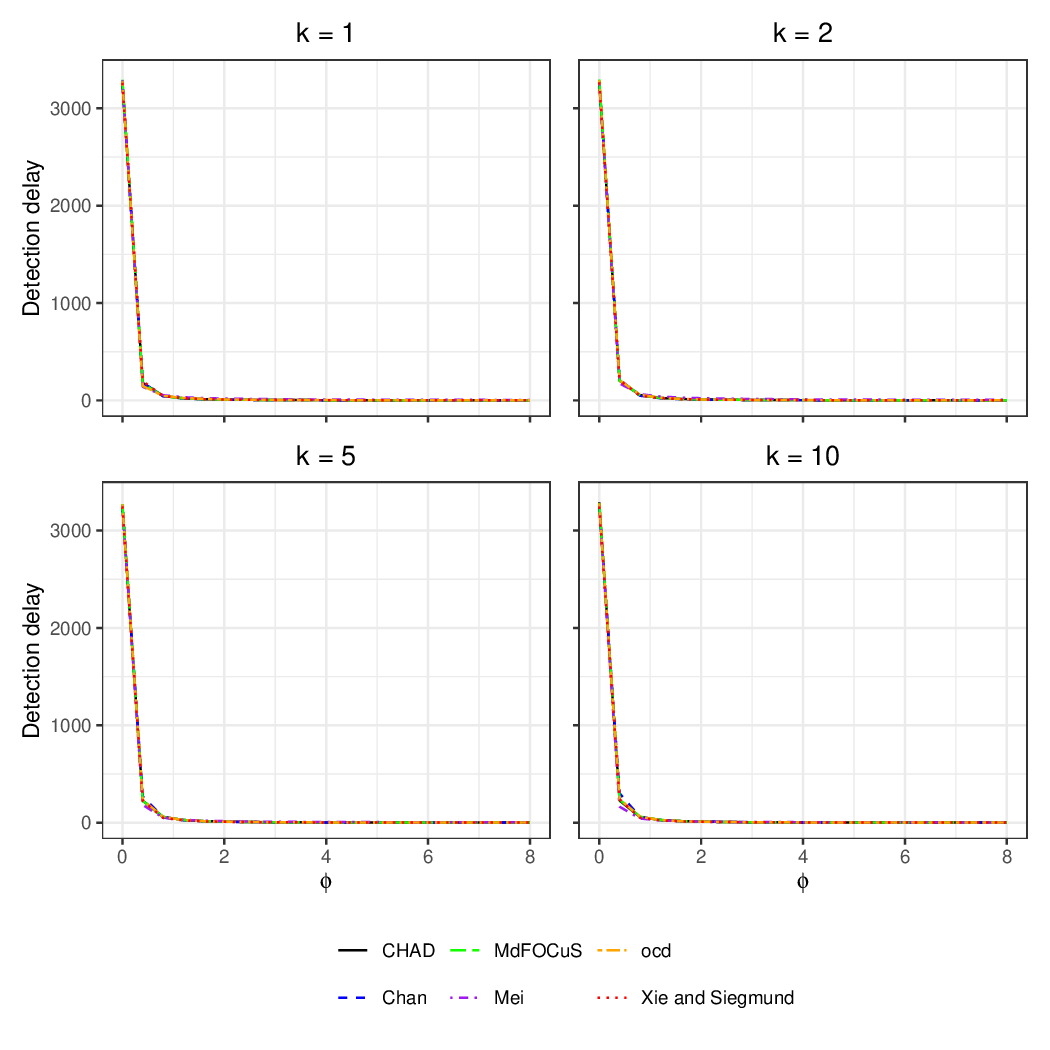}
    \caption{Average detection delay of the detectors for $p = 10$ and varying change magnitudes ($\phi$) and changepoint sparsities $k = 1,2,5,10$.}
    \label{fig:combined_p=10}
\end{figure}
\leavevmode\newpage

\subsection{Larger $p$, smaller $N$}\label{sec:largepsmallN}
The simulation study from Section \ref{sec:simulations_statistics} was also performed with a larger dimension ($p=1000$). To make the simulation study computationally feasible, the maximum stream length was reduced to $N = 200$. The sparsity levels were set to $k\in \{1,10,100,1000\}$. The parameters were otherwise the same as in Section \ref{sec:simulationcomp}, although the methods were re-calibrated due to the change in stream length and dimension. The resulting detection delays are shown in Figure \ref{fig:combined_p=10_log} (on a log scale) and in Figure \ref{fig:combined_p=10} (linear scale). Here, we see that CHAD, MdFOCuS and the detectors of \cite{xs2013} and \cite{chan2017} outperform the competitors for sparse changes ($k=1,10,100$). For dense changes ($k =p=1000$), the detectors of \cite{chan2017} and \cite{xs2013} outperform the competitors for both strong and weak signals. For $\phi=0$, corresponding to no changepoint, the rate of false alarms (i.e.~the frequency of $\tauhat \leq N$) were $5.8\%$ for CHAD, $3.1\%$ for MdFOCuS, $5.1\%$ for ocd, $4.1\%$ for the detector of \cite{mei2010}, $5.1\%$ for the detector of \cite{xs2013}, and $3.9\%$ for the detector of \cite{chan2017}.

\begin{figure}[h!]
    \centering
    \includegraphics[width=\textwidth]{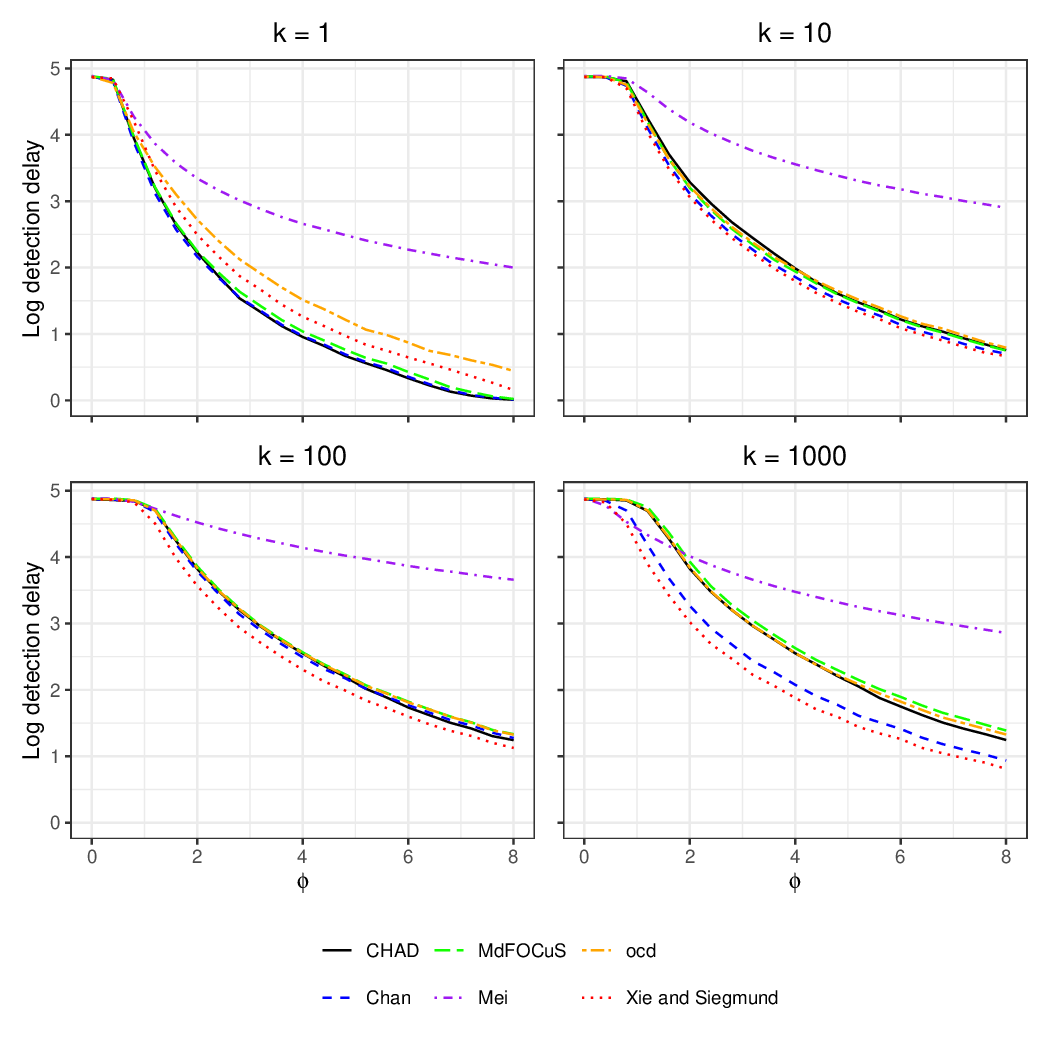}
    \caption{Average detection delay of the detectors (on log scale) for $p=1000$ and varying change magnitudes ($\phi$) and changepoint sparsities $k = 1, 10,100,1000$.}
    \label{fig:combined_p=1000_log}
\end{figure}
\begin{figure}[h!]
    \centering
    \includegraphics[width=\textwidth]{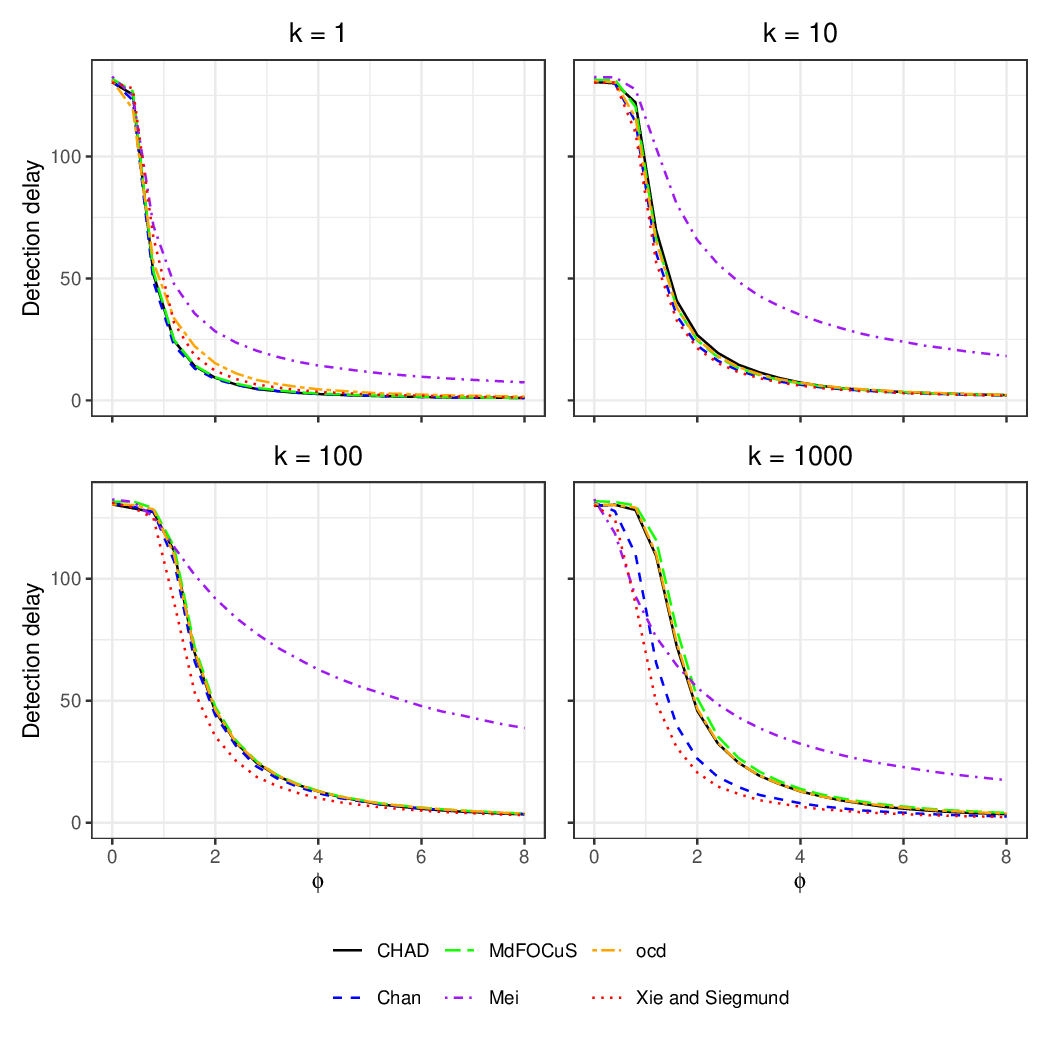}
    \caption{Average detection delay of the detectors for $p=1000$ and varying change magnitudes ($\phi$) and changepoint sparsities $k = 1,10,100,1000$.}
    \label{fig:combined_p=1000}
\end{figure}
\leavevmode\newpage

\subsection{Unknown pre-change mean}\label{simnonzeromean}
The simulation study from Section \ref{sec:simulations_statistics} was also performed for the setting where the pre-change mean $\mu_1$ was taken as unknown. This is a setting that the changepoint detector from Section \ref{sec:multimean} and MdFOCuS \citep{computationalgeometry} are natively designed for, while the remaining methods are not. To adjust the detectors to the unknown-pre-change-mean setting, the changepoint detector from Section \ref{sec:multimean} (CHAD) was equipped with the standard CUSUM statistic defined in \eqref{ydef}, while MdFOCuS was run using a routine tailored to the setting with $\mu_1$ unknown. For the remaining changepoint detectors, i.e., those of \cite{chen_high-dimensional_2022}, \cite{mei2010}, \cite{xs2013} and \cite{chan2017}, there is no obvious way to estimate the pre-change mean online. One option would be to give these detectors access to an additional sample of the pre-change observations, but this would give the detectors access to more data and thus an implicit advantage. Instead, we opted for giving these detectors a probation (training) period, letting them use the first $\lceil \tau/2\rceil = 334$ observations of the data to estimate the pre-change mean without scanning for changepoints. In contrast, CHAD and MdFOCuS scanned for changepoints at all sample sizes, thus being given a slight disadvantage. 

Apart from the modifications described above, the simulation study was performed in a similar fashion as in Section \ref{sec:simulations_statistics}, with the same stream length ($N=2000$) and dimension ($p=100$). The critical values were re-computed using the modified detectors. The resulting detection delays are shown in Figure \ref{fig:plot_p=100_combined_unknownmean}, and shown on log scale in Figure \ref{fig:plot_p=100_combined_log_unknownmean}.

An inspection of the Figures \ref{fig:plot_p=100_combined_log_unknownmean} and \ref{fig:plot_p=100_combined_unknownmean} reveals that the changepoint detectors CHAD, MdFOCuS and the detector of \cite{xs2013} appear to be largely comparable and out-perform the remaining detectors. However, the detector of \cite{xs2013} appears to have a slight edge in the dense case ($k=100$) when the signal strength is very weak, but appears to have slightly weaker performance for large signal strengths for all values of the sparsity $k$. Interestingly, the detectors CHAD and MdFOCuS appear to have very similar performance across the sparsity regimes, and their performance appears to be relatively more competitive than in the case with a known pre-change mean (i.e., Section \ref{sec:simulations_statistics}). For $\phi=0$, corresponding to no changepoint, the rate of false alarms (i.e.~the frequency of $\tauhat \leq N$) were $5.7\%$ for CHAD, $3.6\%$ for MdFOCuS, $3.6\%$ for ocd, $5.1\%$ for the detector of \cite{mei2010}, $5.5\%$ for the detector of \cite{xs2013}, and $4\%$ for the detector of \cite{chan2017}.

\begin{figure}
    \centering
    \includegraphics[width=\textwidth]{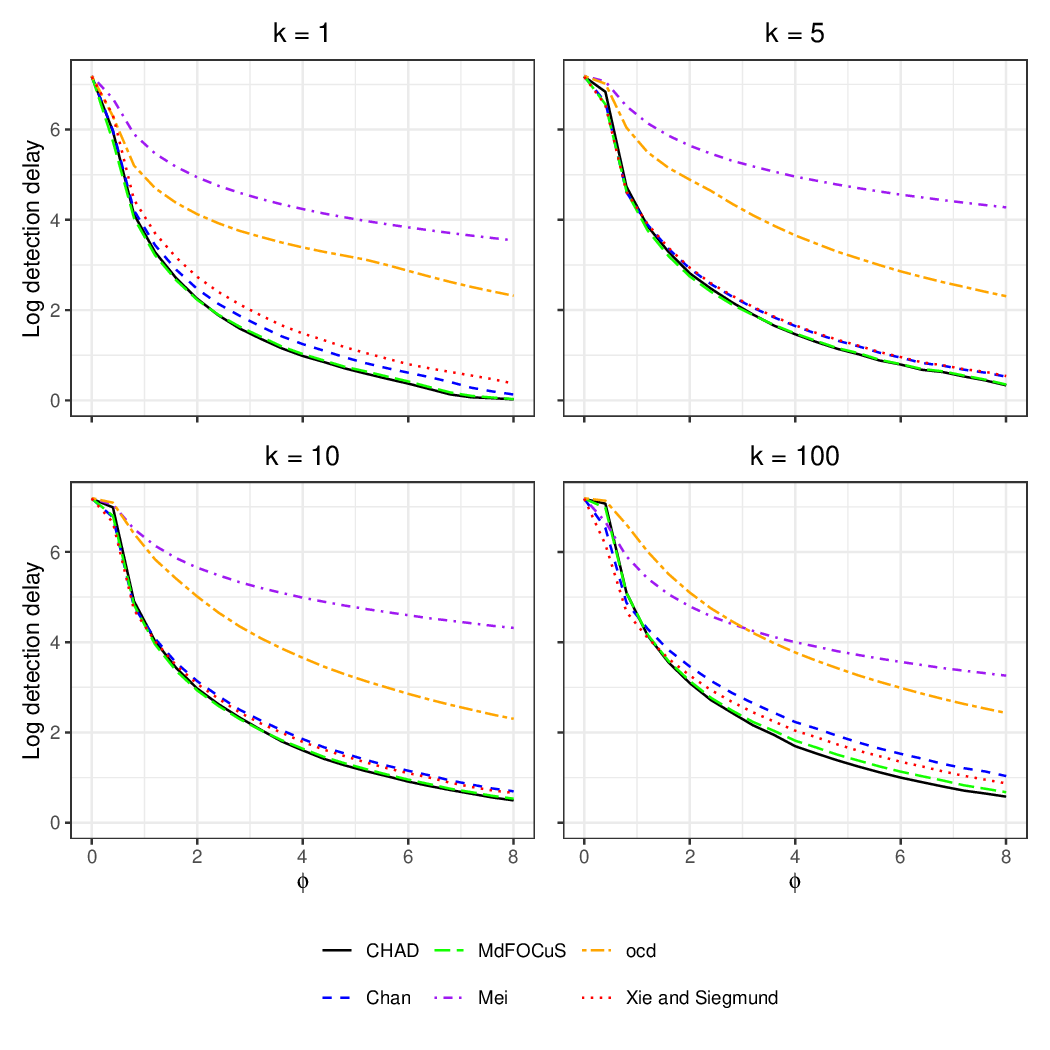}
    \caption{Average detection delays (on log scale) for varying change magnitudes ($\phi$) and changepoint sparsities $k = 1,5,10,100$ over $K=1000$ independent data sets.}
    \label{fig:plot_p=100_combined_log_unknownmean}
\end{figure}

\begin{figure}
    \centering
    \includegraphics[width=\textwidth]{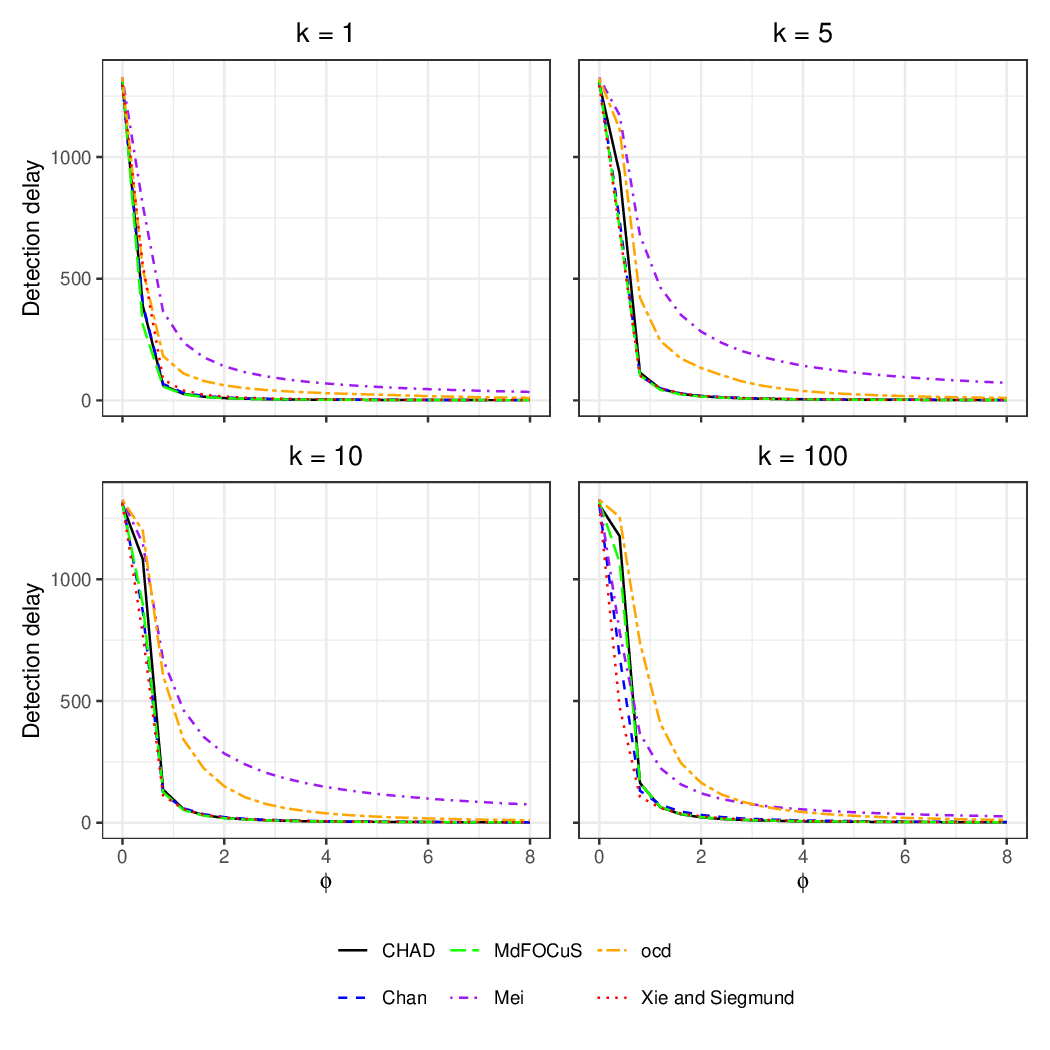}
    \caption{Average detection delays for varying change magnitudes ($\phi$) and changepoint sparsities $k = 1,5,10,100$ over $K=1000$ independent data sets.}
    \label{fig:plot_p=100_combined_unknownmean}
\end{figure}

\newpage

\subsection{The univariate mean-change model with long stretches under the null}\label{sec:univariatesimulation}

In this section, we cover a simulation study that compares the detection delay of the CUSUM-based univariate changepoint detector in \eqref{stoppingtime} with $G^{(t)}$ chosen in \eqref{thegrid} to that of the "full scan" $G^{(t)} = [t-1]$. We remark that the latter case corresponds to applying the likelihood-ratio test sequentially with a time-varying critical value. The main goal of the simulation study is to investigate to what extent the dynamic geometric grid reduces detection power or increases detection delay, compared to a full scan.

Let $\delta = 0.05$ denote the desired false alarm probability, and assume Gaussian data with unit variance, and unknown pre- and post-change means. We consider a stream length of $N = 20 \, 000$ data points. 

In the following, we let 
\begin{align}
    \tauhat_{\mathrm{grid}} &= {\inf} \ \left \{   t\geq 2 \ : \   \max_{g \in G^{(t)}}\ \left(C_g^{(t)} \right)^2 > 1 + \lambda_{\mathrm{grid}}\left [\log\left(\frac{t}{\delta}\right) + \sqrt{\log\left(\frac{t}{\delta}\right)} \right] \right\}, \label{unisimtau1} \\
    \tauhat_{\mathrm{full}} &= {\inf} \ \left \{  t\geq 2 \ :  \ \max_{g \in [t-1]}\ \left(C_g^{(t)} \right)^2 > 1 + \lambda_{\mathrm{full}}\left [\log\left(\frac{t}{\delta}\right) + \sqrt{\log\left(\frac{t}{\delta}\right)} \right]  \right\}, \label{unisimtau2}  
\end{align}
where the CUSUM statistic $C_g^{(t)}$ is defined as in \eqref{cusum} and $G^{(t)}$ is the dynamic geometric grid from \eqref{thegrid}. In \eqref{unisimtau1} and \eqref{unisimtau2}, the critical values have a slightly different functional form than suggested by Theorem \ref{theorem1}. Although still being of the same order, the functional form in \eqref{unisimtau1} and \eqref{unisimtau2} are chosen in accordance with the finite-sample Chi-squared tail bound in \cite{laurentmassart}, which provides more accurate finite-sample upper-tail calibration.

In the simulation study, in order make the results as comparable as possible, the critical values $\lambda_{\mathrm{grid}}$ and $\lambda_{\mathrm{full}}$ were chosen to be the upper $5\%$ quantiles of
\begin{align}
    &\underset{t=2,3,\ldots, N}{\max} \  \frac{\max_{g \in G^{(t)}}\ \left(C_g^{(t)} \right)^2-1}{\log\left(\frac{t}{\delta}\right) + \sqrt{\log\left(\frac{t}{\delta}\right)}}
    \intertext{and}
    &\underset{t=2,3,\ldots, N}{\max} \  \frac{\max_{g \in [t-1]}\ \left(C_g^{(t)} \right)^2-1}{\log\left(\frac{t}{\delta}\right) + \sqrt{\log\left(\frac{t}{\delta}\right)}}
\end{align}
respectively, using $K = 10\, 000$ Monte Carlo samples under the null of no change with unit variance, with a maximum stream length of $N= 20 \, 000$. For maximal computational efficiency, the changepoint detectors were implemented in Python using the Numba library to compile the code to high-performance code.

The average detection delay under the alternative was then estimated for the two calibrated changepoint detectors $\tauhat_{\mathrm{grid}}$ and $\tauhat_{\mathrm{full}}$ for a range of values of $\phi= |\mu_1 - \mu_2|$, using $M = 1000$ independent Monte Carlo samples for each value of $\phi$, with changepoint location $\tau$ sampled uniformly from $[5000, 10\ 000]$, and pre-change mean $\mu_1=0$ (taken as unknown). Specifically, for each value of $\phi$ in a piece-wise linearly spaced grid over $[0, 7/2]$, the detection delay $\EE (\ \tauhat \wedge N - \tau \ | \ \tauhat > \tau)$ was estimated for $\tauhat_{\mathrm{grid}}$ and $\tauhat_{\mathrm{full}}$. The resulting detection delays for $\phi\in [0,1]$ are shown in Figure \ref{fig:unimeanzoomed}, as functions of $\phi$. In the figure, the blue curve is the detection delay of $\tauhat_{\mathrm{grid}}$ and the orange curve is that of $\tauhat_{\mathrm{fulll}}$. A similar plot, with the detection delay on log scale for $\phi\in [0,7/2]$ is shown in Figure \ref{fig:unimeanlog}. 

\begin{figure}
    \centering
    \includegraphics[width=\textwidth]{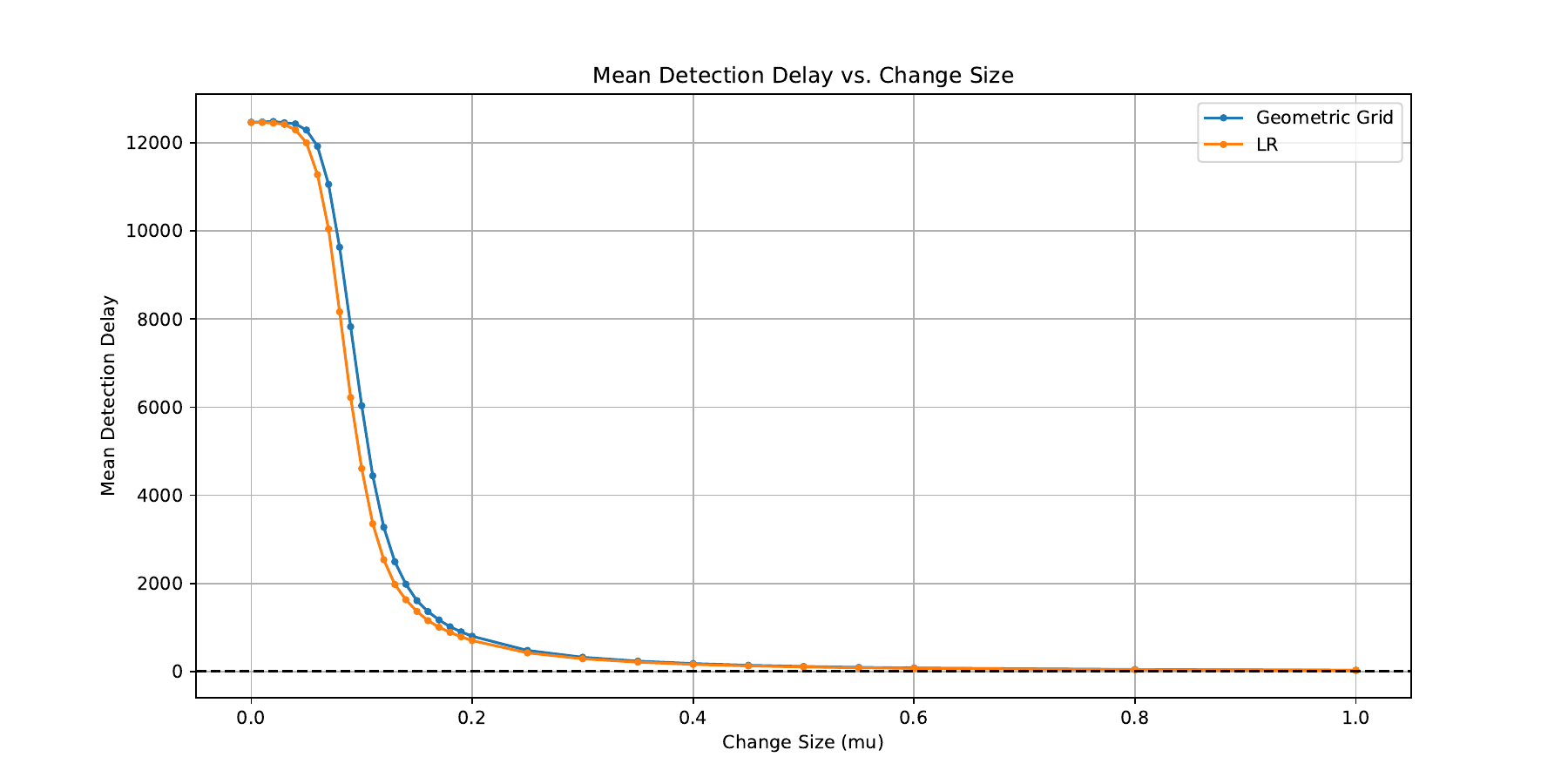}
    \caption{Detection delay of $\tauhat_{\mathrm{grid}}$ (blue) and $\tauhat_{\mathrm{full}}$ (orange) as functions of $\phi=|\mu_1 - \mu_2|$. }
    \label{fig:unimeanzoomed}
\end{figure}
\begin{figure}
    \centering
    \includegraphics[width=\textwidth]{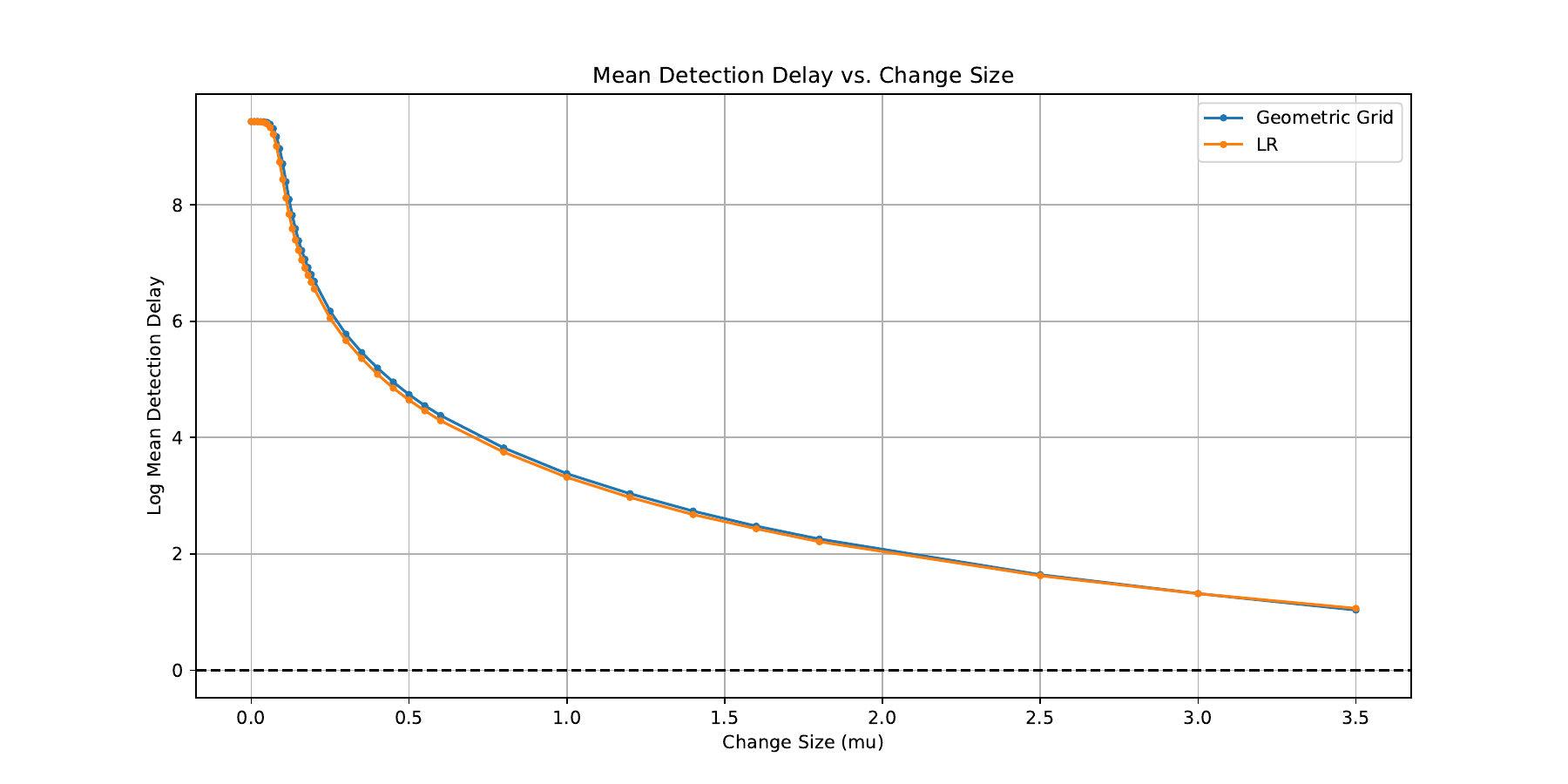}
    \caption{Log detection delay of $\tauhat_{\mathrm{grid}}$ (blue) and $\tauhat_{\mathrm{full}}$ (orange) as functions of $\phi=|\mu_1 - \mu_2|$.}
    \label{fig:unimeanlog}
\end{figure}
\newpage

Figure \ref{fig:unimeanzoomed} and \ref{fig:unimeanlog} indicate that sequential likelihood-ratio detector $\tauhat_{\mathrm{full}}$ has a slight edge over the grid-based changepoint detector $\tauhat_{\mathrm{grid}}$ in terms of detection delay. This is perhaps not surprising, as the dynamic geometric grid in \eqref{thegrid} is much sparser than $[t-1]$. Thus, one might expect the sequential likelihood-ratio test to have somewhat smaller detection delay than the grid-based variant. However, the discrepancy between the two detectors is small, and their performance appear largely comparable. Moreover, as illustrated in Figure \ref{fig:unimeanlog}, the detection delays are of the same order as functions of $\phi$. In particular, this indicates that the grid-based changepoint detector does not have any comparative disadvantage for small signal strengths (when many post-change samples are required for detection), in spite of the dynamic geometric grid in \eqref{thegrid} having large spacing between large successive elements (in absolute terms). 

As an aside, we remark that the output of $\widehat{\tau}_{\mathrm{full}}$ could have been computed exactly, with logarithmic update and storage costs, by using the FOCuS algorithm \citep{focus} or MdFOCuS \citep{computationalgeometry}. Since computational performance was not the main interest in this simulation study, however, $\tauhat_{\mathrm{full}}$ was for simplicity computed using the definition in \eqref{unisimtau2} directly.

\subsection{Univariate Poisson data}\label{simulationpoisson}
In this section, we cover a simulation study that compares the detection delay of the grid-based methodology proposed in the main text to a full grid search for Poisson data. The model is that $P_1 = \mathrm{Poisson}(\lambda_1)$ and that $P_2 = \mathrm{Poission}(\lambda_2)$ for some unknown $\lambda_1, \lambda_2>0$, so that
\begin{align}
    Y_i \overset{\mathrm{i.i.d.}}{\sim}\begin{cases}
        \mathrm{Poisson}(\lambda_1), &\text{if} \ i \leq \tau\\
        \mathrm{Poisson}(\lambda_2), &\text{if} \ i > \tau,
    \end{cases}
\end{align}
where we recall that $\tau \in \NN\cup\{\infty\}$ is the unknown changepoint location. Thus, we are interested in detecting a change in the rate of the data. 

Given a sample size $t$ and some $g \in [t-1]$, the likelihood-ratio test for a changepoint at $t-g$ versus no change is given by 
\begin{align}\mathrm{LR}^{(t)}_g = S_{\mathrm{L}}^{(t)}(g)\log\Big(\frac{S_{\mathrm{L}}^{(t)}(g)}{t-g}\Big) + S_{\mathrm{R}}^{(t)}(g)\log\Big(\frac{S_{\mathrm{R}}^{(t)}(g)}{g}\Big) - S^{(t)} \log\Big(\frac{S^{(t)}}{t}\Big),\label{lrpoisson}\end{align}
where
\begin{align}S_{\mathrm{L}}^{(t)}(g) = \sum_{i=1}^{t-g} Y_i,\qquad S_{\mathrm{R}}^{(t)}(g) = \sum_{i=t-g+1}^{t} Y_i,\qquad S^{(t)} = S_{\mathrm{L}}^{(t)}(g) + S_{\mathrm{R}}^{(t)}(g), \end{align}
using the the convention that $0\log 0 := 0$. 

Now define the detectors
\begin{align}
    \tauhat_{\mathrm{grid}} &= {\inf} \ \left \{   t\geq 2 \ : \   \max_{g \in G^{(t)}}\ \mathrm{LR}_g^{(t)} >  \lambda_{\mathrm{grid}} \right\}, \label{unipoisstau1} \\
    \tauhat_{\mathrm{full}} &= {\inf} \ \left \{  t\geq 2 \ :  \ \max_{g \in [t-1]}\ \mathrm{LR}_g^{(t)} >  \lambda_{\mathrm{full}}  \right\}. \label{unipoisstau2}  
\end{align}
Here, $\tauhat_{\mathrm{grid}}$ is a changepoint detector of the form as considered in Example \ref{examplecomp2}, which tests for changes using the dynamic geometric grid $G^{(t)}$ in \eqref{thegrid}. As shown in the Example \ref{examplecomp2}, $\tauhat_{\mathrm{grid}}$ can be implemented with logarithmic update and storage costs with respect to the sample size $t$. In contrast, the detector $\tauhat_{\mathrm{full}}$ applies the likelihood-ratio test over all possible change locations. We remark that $\tauhat_{\mathrm{full}}$ can be implemented, with logarithmic storage and update costs, using the methodology of \cite{focus} or \cite{computationalgeometry} \citep[see also][which focuses exclusively on the Poisson model]{poissonfocus}.

Under the null ($\tau=\infty$), the distribution of $\mathrm{LR}_g^{(t)}$ does not admit a closed-form distribution, although the asymptotic distribution is Chi-squared. Thus, for simplicity, the detectors $\tauhat_{\mathrm{grid}}$ and $\tauhat_{\mathrm{full}}$ are equipped with constant critical values, which we will shortly choose via Monte Carlo simulations. Moreover, since the finite-sample distribution may depend on the pre-change rate $\lambda_1$, we will for simplicity assume throughout the simulations that $\lambda_1 = 1$, which simplifies the choice of the critical values. We remark that the detectors treat $\lambda_1$ as unknown, although the assumption that $\lambda_1 = 1$ could easily have been incorporated into the likelihood ratio $\mathrm{LR}^{(t)}_g$ in \eqref{lrpoisson}, which would likely result in slightly smaller detection delays. 

Fixing the stream length $N = 10\, 000$,  critical values $\lambda_{\mathrm{grid}}$ and $\lambda_{\mathrm{full}}$ were chosen to be the upper $5\%$ quantiles of
\begin{align}
    &\underset{t=2,3,\ldots, N}{\max} \  \max_{g \in G^{(t)}}\ \mathrm{LR}_g^{(t)}
    \intertext{and}
    &\underset{t=2,3,\ldots, N}{\max} \ \max_{g \in [t-1]}\ \mathrm{LR}_g^{(t)}
\end{align}
respectively, using $K = 10\, 000$ Monte Carlo samples under the null of no change. For maximal computational efficiency, the changepoint detectors were implemented in Python using the Numba library to compile the code to high-performance code.

The average detection delay under the alternative was then estimated for the two changepoint detectors $\tauhat_{\mathrm{grid}}$ and $\tauhat_{\mathrm{full}}$ for a range post-change rates $\lambda_2$, using $M = 1000$ independent Monte Carlo samples for each value of $\lambda_2$, with changepoint location $\tau$ sampled uniformly from $[2500, 5000]$, and pre-change rate $\lambda_1=1$. Specifically, for each value of $\lambda_2$ in a piece-wise linearly spaced grid over $[0, 5]$, the detection delay $\EE (\ \tauhat \wedge N - \tau \ | \ \tauhat > \tau)$ was estimated for $\tauhat_{\mathrm{grid}}$ and $\tauhat_{\mathrm{full}}$. The resulting detection delays are shown in Figure \ref{fig:unipoiss}, as functions of $\lambda_2$. In the figure, the blue curve is the detection delay of $\tauhat_{\mathrm{grid}}$ and the orange curve is that of $\tauhat_{\mathrm{fulll}}$. A similar plot, with the detection delay presented on a log scale for, is shown in Figure \ref{fig:unipoisslog}. 

\begin{figure}[h!]
    \centering
    \includegraphics[width=\textwidth]{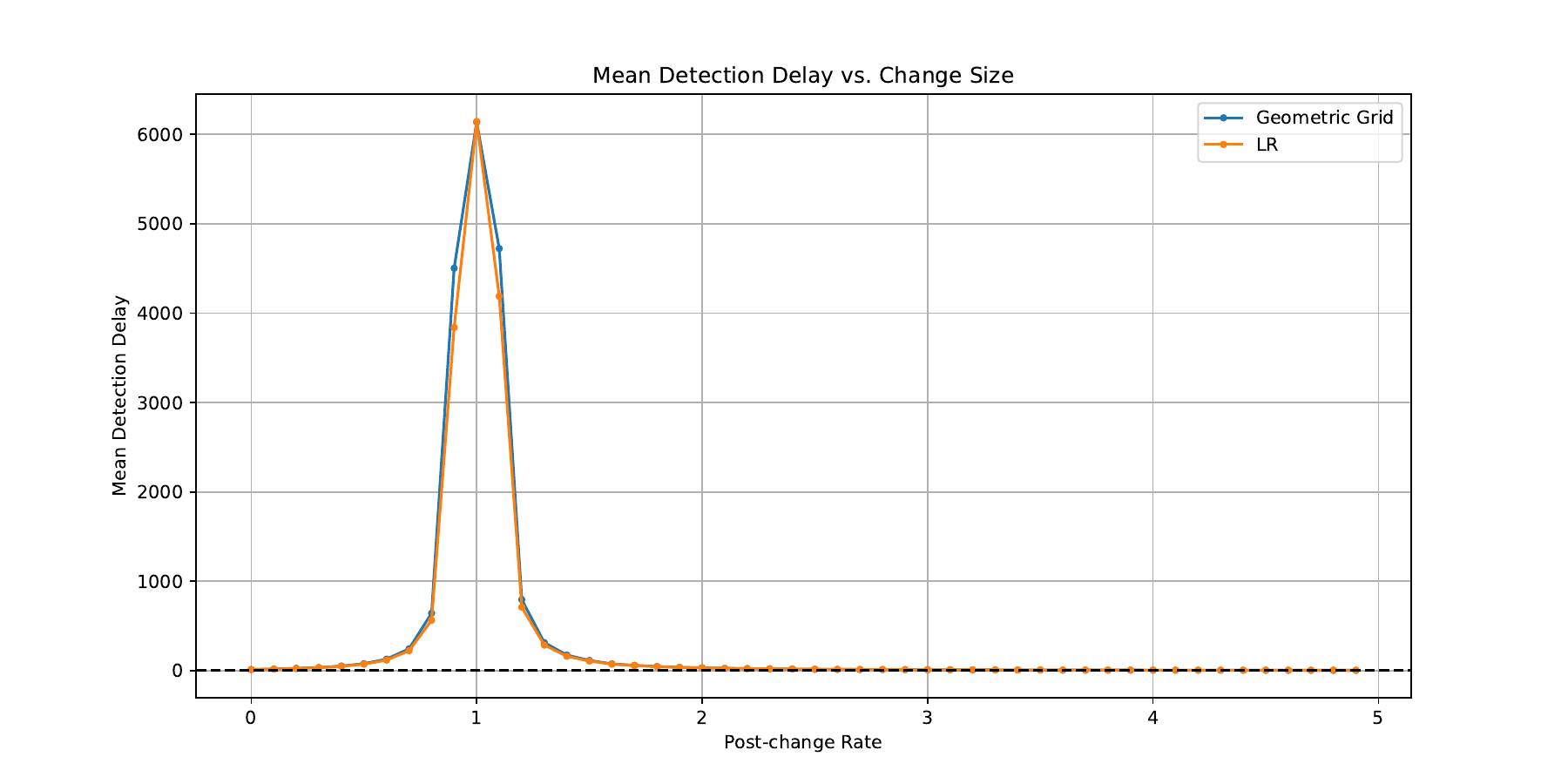}
    \caption{Detection delay of $\tauhat_{\mathrm{grid}}$ (blue) and $\tauhat_{\mathrm{full}}$ (orange) as functions of the post-change rate $\lambda_2$. }
    \label{fig:unipoiss}
\end{figure}
\begin{figure}[h!]
    \centering
    \includegraphics[width=\textwidth]{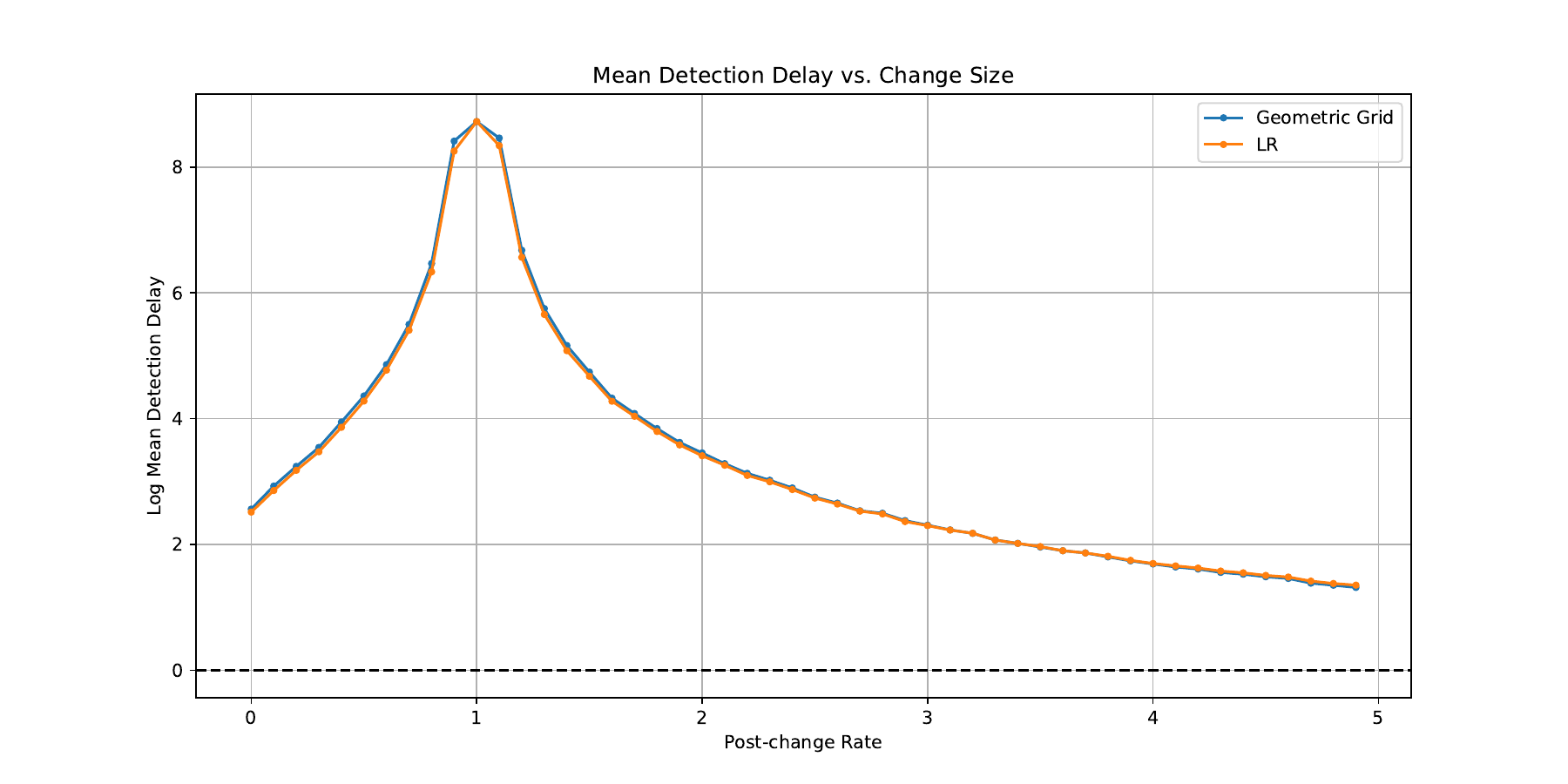}
    \caption{Log detection delay of $\tauhat_{\mathrm{grid}}$ (blue) and $\tauhat_{\mathrm{full}}$ (orange) as functions of the post-change rate $\lambda_2$.}
    \label{fig:unipoisslog}
\end{figure}

Figure \ref{fig:unipoiss} and \ref{fig:unipoisslog} indicate that sequential likelihood-ratio detector $\tauhat_{\mathrm{full}}$ has a very slight edge over the grid-based changepoint detector $\tauhat_{\mathrm{grid}}$ in terms of detection delay. However, the discrepancy between the two detectors is close to negligible, and their performances appear largely comparable. In particular, Figure \ref{fig:unipoisslog} suggests that the detection delays are of the same order for both weak and strong signals, which indicates that the grid-based changepoint detector does not have any comparative disadvantage for small signal strengths. 

\subsection{Real data example with heavy-tail calibration}\label{heavytaildataexample}
In the real data example in Section \ref{sec:realdata}, the changepoint detector from Section \ref{sec:covariance} was calibrated via Monte Carlo simulations under the assumption of Gaussian data. Although there were long periods at which the detector did not raise any alarms (e.g., between the beginning of 2000 until October 2008), this assumption may still be optimistic. The real data example was therefore also performed using a more conservative critical value, obtained via similar Monte Carlo simulations under the assumption that the $Y_i$ were drawn independently from the $t$ distribution with $5$ degrees of freedom. With this more conservative choice of critical value, the changepoint detector only declared two changepoints, namely at 23 October 2008 and 29 December 2009. These two dates are indicated in horizontal dashed lines in Figure \ref{fig:currencies_heavytail}. Again, we emphasise that the dates at which a changepoint was detected are not necessarily good estimates of when the changes may have occurred.

\begin{figure}[h!]
    \centering
    \includegraphics[width=\textwidth]{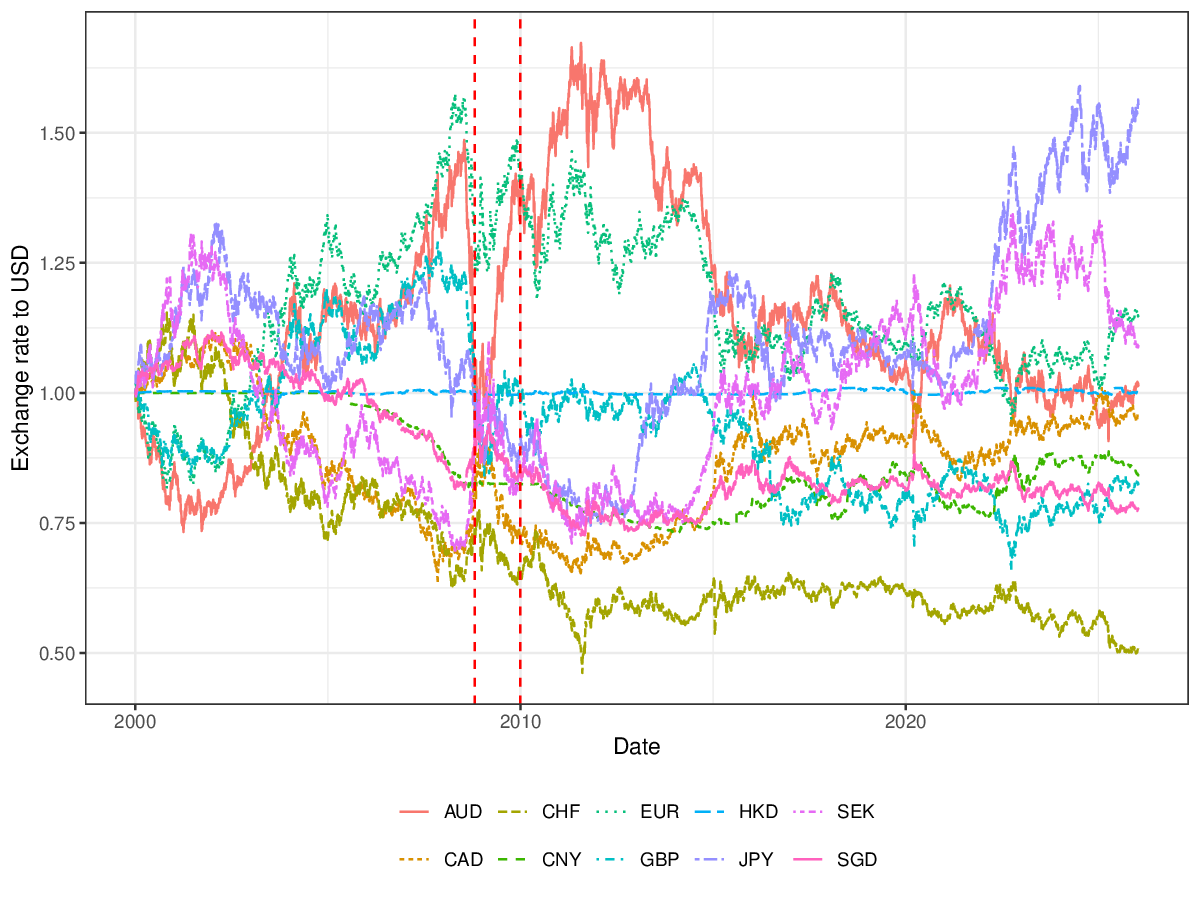}
    \caption{Normalised values of one US Dollar in terms of ten selected currencies from 3 January 2000 to 16 January 2026. Times at which changepoints were detected are indicated by dashed red vertical lines. }
    \label{fig:currencies_heavytail}
\end{figure}

\leavevmode\newpage

\subsection{Overview of code}
The source code for all simulation studies and the real data example can be found in the \textit{inst/} folder in the R package CHAD \cite{CHAD}, available at:\newline \href{https://github.com/peraugustmoen/CHAD}{https://github.com/peraugustmoen/CHAD}. In the following, we give a brief description of the contents. 
\begin{itemize}
    \item \textit{simulation\_study\_main.R}: Contains the source code for the simulation study covered in Section \ref{sec:simulations_statistics}. 
    \item \textit{simulation\_study\_runtime.R}: Contains the source code for the simulation study covered in Section \ref{sec:simulationcomp}.
    \item \textit{data\_example.R}: Contains the source code for the real data example covered in Section \ref{sec:realdata} and Section \ref{heavytaildataexample}.
    \item \textit{simulation\_study\_p=10\_N=5000.R}: Contains the source code for the simulation study covered in Section \ref{sec:smallplargeN}. 
    \item \textit{simulation\_study\_p=1000\_N=200.R}: Contains the source code for the simulation study covered in Section \ref{sec:largepsmallN}. 
        \item \textit{simulation\_study\_nonzero.R}: Contains the source code for the simulation study covered in Section \ref{simnonzeromean}. 
    \item \textit{pythoncode/univariate.ipynb}: Contains the source code for the simulation study covered in Section \ref{sec:univariatesimulation}. Note that this is a Jupyter Notebook file. 
    \item \textit{pythoncode/poisson\_rates.ipynb}: Contains the source code for the simulation study covered in Section \ref{simulationpoisson}. Note that this is a Jupyter Notebook file. 
\end{itemize}

\section{Proofs of results in the main text}\label{sec:proofs}

\subsection{Proof of Lemma \ref{gridlemma}}
\begin{proof}[Proof.]

    We begin by showing Claim \ref{gridlemmaclaim1}. If $t \leq 3$, the claim holds trivially, so we can assume that $t\geq 4$. Note that $g_{\text{L},1} / 1 = 2$, and 
    \begin{align}
        1 < \frac{g_{\text{L},j}^{(t)}}{g_{\text{R},j-1}^{(t)}} &= \frac{2^j + \left\{(t-1)\bmod 2^{j-1}\right\}}{3\cdot 2^{j-2} + \left\{(t-1)\bmod2^{j-2}\right\}} \leq \frac{2^j + 2^{j-1}}{3 \cdot 2^j /4}=2
        \intertext{for $2\leq j \leq \left \lfloor \log_2\left\{ (t-1)/3\right\} \right\rfloor+1$, and }
        1 < \frac{g_{\text{R},j}^{(t)}}{g_{\text{L},j}^{(t)}} &= \frac{3 \cdot 2^{j-1} + (t-1)\bmod 2^{j-1}}{2^{j-1} + (t-1) \bmod 2^{j-1}} \leq \frac{3\cdot 2^{j-1} + 2^{j-1}}{2^j} = 2,
    \end{align}
    for $j = 1, \ldots, \left \lfloor \log_2 (t-1)\right\rfloor -1$. Hence, writing $G^{(t)} = g_1, \ldots, g_m$ as an increasing enumeration of $G^{(t)}$, we have $g_{i+1}/g_i \leq 2$ for all $i = 1, \ldots, m-1$. To show that Claim \ref{gridlemmaclaim1} holds, it therefore suffices to show that $\max \ G^{(t)} \geq t/2$ for all $t$. To this end, we will for each $j \in \NN$ show that $\max \ G^{(t)} \geq t/2$ for all $t \in [t_j+1, t_{j+1}]$, where $t_j = 3 \cdot 2^{j-1}$, which then implies \ref{gridlemmaclaim1}. 

    We partition the closed integer interval $\left[ t_j+1, t_{j+1} \right]$ into three disjoint sub-intervals, namely 
    \begin{align}
        \left[t_j+1, (4/3)t_j \right],
        \left[(4/3)t_j+1, (5/3)t_j \right ],
        \left[(5/3)t_j +1, t_{j+1} \right ].
    \end{align}
    Since $\max \ G^{(t)} - t/2$ is increasing in $t$ on each of these intervals, it suffices to show that $\max G^{(t)} \geq t/2$ for $t = t_j+1, (4/3)t_j +1, (5/3)t_j +1$. For $t = t_j = 3 \cdot 2^{j-1} + 1$, the largest element of $G^{(t)}$ is at least
    \begin{align}
        \max \ G^{(t)} &\geq g_{\text{L}, j}^{(t)} \\
        &\geq \frac{2}{3}(t-1) \\
        &\geq t/2,
    \end{align}
    where we in the last line used that $t \geq 4$ since $j\geq 1$. For $t = (4/3)t_j+1 =  2^{j+1} + 1$, the largest element of $G^{(t)}$ is at least
    \begin{align}
        \max \ G^{(t)} &\geq  g_{\text{R}, j}^{(t)}\\
        &= \frac{3}{4} (t-1) \\
        &\geq t/2,
    \end{align}
    where we in the last line also used that $t\geq 4$.  Lastly, for $t = (5/3) t_j+1 = 5 \cdot 2^{j-1}+1$, the largest element of $G^{(t)}$ is at least
    \begin{align}
        \max \ G^{(t)} \geq g_{\text{R}, j}^{(t)} &= 3\cdot 2^{j-1}\\
        &= \frac{6}{2} 2^{j-1}\\
        &\geq \frac{5}{2} 2^{j-1} + \frac{1}{2}\\
        &= t/2.
    \end{align}
    Hence $\max \ G^{(t)} \geq t/2$ for each $t \in \left[ t_j+1, t_{j+1}\right]$, and Claim \ref{gridlemmaclaim1} is proved.  

    To see that Claim \ref{gridlemmalclaim2} holds, note first that $|G^{(2)}|=1<3\log2$, $|G^{(3)}|=1<3\log 3$ and $|G^{(4)}|=2< 3\log4$, so the claim holds for $t=2,3,4$. Next, for $t\geq 5$, we have that
    \begin{align}
        |G^{(t)}| &= 1 + \left \lfloor \log_2\{(t-1)/3\} \right \rfloor + \left\lfloor \log_2(t-1)\right\rfloor\\
        &\leq 1 + \log_2\{(t-1)/3\}+  \log_2(t-1) \\
        &= 1 - \log_2(3) + 2\log_2(t-1)\\
        &< 2\log_2(t-1)\\
        &< \frac{2}{\log 2} \log  t\\
        &< 3\log t. 
    \end{align}
    
    To prove Claim \ref{gridlemmaclaim3}, we will show that
    \begin{align}
        G^{(t)}\setminus\{1\} -1 \subseteq G^{(t-1)},
    \end{align}
    for $t \geq 5$, as the claim can easily be verified for $t=3,4$. We have
    \begin{align}
        g_{\text{L},j}^{(t)} -1 &= \begin{cases}
            g_{\text{L},j}^{(t-1)}, &\text{if } (t-1) \bmod 2^{j-1} >0\\
            g_{\text{R},j-1}^{(t-1)}, &\text{otherwise, }
        \end{cases}
         \intertext{for $2\leq j \leq \left \lfloor \log_2\left\{ (t-1)/3\right\} \right\rfloor+1$, and }
        g_{\text{R},j}^{(t)} -1 &= \begin{cases}
            g_{\text{R},j}^{(t-1)}, &\text{if } (t-1) \bmod 2^{j-1} >0\\
            g_{\text{L},j}^{(t-1)}, &\text{otherwise, }
        \end{cases}
    \end{align}
    for $j = 1, \ldots, \left \lfloor \log_2 (t-1)\right\rfloor -1$. Since also $g_{\mathrm{L},1}^{(t)}-1 = 1$, it follows that $g_{\text{L},j}^{(t)}-1, g_{\text{R},j}^{(t)}-1 \in G^{(t-1)}$ for all $t \geq 4$ and all relevant $j$, and Claim \ref{gridlemmaclaim3} is proved. 
\end{proof}
\subsection{Proof of Theorem \ref{theorem1}}
\begin{proof}[Proof.]
We begin by showing that $\mathrm{FA}(\tauhat) \leq \delta$ whenever $\lambda\geq C_1$, for some suitable absolute constant $C_1>0$. Assume that $\tau = \infty$. For any $t\geq2$ and $g \in [t-1]$, we have $C^{(t)}_g = \sum_{i=1}^t a_i Y_i$, where the $Y_i$ are mean-zero, $a_i = g^{1/2} \left\{ t(t-g)\right\}^{-1/2}$ for $i \leq t-g$ and $a_i = (t-g)^{1/2} (tg)^{-1/2}$ for $i> t-g$. Theorem 2.6.3 in \citet{Vershynin_2018} in combination with a union bound then implies that
\begin{align}
\PP_{\infty}\left\{  \underset{1 \leq g \leq t-1}{\max} \ \left (C_g^{(t)}  \right)^2  > \lambda \sigma^2 \log(t/\delta) \right\} \leq \delta / t^2,\label{firstbound}
\end{align}
for all $\lambda\geq C_1$, where $C_1>0$ is some absolute constant. Since $\xi^{(t)} = \lambda \sigma^2 \log(t/\delta)$, a union bound over the $t$'s then yields
\begin{align}
     \PP_{\infty}\left( \widehat{\tau} < \infty \right)
     &= \PP_{\infty} \left( \bigcup_{t=2}^{\infty} \left\{ \widehat{\tau} = t \right\} \right) \\
     &\leq  \sum_{t=2}^{\infty} \PP_{\infty} \left\{  \underset{1 \leq g \leq t-1}{\max} \ \left (C_g^{(t)}  \right)^2 > \xi^{(t)} \right\}\\
    &\leq \delta \sum_{t=2}^{\infty} \frac{1}{t^2} < \delta,
\end{align}
which shows that $\mathrm{FA}(\tauhat)\leq \delta$. 

Now consider the detection delay, and assume that $\tau < \infty$. We shall prove that $\PP_{\tau}\left(\widehat{\tau} \leq \tau+d\right)\geq 1-\delta$, where
\begin{align}
    d = \left \lceil   C_2 \frac{\sigma^2 \log (\tau/\delta)  }{\phi^2} \right \rceil,
\end{align}
and $C_2 = 32 \lambda$, under the assumption that $\tau \phi^2 / \sigma^2 \geq 2C_2 \log(\tau/\delta)$. Set $t = \tau + d$, and note that $d \leq \tau$. By Lemma \ref{gridlemma} there exists some $g_0 \in G^{(t)}$ such that $d/2 \vee 1 \leq g_0 \leq d$. By the linearity of the CUSUM transformation, we can write
\begin{align}
    \left(C_{g_0}^{(t)}\right)^2 = \left( \theta_{g_0}^{(t)} + \sum_{i=1}^t a_i Z_i \right)^2, 
\end{align}
where $Z_i$ are mean zero sub-Gaussian variables with $\normmo{Z_i}\leq \sigma$ and $\theta^{(t)}_{g_0}$ is the CUSUM transformation \eqref{cusum} applied to the sequence of the first $t$ true means, evaluated at ${g_0}$. By a similar bound as in \eqref{firstbound}, using $\xi^{(t)} = \lambda \sigma^2 \log(t/\delta)$, we have that the event
\begin{align}
    \mathcal{E}_1 = \left\{  \left(C_{g_0}^{(t)}  \right)^2 - \xi^{(t)} \geq \left(\theta_{g_0}^{(t)}\right)^2 - 2\sqrt{\xi^{(t)}} \left| \theta_{g_0}^{(t)}\right| - \xi^{(t)} \right\}
\end{align}
has probability at least $\PP_{\tau}\left( \mathcal{E}_1\right) \geq 1-\delta$. On the event $\mathcal{E}_1$, by solving a quadratic inequality, we will have that $T^{(t)}_{g_0}$ defined in \eqref{cusumtest} satisfies $T^{(t)}_{g_0} > 0$ whenever 
\begin{align}
    \left(\theta_{g_0}^{(t)}\right)^2 > 2 \xi^{(t)}.
\end{align}
Due to the assumption $\tau \phi^2 /\sigma^2 \geq 2C_2 \log(\tau/\delta)$, we have $d \leq \tau$ and $t = \tau + d \leq 2\tau$. By Lemma \ref{cusumlemma}, we have that 
\begin{align}
    \left(\theta^{(t)}_g \right)^2  &= \frac{{g_0}}{t(t-{g_0})}\tau^2 \phi^2\\
    &\geq \frac{{g_0}\phi^2}{4}.
\end{align}
Since ${g_0}$ was chosen so that ${g_0}\geq d/2$, we will have $T^{(t)}_{g_0} > 0$ on the event $\mathcal{E}_1$ as long as 
\begin{align}
    d &\geq \frac{16 \xi^{(t)}  }{\phi^2}\\
    &= 16 \lambda \frac{\sigma^2 \log (t/\delta)  }{\phi^2}.
\end{align}
In particular, since $t \leq 2\tau$, we have $\log(t/\delta)\leq 2 \log(\tau /\delta)$, and so $T^{(t)}_{g_0} > 0$ holds on $\mathcal{E}_1$ whenever $d \geq 32 \lambda \sigma^2 \log (t/\delta)   / \phi^2$, which holds true by the definition of $d$. Hence, $\PP_{\tau}\left(\widehat{\tau} \leq \tau+d\right)\geq \PP_{\tau}(\mathcal{E}_1) \geq 1-\delta$.

Lastly, to show that $\mathrm{SC}(\tauhat, t) = \mathrm{UC}(\tauhat, t)= \mathcal{O}(\log t)$, note that
\begin{align}
    C_g^{(t)} =  g^{1/2}\left\{t(t-g)\right\}^{-1/2} \ \sum_{i=1}^{t-g} Y_i - (t-g)^{1/2}\left(tg\right)^{1/2} \ \left( \sum_{i=1}^{t} Y_i -\sum_{i=1}^{t-g} Y_i \right)\label{cgeq}.
\end{align}
Thus, the test $T^{(t)}_g$ in \eqref{cusumtest} may be written as 
\begin{align}
    T^{(t)}_g = f^{(t)}_g \left( \sum_{i=1}^{t-g} h(Y_i), \sum_{i=t-g+1}^t h(Y_i) \right), 
\end{align}
for any $t\geq 2$ and $g \in G^{(t)}$, where the number of unit‑cost operations required to compute $f_g^{(t)}$ is of constant order with respect to $t$ and $g$.  Proposition \ref{enkelprop} then implies that $\mathrm{SC}(\tauhat, t)$ and $\mathrm{UC}(\tauhat, t)$ are of order $\mathcal{O}(\log t)$ for any $t\geq 2$.
\end{proof}

\subsection{Proof of Proposition \ref{theorem2}}
\begin{proof}[Proof.]
To prove Proposition \ref{theorem2}, note first that we may store data in memory as follows: 
\begin{itemize}
    \item At time $t=1$, we store only $S_1^{(1)}$;
    \item At time $t= 2$, we store $\{S_g^{(2)}\}_{g \in G^{(2)}}$, $S_1^{(1)}$ and $Y_2$. 
    \item At time $t\geq 3$, we store $\{S_g^{(t)}\}_{g \in G^{(t)}}$, $\{S_{g}^{(t-1)}\}_{g \in G^{(t-1)}}$ and $Y_t$. 
\end{itemize}
Due to Assumption \ref{asscomp-b}, the collection $\{S_g^{(t)}\}_{g \in G^{(t)}}$ can be stored at machine precision using at most $\mathcal{O}\{M_S(p) |G^{(t)}|\} = \mathcal{O}\{M_S(p) \log t\}$ scalar values due to Lemma \ref{gridlemma}. Moreover, a single $Y_t$ may be stored at machine precision using at most $\mathcal{O}(p)$ scalar values. With the above data-storage regime, it follows immediately that  $\mathrm{MC}(\tauhat,t) = \mathcal{O}\{ p + M_S(p)\log t \}$. 

Next, we shall show that $\mathrm{UC}(\tauhat, t) =\mathcal{O}[\{C_T(p) + C_S(p)\}\log t ]$. To show this, consider first the case where $t=2$. Since $G^{(2)} = \{1\}$, we may thus compute $S_1^{(2)}$ from $S_1^{(1)}$ and $Y_2$ using at most $C_S(p)$ unit-cost operations due to Assumption \ref{asscomp-b}. Note also that $S_1^{(1)}$ and $Y_2$ are already stored in memory. Moreover, we may compute $T^{(2)}_1$ from $S_1^{(2)}$ using at most $C_T(p)$ unit cost operations due to Assumption \ref{asscomp-a}. Therefore, $\mathrm{UC}(\tauhat,2)=\mathcal{O}\{C_S(p)+C_T(p)\}$. Next, at time $t>2$, we may compute $S_g^{(t)}$ from $S_{(g-1)\vee 1}^{(t-1)}$ using at most $C_S(p)$ unit-cost operations, for each $g \in G^{(t)}$, due to Assumption \ref{asscomp-b}. Note that we always have $\{S_{(g-1)\vee 1}^{(t-1)}\}_{g \in G^{(t)}} \subseteq \{S_g^{(t-1)}\}_{g \in G^{(t-1)}}$ due to Lemma \ref{gridlemma}, and thus all necessary quantities to compute the $S_g^{(t)}$ are present in memory. Due to Assumption \ref{asscomp-a}, we may then compute $T^{(t)}_g$ using at most $C_T(p)$ unit-cost operations, for each $g \in G^{(t)}$. Since $|G^{(t)}| = \mathcal{O}(\log t)$ due to Lemma \ref{gridlemma}, and since $G^{(t)}$ may be computed using at most $\mathcal{O}(|G^{(t)}|) = \mathcal{O}(\log t)$ unit-cost operations, we obtain $\mathrm{UC}(\tauhat,t) = \mathcal{O}[\{C_S(p)+C_T(p)\}\log t]$.

\end{proof}

\subsection{Proof of Proposition \ref{enkelprop}}
\begin{proof}[Proof.]
To prove Proposition \ref{enkelprop}, we will show that Assumption \ref{asscomp} holds and argue in a similar fashion as demonstrated in Example \ref{compexample1}.

We begin by defining the summary statistic $S_g^{(t)}$ similarly as in Example \ref{compexample1}. Define
\begin{align}
    S_1^{(1)} = \{h(Y_1)\}, 
    \intertext{and, for $t\geq 2$, define}
    S_g^{(t)} = \left \{\sum_{i=1}^{t-g}h(Y_i), \,\sum_{i=1}^t h(Y_i)\right\},
\end{align}
for $g \in G^{(t)}$. 
We first claim that, for any $t\geq 2$ and $g\in G^{(t)}$, the test $T^{(t)}_g$ may be computed from $S_g^{(t)}$ using at most $\mathcal{O}(v(p)) + C_f(p)$ unit-cost operations. To see this, note that the first argument of $T^{(t)}_g$ is contained in $S_g^{(t)}$, and that the second argument can be written as
    $$
    \sum_{i=t-g+1}^t h(Y_i) = \sum_{i=1}^t h(Y_i) - \sum_{i=1}^{t-g}h(Y_i),
    $$
    and can thus be computed from $S_g^{(t)}$ using $\mathcal{O}\{v(p)\}$ unit-cost operations. Therefore,  $T^{(t)}_g$ can be computed from $S_g^{(t)}$ using at most $\mathcal{O}\{v(p)\} + C_f(p)$ unit-cost operations. Thus, Assumption \ref{asscomp-a} is satisfied with $C_T(p) = \mathcal{O}(v(p)) + C_f(p)$. 

    Next, we show that Assumption \ref{asscomp-b} holds. To this end, since we have that $h(Y_i) \in \RR^{v}$, storing each $S_g^{(t)}$ entails storing an order of $\mathcal{O}\{v(p)\}$ scalars. Moreover, for $t \geq 2$, we have that 
\begin{align}
\sum_{i=1}^{t-g}h(Y_i) \in S^{(t-1)}_{(g-1)\vee 1},
\end{align}
so that the first element in $S_g^{(t)}$ may simply be retrieved from $S_{(g-1)\vee 1}^{(t-1)}$. Moreover, we have that
\begin{align}
\sum_{i=1}^t h(Y_i) = \sum_{i=1}^{t-1}h(Y_i) + h(Y_t),\end{align} where $\sum_{i=1}^{t-1}h(Y_i) \in S_{(g-1)\vee 1}^{(t-1)}$ and $h(Y_t)$ can be computed from $Y_t$ using at most $C_h(p)$ unit-cost operations. Therefore, $S_g^{(t)}$ may be computed from $S_{(g-1)\vee 1}^{(t-1)}$ and $Y_t$ using at most $\mathcal{O}\{v(p)\} + C_h(p)$ unit-cost operations. Assumption \ref{asscomp-b} is therefore satisfied with $M_S(p) = \mathcal{O}\{v(p)\}$ and $C_S(p) = \mathcal{O}\{v(p)\} + C_h(p)$.

\end{proof}
\subsection{Proof of Proposition \ref{generalddprop}}
\begin{proof}[Proof.]
    Note first that
    \begin{align}
        \mathrm{FA}(\tauhat) &\leq \sum_{t=2}^{\infty}\sum_{g \in G^{(t)}} \PP_{\infty}(T^{(t)}_g = 1)\\
        &\leq \delta \sum_{t=2}^{\infty} \frac{|G^{(t)}|}{3t^2 \log t}\\
        &\leq \delta \sum_{t=2}^{\infty} t^{-2}\\
        &<\delta,
    \end{align}
since $|G^{(t)}|<3\log t$, due to Lemma \ref{gridlemma}.
Next, suppose there is a changepoint at time $\tau$ that satisfies
\begin{align}
    \tau \rho(P_1, P_2) \geq r (2\tau, \delta). \label{gencond1}
\end{align}
Define
\begin{align}
    d = \left \lceil \frac{r(2\tau, \delta)}{\rho(P_1, P_2)}\right\rceil. 
\end{align}
Then, due to \eqref{gencond1}, we must have that $d\leq \tau$ since
\begin{align}
    \left \lceil \frac{ r(2\tau,\delta)}{\rho(P_1, P_2)} \right\rceil \leq \tau.
\end{align}
Thus, setting $t = \tau +d$, we have
\begin{align}
    (t-\tau) \rho(P_1, P_2) &= d \rho(P_1, P_2)\\
    &\geq r(2\tau, \delta)\\
    &\geq r(t, \delta),
\end{align}
where the last inequality follows from $t = \tau+d\leq 2\tau$. 
Due to Lemma \ref{gridlemma}, we may choose a $g \in G^{(t)}$ such that $(t-\tau)/2\leq g \leq t-\tau$. For this $g$, due to Assumption \ref{assstat-b}, we therefore have
that
\begin{align}
    \PP_{\tau}(T^{(t)}_g=1)\geq 1-\delta,
\end{align}
and thus
\begin{align}
\PP_{\tau}\left\{\tauhat \leq \tau+  \left \lceil \frac{r(2\tau, \delta)}{\rho(P_1, P_2)}\right\rceil\right\}
&= \PP_{\tau}(\tauhat\leq \tau+d) \\&= \PP_{\tau}(\tauhat \leq t)\\
&\geq \PP_{\tau}(T^{(t)}_g=1)\\
    &\geq 1-\delta,
\end{align}
and we are done.
\end{proof}
\subsection{Proof of Proposition \ref{example1statprop}}
\begin{proof}[Proof.] We will prove Proposition \ref{example1statprop} by appealing to Proposition \ref{generalddprop}. 
To this end, we only need to show that $T_g^{(t)}$ in \eqref{testexample1} satisfies Assumption \ref{assstat}. Consider first Assumption \ref{assstat-a}, and assume that $\tau = \infty$, and fix any $t\geq 2$. For any $g \in G^{(t)}$, we have
\begin{align}
    \normm{\widehat{\theta}_{1,g}^{(t)} - \widehat{\theta}_{2,g}^{(t)}} \leq \normm{\frac{1}{t-g}\sum_{i=1}^{t-g}\{h(Y_i) - \theta(P_1)\}} + \normm{\frac{1}{g}\sum_{i=t-g+1}^{t}\{h(Y_i) - \theta(P_1)\}}. 
\end{align}
Due to \eqref{concentrationass}, a union bound implies that
\begin{align}
    \PP_{\infty}\left( \normm{\widehat{\theta}_{1,g}^{(t)} - \widehat{\theta}_{2,g}^{(t)}}  > 2 \frac{\xi(t,\delta)}{\sqrt{g\wedge(t-g)}} \right) \leq \frac{\delta}{3 t^2\log t}, 
\end{align}
since $\xi(g,\delta), \xi(t-g,\delta)\leq \xi(t,\delta)$. 
Therefore, Assumption \ref{assstat-a} holds.

    Next we will show that \ref{assstat-b} holds with $\rho(P_1, P_2) = \normm{\theta(P_1)-\theta(P_2)}^2$ and $r(n,\delta) = 88\xi^2(n,\delta)$. To this end, assume that $\tau<\infty$ and let $\tau < t\leq 2\tau$ be  such that $(t-\tau)\rho(P_1, P_2)\geq r(t,\delta)$. Let $g$ be such that $(t-\tau)/2 \leq g \leq t-\tau$. We shall show that $\PP_{\tau}(T^{(t)}_g=1)\geq 1-\delta$. 

    Note first that
    \begin{align}
        \widehat{\theta}_{1,g}^{(t)} - \widehat{\theta}_{2,g}^{(t)} &= \frac{1}{t-g}\sum_{i=1}^{t-g}h(Y_i) - \frac{1}{g} \sum_{i=t-g+1}^t h(Y_i)\\
        &= \frac{1}{t-g} \sum_{i=1}^{\tau}h(Y_i) + \frac{1}{t-g} \sum_{i=\tau+1}^{t-g}h(Y_i) - \frac{1}{g}\sum_{i=t-g+1}^g h(Y_i)\\
        &= \frac{1}{t-g}  \sum_{i=1}^{\tau}\{h(Y_i) - \theta(P_1) \}+  \frac{\tau}{t-g}\theta(P_1)\\
        &+ \frac{1}{t-g}  \sum_{i=\tau+1}^{t-g}\{h(Y_i) - \theta(P_2)\} + \frac{t-g-\tau}{t-g}\theta(P_2)\\
        &- \frac{1}{g}\sum_{i=t-g+1}^g \{h(Y_i) - \theta(P_2)\} - \theta(P_2).
    \end{align}
Due to the reverse triangle inequality, we therefore have that 
\begin{align}
    \normm{\widehat{\theta}_{1,g}^{(t)} - \widehat{\theta}_{2,g}^{(t)}}&\geq \frac{\tau}{t-g}\normm{\theta(P_1)-\theta(P_2)} \\
    &- \normm{\frac{1}{t-g}  \sum_{i=1}^{\tau}\{h(Y_i) - \theta(P_1) \}} \\
    &- \normm{\frac{1}{t-g}  \sum_{i=\tau+1}^{t-g}\{h(Y_i) - \theta(P_2)\}} \\
    & - \normm{\frac{1}{g}\sum_{i=t-g+1}^g \{h(Y_i) - \theta(P_2)\}}.
\end{align}
Now, due to \eqref{concentrationass},  we have that
\begin{align}
    \PP_{\tau} \left[\normm{\frac{1}{t-g} \sum_{i=1}^{\tau}\{h(Y_i) - \theta(P_1) \} }\geq \frac{\sqrt{\tau}}{t-g}\xi(\tau,\delta) \right] &\leq \delta/3, \\
    \PP_{\tau} \left[\normm{\frac{1}{t-g} \sum_{i=\tau+1}^{t-g}\{h(Y_i) - \theta(P_2) \}} \geq \frac{\sqrt{t-g-\tau}}{t-g}{\xi(t-g-\tau,\delta)} \right] &\leq \delta/3, \\
    \PP_{\tau} \left[\normm{\frac{1}{g} \sum_{i=t-g+1}^{t}\{h(Y_i) - \theta(P_2) \}} \geq \frac{\xi(g,\delta)}{\sqrt{g}} \right] &\leq \delta/3. 
\end{align}
Noting that $\tau \leq t-g$ and $t-g-\tau < t-g$, a union bound then implies that, with probability at least $1-\delta$, we will have that 
\begin{align}
       \normm{\widehat{\theta}_{1,g}^{(t)} - \widehat{\theta}_{2,g}^{(t)}}&\geq \frac{\tau}{t-g}\normm{\theta(P_1)-\theta(P_2)} - \frac{1}{\sqrt{t-g}}
       \{\xi(\tau,\delta) + \xi(t-g-\tau, \delta)\} - \frac{1}{\sqrt{g}} \xi(g,\delta). 
\end{align}
Since also $g\geq (t-\tau)/2$ and $t \leq 2\tau$, it must hold with probability at least $1-\delta$ that 
\begin{align}
       \normm{\widehat{\theta}_{1,g}^{(t)} - \widehat{\theta}_{2,g}^{(t)}}&\geq \frac{2}{3} \normm{\theta(P_1)-\theta(P_2)} - \frac{1}{\sqrt{t-g}}
       \{\xi(\tau,\delta) + \xi(t-g-\tau, \delta)\} - \frac{1}{\sqrt{g}} \xi(g,\delta)\\
       &\geq \frac{2}{3} \normm{\theta(P_1) - \theta(P_2)} - \frac{2}{\sqrt{t-\tau}} \xi(t, \delta) - \frac{1}{\sqrt{g}} \xi(t, \delta), 
\end{align}
where we used that $\xi(\tau,\delta), \xi(t-g-\tau,\delta), \xi(g,\delta)\leq \xi(t,\delta)$ and that $t-g\geq t- \tau$. 
Thus, with probability at least $1-\delta$, we have
\begin{align}
    \normm{\widehat{\theta}_{1,g}^{(t)} - \widehat{\theta}_{2,g}^{(t)}} - \frac{2 \xi(t,\delta)}{\sqrt{g\wedge(t-g)}}&= \normm{\widehat{\theta}_{1,g}^{(t)} - \widehat{\theta}_{2,g}^{(t)}} - \frac{2 \xi(t,\delta)}{\sqrt{g}} \\
    &\geq \frac{2}{3} \normm{\theta(P_1)-\theta(P_2)} - \frac{3\xi(t,\delta)}{\sqrt{g}}  - \frac{2 \xi(t,\delta)}{\sqrt{t-\tau}}\\
    &\geq \frac{2}{3} \normm{\theta(P_1)-\theta(P_2)} - \frac{3\sqrt{2}+2}{\sqrt{t-\tau}} \xi(t,\delta)\\
    &>0\label{rrhs}
\end{align}
whenever $(t-\tau)\normm{\theta(P_1)-\theta(P_2)}^2\geq 88 \xi(t,\delta)$, 
where we used that $g\geq (t-\tau)/2$.
We therefore conclude that
\begin{align}
    \PP_{\tau}(T_g^{(t)}=1)\geq 1-\delta, 
\end{align}
and the proof is complete.

\end{proof}

\subsection{Proof of Theorem \ref{theorem3}}
\begin{proof}[Proof.]
    We begin the proof by showing that $\mathrm{UC}(\tauhat, t) = \mathcal{O}(p\log p \log t)$ and $\mathrm{SC}(\tauhat, t) = \mathcal{O}(p\log t)$. To this end, we will appeal to Proposition \ref{enkelprop}. Note first that $T_g^{(t)}$ in \eqref{tdef} may be written as 
    \begin{align}
        T_g^{(t)} = f_g^{(t)}\left( \sum_{i=1}^{t-g} Y_i, \sum_{i=t-g+1}^t Y_i\right), \label{reptmp}
    \end{align}
    so that we may take $h(\cdot)$ to be the identity with computational cost $C_h(p) = 0$. In \eqref{reptmp}, $f_g^{(t)}$ is the function that computes $C_g^{(t)}$ from the two arguments in \eqref{reptmp}, computes $A_{s,g}^{(t)}$ from $C_g^{(t)}$ for each $s \in \mathcal{S}^{(t)}$, and then computes $\max_{s \in \mathcal{S}} \{ A_{s,g}^{(t)} / \xi_s^{(t)}\}$ and returns $1$ if the maximum is greater than $1$ (and zero otherwise). Since computing $A_{s,g}^{(t)}$ requires $\mathcal{O}( p )$ unit-cost operations, and $|\mathcal{S}| = \mathcal{O}(\log p)$, we get that $f_g^{(t)}$ can be computed in $C_f(p) = \mathcal{O}(p\log p)$ unit-cost operations. Due to Proposition \ref{enkelprop}, it therefore follows that $\mathrm{UC}(\tauhat, t) = \mathcal{O}(p\log p \log t)$ and $\mathrm{SC}(\tauhat, t) = \mathcal{O}(p\log t)$, as claimed. 

To show the remaining claims in Theorem \ref{theorem3}, we only need to show that Assumption \ref{assstat} holds, so that Proposition \ref{generalddprop} implies the claims. To this end, let $k= \normm{\mu_1 - \mu_2}_0$ denote the true sparsity of the change. Let $C_1$ equal the constant in Lemma \ref{lhihghprobupper} (depending only on $\delta$), and given $\lambda\geq C_1$, set $C_2 =  8 \left(C_0^2 + C_0\sqrt{C_0 + 2 \lambda} + C_0 + 2\lambda \right)$, where $C_0$ is the constant from Lemma \ref{lhihghproblower} (depending only on $\delta$). Finally, let $\rho(P_1, P_2) = \phi^2\sigma^{-2} = \normm{\mu_1 - \mu_2}_2^2\sigma^{-2}$ and $r(t,\delta) = C_2 z(k,p,t)$.

We first show that Assumption \ref{assstat-a} holds. To this end, note first that, if $\tau = \infty$, then for any $t \geq 2$ and $g \in [t-1]$, the rescaled vector $C^{(t)}_g/\sigma$ of variance-rescaled CUSUMs from \eqref{ydef} then has independent entries with standard normal distributions. Since $C_1$ is chosen as the constant from Lemma \ref{lhihghprobupper}, and $\xi_s^{(t)} = \lambda z(s,p,t) \geq C_1 z(s,p,t)$, the Lemma implies that 
\begin{align}
    \PP_{\infty}(T_g^{(t)}=1) &= \PP_{\infty}\left( \underset{s \in \mathcal{S}}{\max}\ \frac{A_{s,g}^{(t)}}{\xi_s^{(t)}}  >1\right)
    \\
    &= \PP_{\infty} \left[  \bigcup_{s \in \mathcal{S}^{(t)}} \left\{    A^{(t)}_{s,g} >  C_1 z(s,p,t)     \right\} \right]\\
    &\leq \frac{\delta}{3t^2\log t}, 
\end{align}
where $T^{(t)}_g$ is defined in \eqref{tdef}, and thus Assumption \ref{assstat-a} is satisfied. 

Next, we show that Assumption \ref{assstat-b} holds. Assume that $\tau<\infty$, let $t \in \NN$ be such that $\tau < t \leq 2\tau$, and assume that $(t-\tau) \rho(P_1, P_2) \geq r(t, \delta)$, so that
\begin{align}
    (t-\tau)\frac{\phi^2}{\sigma^2} \geq C_2 z(k,p,t). \label{tmpsnr}
\end{align}
Given any $g$ such that $(t - \tau)/2 \leq g \leq t-\tau$, we need to show that $\PP_{\tau}(T^{(t)}_g = 1) \geq 1-\delta$. 

By Lemma \ref{lhihghproblower}, for some $s_0 \in \mathcal{S}^{(t)}$ such that $k/2 \leq s_0\leq k$ whenever $k < \sqrt{p\log t}$ and $k = p$ whenever $g\geq \sqrt{p\log t}$, the event 
\begin{align}
    \mathcal{E}_2 &= \left\{     A_{g,s_0}^{(t)} - \lambda z(s_0,p,t) \geq \psi - (C_0+2\lambda)  z(k,p,t) - C_0\psi^{1/2}  \right\},
\end{align}
has probability at least $\PP_{\tau} \left( \mathcal{E}_2\right) \geq 1-\delta$, where $C_0>0$ is the constant from Lemma \ref{lhihghproblower} and $\psi = g\tau^2 \{ t(t-g)\}^{-1} \phi^2\sigma^{-2}$.

On the event $\mathcal{E}_2$, by solving a quadratic inequality and exploiting that $z(s_0, p, t)\leq 2z(k, p,t)$ since $s_0\leq k$, the test $T_g^{(t)}$ in \eqref{tdef} will satisfy $T_{g}^{(t)}>0$
whenever
\begin{align}
    \psi &> 2^{-1} \left (C_0 \sqrt{ C_0^2 + 4 (C_0 + 2 \lambda)  z(k,p,t)} + C_0^2 + 2 (C_0 + \lambda) z(k,p,t)\right).\label{solvequad}
\end{align}
By a crude upper bound of the right hand side of \eqref{solvequad}, we will thus have $T_{g}^{(t)}=1$ on $\mathcal{E}_2$ whenever
\begin{align}
    \psi &\geq \left(C_0^2 + C_0 \sqrt{C_0 + 2\lambda} + C_0 + 2\lambda \right) z(k,p,t)\label{solvequad2}.
\end{align}

Now, since $t\leq 2\tau$ and $(t-\tau)/2\leq g\leq t-\tau$, we have that
\begin{align}
    \psi &= \frac{g\tau^2}{t(t-g)} \frac{\phi^2}{\sigma^2}\\
    &\geq \frac{t-\tau}{2} \frac{\tau^2}{4\tau^2} \frac{\phi^2}{\sigma^2}\\
    &\geq (t-\tau)\frac{\phi^2}{\sigma^2} /8 \\
    &\geq \frac{C_2}{8} z(k,p,t), 
\end{align}
where the last inequality follows from \eqref{tmpsnr}. Due to the definition of $C_2$ and \eqref{solvequad2}, we therefore have that $T^{(t)}_g=1$ on $\mathcal{E}_2$. Thus, $\PP_{\tau}(T_g^{(t)}=1)\geq 1-\delta$, and Assumption \ref{assstat-b} is satisfied, and the claims in Theorem \ref{theorem3} are implied by Proposition \ref{generalddprop}. 

\end{proof}

\subsection{Proof of Theorem \ref{theorem5}}
\begin{proof}[Proof.]

We begin the proof by showing that $\mathrm{UC}(\tauhat, t) = \mathcal{O}(p^3 \log t)$ and $\mathrm{SC}(\tauhat, t) = \mathcal{O}(p^2\log t)$. To this end, we will appeal to Proposition \ref{enkelprop}. Note first that $T_g^{(t)}$ in \eqref{tvardef2} may be written as 
    \begin{align}
        T_g^{(t)} = f_g^{(t)}\left( \sum_{i=1}^{t-g} Y_i, \sum_{i=t-g+1}^t Y_i\right), \label{reptmp2}
    \end{align}
    so that we may take $h(y) = y y^\top$, which can be computed using $C_h(p) = \mathcal{O}(p^2)$ unit-cost operations and may be represented as a vector with dimension $v(p) = p^2$ . In \eqref{reptmp}, $f_g^{(t)}$ is the function that normalises the input arguments by $(t-g)$ and $g$, respectively, computes the operator norm of the first argument and the difference between the two arguments, computes the threshold $\xi^{(t)}_g$, and evaluates whether the threshold is exceeded. 
    Since computing the operator norm of a $p\times p$ matrix can be done using $\mathcal{O}(p^3)$ unit‑cost operations using a standard numerical method to compute the operator norm, such as QR decomposition or Jacobi's method \citep{complinear_algebra}, the computational cost of evaluating $f_g^{(t)}$ is at most $C_f(p) = \mathcal{O}(p^3)$ unit-cost operations. Due to Proposition \ref{enkelprop}, it therefore follows that $\mathrm{UC}(\tauhat, t) = \mathcal{O}(p^3 \log t)$ and $\mathrm{SC}(\tauhat, t) = \mathcal{O}(p^2\log t)$, as claimed.

To show the remaining claims in Theorem \ref{theorem5}, we only need to show that Assumption \ref{assstat} holds, so that Proposition \ref{generalddprop} implies the claims. However, as a technical simplification, we will show directly that $\mathrm{FA}(\tauhat)\leq \delta$, as opposed to the stricter requirement in Assumption \ref{assstat-a} which also implies $\mathrm{FA}(\tauhat)\leq \delta$. 
To this end, let $C_1 = c_2/c_1$ (where $c_1, c_2$ are specified below), which depends only on $\delta,w,u$, and given $\lambda\geq C_1$, set 
\begin{align}
    C_2 = \frac{9}{4}\left\{ \frac{(2+\sqrt{2})c_2}{3} + \lambda  \sqrt{2}\left(\frac{2c_2}{3}+1\right) \right\}^2+2, \label{c2defhihi}
\end{align}
which only depends on $\delta, w,u,\lambda$.  Moreover, let 
\begin{align}
\rho(P_1, P_2) &= \left(\frac{\normm{\Sigma_1- \Sigma_2}_{\mathrm{op}}}{\normmop{\Sigma_1}\vee \normmop{\Sigma_2}} \right)^2,
\end{align}
and define $r(t,\delta) = C_2 (p\vee \log t)$.

We first show that $\mathrm{FA}(\tauhat)\leq \delta$. Assume that $\tau = \infty$, so that $\Sigma_1$ (the covariance associated with the pre-change distribution $P_1$) is the common covariance matrix of all the $Y_i$. Define the events 
\begin{align}
    \mathcal{E}_3 &= \bigcap_{t=2}^{\infty} \bigcap_{ g \in[t-1] } \left\{ \widehat{\sigma}_{g}^{(t)} \geq \left( \normmop{\Sigma_1}  c_1\right)^{1/2}  \right\},\\
    \mathcal{E}_4 &= \bigcap_{t=2}^{\infty} \bigcap_{ g\in [t-1]} \left\{ \normmop{\widehat{\Sigma}_{1,g}^{(t)} - \widehat{\Sigma}_{2,g}^{(t)}}  \leq c_2 \normmop{\Sigma_1} \left( \frac{p \vee \log t}{g\wedge(t-g) } \vee \sqrt{\frac{p \vee \log t}{g \wedge(t-g)}}   \right) \right\},
\end{align}
where $c_1 = (2e\pi w^2)^{-1} \delta^2 (\delta+2)^{-2}$, $c_2 = 4c_0 \{3 + \log(4/\delta) / \log(2)\}$, $c_0$ is the constant from Lemma \ref{moenminimaxlemma2} depending only on $u>0$, $\widehat{\Sigma}_{1,g}^{(t)}$ and $\widehat{\Sigma}_{1,g}^{(t)}$ are defined in \eqref{sigmahatsdef}, and $\widehat{\sigma}^{(t)}_g= \lVert \widehat{\Sigma}_{1,g}^{(t)}\rVert_{\mathrm{op}}^{1/2}$. Define $\mathcal{E} = \mathcal{E}_3 \cap \mathcal{E}_4$. Then Lemma \ref{covarianceeventbound1} implies that $\PP_{\tau}(\mathcal{E}) \geq 1 - \delta$.  On $\mathcal{E}$, for any $t \geq 2$ and $g \in G^{(t)}$, we have
\begin{align}
    \frac{\lVert \widehat{\Sigma}_{1,g}^{(t)} - \widehat{\Sigma}_{2,g}^{(t)} \rVert }{(\widehat{\sigma}_g^{(t)})^{2}} &\leq \frac{c_2}{c_1} \left( \frac{p \vee \log t}{g \wedge(t-g)} \vee \sqrt{\frac{p \vee \log t}{g \wedge(t-g)}}   \right).
\end{align}
Hence, choosing $C_1 = c_2/c_1$, which only depends on $\delta, w,u$, and also choosing $\lambda \geq C_1$ and $\xi_g^{(t)}$ as in \eqref{xicovar}, we have that $T^{(t)}_{g} =  \ind  \{ \lVert \widehat{\Sigma}_{1,g}^{(t)} - \widehat{\Sigma}_{1,g}^{(t)} \rVert_{\mathrm{op}} \  (\widehat{\sigma}^{(t)}_g)^{-2} > \xi_g^{(t)} \}$ in \eqref{tvardef2} satisfies $T^{(t)}_g= 0$ for all $t \geq 2$ and all $g \in G^{(t)}$. Thus, $\mathrm{FA}(\tauhat)\leq \delta$.

Next, we show that Assumption \ref{assstat-b} holds. Assume that $\tau<\infty$, let $t \in \NN$ be such that $\tau < t \leq 2\tau$, and assume that $(t-\tau) \rho(P_1, P_2) \geq r(t, \delta)$, so that
\begin{align}
    (t-\tau)\left(\frac{\normm{\Sigma_1- \Sigma_2}_{\mathrm{op}}}{\normmop{\Sigma_1}\vee \normmop{\Sigma_2}} \right)^2\geq C_2 (p\vee \log t), \label{tmpsnr2}
\end{align}
where we recall that $\Sigma_1$ and $\Sigma_2$ respectively denote the pre- and post-change covariances of the $Y_i$
Given any $g$ such that $(t - \tau)/2 \leq g \leq t-\tau$, we need to show that $\PP_{\tau}(T^{(t)}_g = 1) \geq 1-\delta$. 

To this end, using arguments similar as in the proof of Proposition \ref{example1statprop}, note first that
    \begin{align}
        \widehat{\Sigma}_{1,g}^{(t)} - \widehat{\Sigma}_{2,g}^{(t)} &= \frac{1}{t-g}\sum_{i=1}^{t-g}Y_i Y_i^\top - \frac{1}{g} \sum_{i=t-g+1}^t Y_i Y_i^\top\\
        &= \frac{1}{t-g} \sum_{i=1}^{\tau}Y_i Y_i^\top + \frac{1}{t-g} \sum_{i=\tau+1}^{t-g}Y_i Y_i^\top - \frac{1}{g}\sum_{i=t-g+1}^g Y_i Y_i^\top\\
        &= \frac{1}{t-g}  \sum_{i=1}^{\tau}\{Y_i Y_i^\top - \Sigma_1 \}+  \frac{\tau}{t-g}\Sigma_1\\
        &+ \frac{1}{t-g}  \sum_{i=\tau+1}^{t-g}\{Y_i Y_i^\top - \Sigma_2\} + \frac{t-g-\tau}{t-g}\Sigma_2\\
        &- \frac{1}{g}\sum_{i=t-g+1}^g \{Y_i Y_i^\top - \Sigma_2\} - \Sigma_2.
    \end{align}
Due to the reverse triangle inequality, we therefore have that 
\begin{align}
    \normmop{\widehat{\Sigma}_{1,g}^{(t)} - \widehat{\Sigma}_{2,g}^{(t)}}&\geq \frac{\tau}{t-g}\normmop{\Sigma_1-\Sigma_2} \\
    &- \normmop{\frac{1}{t-g}  \sum_{i=1}^{\tau}\{Y_i Y_i^\top - \Sigma_1 \}} \\
    &- \normmop{\frac{1}{t-g}  \sum_{i=\tau+1}^{t-g}\{Y_i Y_i^\top - \Sigma_2\}} \\
    & - \normmop{\frac{1}{g}\sum_{i=t-g+1}^g \{Y_i Y_i^\top - \Sigma_2\}}.\label{tmphehe}
\end{align}
Now, define the event
\begin{align}
    \mathcal{E}_5 =&  \left\{ \normmop{\frac{1}{\tau}\sum_{i=1}^{\tau} Y_i Y_i^\top - \Sigma_1} \geq \frac{c_2}{3} \normmop{\Sigma_1} \left( \frac{p\vee \log t}{\tau } \vee \sqrt{\frac{p\vee\log t}{\tau}}\right)\right\} \\
    &\bigcup \left\{ \normmop{\frac{1}{t-g-\tau}\sum_{i=\tau+1}^{t-g} Y_i Y_i^\top - \Sigma_2} \geq \frac{c_2}{3} \normmop{\Sigma_2} \left( \frac{p\vee \log t}{t-g-\tau } \vee \sqrt{\frac{p\vee\log t}{t-g-\tau}}\right)\right\}\\
    &\bigcup \left\{ \normmop{\frac{1}{g}\sum_{i=t-g+1}^{t} Y_i Y_i^\top - \Sigma_2} \geq \frac{c_2}{3} \normmop{\Sigma_2} \left( \frac{p\vee \log t}{g} \vee \sqrt{\frac{p\vee\log t}{g}}\right)\right\},
    \end{align}
where $c_2$ and $c_0$ are as before, and the second set is taken as the empty set if $t-g=\tau$. Then Lemma \ref{covarianceeventbound2} implies that $\PP(\mathcal{E}_5)\leq \delta$, so that $\PP(\mathcal{E}_5^{\complement})\geq 1-\delta$. 

On $\mathcal{E}_5^{\complement}$, due to \eqref{tmphehe}, we have
\begin{align}
    \normmop{\widehat{\Sigma}_{1,g}^{(t)} - \widehat{\Sigma}_{2,g}^{(t)}}&\geq \frac{\tau}{t-g}\normmop{\Sigma_1-\Sigma_2} \\
    &- \frac{c_2}{3} \normmop{\Sigma_1} \frac{\tau}{t-g}\left( \frac{p\vee \log t}{\tau } \vee \sqrt{\frac{p\vee\log t}{\tau}}\right)\\
    &- \frac{c_2}{3} \normmop{\Sigma_2}\ind\{t-g>\tau \}\frac{t-g-\tau}{t-g} \left( \frac{p\vee \log t}{t-g-\tau } \vee \sqrt{\frac{p\vee\log t}{t-g-\tau}}\right)\\
    & - \frac{c_2}{3} \normmop{\Sigma_2} \left( \frac{p\vee \log t}{g} \vee \sqrt{\frac{p\vee\log t}{g}}\right)\\
    &\geq \frac{\tau}{t-g}\normmop{\Sigma_1-\Sigma_2} \\
    &- \frac{c_2}{3} \normmop{\Sigma_1} \left( \frac{p\vee \log t}{t-g } \vee \sqrt{\frac{p\vee\log t}{t-g}}\right)\\
    &- \frac{c_2}{3} \normmop{\Sigma_2} \left( \frac{p\vee \log t}{t-g } \vee \sqrt{\frac{p\vee\log t}{t-g}}\right)\\
    & - \frac{c_2}{3} \normmop{\Sigma_2} \left( \frac{p\vee \log t}{g} \vee \sqrt{\frac{p\vee\log t}{g}}\right).
\end{align}
Now, since $g\geq (t-\tau)/2$, it holds that $\tau/(t-g)\geq 2/3$. Moreover, due to the signal-strength condition in \eqref{tmpsnr2}, and the fact that $g\geq (t-\tau)/2$,  Lemma \ref{tullelemma} implies that 
\begin{align}
    t-g &\geq g\\
    &\geq \frac{(t-\tau)}{2}\\
    &\geq (C_2 /2) (p\vee \log t)\left(\frac{\normm{\Sigma_1- \Sigma_2}_{\mathrm{op}}}{\normmop{\Sigma_1}\vee \normmop{\Sigma_2}} \right)^{-2} \\
    &\geq (C_2/2) (p\vee \log t)\\
    &\geq p\vee \log t,\label{tmphehe3}
\end{align}
using the definition of $C_2$ in \eqref{c2defhihi}. 
Since also $t-g\geq t-\tau$, it follows that 
\begin{align}
    \normmop{\widehat{\Sigma}_{1,g}^{(t)} - \widehat{\Sigma}_{2,g}^{(t)}}
    &\geq \frac{2}{3}\normmop{\Sigma_1-\Sigma_2} \\
    &- \frac{2c_2}{3} (\normmop{\Sigma_1} \vee \normmop{\Sigma_2}) \sqrt{\frac{p\vee\log t}{t-\tau}}\\
    & - \frac{c_2}{3}(\normmop{\Sigma_1} \vee \normmop{\Sigma_2}) \sqrt{\frac{p\vee\log t}{g}},\label{tmphehe5}\end{align}
on $\mathcal{E}_5^{\complement}$. 
Next, using similar arguments as above, we have that
\begin{align}
    \widehat{\Sigma}_{1,g}^{(t)} &= \frac{1}{t-g}\sum_{i=1}^{\tau} Y_i Y_i^{\top} + \frac{1}{t-g}\sum_{i=\tau+1}^{t-g}Y_iY_i^{\top}\\
    &= \frac{1}{t-g}\sum_{i=1}^{\tau} \{Y_i Y_i^{\top}-\Sigma_1\} + \frac{1}{t-g}\sum_{i=\tau+1}^{t-g}\{Y_iY_i^{\top} - \Sigma_2\} \\
    &+ \frac{\tau}{t-g}\Sigma_1 + \frac{t-g-\tau}{t-g}\Sigma_2.
\end{align}
On $\mathcal{E}_5^{\complement}$, also using similar arguments as above, we thus have that
\begin{align}
    (\widehat{\sigma}_g^{(t)})^2 &= \normmop{\widehat{\Sigma}_{1,g}^{(t)}}\\
    &\leq \frac{2c_2}{3}(\normmop{\Sigma_1}\vee \normmop{\Sigma_2}) \sqrt{\frac{p\vee \log t}{t-\tau}} + \frac{\tau}{t-g}\normmop{\Sigma_1} + \frac{t-g-\tau}{t-g}\normmop{\Sigma_2}\\
    &\leq \left(\frac{2c_2}{3}+1\right)(\normmop{\Sigma_1}\vee \normmop{\Sigma_2}),\label{tmphehe4}
\end{align}
where we in the last inequality used that $t-\tau\geq p\vee\log t$ due to \eqref{tmpsnr2}. Using \eqref{tmphehe3}, \eqref{tmphehe5}, \eqref{tmphehe4}, we have on $\mathcal{E}_5^{\complement}$ that
\begin{align}
    \lVert \widehat{\Sigma}_{1,g}^{(t)} - \widehat{\Sigma}_{2,g}^{(t)} \rVert_{\mathrm{op}} - (\widehat{\sigma}^{(t)}_g)^2 \xi_g^{(t)} &= \lVert \widehat{\Sigma}_{1,g}^{(t)} - \widehat{\Sigma}_{2,g}^{(t)} \rVert_{\mathrm{op}} - \lambda  \left(\frac{2c_2}{3}+1\right)(\normmop{\Sigma_1}\vee \normmop{\Sigma_2}) \sqrt{\frac{p\vee\log t}{g}} \\
    &\geq \frac{2}{3}\normmop{\Sigma_1 - \Sigma_2} - \frac{2c_2}{3}(\normmop{\Sigma_1}\vee \normmop{\Sigma_2}) \sqrt{\frac{p\vee \log t}{t-\tau}}\\
    &- \left\{ \frac{c_2}{3} + \lambda  \left(\frac{2c_2}{3}+1\right) \right\}(\normmop{\Sigma_1}\vee \normmop{\Sigma_2}) \sqrt{\frac{p\vee \log t}{g}}\\
    &\geq \frac{2}{3}\normmop{\Sigma_1 - \Sigma_2} \\
    &- \left\{ \frac{(2+\sqrt{2})c_2}{3} + \lambda  \sqrt{2}\left(\frac{2c_2}{3}+1\right) \right\}(\normmop{\Sigma_1}\vee \normmop{\Sigma_2}) \sqrt{\frac{p\vee \log t}{t-\tau}}, \label{lasthihi}
\end{align}
where the last inequality follows from $g\geq (t-\tau)/2$. Due to \eqref{tmpsnr2} and the definition of $C_2$ in \eqref{c2defhihi}, the right-hand side of \eqref{lasthihi} is strictly positive. On $\mathcal{E}_5^{\complement}$, we will therefore have that
\begin{align}
    \lVert \widehat{\Sigma}_{1,g}^{(t)} - \widehat{\Sigma}_{2,g}^{(t)} \rVert_{\mathrm{op}} \  (\widehat{\sigma}^{(t)}_g)^{-2} > \xi_g^{(t)}, 
\end{align}
and it therefore follows that $\PP_{\tau}(T_g^{(t)} = 1)\geq 1-\delta$, and we are done.

\end{proof}

\section{Proof of results in the Supplementary Material}
\subsection{Proof of Proposition \ref{compstatthm}}
\begin{proof}[Proof.]
To prove Proposition \ref{compstatthm}, assume that we at time $t$ store $S_1, \ldots, S_t$ and $Y_t$ in memory. 
Due to Assumption \ref{asscompstatic-b}, 
it follows immediately that $\mathrm{MC}(\tauhat_{\mathrm{stat}},t) = \mathcal{O}\{ p + M_S(p) t \}$. Moreover, since $|G^{(t)}_{\mathrm{stat}}|\leq 1 + \log_2(t-1) = \mathcal{O}(\log t)$, it follows directly from Assumption \ref{asscompstatic-a} and \ref{asscompstatic-b} that  $\mathrm{UC}(\tauhat_{\mathrm{stat}}, t) =\mathcal{O}[\{C_T(p) + C_S(p)\}\log t ]$. 
\end{proof}

\subsection{Proof of Proposition \ref{generalddpropstatic}}
\begin{proof}[Proof.] 
    For the first claim, note that
    \begin{align}
        \mathrm{FA}(\tauhat_{\mathrm{stat}}) &\leq \sum_{t=2}^{\infty}\sum_{g \in G^{(t)}_{\mathrm{stat}}} \PP_{\infty}(T^{(t)}_g =1) \\
         &\leq \sum_{t=2}^{\infty}|G^{(t)}_{\mathrm{stat}}| \ \frac{\delta}{(1 + \log_2 t)t^2} \\
         &\leq \sum_{t=2}^{\infty} \ \frac{\delta}{t^2} \\
         &<\delta.
    \end{align}
    Next, assume that $\tau< \infty$ and that $\tau \rho(P_1, P_2)\geq  r(3\tau,\delta)$. Define
    \begin{align}
        d = \left \lceil \frac{r(3\tau, \delta)}{\rho(P_1, P_2)}\right\rceil.
    \end{align}
    Then $d\leq \tau$. Note that there exists some
    $g \in \{2^0,2^1,\ldots, 2^{\lfloor \log_2(\tau+d-1)\rfloor}\}$ such that 
 $d \leq g \leq 2d$. In particular, this $g$ must satisfy $g \in \{2^0,2^1,\ldots, 2^{\lfloor \log_2(t_0-1)\rfloor}\}$ for any  $t_0 \geq \tau +d$, and therefore also satisfy $g \in G^{(t_0)}_{\mathrm{stat}}$ for all $t_0 \geq \tau+d$. Now set $t = \tau + g$. Then $T^{(t)}_{t - \tau} = T^{(t)}_g$, and $g \in G^{(t)}_{\mathrm{stat}}$. Due to Assumption \ref{assstatstatic-b}, we have $\PP_{\tau}(T^{(t)}_g = 1)\geq 1-\delta$, since $t \leq 2d+\tau\leq 3\tau$, $r(t, \delta) \leq r(3\tau,\delta)$, and $\min(\tau, t-\tau)\rho(P_1, P_2) = \min(\tau,g)\rho(P_1, P_2) \geq r(3\tau,\delta)$. Noting that
    \begin{align}
        t &= \tau + g\\
        &\leq \tau+ 2 \left \lceil \frac{r(3\tau, \delta)}{\rho(P_1, P_2)}\right\rceil, 
    \end{align}
    we therefore have that
    \begin{align}
        \PP_{\tau}\left\{ \tauhat_{\mathrm{stat}} \leq \tau + 2\left\lceil \frac{r(3\tau,\delta)}{\rho(P_1,P_2)} \right\rceil\right\} &= \PP_{\tau}(\tauhat_{\mathrm{stat}}\leq \tau+2d)\\
        &\geq \PP_{\tau}(\widehat{\tau}_{\mathrm{stat}} \leq t)\\
        &\geq \PP_{\tau}(T_{g}^{(t)} = 1)\\
        &\geq 1-\delta,
    \end{align}
    and we are done.
    
\end{proof}

\subsection{Proof of Theorem \ref{theorem6}}
\begin{proof}[Proof.]
We begin by showing that $\mathrm{UC}(\tauhat, t) = \mathcal{O}(p^{a} \log t)$ for some $a\geq 2$ and $\mathrm{SC}(\tauhat, t) = \mathcal{O}(p^2\log t)$. For simplicity, will will show this directly. At time $t \in \NN$, store the set $S^{(t)}$ in memory, where $S^{(t)}$ is given by 
\begin{align}
     S^{(1)}= \{Y_1 Y_1^{\top}\},
\end{align}
and 
\begin{align}
    S^{(t)} = \left(\bigcup_{g \in G^{(t)}} \left\{ \sum_{i=1}^{t-g} Y_iY_i^\top\right\} \right)\bigcup\left(\bigcup_{l=0}^{\lfloor \log_2(t)\rfloor} \left\{ \sum_{i=1}^{2^l} Y_iY_i^{\top}\right\} \right) \bigcup \left\{\sum_{i=1}^{t} Y_iY_i^{\top} \right\},
\end{align}
for $t\geq 2$. 
Then, clearly, for each $t\geq 2$ and $g\in G^{(t)}$, the matrices $\widetilde{\Sigma}_{1,g}^{(t)}$,   $\widetilde{\Sigma}_{2,g}^{(t)}$ from \eqref{sigmatilde}, and their difference, can be computed from $S^{(t)}$ using at most $\mathcal{O}(p^2)$ unit-cost operations. Moreover, for any $s$, $\widehat{\lambda}_{\max}^s(\widetilde{\Sigma}_{1,g}^{(t)})$ and $\widehat{\lambda}_{\max}^s (\widetilde{\Sigma}_{1,g}^{(t)} - \widetilde{\Sigma}_{1,g}^{(t)})$ can be computed using, e.g., first order methods \citep[see][]{bach2010convex}, which requires at most $\mathcal{O}(p^{a_0})$ unit‑cost operations for some suitably large $a_0 >0$. Computing $T^{(t)}$ in \eqref{tvardef3} thus requires at most $\mathcal{O}(p^{a_0 \vee 2} |\mathcal{S}| |G^{(t)}|) = \mathcal{O}(p^a\log t)$ unit-cost operations for some suitably large $a\geq 2$. Moreover, $S^{(t+1)}$ can be computed from $S^{(t)}$ and $Y_{t+1}$ using at most $\mathcal{O}(p^2)$ unit-cost operations due to the recycling property of $G^{(t)}$ in \eqref{thegrid} (see Lemma \ref{gridlemma}) and using similar arguments as in the Proof of Proposition \ref{theorem2}. Thus, $\mathrm{UC}(\tauhat,t) = \mathcal{O}(p^a \log t)$ for some $a\geq 2$. Next, since $S^{(t)}$ consist of at most $\mathcal{O}(\log t)$ matrices, each of which can be represented using $\mathcal{O}(p^2)$ scalars, it holds that $\mathrm{SC}(\tauhat,t)= \mathcal{O}(p^2\log t)$. 

To prove the remaining claims, we will show that Assumption \ref{assstat} holds and appeal to Proposition \ref{generalddprop}. As a technical simplification, we will show directly that $\mathrm{FA}(\tauhat)\leq \delta$, as opposed to the stricter requirement in Assumption \ref{assstat-a} (similar to the proof of Theorem \ref{theorem5}).
To this end, let $C_1 = c_4/c_3+1$ (where $c_3$ and $c_4$ are specified below), which depends only on $\delta,w,u$, and given $\lambda\geq C_1$, set 
\begin{align}
    C_2 = 8 \{4c_4 + \lambda(c_4/2+1)\}^2 \label{c2defhihi2}
\end{align}
which only depends on $\delta, w,u,\lambda$, where
\begin{align}
    h(p,t,g,s) = s \left\{    \frac{\log(p \vee t)}{g} \vee  \sqrt{\frac{\log(p \vee t)}{g}} \right\}.\label{hdef}
\end{align}  
Moreover, let 
\begin{align}
\rho(P_1, P_2) &= \underset{k\in [p]}{\max} \ \frac{\omega_k^2}{k^2}
\end{align}
where $\omega_k$ is defined in \eqref{kappakdefcovar}, 
and define $r(t,\delta) = C_2 \log (p \vee t)$.

We first show that $\mathrm{FA}(\tauhat)\leq \delta$. Assume that $\tau = \infty$, so that $\Sigma_1$ (the covariance associated with the pre-change distribution $P_1$) is the common covariance matrix of all the $Y_i$. Define the events 
\begin{align}
    \mathcal{E}_6 &= \bigcap_{t=2}^{\infty}  \bigcap_{ g = 1}^{ \lfloor t/2 \rfloor} \left\{ \left(\widehat{\sigma}_{g}^{(t)}\right)^2 \geq c_3 \lambda_{\max}^1({\Sigma_1})   \right\},\\
    \mathcal{E}_7 &= \bigcap_{t=2}^{\infty} \bigcap_{s \in \mathcal{S}} \bigcap_{ g = 1}^{ \lfloor t/2 \rfloor} \left\{ \widehat{\lambda}_{\max}^s ({\widetilde{\Sigma}_{1,g}^{(t)} - \widetilde{\Sigma}_{2,g}^{(t)}})  \leq c_4 \lambda_{\max}^1({\Sigma_1}) h(p,t,g,s) \right\},
\end{align}
where $c_3 = (2e\pi w^2)^{-1}\delta^2 (\delta + 2)^{-2}$, $c_4 = (c_0 / 2)\{ 3 + \log_2(8/\delta) / \log 2\}$, $c_0$ is the constant from Lemma \ref{moenminimaxlemma2} depending only on $u>0$, $\widetilde{\Sigma}_{1,g}^{(t)}$ and $\widetilde{\Sigma}_{2,g}^{(t)}$ are defined in \eqref{sigmatilde}, $\widehat{\sigma}_{g,s}^{(t)}$ is defined in \eqref{varestsparse}, $\mathcal{S}$ is defined in \eqref{mathcals}, $\lambda^{s}_{\max}(\cdot)$ is defined in \eqref{firsteigsparse}, and $\widehat{\lambda}^{s}_{\max}(\cdot)$ is defined in \eqref{conveigdef}. Define $\mathcal{E} = \mathcal{E}_6 \cap \mathcal{E}_7$. Then Lemma \ref{covarianceeventbound3} implies that $\PP_{\infty}(\mathcal{E}) \geq 1 - \delta$. 

On $\mathcal{E}$, for any $t \geq 2$ and $g \in G^{(t)}$ and $s \in \mathcal{S}$, we have
\begin{align}
    \frac{\widehat{\lambda}_{\max}^s(\widetilde{\Sigma}_{1,g}^{(t)} - \widetilde{\Sigma}_{2,g}^{(t)})}{\left(\widehat{\sigma}_{g}^{(t)}\right)^{2}} &\leq \frac{c_4}{c_3} h(p,t,g,s).
\end{align}
Hence, since $C_1 = c_4/c_3+1$ and $\lambda \geq C_1$, the test $T^{(t)}_g$ in \eqref{tvardef3} will satisfy $T^{(t)}_g = 0$ for all $t \geq 2$ and $g \in G^{(t)}$ on $\mathcal{E}$. It follows that $\tauhat=\infty$  on $\mathcal{E}$, and thus $\mathrm{FA}(\tauhat)\leq \delta$. 

Next, we show that Assumption \ref{assstat-b} holds. Assume that $\tau<\infty$, let $t \in \NN$ be such that $\tau < t \leq 2\tau$, and assume that $(t-\tau) \rho(P_1, P_2) \geq r(t, \delta)$, so that
\begin{align}
    (t-\tau) \frac{\omega_k^2}{k^2} \geq C_2 \log(p\vee t) \label{tmpsnrsparse}
\end{align}
for some fixed $k \in [p]$, where we recall the definition of $\omega_k$ from \eqref{kappakdefcovar}. 
Given any $g$ such that $(t - \tau)/2 \leq g \leq t-\tau$, we need to show that $\PP_{\tau}(T^{(t)}_g = 1) \geq 1-\delta$. 

Note that \eqref{tmpsnrsparse} implies
\begin{align}
    {g} \geq \frac{t-\tau}{2} \geq \frac{C_2}{2} \frac{k^2 \log(p \vee t)}{\omega_k^2} \geq \frac{C_2}{2} k^2 \log (p \vee t), \label{hihi22}
\end{align}
where the third inequality follows from the fact that \begin{align}
    \omega_k = \lambda_{\max}^k (\Sigma_1 - \Sigma_2)\{\lambda_{\max}^k(\Sigma_1) \vee \lambda_{\max}^k(\Sigma_2)\}^{-1}\leq 1, \label{hihi33}
\end{align} due to Lemma \ref{tullelemma}. In particular, $g\geq \log(p\vee t)$ since $C_2\geq 2$. 

Now, there exists an $s_0 \in \mathcal{S}$ such that $s_0/2 \leq k \leq s_0$, due to the definition of $\mathcal{S}$ in \eqref{mathcals}. For this $s_0$, define the events
\begin{align}
    \mathcal{E}_8 &=  \bigcap_{i=1,2} \left\{ \widehat{\lambda}_{\max}^{s_0} (\widetilde{\Sigma}_{i,g}^{(t)} - \Sigma_i)  \leq \frac{c_4}{2} \lambda_{\max}^{1}({\Sigma_i}) s_0 \sqrt{\frac{\log(p\vee t)}{g}} \right\},\\
    \mathcal{E}_9 &=  \left\{ \widehat{\lambda}_{\max}^{1} (\widetilde{\Sigma}_{1,g}^{(t)} - \Sigma_1)  \leq \frac{c_4}{2} \lambda_{\max}^{1}({\Sigma_1}) \sqrt{\frac{\log (p\vee t)}{g}} \right\},
    \end{align}
where $c_4$ is as before. Let $\mathcal{E} = \mathcal{E}_8 \cap \mathcal{E}_9$. Then Lemma \ref{covarianceeventbound4} implies that $\PP_{\tau}(\mathcal{E})\geq 1 - \delta$.

On $ \mathcal{E}$, we claim that $\widehat{\sigma}^{(t)}_{g}\leq \sqrt{(c_4/2+1) \lambda_{\max}^{k}(\Sigma_1)}$. Indeed, since the triangle inequality holds for $\widehat{\lambda}_{\max}^{1}(\cdot)$, we have
\begin{align}
    \left(\widehat{\sigma}_{g}^{(t)}\right)^2 &= \widehat{\lambda}_{\max}^{1}(\widetilde{\Sigma}_{1,g}) \\
    &\leq \widehat{\lambda}_{\max}^{1}(\Sigma_1) + \widehat{\lambda}_{\max}^{1}(\widetilde{\Sigma}_{1,g}^{(t)} - \Sigma_1)\\
    &\leq \widehat{\lambda}_{\max}^{1}(\Sigma_1) + \frac{c_4}{2} \lambda_{\max}^{1}(\Sigma_1),\label{hihi34}
    \end{align}
where the last inequality follows from $g\geq \log(p\vee t)$. Moreover, due to Lemma \ref{moenminimaxlemma11}, we have $\widehat{\lambda}_{\max}^{1}(\Sigma_1) \leq \normm{\Sigma_1}_{\infty}$, where $\normm{\Sigma_1}_{\infty}$ denotes the largest absolute entry of $\Sigma_1$. Since this largest entry is contained on the diagonal, and $\lambda_{\max}^1(\Sigma_1)$ is the largest diagonal entry of $\Sigma_1$, we thus have that $\widehat{\lambda}_{\max}^{1}(\Sigma_1) = \lambda_{\max}^1(\Sigma_1)$. Due to \eqref{hihi34}, we thus have that
\begin{align}
    \left(\widehat{\sigma}_{g}^{(t)}\right)^2 &\leq (c_4/2 + 1)\lambda_{\max}^1(\Sigma_1)
    \\
    &\leq (c_4/2+1)\lambda_{\max}^k(\Sigma_1)\label{hihi11},
\end{align}
on $\mathcal{E}$, as claimed. 

Next, since the reverse triangle inequality holds for $\widehat{\lambda}_{\max}^{s_0}(\cdot)$, we will also have on $\mathcal{E}$ that
\begin{align}
{\widehat{\lambda}_{\max}^{s_0} (\widetilde{\Sigma}_{1,g}^{(t)} - \widetilde{\Sigma}_{2,g}^{(t)} ) }  &\geq {\widehat{\lambda}_{\max}^{s_0} ({\Sigma}_1 - {\Sigma}_2 ) - \widehat{\lambda}_{\max}^{s_0} (\widetilde{\Sigma}_{1,g}^{(t)} - \Sigma_1 ) - \widehat{\lambda}_{\max}^{s_0} (\widetilde{\Sigma}_{2,g}^{(t)} - \Sigma_2 )} \\
&\geq {\lambda}_{\max}^{s_0} ({\Sigma}_1 - {\Sigma}_2 )  - c_4 \{\lambda_{\max}^{s_0}(\Sigma_1) \vee \lambda_{\max}^{s_0}(\Sigma_2)\} s_0 \sqrt{\frac{\log(p\vee t)}{g}}\\
&\geq {\lambda}_{\max}^{k } ({\Sigma}_1 - {\Sigma}_2 )  - 4 c_4 \{\lambda_{\max}^{k}(\Sigma_1) \vee \lambda_{\max}^{k}(\Sigma_2)\} s_0 \sqrt{\frac{\log(p\vee t)}{g}}\end{align}
on $\mathcal{E}$, where we in the second inequality used that $\widehat{\lambda}_{\max}^{s_0}(A) \geq \lambda_{\max}^{s_0}(A)$ for any matrix $A$, since $\widehat{\lambda}_{\max}^{s_0}(\cdot)$ is a relaxation of the implicit optimisation problem defining $\lambda_{\max}^{s_0}(\cdot)$, and moreover that $\lambda_{\max}^{s_0}(\cdot)\geq \lambda_{\max}^k(\cdot)$ since $k\leq s_0$. Consequently, we have that
\begin{align}
    &{\widehat{\lambda}_{\max}^{s_0} (\widetilde{\Sigma}_{1,g}^{(t)}  - \widetilde{\Sigma}_{2,g}^{(t)} ) } - \left(\sigma_{g,s_0}^{(t)}\right)^2 \xi_{g}^{(t)} \\
    \geq & {\lambda}_{\max}^{k} ({\Sigma}_1 - {\Sigma}_2 ) - s_0  \left\{ 4c_4 + \lambda (c_4/2+1) \right\}\{\lambda_{\max}^k(\Sigma_1) \vee \lambda_{\max}^k(\Sigma_2)\}  \sqrt{\frac{\log (p \vee t)}{g}} \\
    \geq & {\lambda}_{\max}^{k} ({\Sigma}_1 - {\Sigma}_2 ) - 2k   \left\{ 4c_4 + \lambda (c_4/2+1) \right\}\{\lambda_{\max}^k(\Sigma_1) \vee \lambda_{\max}^k(\Sigma_2)\}  \sqrt{\frac{\log (p \vee t)}{g}} \\
    \geq & {\lambda}_{\max}^{k} ({\Sigma}_1 - {\Sigma}_2 ) - 2\sqrt{2}k   \left\{ 4c_4 + \lambda (c_4/2+1) \right\}\{\lambda_{\max}^k(\Sigma_1) \vee \lambda_{\max}^k(\Sigma_2)\}  \sqrt{\frac{\log (p \vee t)}{t-\tau}} \\
    \geq & \left[ \frac{\omega_k\sqrt{t-\tau}}{k} - 2\sqrt{2}   \left\{ 4c_4 + \lambda (c_4/2+1) \right\}  \sqrt{\log (p \vee t)}  \right ] \left \{\lambda_{\max}^k(\Sigma_1) \vee \lambda_{\max}^k(\Sigma_2)\right \} k (t-\tau)^{-1/2}\\
    >&0
\end{align}
on $\mathcal{E}$, where we in the second inequality used that $s_0 \leq 2k$, in the third inequality used that $g\geq (t-\tau)/2$, and the definition of $C_2$ in the last. It follows that $T^{(t)}_{g,s_0}=1$ on $\mathcal{E}$, and thus $\PP_{\tau}(T_g^{(t)}=1)\geq 1-\delta$, and we are done.

\end{proof}

\subsection{Proof of Proposition \ref{propreg}}
\begin{proof}[Proof.]
    For the first claim, we have that
    \begin{align}
        \mathrm{FA}(\tauhat) &\leq \PP_{\infty}(\tauhat < \infty) \\
        &\leq \sum_{t=2}^{\infty} \sum_{g \in G^{(t)}} \PP_{\infty}\left(D_g^{(t)}>\xi^{(t)}\right)\\
        &\leq \sum_{t=2}^{\infty} |G^{(t)}| \delta t^{-2}|G^{(t)}|^{-1}\\
        &\leq \delta,
    \end{align}
    where we in the third inequality used that $D_g^{(t)} \sim \chi_p$ and $\xi^{(t)}$ was chosen as the upper $\delta t^{-2}|G^{(t)}|^{-1}$ quantile for this distribution. 
 
    For the second claim, note that the quantity $D_g^{(t)}$ in \eqref{ddefregression} 
    \begin{align}
        D_g^{(t)} &= \normm{\left( \sum_{i=1}^{t-g}x_i x_i^\top\right)^{-1/2}\sum_{i=1}^{t-g} x_i Y_i - \left( \sum_{i=t-g+1}^{t}x_i x_i^\top\right)^{-1/2}\sum_{i=t-g+1}^{t} x_i Y_i}_2^2,
    \end{align}
    whenever $\sum_{i=1}^{t-g}x_i x_i^\top$ and $\sum_{i=t-g+1}^{t}x_i x_i^\top$ are invertible, and zero otherwise. Thus, the test $T^{(t)}_g$ in \eqref{regressiontest} has the form of \eqref{enkelpropt}, and Proposition \ref{enkelprop} applies. Indeed, as the function $h(\cdot)$, we may take 
    \begin{align}
        h(Y_i, x_i) = (Y_i x_i^\top, \mathrm{vec}(x_i x_i^\top))^\top,  
    \end{align}
    where $\mathrm{vec}(\cdot)$ denotes the operation that flattens a matrix into a vector. Then $h(\cdot)$ maps vectors from $\RR^{q+1}$ to $\RR^{q(q+1)}$, and and be computing using $C_h = \mathcal{O}\{q(q+1)\}$ unit-cost operations. Moreover, given $h(Y_i, x_i)$, computing $D_g^{(t)}$ in \eqref{ddefregression} costs $C_f = \mathcal{O}(q^3 + q^2 + q)$ unit-cost operations. Here, the $q^3$ term stems from computing the inverse square roots of $\sum_{i=1}^{t-g}x_i x_i^\top$ and $\sum_{i=t-g+1}^{t}x_i x_i^\top$, while the $q^2$ term stems from the matrix-vector multiplications, and the $q$ term stems from vector subtraction and the computation of the Euclidean norm. Due to Proposition \ref{enkelprop}, we obtain that $\mathrm{UC}(\tauhat,t) = \mathcal{O}(q^3 \log t)$ and $\mathrm{SC}(\tauhat, t) = \mathcal{O}(q^2\log t)$.
\end{proof}

\subsection{Proof of Proposition \ref{prop:meanminimax}}
\begin{proof}[Proof.]
To begin, note first that for any $\theta \in \Theta(k,p,\tau, \phi)$ from \eqref{Theta}, $\tauhat \in \mathcal{T}(\delta)$ from \eqref{mathcaltmean} and $x>0$, we have
\begin{align}
    \PP_{\theta}(\tauhat - \tau  > x) &= \PP_{\theta}(\tauhat > \tau + \lfloor x \rfloor),
\end{align}
since $\tau$ and $\tauhat$ are integer valued.
Moreover, for any $\theta_0 \in \Theta_0(p)$ from \eqref{Theta0}, we have
\begin{align}
     \PP_{\theta}(\tauhat > \tau + \lfloor x \rfloor) &= \PP_{\theta}(\tauhat > \tau + \lfloor x \rfloor) + \PP_{\theta_0}(\tauhat \leq \tau + \lfloor x \rfloor) - \PP_{\theta_0}(\tauhat \leq  \tau + \lfloor x \rfloor).
\end{align}
Since $\tauhat \in \mathcal{T}(\delta)$, we have $\PP_{\theta_0}(\tauhat \leq \tau + \lfloor x \rfloor) \leq \PP_{\theta_0}(\tauhat < \infty) \leq \delta$ for any such $\theta_0$, and thus
\begin{align}
     \PP_{\theta}(\tauhat > \tau + \lfloor x \rfloor) &\geq \PP_{\theta}(\tauhat > \tau + \lfloor x \rfloor) + \underset{\theta \in \Theta_0(p)}{\sup} \PP_{\theta_0}(\tauhat \leq \tau + \lfloor x \rfloor) -\delta.
\end{align}
Write $l = \tau + \lfloor x \rfloor$.  Since $\tauhat$ is an extended stopping time with respect to the filtration $(\mathcal{F}_t)_{t \in \NN}$ generated by the $Y_i$, there exists a measurable function $\psi : \RR^{p \times l} \mapsto \{0,1\}$ such that we may write $\ind\{\tauhat \leq l\} = \psi(Y_1, \ldots, Y_l)$. Since $\tauhat \in \mathcal{T}(\delta)$ was arbitrary, it therefore follows that   
\begin{align}
     &\underset{\tauhat \in \mathcal{T}(\delta)}{\inf} \ \underset{\theta \in \Theta(k,p,\tau, \phi)}{\sup}  \PP_{\theta}(\tauhat - \tau  > x) \\
     \geq & \underset{\psi \in \Psi(p,l)}{\inf} \ \left[ \underset{\theta \in \Theta(k,p,\tau, \phi)}{\sup}  \PP_{\theta}\{\psi(Y^{(l)}) = 0\} + \underset{\theta \in \Theta_0(p)}{\sup}  \PP_{\theta}\{\psi(Y^{(l)}) = 1\} \right] - \delta,
\end{align}
where $Y^{(l)} = (Y_1, \ldots, Y_l)$ and $\Psi(p,l)$ is the set of all measurable functions $\psi : \RR^{p\times l} \mapsto \{0,1\}$. 
Now, for any $n\geq \tau$ and some $c>0$ to be chosen sufficiently small, set 
\begin{align}
    x = \begin{cases} n, &\text{if } \frac{\phi^2\tau}{\sigma^2} \leq c v(k,p),\\
    c \frac{\sigma^2}{\phi^2} v(k,p), &\text{if }  \frac{\phi^2\tau}{\sigma^2} > c v(k,p)
    \end{cases}
\end{align}
so that $l = \tau +n$ if $\phi^2\tau{\sigma^{-2}} \leq c v(k,p)$ and $l = \tau + \lfloor c \sigma^{2}\phi^{-2} v(k,p) \rfloor$ otherwise. 
To prove Proposition \ref{prop:meanminimax}, it suffices to choose a sufficiently small value of $c>0$ (depending only on $\epsilon$), such that 
\begin{align}
    \underset{\psi \in \Psi(p,l)}{\inf} \ \left[ \underset{\theta \in \Theta(k,p,\tau, \phi)}{\sup}  \PP_{\theta}\{\psi(Y^{(l)}) = 0\} + \underset{\theta \in \Theta_0(p)}{\sup}  \PP_{\theta}\{\psi(Y^{(l)}) = 1\} \right] \geq 1 - \epsilon, \label{minimaxproofclaim}
\end{align}
for any $\phi>0$, when $l$ is chosen as above. 

To prove \eqref{minimaxproofclaim},  
we will argue in a very similar fashion as in the proof of Proposition 3 in \cite{liu_minimax_2021}.  Due to Lemmas 8 and 10 in \cite{liu_minimax_2021}, given any $\alpha>0$ it suffices to find a value of $c$ depending only on $\alpha$ and a prior distribution $\nu$ with support on $\Theta(k,p, \tau, \phi)$ such that
    \begin{equation}
    \EE_{(\theta^{(1)}, \theta^{(2)}) \sim \nu\otimes\nu} \exp\left( \frac{1}{\sigma^2} \sum_{i \in [l]} \sum_{j\in [p]}  \theta_{i}^{(1)}(j) \theta_{i}^{(2)}(j) \right) \leq 1+ \epsilon.\label{minimaxtihi}
    \end{equation}
    
Define the prior distribution $\nu$ to be the distribution of $\theta = (\theta_i)_{i \in \NN} \in \Theta(k,p, \tau, \phi)$ generated according to the following process:
    \begin{enumerate}
        \item Sample a subset $S \subseteq [p]$ of cardinality $k$;
        \item Independently of $S$, sample $u = (u(1), \ldots, u(p)) \in \RR^p$, where $u(j) \overset{\mathrm{i.i.d.}}{\sim} \mathrm{Unif}(\{-1, 1\})$ for all $j \in [p]$;
        \item Given the tuple $(S,u)$, if $\tau \phi^2 \sigma^{-2} \leq c v(k,p)$, set $\theta_i(j) = u(j) \phi / \sqrt{k}$ whenever  $(i,j) \in [\tau] \times S$  and $\theta_i(j) = 0$ otherwise, and if $\tau \phi^2 \sigma^{-2} > c v(k,p)$, set $\theta_i(j) = u(j) \phi / \sqrt{k}$ whenever $i > \tau$ and $j \in S$, and  $\theta_i(j) = 0$ otherwise.
    \end{enumerate}
    Now, let $(S, u)$ and $(T,v)$ denote two independent tuples sampled according to steps 1 and 2 above, and let $\theta^{(1)}$ and $\theta^{(2)}$ be the result of generating $\theta$ according to these respective tuples, as in step 3. We first claim that 
    \begin{align}
        \sigma^{-2} \sum_{i \in [n]} \sum_{j\in [p]}  \theta_{i}^{(1)}(j) \theta_{i}^{(2)}(j) 
        &\leq c \frac{v(k,p)}{k} \sum_{j \in S\cap T} u(j) v(j).\label{minimaxclaim2}
    \end{align}
    To see this, note that if $\tau \phi^2 \sigma^{-2} \leq c v(k,p)$, then 
    \begin{align}
        \sigma^{-2} \sum_{i \in [n]} \sum_{j\in [p]}  \theta_{i}^{(1)}(j) \theta_{i}^{(2)}(j) &= \frac{\tau \phi^2}{\sigma^2 k} \sum_{j \in S\cap T} u(j) v(j)\\
        &\leq c \frac{v(k,p)}{k} \sum_{j \in S\cap T} u(j) v(j),
    \end{align}
    and if $\tau \phi^2 \sigma^{-2} > c v(k,p)$ then 
    \begin{align}
        \sigma^{-2} \sum_{i \in [n]} \sum_{j\in [p]}  \theta_{i}^{(1)}(j) \theta_{i}^{(2)}(j) &= \frac{(l- \tau) \phi^2}{\sigma^2 k} \sum_{j \in S\cap T} u(j) v(j)\\
        &= \left \lfloor c v(k,p) \frac{\sigma^{2}}{\phi^{2}}  \right \rfloor\frac{ \phi^2}{\sigma^2 k} \sum_{j \in S\cap T} u(j) v(j)\\
        &\leq c \frac{v(k,p)}{k} \sum_{j \in S\cap T} u(j) v(j),
    \end{align}
    and thus \eqref{minimaxclaim2} holds. 
    
    Thus, for any value of $\phi$, it holds that 
    \begin{align}
         &\EE_{(\theta^{(1)}, \theta^{(2)}) \sim \nu\otimes\nu} \exp\left( \frac{1}{\sigma^2 } \sum_{i \in [n]} \sum_{j\in [p]}  \theta_{i}^{(1)}(j) \theta_{i}^{(2)}(j) \right)  \\ 
         \leq & \EE \exp \left( \frac{c v(k,p)}{k} \sum_{j \in S\cap T} u(j) v(j) \right),
    \end{align}
    where the expectation on the left hand side is taken with respect to the joint distribution of $S,u,T,v$. Since $u(j) v(j) \overset{\mathrm{i.i.d.}}{\sim} \mathrm{Unif}(\{-1,1\})$ for $j \in [p]$, we have
    \begin{align}
         &\EE \exp \left( \frac{c v(k,p)}{k} \sum_{j \in S\cap T} u(j) v(j) \right)\\
         = & \EE \left\{   \left(  \frac{1}{2} e^{c v(k,p) k^{-1}} + \frac{1}{2}e^{-c v(k,p)k^{-1}}    \right)^{|S\cap T|}      \right\}. \label{minimaxtihi32}
    \end{align}
    Consider first the case when $k \geq \sqrt{p}$, so that $v(k,p) = \sqrt{p}$. Since $(e^x + e^{-x})/2 \leq e^{x^2/2}$ for any $x \in \RR$, we have
    \begin{align}
        \EE \left\{   \left(  \frac{1}{2} e^{c v(k,p) k^{-1}} + \frac{1}{2}e^{-c v(k,p)k^{-1}}    \right)^{|S\cap T|}      \right\}  &=\EE \left\{   \left(  \frac{1}{2} e^{c \sqrt{p/k^2}} + \frac{1}{2}e^{-c \sqrt{p/k^2}}    \right)^{|S\cap T|}      \right\} \\ 
         &\leq \EE \exp \left(  | S\cap T| \frac{c^2 p}{2k^2} \right).
    \end{align} Now, $|S\cap T|$ is distributed following the Hypergeometric distribution Hyp$(p,k,k)$, which is dominated by Bin$(k, k/p)$ in the convex ordering \citep[see the proof of Proposition 3][]{liu_minimax_2021}, which implies that 
    \begin{align}
        \EE \exp \left(  | S\cap T| \frac{c^2 p}{2k^2} \right)
         &\leq   \left \{ 1 - \frac{k}{p} + \frac{k}{p} \exp\left (\frac{c^2 p}{2k^2}\right) \right\}^k\\
         &\leq    \left \{ 1+ \frac{1}{k} \frac{c^2 }{2}\exp\left (\frac{c^2 p}{2k^2}\right) \right\}^k,
    \end{align}
    using that $e^x - 1\leq xe^x$ for $x \geq 0$. 
    Now, if $c\in (0, 1]$, we have $c^2 p /(2k^2) \leq 1/2$
    due to $k\geq \sqrt{p}$, and it then follows that
    \begin{align}
       \left \{ 1+ \frac{1}{k} \frac{c^2 }{2}\exp\left (\frac{c^2 p}{2k^2}\right) \right\}^k
         &\leq   \left ( 1+ \frac{1}{k} c^2  \right)^k\\
         &\leq e^{c^2},\label{minimaxtihi2}
    \end{align}
    where we in the second inequality used that $(1+x/y)^y \leq e^x$ for all $x,y \in \RR$ satisfying $x \leq y$ and $y\geq 1$. By choosing $c\leq \log^{1/2}(1+\alpha)\wedge 1$, the inequality in \eqref{minimaxtihi2} then implies that \eqref{minimaxtihi} holds when $k \geq \sqrt{p}$.
    
    Next, consider the case where $k < \sqrt{p}$. In this case, $k^{-1}v(k,p) = \log(ep k^{-2})$, and using that $( e^x + e^{-x})/2 \leq e^{x}$ for all $x \geq 0$, we get from \eqref{minimaxtihi32} that
    \begin{align}
         \EE \exp \left( \frac{c v(k,p)}{k} \sum_{j \in S\cap T} u(j) v(j) \right)
         &\leq  \EE  \exp \left\{ |S\cap T| c \log(ep k^{-2})\right\}.
    \end{align}
    
    Again using that $|S \cap T|$ follows a Hypergeometric distribution with parameters $p,k,k$, we have that
\begin{align}
         \EE  \exp \left\{ |S\cap T| c \log(ep k^{-2})\right\} &\leq  \left [ 1+ - \frac{k}{p} + \frac{k}{p}  \exp\left\{  c \log \left (\frac{ep}{k^2}\right)  \right\}  \right]^k \\
         &\leq    \left [ 1+ \frac{k}{p}  c \log\left (\frac{ep}{k^2}\right)  \exp\left\{  c \log\left (\frac{ep}{k^2}\right)  \right\}  \right]^k,
    \end{align}
    again using that $e^x - 1\leq xe^x$ for $x \geq 0$. Now, if $c \in (0, 1/4]$, we have
    \begin{align}
        \left [ 1+ \frac{k}{p}  c \log\left (\frac{ep}{k^2}\right)  \exp\left\{  c \log\left (\frac{ep}{k^2}\right)  \right\}  \right]^k &= 
        \left \{ 1+ \frac{ec}{k} \left(\frac{ep}{k^2}\right)^{c - 1} \log \left( \frac{ep}{k^2}\right)  \right\}^k\\
        &\leq  
        \left \{ 1+ \frac{ec}{k} \left(\frac{ep}{k^2}\right)^{-1/2} \log \left( \frac{ep}{k^2}\right)  \right\}^k\\
        &\leq \left ( 1+ \frac{ec}{k}   \right)^k\\
        &\leq \exp(ec),\label{minimaxtihi40}
    \end{align}
    where we in the first inequality used that $x^{-1/2}\log x <1$ for $x\geq 1$, and in the second inequality used that $ec \leq 1\leq k$. By choosing $c \leq e^{-1}\log(1+\alpha) \wedge 2^{-2}$, the inequality in \eqref{minimaxtihi40} then implies that \eqref{minimaxtihi} holds also when $k< \sqrt{p}$. 

    Thus, we may take $c = \log^{1/2}(1 + \alpha) \wedge e^{-1}\log(1+\alpha) \wedge 1/4$, and the proof is complete. 
\end{proof}

\subsection{Proof of Proposition \ref{prop:covarianceminimax}}
\begin{proof}[Proof.]
Arguing in a similar fashion as in the proof of Proposition \ref{prop:meanminimax}, we have that
\begin{align}
     &\underset{\tauhat \in \mathcal{T}(\delta)}{\inf} \ \underset{\gamma \in \Gamma(k,p,\tau, \omega)}{\sup}  \PP_{\gamma}(\tauhat - \tau  > x) \\
     \geq & \underset{\psi \in \Psi(p,l)}{\inf} \ \left[ \underset{\gamma \in \Gamma(k,p,\tau, \omega)}{\sup}  \PP_{\gamma}\{\psi(Y^{(l)}) = 0\} + \underset{\theta \in \Gamma_0(p)}{\sup}  \PP_{\gamma}\{\psi(Y^{(l)}) = 1\} \right] - \delta,
\end{align}
for any $x>0$, $\tau \in \NN$, $p\in [p]$, $k\in[p]$, $\delta \in (0,1)$ and $\omega \in (0,1/2]$, where $\mathcal{T}(\delta)$ is given in \eqref{mathcaltcovariance}, $\Gamma(k,p,\tau, \omega)$ is given in \eqref{Gamma},  $l = \tau + \lfloor x \rfloor$, $Y^{(l)} = (Y_1, \ldots, Y_l)$ and $\Psi(p,l)$ is the set of measurable functions $\psi : \RR^{p \times l} \mapsto \{0,1\}$. 

Now, for any $n \geq \tau$ and some $c>0$ to be chosen sufficiently small, set 
\begin{align}
    x = \begin{cases} n, &\text{if } \tau \omega^2 \leq c k\log \left(\frac{ep}{k}\right),\\
    c \frac{k\log \left(\frac{ep}{k}\right)}{\omega^2}, &\text{if }  \tau \omega^2 > c k\log \left(\frac{ep}{k}\right)
    \end{cases}
\end{align}
so that $l = \tau +n$ if $\tau \omega^2 \leq c v(k,p)$ and $l = \tau + \lfloor c \omega^{-2} k\log \left(ep/k\right) \rfloor$ otherwise. 
To prove Proposition \ref{prop:covarianceminimax}, it suffices to choose a sufficiently small value of $c>0$ (depending only on $\epsilon$), such that 
\begin{align}
    \underset{\psi \in \Psi(p,l)}{\inf} \ \left[ \underset{\gamma \in \Gamma(k,p,\tau, \omega)}{\sup}  \PP_{\gamma}\{\psi(Y^{(l)}) = 0\} + \underset{\theta \in \Gamma_0(p)}{\sup}  \PP_{\gamma}\{\psi(Y^{(l)}) = 1\} \right] \geq 1 - \epsilon, \label{minimaxproofclaimcovariance}
\end{align}
for any $\omega \in (0,1/2]$.

To prove \eqref{minimaxproofclaim}, 
we will argue in a very similar fashion as in the proof of Proposition 3 in \cite{moen2024minimax}.  Due to Lemma 8 in \cite{liu_minimax_2021}, given any $\alpha>0$ it suffices to find a value of $c$ depending only on $\alpha$ and a prior distribution $\nu$ with support on $\Gamma(k,p, \tau, \omega)$ such that
\begin{align}
    \chi^2(f_1, f_0) +1 = \int \frac{f_1^2}{f_0} \leq 1 + \alpha,
\end{align}
where $f_0$ denotes the joint density of $Y_i \overset{\mathrm{i.i.d.}}{\sim}(0, \sigma^2 I)$ for $i \in [l]$ and some fixed $\sigma>0$, and $f_1$ denotes the joint  density of $Y_i \mid \gamma \overset{\mathrm{i.i.d.}}{\sim}(0, \gamma_i)$ for $i \in [l]$ marginalised over $\gamma = (\gamma_j)_{j \in \NN} \sim \nu$. 

Define the prior distribution $\nu$ to be the distribution of $\gamma = (\gamma_i)_{i \in \NN} \in \Gamma(k,p, \tau, \omega)$ generated according to the following 
process:
\begin{enumerate}
        \item Sample a subset $S \subseteq [p]$ of cardinality $k$;
        \item Given $S$, sample $u = (u(1), \ldots, u(p)) \in S^{p-1}_k$, where $u(j) \overset{\mathrm{i.i.d.}}{\sim} \mathrm{Unif}(\{-k^{-1/2}, k^{1/2}\})$ for all $j \in S$ and $u(j) = 0$ for all $j \in [p] \setminus S$;
        \item Given the tuple $(S,u)$, if $\tau \omega^2 \leq c k \log(ep/k)$, set $\gamma_i = \sigma^2 I - \sigma^2 \omega u u^\top  $ whenever  $(i,j) \in [\tau] \times S$  and $\gamma_i(j) = \sigma^2 I$ otherwise, and if $\tau \omega^2 > c k\log(ep/k)$, set $\gamma_i = \sigma^2 I - \sigma^2 \omega u u^\top$ whenever $i > \tau$, and $\gamma_i = \sigma^2 I$ otherwise.
\end{enumerate}
Note $\nu$ does indeed have support on $\Gamma(k,p, \tau, \omega)$. Indeed, when $\gamma$ is sampled according to the above steps, we have $\lambda_{\max}^k(\gamma_1) \vee \lambda_{\max}^k(\gamma_{\tau+1}) = \sigma^2$, and $\lambda_{\max}^k(\gamma_1 - \gamma_{\tau+1}) = \sigma^2 \omega$, so that $\lambda_{\max}^k(\gamma_1 - \gamma_{\tau+1}) \left\{   \lambda_{\max}^k(\gamma_1) \vee \lambda_{\max}^k(\gamma_{\tau+1}) \right\}^{-1} = \omega$. 

 Now, let $(S, u)$ and $(T,v)$ denote two independent tuples sampled according to steps 1 and 2 above, and let $\gamma^{(1)} = (\gamma_i^{(2)})_{i \in \NN}$ and $\gamma^{(2)} = (\gamma_i^{(2)})_{i \in \NN}$ be the result of generating $\theta$ according to these respective tuples, as in step 3. We first claim that
\begin{align}
    \chi^2(f_1, f_0) +1 \leq \EE \left\{  2 \langle u, v\rangle^2 c k \log\left(\frac{ep}{k}\right)   \right\}, \label{minimaxcovariancetihi20}
\end{align}
where the expectation on the left hand side is taken with respect to the joint distribution of $S,u,T,v$. To see this, note first that
\begin{align}
    \chi^2(f_1, f_0) +1 = \EE_{(\gamma^{(1)}, \gamma^{(2)})\sim \nu \otimes \nu} \left[ \EE_{X \sim \text{N}_{lp}(0, \sigma^2I)} \left\{  \frac{\phi_{V_1}(X)\phi_{V_2}(X)}{\phi^2_{\sigma^2 I}(X)}    \right\} \right] ,
\end{align}
due to the definitions of $f_0$ and $f_1$, where $\phi_{V}(\cdot)$ denotes the density of any $X \sim  \mathrm{N}_{lp}(0, V)$, and $V_1 = \mathrm{Diag}(\gamma^{(1)}_1, \ldots, \gamma^{(1)}_l)$, $V_2 = \mathrm{Diag}(\gamma^{(2)}_1, \ldots, \gamma^{(2)}_l)$ denote the $(lp)\times (lp)$ block diagonal matrices formed from the first $l$ elements in the sequences $\gamma^{(1)}$ and $\gamma^{(2)}$, respectively. If $\tau \omega^2 \leq c k \log(ep/k)$, then $\gamma^{(1)}_i = \sigma^2 I - \mathbbm{1}\{i \leq \tau\}\sigma^2 \omega u u^\top$ and $\gamma^{(2)}_i = \sigma^2 I - \mathbbm{1}\{i \leq \tau\}\sigma^2 \omega v v^\top$ for all $i$, and due to Lemma 9 in \cite{moen2024minimax}, we have
\begin{align}
     \EE_{X \sim \text{N}_{lp}(0, \sigma^2I)} \left\{  \frac{\phi_{V_1}(X)\phi_{V_2}(X)}{\phi^2_{\sigma^2 I}(X)}    \right\} &\leq \exp \left\{\frac{1}{2} \langle u, v \rangle^2 \tau \left( \frac{\sigma^2 \omega}{\sigma^2 - \sigma^2 \omega}\right)^2 \right\}\\
     &\leq \exp \left( 2 \langle u, v \rangle^2 \tau \omega^2 \right)\\
     &\leq \exp \left\{ 2 \langle u, v \rangle^2 c k \log\left(\frac{ep}{k}\right) \right\},
\end{align}
where we in the second inequality used that $\omega \leq 1/2$. Conversely, if $\tau \omega^2 > c k \log(ep/k)$, then $l = \tau + \lfloor c \omega^{-2} k\log \left(ep/k\right) \rfloor$, $\gamma^{(1)}_i = \sigma^2 I - \mathbbm{1}\{i > \tau\}\sigma^2 \omega u u^\top$ and $\gamma^{(2)}_i = \sigma^2 I - \mathbbm{1}\{i > \tau\}\sigma^2 \omega v v^\top$ for all $i$. Due to symmetry, Lemma 9 in \cite{moen2024minimax} implies in this case that
\begin{align}
     \EE_{X \sim \text{N}_{lp}(0, \sigma^2I)} \left\{  \frac{\phi_{V_1}(X)\phi_{V_2}(X)}{\phi^2_{\sigma^2 I}(X)}    \right\} &\leq \exp \left\{\frac{1}{2} \langle u, v \rangle^2 (l - \tau) \left( \frac{\sigma^2 \omega}{\sigma^2 - \sigma^2 \omega}\right)^2 \right\}\\
     &\leq \exp \left\{ 2 \langle u, v \rangle^2 (l - \tau) \omega^2 \right\}\\
     &\leq \exp \left\{ 2 \langle u, v \rangle^2 \left \lfloor c \omega^{-2} k\log \left(\frac{ep}{k}\right) \right \rfloor \omega^2 \right\}\\
     &\leq \exp \left\{ 2 \langle u, v \rangle^2 c k \log\left(\frac{ep}{k}\right) \right\}.
\end{align}
Thus, \eqref{minimaxcovariancetihi20} holds, and it thus suffices to choose $c$ sufficiently small so that the right hand side of \eqref{minimaxcovariancetihi20} is bounded above by $1+\alpha$. To this end, notice that the distribution of $\sqrt{k}\langle u,v\rangle$ equals that of $\sum_{i=1}^{H}R_i $, where the $R_i$ are independent Rademacher random variables and $H \sim \mathrm{Hyp}(p,k,k)$, follows a Hypergeometric distribution with parameters $p,k,k$, and so 
\begin{align}
\chi^2(f_1, f_0) +1 &\leq 
    \EE \exp \left\{ 2 \langle u, v \rangle^2 c k \log\left(\frac{ep}{k}\right) \right\} \\
     &= \exp \left\{ \frac{2  c}{k}  \log\left(\frac{ep}{k}\right) \left(\sum_{i=1}^{H} R_i \right)^2 \right\},\label{covarianceminimaxtihi21}
\end{align}
where the expectation on the right hand side is taken with respect to $H$ and the $R_i$. 
Due to Lemma 1 in \cite{cai_optimal_2015}, we may choose $c>0$ sufficiently small so that the right hand side of \eqref{covarianceminimaxtihi21} is bounded above by $1 + \alpha$. The proof is complete. 
\end{proof}

\section{Auxiliary lemmas}

\begin{lemma}\label{cusumlemma}
For any $t \geq 2$, $1\leq \tau \leq t-1$ and $\mu_1,\mu_2 \in \RR$, define $\mu = (\mu(1), \mu(2), \ldots, \mu(t))^\top \in \RR^{t}$ by $\mu(i) = \mu_1$ for $i\leq \tau$ and $\mu(i) = \mu_2$ for $i > \tau$. Let $\theta_g$ denote the CUSUM transformation applied to $\mu$, i.e., 
\begin{align}
    \theta_g  &= \left\{\frac{g}{t(t-g)}\right\}^{1/2} \ \sum_{i=1}^{t-g} \mu(i) - \left(\frac{t-g}{tg}\right)^{1/2} \ \left( \sum_{i=1}^t \mu(i) - \sum_{i=1}^{t-g} \mu(i) \right).
\end{align}
Then if $1\leq g \leq t - \tau$, we have
\begin{align}
    \theta_g^2 =  \frac{g \tau^2}{t(t-g)} (\mu_1 - \mu_2)^2.
\end{align}
\end{lemma}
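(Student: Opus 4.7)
The plan is to compute $\theta_g$ directly by evaluating each of the two partial sums appearing in its definition. The key simplification comes from the hypothesis $1 \leq g \leq t - \tau$, equivalently $t-g \geq \tau$, which guarantees that the cut-off index $t-g$ falls \emph{after} the changepoint. Consequently the first $t-g$ entries of $\mu$ contain all $\tau$ copies of $\mu_1$ followed by $t-g-\tau$ copies of $\mu_2$, while the remaining $g$ entries are purely $\mu_2$.

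Concretely, I would first record
\begin{align}
\sum_{i=1}^{t-g}\mu(i) &= \tau\mu_1 + (t-g-\tau)\mu_2, \\
\sum_{i=1}^{t}\mu(i) - \sum_{i=1}^{t-g}\mu(i) &= g\mu_2,
\end{align}
and substitute these into the definition of $\theta_g$. Grouping the $\mu_1$ and $\mu_2$ contributions separately, the $\mu_1$ coefficient is immediately $\tau\{g/[t(t-g)]\}^{1/2}$, while the $\mu_2$ coefficient is
\begin{align}
(t-g-\tau)\left\{\frac{g}{t(t-g)}\right\}^{1/2} - g\left(\frac{t-g}{tg}\right)^{1/2} = \left\{\frac{g}{t(t-g)}\right\}^{1/2}\bigl[(t-g-\tau) - (t-g)\bigr] = -\tau\left\{\frac{g}{t(t-g)}\right\}^{1/2},
\end{align}
where I used that $g\{(t-g)/(tg)\}^{1/2} = (t-g)\{g/[t(t-g)]\}^{1/2}$ to put both terms over a common radical.

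Combining these coefficients yields $\theta_g = \tau(\mu_1-\mu_2)\{g/[t(t-g)]\}^{1/2}$, and squaring produces the claimed identity. There is no real obstacle here — the only thing to watch is the algebra that rewrites $g\{(t-g)/(tg)\}^{1/2}$ in a form compatible with $\{g/[t(t-g)]\}^{1/2}$ so the two $\mu_2$ terms can be merged; beyond that, the computation is mechanical.
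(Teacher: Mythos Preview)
Your proposal is correct and follows essentially the same approach as the paper: both proofs use $t-g\geq\tau$ to evaluate the two partial sums, substitute into $\theta_g$, rewrite $g\{(t-g)/(tg)\}^{1/2}=(t-g)\{g/[t(t-g)]\}^{1/2}$ to pull out the common radical, and simplify the $\mu_2$ coefficient to $-\tau\{g/[t(t-g)]\}^{1/2}$.
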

\begin{proof}[Proof.]
Using that $t-g \geq \tau$ and inserting for $\mu$, we have
\begin{align}
    \theta_g &= \left\{\frac{g}{t(t-g)}\right\}^{1/2} \tau \mu_1 + \left\{\frac{g}{t(t-g)}\right\}^{1/2} \left(t-g-\tau\right)\mu_2 -  \left(\frac{t-g}{tg}\right)^{1/2} g \mu_2\\
    &= \left\{\frac{g}{t(t-g)}\right\}^{1/2} \left\{  \tau \mu_1 - \left(  \tau + g - t  \right) \mu_2 - (t-g) \mu_2   \right\}\\
    &= \tau \left\{\frac{g}{t(t-g)}\right\}^{1/2} (\mu_1 - \mu_2),
\end{align}
and we are done.
\end{proof}

\begin{lemma}\label{lhihghprobupper}
Let $p \in \NN$ and let $Y_1, Y_2, \ldots$ be i.i.d., each following a $p$-dimensional multivariate normal distribution with mean vector $\mu\in \RR^p$ and covariance matrix $\sigma^2 I$ for some $\sigma>0$. Let $C_g^{(t)}$ and $A_{s,g}^{(t)}$ be defined as in equations \eqref{ydef} and \eqref{adef}, respectively, where $a^2(s,t) = 4 \log \left(ep \log (t) s^{-2}\right) \ind\left\{ s \leq \sqrt{p \log t} \right\}$ and $\nu_{a(s,t)} = \EE \left\{Z^2 \ | \ |Z| > a(s,t)\right\}$ with $Z \sim \N(0,1)$. Let $\mathcal{S}^{(t)} = \{ 1,2,4, \ldots, 2^{\log_2\left( \sqrt{p\log t} \ \wedge \ p \right)} \}\cup\{ p\}$, and let $z(s,p,t)$ be as in \eqref{rdef}.
Then for any $\delta \in (0,1)$, there exist a constant $C>0$ depending only on $\delta$, so that for any $t\geq 2$ and $g \in [t-1]$, the event
\begin{align}
    \mathcal{E}_1 &= \bigcup_{s \in \mathcal{S}^{(t)}} \left\{    A^{(t)}_{s,g} >  C z(s,p,t)     \right\}.
    \end{align}
has probability at most $\PP_{\infty}\left(\mathcal{E}_1 \right) \leq \delta / (3t^2\log t)$.
\end{lemma}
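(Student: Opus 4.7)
The plan is to reduce the problem to a pointwise concentration bound for $A^{(t)}_{s,g}$ at a fixed triple $(t,g,s)$, then apply a union bound exploiting that $r(s,p,t) \geq \log t$.

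First, I would observe that under $\PP_{\infty}$, each entry of the vector $C^{(t)}_g / \sigma$ is a centered Gaussian with unit variance, and the entries are mutually independent because $C^{(t)}_g$ is a linear combination of the $Y_i$ and the coordinates are independent by assumption. Writing $Z_j = C^{(t)}_g(j)/\sigma$, the statistic $A^{(t)}_{s,g}$ is therefore a sum of $p$ i.i.d.~random variables $X_j := (Z_j^2 - \nu_{a(s,t)}) \ind\{|Z_j| > a(s,t)\}$, each with mean zero by the definition of $\nu_{a(s,t)} = \EE\{Z^2 \mid |Z|>a(s,t)\}$.

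Next, I would establish a pointwise tail bound of the form
\begin{align}
\PP_{\infty}\{ A_{s,g}^{(t)} > C \, r(s,p,t) \} \leq e^{-c \, r(s,p,t)},
\end{align}
for some constants $c > 0$ absolute and $C > 0$ depending only on $\delta$, by splitting into two regimes.
In the \emph{dense regime} $s > \sqrt{p\log t}$, the threshold $a(s,t)$ equals zero and $\nu_0 = 1$, so $A^{(t)}_{s,g} = \sum_j (Z_j^2 - 1)$ is a centered chi-square, for which the Laurent--Massart inequality gives $\PP(A^{(t)}_{s,g} > 2\sqrt{p x} + 2x) \leq e^{-x}$; choosing $x$ proportional to $\log t$ yields a tail of order $\sqrt{p\log t} = r(s,p,t)$.
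In the \emph{sparse regime} $s \leq \sqrt{p\log t}$, a Gaussian tail computation gives $\PP(|Z|>a(s,t)) \lesssim s^2/(p\log t)$ and $|X_j| \lesssim Z_j^2$ conditional on $|Z_j|>a(s,t)$; a Bernstein-type bound for sums of independent variables with controlled variance and sub-exponential tails (as in Lemma 5 of \citet{liu_minimax_2021}, adapted to the slightly enlarged threshold $a^2(s,t) = 4\log(ep s^{-2}\log t)$ used here) then yields the desired tail at scale $s \log(ep\log(t)/s^2) \vee \log t = r(s,p,t)$. The strengthening of the constant in the threshold from $2$ (offline) to $4$ (online) is precisely what is needed to absorb the extra $\log t$ factor arising from the union bound over $t$.

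Finally, I would take a union bound over $s \in \mathcal{S}^{(t)}$, $g \in G^{(t)}$ and $t \geq 2$. By Lemma \ref{gridlemma}, $|G^{(t)}| = \mathcal{O}(\log t)$, while $|\mathcal{S}^{(t)}| = \mathcal{O}(\log p + \log\log t)$ by construction. Since $r(s,p,t) \geq \log t$ in every case, choosing $C$ sufficiently large (depending only on $\delta$) makes $e^{-c r(s,p,t)} \leq \delta \, t^{-3}$, so that
\begin{align}
\PP_{\infty}(\mathcal{E}_1) \leq \sum_{t=2}^{\infty} |G^{(t)}| \, |\mathcal{S}^{(t)}| \, e^{-c r(s,p,t)} \leq \delta \sum_{t=2}^{\infty} t^{-2} \leq \delta,
\end{align}
after absorbing the polylogarithmic factors into the constant.

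The main obstacle is the sparse-regime Bernstein bound: one has to verify that the variance proxy and almost-sure upper bound of the summands $X_j$ scale so that the deviation $C\,r(s,p,t)$ sits in the sub-Gaussian (rather than sub-exponential) part of the Bernstein tail, and that the constant in $a(s,t)$ being $4$ rather than $2$ is exactly enough to beat the union bound over $t$ and $s$. The rest is bookkeeping.
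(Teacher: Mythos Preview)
Your overall strategy—reduce to a pointwise tail bound for $A^{(t)}_{s,g}$ via Lemma~5 of \citet{liu_minimax_2021} in the sparse regime and a chi-square bound in the dense regime, then union-bound over $(t,g,s)$—is exactly what the paper does. However, the pointwise bound you claim in the sparse regime, $\PP_{\infty}\{A^{(t)}_{s,g}>C r(s,p,t)\}\leq e^{-c\,r(s,p,t)}$, does not follow from Lemma~5 uniformly in $s$. With $a^2(s,t)=4\log(ep\log(t)/s^2)$ one has $pe^{-a^2(s,t)/2}=s^4/(e^2 p\log^2 t)$, so Lemma~5 gives $\PP(A\geq 9[(s^4 x/(e^2 p\log^2 t))^{1/2}+x])\leq e^{-x}$. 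If you try $x=c\,r(s,p,t)$, then for $s$ near $\sqrt{p\log t}$ (say $s=\sqrt{p\log t}/2$), the first bracket term is of order $p^{3/4}(\log t)^{1/4}$, which exceeds $r(s,p,t)\asymp\sqrt{p\log t}$ whenever $p\gg\log t$. So the deviation $C r(s,p,t)$ sits in the sub-Gaussian (variance-dominated) part of the Bernstein tail, and the best achievable exponent is only $x\asymp\log t$, not $x\asymp r(s,p,t)$. This also bites in your union bound: replacing $e^{-c r(s,p,t)}$ by $e^{-c\log t}$ and multiplying by $|\mathcal{S}^{(t)}|$ brings in an $O(\log p)$ factor that cannot be absorbed into a constant depending only on $\delta$.

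The paper repairs this by choosing $x_{s,t}=c\bigl\{p\log^2(t)/s^2\wedge r(s,p,t)\bigr\}$, which always makes $9[(pe^{-a^2/2}x_{s,t})^{1/2}+x_{s,t}]\leq 18c\,r(s,p,t)$ (the first term is then at most $\sqrt{c}\,s/e\leq\sqrt{c}\,r(s,p,t)$). The price is that $e^{-x_{s,t}}$ is no longer $e^{-c r(s,p,t)}$; instead one must sum $\sum_{s\in\mathcal{S}^{(t)}}e^{-x_{s,t}}$ directly. Splitting on which branch of the minimum is active and using the dyadic structure of $\mathcal{S}^{(t)}$, both pieces are geometric series summing to $O(t^{-c/2})$ \emph{without} any $\log p$ factor. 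This is the missing ingredient in your outline; once you supply it, the rest of your argument (the dense case via Lemma~\ref{birgelemma} and the final sums over $g$ and $t$) goes through as in the paper.
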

\begin{proof}[Proof.]
    Set $C =  18 \{6 + 2\log(16/\delta)\log^{-1}(2)\} \vee  \{ 15 + 2\log^{1/2}(1/\delta) + 2\log(1/\delta) \}$. Fix any $t, g$ and $s \in \mathcal{S}^{(t)}$. Noting that $C_g^{(t)}(j)/\sigma \sim \N(0,1)$ independently for all $j \in [p]$, we fix $x_{s,t}>0$ (to be specified shortly), so that Lemma \ref{lemmaliu} implies that
    \begin{align}
        \PP_{\infty} \left( A_{s,g}^{(t)} \geq 9 \left[ \left\{pe^{-a^2(s,t)/2}x_{s,t} \right\}^{1/2} + x_{s,t} \right] \right) &\leq e^{-x_{s,t}}.
        \end{align}
        By a union bound, it follows that
        \begin{align}
        &\quad \PP_{\infty} \left( \exists  s \in \mathcal{S}^{(t)}\setminus\{p\} \ : \ A_{s,g}^{(t)} \geq 9 \left[ \left\{pe^{-a^2(s,t)/2}x_{s,t} \right\}^{1/2} + x_{s,t} \right] \right) \\
        &\leq \sum_{s \in \mathcal{S}^{(t)} \setminus \{p\}} e^{-x_{s,t}} \label{unionbound}.
    \end{align}
    Now set $x_{s,t} = c \left\{ \frac{p\log^2(t)}{s^2} \wedge z(s,p,t) \right\}$, for some $c>1$ to be specified later. Then the right hand side of \eqref{unionbound} is bounded above by
    \begin{align}
        \sum_{s \in \mathcal{S}^{(t)} \setminus \{p\}} e^{-x_{s,t}} &\leq \sum_{s \in \mathcal{S}^{(t)} \setminus \{p\}} \exp \left\{ -\frac{c p \log^2(t)}{s^2} \right\} + \sum_{s \in \mathcal{S}^{(t)} \setminus \{p\}} \exp\left \{ - c z(s,p,t)\right\}.
    \end{align}
    For the first term, since $s \in \mathcal{S}^{(t)}\setminus \{p\}$ satisfies $s \leq \sqrt{p\log t}$ and the ordered elements of $\mathcal{S}^{(t)}$ are increasing by a factor of $2$, we have
    \begin{align}
        \sum_{s \in \mathcal{S}^{(t)} \setminus \{p\}} \exp \left\{ -\frac{c p \log^2(t)}{s^2} \right\} &\leq \sum_{k=0}^{\infty} \exp\left\{ -c \log(t) 4^k\right\}\\
        &\leq t^{-c} \left( 1 + \sum_{k=1}^{\infty} t^{-3ck}\right)\\
        &\leq 2t^{-c},
    \end{align}
    using that $c>1$ and $t\geq 2$. For the second term, noting that $c z(s,p,t) \geq (c/2)s\log \left( \frac{ep \log t}{s^2} \right) + (c/2) \log t$, we have
    \begin{align}
        \sum_{s \in \mathcal{S}^{(t)} \setminus \{p\}} \exp\left \{ - c z(s,p,t)\right\} &\leq t^{-c/2} \sum_{s \in \mathcal{S}^{(t)}\setminus\{p\}} \left( \frac{s^2}{ep \log t}\right)^{cs/2} \\
        &\leq t^{-c/2} \left( 1 + \sum_{k=1}^{\infty} 4^{-ck/2}\right) \\
        &\leq 2 t^{-c/2}. 
    \end{align}
    We conclude that $\sum_{s \in \mathcal{S}^{(t)} \setminus \{p\}} e^{-x_{s,t}} \leq 4t^{-c/2}$ with this choice of $x_{s,t}$. Moreover, we have that
    \begin{align}
        9 \left[ \left\{ pe^{-a^2(s,t)/2} x_{s,t}\right\}^{1/2} + x_{s,t}\right] &< 18 cz(s,p,t). 
    \end{align}
    Now set $c = 6 + 2\log(\delta/16) \log^{-1}(2)$. Then, since $C\geq  18 c$ we have
    \begin{align}
         \PP_{\infty} \left\{ \exists  s \in \mathcal{S}^{(t)}\setminus\{p\} \ : \ A_{s,g}^{(t)} \geq Cz(s,p,t) \right\} &\leq 4t^{-c/2}\\
         &<\frac{\delta}{4t^3}.
    \end{align}
    
    Now consider the case where $s = p$. If $s< \sqrt{p\log t}$, then $a(p,t) >0$, and thus 
    \begin{align}
    \PP_{\infty} \left\{A_{p,t}  \geq  C z(s,p,t) \right\} < \frac{\delta}{4t^3},
    \end{align}
    using similar arguments as above. If instead $s\geq \sqrt{p\log n}$, then $a(p,t) = 0$ and by Lemma \ref{birgelemma} we have
    \begin{align}
        \PP_{\infty} \left[A_{p,g}^{(t)} \geq 2\left\{p\log(4t^3/\delta)\right\}^{1/2} + 2 \log\left( 4t^3/\delta\right) \right] &\leq \frac{\delta}{4t^3}.
    \end{align}
    Since $2\{p\log(4t^3/\delta)\}^{1/2} + 2 \log( 4t^3/\delta) \leq \{ 15 + 2\log^{1/2}(1/\delta) + 2\log(1/\delta) \} z(p,p,t) \leq  C z(p,p,t)$ for $t\geq 2$, a union bound over all $s \in \mathcal{S}^{(t)}$ gives
    \begin{align}
        \PP_{\infty} \left( \exists  s \in \mathcal{S}^{(t)} \ : \ A_{s,g}^{(t)} \geq C z(s,p,t) \right) &\leq \frac{\delta}{4t^3} + \frac{\delta}{4t^3} = \frac{\delta}{2t^3}.
    \end{align}
    Noting that $2t > 3\log t$ for all $t\geq 2$, we get that $\PP_{\infty}(\mathcal{E}_1) \leq \delta /(3t^2\log t)$.
\end{proof}

\begin{lemma}\label{lhihghproblower}
Let $p\in \NN$ and let $Y_1, Y_2, \ldots$ be independent $p$-dimensional random vectors following the model given in Section \ref{sec:intro} in the main text, assuming that $\tau <\infty$. Let $P_1 = \N_p(\mu_1, \sigma^2 I)$ and $P_2 = \N_p(\mu_2, \sigma^2 I)$ be $p$-dimensional Gaussian pre- and post-change distributions with respective mean vectors $\mu_1, \mu_2 \in \RR^p$ and variance $\sigma^2>0$. Let $C_g^{(t)}$ and $A_{s,g}^{(t)}$ be defined as in equations \eqref{ydef} and \eqref{adef}, respectively, where $a(s,t) = 4 \log \left(ep \log (t) s^{-2}\right) \ind\left\{ s \leq \sqrt{p \log t} \right\}$ and $\nu_{a(s,t)} = \EE \left\{Z^2 \ | \ |Z| > a(s,t)\right\}$ with $Z \sim \N(0,1)$. Let $\mathcal{S}^{(t)} =\{1,2,\ldots, 2^{\log_2  (\sqrt{p\log t} \wedge p) } \}\cup \{p\}$, and let $z(s,p,t)$ be as in \eqref{rdef}. Let $\phi = \normm{\mu_1 - \mu_2}_2$, $k = \normm{\mu_1 - \mu_2}_0$. 
Then for any $\lambda >0$ and $\delta \in (0,1)$, the exist an $s \in \mathcal{S}^{(t)}$ such that $k/2\leq s \leq k$ whenever $k <\sqrt{p\log t}$ and $s = p$ whenever $k\geq \sqrt{p\log t}$, and a constant $C>0$ depending only on $\delta$, such that for any $t> \tau$ and $1\leq g \leq t-\tau$, the event
\begin{align}
    \mathcal{E}_2 &= \left\{     A_{g,s}^{(t)} - \lambda z(s,p,t) \geq \psi - (C+ 2\lambda)  z(k,p,t) - C\psi^{1/2}  \right\},
\end{align}
has probability at least $\PP_{\tau} \left( \mathcal{E}_2\right) \geq 1-\delta$, where $\psi = g\tau^2 \{ t(t-g)\}^{-1} \phi^2\sigma^{-2}$.
\end{lemma}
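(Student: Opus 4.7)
\medskip

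\noindent\textbf{Proof plan for Lemma \ref{lhihghproblower}.}

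The plan is to decompose the CUSUM vector into its deterministic mean (the signal) plus an isotropic Gaussian (the noise), split $A_{s,g}^{(t)}$ accordingly into a signal part and a noise part, and bound each via concentration. First I would choose $s$ to be matched to the true sparsity $k$: when $k < \sqrt{p \log t}$, the dyadic construction of $\mathcal{S}^{(t)}$ guarantees some $s \in \mathcal{S}^{(t)}$ with $k/2 \leq s \leq k$, and when $k \geq \sqrt{p \log t}$ take $s = p$. This calibrates the threshold $a = a(s,t)$ to the true sparsity; in particular, a short computation using $s \asymp k$ yields $r(s,p,t) \leq 2\, r(k,p,t)$, which is what absorbs the $2\lambda\, r(k,p,t)$ slack on the right-hand side of $\mathcal{E}_2$.

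Since the $Y_i$ are Gaussian with mean shift $\mu_2-\mu_1$ (sparsity $k$, norm $\phi$), Lemma \ref{cusumlemma} gives $C_g^{(t)}/\sigma = \mu + Z$ where $Z \sim \mathrm{N}(0,I_p)$, $\mu$ is supported on a set $S$ with $|S|=k$, and $\|\mu\|_2^2 = g\tau^2\{t(t-g)\}^{-1}\phi^2/\sigma^2 = \psi$. Writing $a = a(s,t)$ and $\nu = \nu_{a(s,t)}$, split
\begin{align}
A_{s,g}^{(t)} = T_S + T_{S^c}, \qquad T_S = \sum_{j\in S}\bigl\{(\mu_j+Z_j)^2-\nu\bigr\}\ind\{|\mu_j+Z_j|>a\},
\end{align}
with $T_{S^c}$ defined analogously over $j \notin S$. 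For the signal part, use the pointwise estimate $(\mu_j+Z_j)^2\ind\{|\mu_j+Z_j|>a\}\geq (\mu_j+Z_j)^2 - a^2$, valid since the discarded mass is at most $a^2$, to obtain
\begin{align}
T_S \;\geq\; \|\mu+Z_S\|_2^2 - k(a^2+\nu) \;\geq\; \psi + 2\langle\mu,Z\rangle - k(a^2+\nu),
\end{align}
after dropping $\|Z_S\|_2^2\geq 0$. Standard estimates $\nu \leq a^2+3$ and $a^2 \leq 4\log(ep\log t/k^2)$ give $k(a^2+\nu) \leq C_2\, r(k,p,t)$. Gaussian concentration for the linear form $\langle\mu,Z\rangle \sim \mathrm{N}(0,\psi)$ yields $|\langle\mu,Z\rangle| \leq \sqrt{2\psi\log(4/\delta)}$ with probability at least $1-\delta/2$, giving the $C\psi^{1/2}$ term.

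For the noise part, $T_{S^c}$ is a sum of $p-k$ i.i.d. centered thresholded chi-squared variables, to which I apply Lemma \ref{lemmaliu} with deviation parameter $x=\log(2/\delta)$, yielding $|T_{S^c}| \leq 9[\{pe^{-a^2/2}\log(2/\delta)\}^{1/2}+\log(2/\delta)] \leq C_1(\delta)\, r(k,p,t)$ with probability at least $1-\delta/2$, where the factor $pe^{-a^2/2} \lesssim k^4/(p\log^2 t)$ is controlled by the choice of $a$. A union bound over the two good events and combining the bounds delivers $\mathcal{E}_2$ with the stated constants. The case $k\geq\sqrt{p\log t}$ is treated separately but more cleanly: here $s=p$, $a=0$, and $A_{p,g}^{(t)} = \|\mu+Z\|_2^2-p = \psi + 2\langle\mu,Z\rangle + (\|Z\|_2^2-p)$, controlled by the same Gaussian tail bound for the linear term together with a $\chi^2$-deviation bound $|\|Z\|_2^2-p| \lesssim \sqrt{p\log(1/\delta)} \leq C(\delta)\sqrt{p\log t} \asymp r(p,p,t)$.

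The main obstacle is the signal bound for $T_S$: a naive approach would try to show each coordinate of $\mu$ exceeds the threshold with high probability, which fails uniformly when some $|\mu_j|$ are comparable to $a$. The pointwise estimate above circumvents this by absorbing the possibly non-triggered indicator into the deterministic loss $k a^2$, which is of the same order as $r(k,p,t)$. After this the rest is book-keeping; the reason the noise term scales as $\psi^{1/2}$ rather than $\sqrt{\psi\, r(k,p,t)}$ (as a Cauchy-Schwarz bound would give) is that $\langle\mu,Z\rangle$ is a Gaussian of variance only $\psi$, so its fluctuations are of order $\sqrt{\psi}$.
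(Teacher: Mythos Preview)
Your approach is essentially the same as the paper's: choose $s$ matched to $k$, split $A_{s,g}^{(t)}$ into signal coordinates $S$ and noise coordinates $S^c$, bound each by concentration, and treat the dense case $k\geq\sqrt{p\log t}$ separately via a noncentral chi-square. Two small points are worth noting.

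First, for the noise part $T_{S^c}$ you need a \emph{lower} tail bound (since you want $A_{s,g}^{(t)}\geq\cdots$), so the relevant lemma is Lemma~\ref{palemma2}, not Lemma~\ref{lemmaliu} (which is the upper tail). The scale $\{pe^{-a^2/2}x\}^{1/2}+x$ is of course the same.

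Second, your pointwise estimate $(\mu_j+Z_j)^2\ind\{|\mu_j+Z_j|>a\}\geq(\mu_j+Z_j)^2-a^2$ is correct, but the paper handles this slightly more cleanly: since $\nu_a\geq a^2+1$ (Lemma~\ref{palemma1}), one has $\{(\mu_j+Z_j)^2-\nu\}\ind\{|\mu_j+Z_j|>a\}\geq(\mu_j+Z_j)^2-\nu$ directly, i.e.\ the indicator can simply be dropped. The paper then applies the noncentral chi-square lower tail (Lemma~\ref{birgelemma}) to $\sum_{j\in S}(\mu_j+Z_j)^2\sim\chi_k^2(\psi)$ rather than expanding and dropping $\|Z_S\|_2^2$; both routes give the same $\psi - C\psi^{1/2} - C\,r(k,p,t)$ shape, so this is a cosmetic difference.
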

\begin{proof}[Proof.]
    We shall show that $\PP_{\tau}\left(\mathcal{E}_2^{\text{c}}\right) \leq \delta$. Set $C = 6  + 7 \sqrt{\log(2/\delta)} +5 \log (2/\delta)$, and choose 
    \begin{align}
    s = \begin{cases}
        p, & \text{if } k \geq \sqrt{p\log t}\\
        \min \left \{z \in \mathcal{S}^{(t)} \ : \ z \leq k\right\}, & \text{otherwise,}
    \end{cases}
    \end{align}
    which satisfies $k/2\leq s\leq k$ whenever $k< \sqrt{p\log t}$. We treat the cases $k < \sqrt{p\log t}$ and $k\geq \sqrt{p\log t}$ separately. 
    
    \textbf{Step 1.} Assume first that $k \geq \sqrt{p\log t}$. Then $s = p$, $a(s,t) = 0$ and $\nu_{a(s,t)} = 1$, so that 
    \begin{align}
        A_{g,s}^{(t)} = \sum_{j=1}^p \left\{ C_g^{(t)}(j)^2/\sigma^2 -1\right\}.
    \end{align}
    By the linearity of the CUSUM transformation, we may for any $j \in [p]$ write
    \begin{align}
        C_g^{(t)}(j)^2 = \left(\theta_g^{(t)}(j) + Z(j) \right)^2,
    \end{align}
    where the $Z(j)/\sigma^2$ are i.i.d. standard normals and $\theta_g^{(t)}(j)$ is the CUSUM transformation of $\EE ( Y_1(j), \ldots, Y_t(j))^\top$ given by 
    \begin{align}
        \theta_g^{(t)}(j) &= \frac{g\tau^2}{t(t-g)}(\mu_1(j) - \mu_2(j))^2,
    \end{align}
    due to Lemma \ref{cusumlemma} since $g \leq t - \tau$. It follows that
    \begin{align}
        A_{g,s}^{(t)} + p \sim \chi_p^2(\psi),
    \end{align}
    where $\psi = \sum_{j=1}^p \theta_g^{(t)}(j)^2/\sigma^2 = g\tau^2 \{ t(t-g)\}^{-1} \phi^2\sigma^{-2}$. 
    By Lemma \ref{birgelemma}, we then have 
    \begin{align}
    \PP_{\tau} \left[  A_{g,s}^{(t)} \leq \psi -2 \left\{\log(1/\delta)(p + 2\psi)\right\}^{1/2}    \right] &\leq \delta.
    \end{align}
    Since $z(s,p,t) = z(k,p,t) = \sqrt{p\log t}$, it follows that
    \begin{align}
    \PP_{\tau} \left[  A_{g,s}^{(t)} - \lambda z(s,p,t) \leq \psi - (C + \lambda) z(k,p,t) + C \psi^{1/2}  \right] &\leq \delta,
    \end{align}
    using that $C \geq 2\sqrt{2\log (1/\delta)}$. 

    \textbf{Step 2.} 
    Now suppose that $k < \sqrt{p\log t}$. Without loss of generality, we may assume that only the first $k$ components of the mean vector undergo a change. By a deterministic lower bound, we have 
    \begin{align}
        \quad A^{(t)}_{g,s} &\geq \sum_{j=1}^k \left\{ C_g^{(t)}(j)^2/\sigma^2 - \nu_{a(s,t)} \right\} \\
        &+ \sum_{j=k+1}^p\left\{ C_g^{(t)}(j)^2/\sigma^2 - \nu_{a(s,t)} \right\} \ind\left\{  |Y_g^{(t)}(j)/\sigma| > a(s,t)\right\}\label{sumsplit}.
    \end{align}
    We lower bound each term separately, beginning with the first. Similar to above, we have $C_g^{(t)}(j) = \theta_g^{(t)}(j) + Z(j)$ for each $j \in [p]$, where the $Z(j)/\sigma$ are i.i.d. standard normals. It follows that $\sum_{j=1}^k C_g^{(t)}(j)^2/\sigma^2 \sim \chi_k^2(\psi)$, where $\psi = \sum_{j=1}^p \theta_g^{(t)}(j)^2 /\sigma^2 = g\tau^2 \{ t(t-g)\}^{-1} \phi^2\sigma^{-2}$ since only the first $k$ coordinates undergo a change in the mean.  
    Using Lemma \ref{palemma1} and Lemma \ref{birgelemma}, we obtain 
    \begin{align}
        \PP_{\tau}\left[ \sum_{j=1}^k \left\{ \frac{C_g^{(t)}(j)^2}{\sigma^2} - \nu_{a(s,t)} \right\} \leq \psi - 2\left\{ \log(2/\delta)(k+2\psi)\right\}^{1/2} - k\left\{ a^2(s,t)+2 \right\}\right] &\leq \frac{\delta}{2}.
    \end{align}
    By the definition of $s$ and $\mathcal{S}^{(t)}$, we have that $k/2 \leq s \leq k$. Since $k < \sqrt{p \log t}$, it follows from some simple algebra that $ka^2(s,t)\leq k a^2(k,t) \leq  4 z(k,p,t)$, and $k \leq z(k,p,t)$. Hence, 
    \begin{align}
        &\PP_{\tau}\Bigg [ \sum_{j=1}^k \left\{ \frac{C_g^{(t)}(j)^2}{\sigma^2} - \nu_{a(s,t)} \right\} \leq \psi - \left\{ 2 \sqrt{\log\left(\frac{2}{\delta}\right)} + 6 \right\} z(k,p,t) \\
        &- 2\sqrt{2 \log\left (\frac{2}{\delta}\right)}\psi^{1/2} \Bigg] \leq  \frac{\delta}{2}\label{uboundpart1},
    \end{align}
    using that $z(k,p,t)\geq 1$. For the second term in \eqref{sumsplit}, define 
    $$W(j) = \left\{ C_g^{(t)}(j)^2/\sigma^2 - \nu_{a(s,t)} \right\} \ind\left\{  \left|C_g^{(t)}(j)/\sigma\right| > a(s,t)\right\},$$ for $j = k+1,\ldots, p$. Then Lemma \ref{palemma2} implies that
    \begin{align}
        \PP_{\tau}\left[ \sum_{j=k+1}^p W(j) \leq -5 \log(2/\delta) - 5 \left\{ pe^{-a^2(s,t)/2} \log(2/\delta) \right\}^{1/2} \right] \leq \frac{\delta}{2}.
    \end{align}
    Inserting for $a(s,t)$ and using that $s \leq k \leq z(k,p,t) \wedge \sqrt{p\log t}$ for $k<\sqrt{p\log t}$, we obtain 
    \begin{align}
        \PP_{\tau}\left[ \sum_{j=k+1}^p W(j) \leq  - 5 z(k,p,t) \sqrt{\log\left(\frac{2}{\delta}\right)} - 5 \log\left (\frac{2}{\delta}\right) \right] \leq \frac{\delta}{2}\label{uboundpart2}.
    \end{align}
    Combining the lower bounds in equations \eqref{uboundpart1} and \eqref{uboundpart2} by a union bound, we obtain
    \begin{align}
        &\PP_{\tau} \Bigg[ A_{g,s}^{(t)} \leq \psi - \left\{ (2+5) \sqrt{\log\left(\frac{2}{\delta}\right)} + 6 + 5 \log\left(\frac{2}{\delta}\right) \right\} z(k,p,t)  \\
        & - 2\sqrt{2 \log\left(\frac{2}{\delta}\right)}\psi^{1/2}  \Bigg] \leq \delta.
    \end{align}
    By the definition of $C$, 
    it then follows that 
    \begin{align}
    \PP_{\tau} \left[  A_{g,s}^{(t)} - \lambda z(s,p,t)\leq \psi - (C+2\lambda) z(k,p,t)  - C \psi^{1/2}    \right] &\leq \delta,
    \end{align}
    where we used that $z(s,p,t) \leq 2 z(k,p,t)$ since $k/2 \leq s \leq k$. The proof is complete. 
\end{proof}

\begin{lemma}\label{covarianceeventbound1}
Let $p \in \NN$ and let $Y_1, Y_2, \ldots$ be $p$-dimensional random variables satisfying Assumption \ref{assmultivariate2} for some $w,u>0$. Assume that $\EE Y_i Y_i^\top = \Sigma_1$ for all $i \in \NN$ and some positive definite matrix $\Sigma_1 \in \RR^{p\times p}$. Given any $t \geq 2$ and $g \in [t-1]$, define
\begin{align}
    &\widehat{\Sigma}_{1,g}^{(t)} = (t-g)^{-1} \sum_{i=1}^{t-g} Y_i Y_i^\top, &\widehat{\Sigma}_{2,g}^{(t)} = g^{-1} \sum_{i=t-g+1}^t Y_{i} Y_i^\top,
\end{align}
as in \eqref{sigmahatsdef}, and define $\widehat{\sigma}_{g}^{(t)} = \lVert \widehat{\Sigma}_{1,g}^{(t)}\rVert_{\mathrm{op}}^{1/2}$.
For any $\delta\in (0,1)$, define the events
\begin{align}
    \mathcal{E}_3 &= \bigcap_{t=2}^{\infty} \bigcap_{ g \in[t-1] } \left\{ \widehat{\sigma}_{g}^{(t)} \geq \left( \normmop{\Sigma_1}  c_1\right)^{1/2}  \right\},\\
    \mathcal{E}_4 &= \bigcap_{t=2}^{\infty} \bigcap_{ g\in [t-1]} \left\{ \normmop{\widehat{\Sigma}_{1,g}^{(t)} - \widehat{\Sigma}_{2,g}^{(t)}}  \leq c_2 \normmop{\Sigma_1} \left( \frac{p \vee \log t}{g \wedge(t-g) } \vee \sqrt{\frac{p \vee \log t}{g \wedge(t-g)}}   \right) \right\},
\end{align}
where $c_1 = (2e\pi w^2)^{-1} \delta^2 (\delta+2)^{-2}$, $c_2 = 4c_0 \{3 + \log(4/\delta) / \log(2)\}$, and $c_0$ is the constant from Lemma \ref{moenminimaxlemma2} depending only on $u>0$. Then $\PP(\mathcal{E}_3 \cap \mathcal{E}_4) \geq 1 - \delta$. 

\end{lemma}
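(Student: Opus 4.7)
The plan is to bound $\PP(\mathcal{E}_3^{\mathrm{c}})$ and $\PP(\mathcal{E}_4^{\mathrm{c}})$ each by $\delta/2$ and conclude via a union bound. The argument for $\mathcal{E}_4$ is a standard concentration-plus-union-bound, while the argument for $\mathcal{E}_3$ requires an anti-concentration step that is the main obstacle because the relevant estimator has sample size as small as $n=1$.

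For $\mathcal{E}_4$, the approach is to write $\normmop{\widehat{\Sigma}_{1,g}^{(t)} - \widehat{\Sigma}_{2,g}^{(t)}} \leq \normmop{\widehat{\Sigma}_{1,g}^{(t)} - \Sigma_1} + \normmop{\widehat{\Sigma}_{2,g}^{(t)} - \Sigma_1}$ by the triangle inequality, and to control each term separately via the sub-Gaussian sample covariance concentration inequality provided by Lemma \ref{moenminimaxlemma2}, which is applicable under Assumption \ref{assmultivariate2-c}. The two estimators are averages of iid rank-one outer products computed from $2^{\lfloor \log_2 g \rfloor}$ and $g$ samples, which differ by at most a factor of two. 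Choosing the deviation parameter in that concentration inequality as $x = 3\log t + \log(4/\delta)$ and using $\log 2 \leq \log t$ for $t \geq 2$, one sees that $p + x \leq (4 + \log(4/\delta)/\log 2)(p \vee \log t)$, so that the bound on each term has the desired form $(c_2/2) \normmop{\Sigma_1}\{(p\vee \log t)/g \vee \sqrt{(p\vee \log t)/g}\}$, with failure probability at most $\delta/(4t^3)$ at each $(t,g)$. A double union bound over $g \in \{1,\ldots,\lfloor t/2 \rfloor\}$ and $t \geq 2$ then yields $\PP(\mathcal{E}_4^{\mathrm{c}}) \leq \delta/2$ thanks to $\sum_{t\geq 2} t^{-2} \leq 1$.

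For $\mathcal{E}_3$, the key observation is that $\widehat{\Sigma}_{1,g}^{(t)}$ depends on $(t,g)$ only through $n := 2^{\lfloor \log_2 g\rfloor}$, so the infinite intersection in $\mathcal{E}_3$ collapses to one event per $n \in \{1,2,4,8,\ldots\}$. Letting $v_1$ be a top unit eigenvector of $\Sigma_1$ and $Z_i := v_1^\top Y_i / \normmop{\Sigma_1}^{1/2}$, Assumption \ref{assmultivariate2-b} together with $\EE(v_1^\top Y_i)^2 = \normmop{\Sigma_1}$ gives that each $Z_i$ admits a density bounded by $w$, and by independence the joint density of $(Z_1,\ldots,Z_n)$ is bounded by $w^n$. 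The lower bound $\normmop{\widehat{\Sigma}_{1,g}^{(t)}} \geq v_1^\top \widehat{\Sigma}_{1,g}^{(t)} v_1 = (\normmop{\Sigma_1}/n)\sum_{i=1}^n Z_i^2$ reduces the event $\{\widehat{\sigma}_g^{(t)} < \sqrt{c_1 \normmop{\Sigma_1}}\}$ to the event $\{(Z_1,\ldots,Z_n) \in B_n(\sqrt{c_1 n})\}$, where $B_n(r)$ is the Euclidean ball of radius $r$ in $\RR^n$. Bounding this probability by $w^n \cdot \mathrm{Vol}(B_n(\sqrt{c_1 n})) = w^n \pi^{n/2}(c_1 n)^{n/2}/\Gamma(n/2+1)$ and invoking the Stirling-type bound $\Gamma(n/2+1) \geq (n/(2e))^{n/2}$ yields $\PP(\sum_{i=1}^n Z_i^2 < c_1 n) \leq (2e\pi w^2 c_1)^{n/2}$. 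With the chosen $c_1 = (2e\pi w^2)^{-1}\delta^2/(\delta+2)^2$, this is exactly $[\delta/(\delta+2)]^n$, and summing over $n \in \{1,2,4,\ldots\}$ gives a geometric-type series bounded by $[\delta/(\delta+2)]/[1-\delta/(\delta+2)] = \delta/2$.

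The main obstacle is the anti-concentration bound for $\mathcal{E}_3$ at small $n$, most acutely at $n=1$, where the usual concentration-of-measure tools provide no leverage and one must instead appeal directly to the density bound in Assumption \ref{assmultivariate2-b} via a volume-of-ball computation. The delicate point is choosing $c_1$ so that the resulting single-$n$ probability bound $(2e\pi w^2 c_1)^{n/2}$ is summable over the dyadic sequence $n \in \{1,2,4,\ldots\}$; once the observation is made that $\widehat{\Sigma}_{1,g}^{(t)}$ truly depends only on this dyadic size $n$, the otherwise intractable union bound over infinite $(t,g)$ becomes a single geometric series that closes with the precise constant $c_1$.
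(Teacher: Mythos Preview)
Your proposal is correct and follows the same route as the paper: for $\mathcal{E}_3$, project onto the leading eigenvector of $\Sigma_1$ and use the small-ball bound (your volume computation is precisely Lemma~\ref{moenminimaxlemma1}), then sum a geometric series; for $\mathcal{E}_4$, split via the triangle inequality, apply Lemma~\ref{moenminimaxlemma2} with $s=p$, and union-bound over $(t,g)$. The paper sums the $\mathcal{E}_3$ bound over \emph{all} sample sizes $n\ge 1$ rather than just the dyadic ones you identify, but this only makes its bound marginally looser and both arrive at $r/(1-r)=\delta/2$ with $r=\delta/(\delta+2)$.

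One small caveat on constants for $\mathcal{E}_4$: your intermediate inequality $p+x\le\bigl(4+\log(4/\delta)/\log 2\bigr)(p\vee\log t)$ is a hair too loose to recover the stated $c_2$ once you absorb the factor two from $2^{\lfloor\log_2 g\rfloor}\ge g/2$ in the $\widehat{\Sigma}_{1,g}^{(t)}$ term. If instead you use $p\vee x$ (which is all Lemma~\ref{moenminimaxlemma2} requires) and note that $p\vee x\le\bigl(3+\log(4/\delta)/\log 2\bigr)(p\vee\log t)=(c_2/(4c_0))(p\vee\log t)$ for $t\ge 2$, the arithmetic closes exactly; this is equivalent to the paper's implicit choice $x=(c_2/(4c_0))(p\vee\log t)$.
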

\begin{proof}[Proof.]
 We first show that  $\PP(\mathcal{E}_3^\complement)\leq \delta/2$. Due to the definition of $\widehat{\sigma}_g^{(t)}$, we have
 \begin{align}
     \PP\left(\mathcal{E}_3^\complement\right) &\leq \PP \left( \bigcup_{t=2}^{\infty} \bigcup_{g = 1, \ldots, \lfloor t/2 \rfloor} \left\{   \normmop{\widehat{\Sigma}_{1,g}^{(t)}} \leq c_1 \normmop{\Sigma_1} \right\}\right) \\
     &\leq \PP \left( \bigcup_{t=1}^{\infty} \left\{   \normmop{t^{-1} \sum_{i=1}^{t} Y_i Y_i^\top} \leq c_1 \normmop{\Sigma_1} \right\}\right)\\
     &\leq \sum_{t=1}^{\infty}  \PP \left(   \normmop{t^{-1} \sum_{i=1}^{t} Y_i Y_i^\top} \leq c_1 \normmop{\Sigma_1} \right)
 \end{align}

Now, let $v_1 \in \mathbb{S}^{p-1}$ be a unit vector satisfying $v_1^\top \Sigma v_1 = \normmop{\Sigma_1}$\footnote{Note that such a $v$ always exists, since the unit sphere in $\RR^p$ is compact.}. For any $t \geq 2$ we have
\begin{align}
    \normmop{t^{-1} \sum_{i=1}^{t} Y_i Y_i^\top} &= t^{-1} \underset{v \in \mathbb{S}^{p-1}}{\sup} \ \sum_{i=1}^t (v^\top Y_i)^2\\
    &\geq  t^{-1} \sum_{i=1}^t (v_1^\top Y_i)^2. 
\end{align}
Due to Assumption \ref{assmultivariate2}, Lemma \ref{moenminimaxlemma1} yields that 
\begin{align}
     \PP \left(\mathcal{E}_3^\complement\right) &\leq  \sum_{t = 1}^{\infty} \PP \left(   \sum_{i=1}^t (v_1^\top Y_i)^2    \leq  t c_1 v^\top \Sigma v \right)\\
    &\leq   \sum_{t=1}^{\infty} \exp \left\{ \frac{t}{2}\log(2e\pi w^2 c_1)  \right\}\\
    &= \frac{(2e\pi w^2 c_2)^{1/2}}{ 1 - (2e\pi w^2 c_1)^{1/2}}\\
    &= \frac{\delta}{2},
\end{align}
where the two last equalities used the definition of $c_1$. Next we will show that $\PP(\mathcal{E}_4^{\complement} ) \leq \delta/2$. Since 
$\lVert \widehat{\Sigma}_{1,g}^{(t)} - \widehat{\Sigma}_{2,g}^{(t)}\rVert \leq \lVert \widehat{\Sigma}_{1,g}^{(t)} - \Sigma_1 \rVert + \lVert \widehat{\Sigma}_{2,g}^{(t)} - \Sigma_1 \rVert$, by symmetry we have
\begin{align}
    \PP\left(\mathcal{E}_4^{\complement} \right) &\leq 2 \PP \left[ \bigcup_{t=2}^{\infty} \bigcup_{ g \in [t-1]} \left \{ \normmop{\widehat{\Sigma}_{2,g}^{(t)} - \Sigma_1}  \geq \frac{c_2}{2} \normmop{\Sigma_1} \left( \frac{p \vee \log t}{g} \vee \sqrt{\frac{p \vee \log t}{g}}   \right) \right\}\right]\\
    &\leq 2\sum_{t=2}^{\infty} \sum_{g \in [t-1]} \PP \left\{ \normmop{g^{-1}\sum_{i=t-g+1}^t Y_i Y_i^\top - \Sigma_1}  \geq \frac{c_2}{2} \normmop{\Sigma_1} \left( \frac{p \vee \log t}{g} \vee \sqrt{\frac{p \vee \log t}{g}}   \right) \right\}
\end{align}
Due to Lemma \ref{moenminimaxlemma2}, we have\footnote{\label{note1}Lemma \ref{moenminimaxlemma2} can be applied here since $\normmop{A} = \lambda_{\max}^p(A)$ for any symmetric matrix $A$, where $\lambda_{\max}^s(A) = \sup_{v \in \mathbb{S}^{p-1}_s} \ | v^\top A v |$ for any $s\in [p]$, as in \eqref{firsteigsparse}.}
\begin{align}
    \PP\left(\mathcal{E}_4^{\complement} \right) &\leq 2 \sum_{t=2}^{\infty} \sum_{g = 1}^{t-1} \exp\left\{ - \frac{c_2}{2c_0} \left( p \vee \log t\right)\right\}\\
    &\leq 2 \sum_{t=2}^{\infty} \sum_{g = 1}^{t-1} \exp\left\{ - \frac{c_2}{4c_0} \left( p \vee \log t\right)\right\}\\
    &\leq 2 \sum_{t=2}^{\infty} t^{1 - c_2/(4c_0)}, 
\end{align}
where $c_0$ is the constant from that Lemma. Now, since $1 - c_2/(4c_0) = -2 - \log (4/\delta) /\log 2 \leq -2 - \log (4/\delta) /\log t$, we have that $t^{1 - c_2/(4c_0)} \leq \delta/(4t^2)$ for all $t \geq 2$. Summing the series, we obtain that $\PP (\mathcal{E}_4^{\complement}) \leq \delta/2$, and we are done. 
\end{proof}

\begin{lemma}\label{covarianceeventbound2}
Let $p\in \NN$, $\tau \in \NN$, $\tau < t\leq 2\tau$ and let $Y_1, \ldots, Y_{t}$ be $p$-dimensional random variables satisfying Assumption \ref{assmultivariate2} for some $w,u>0$. Assume further that $\EE Y_i Y_i^\top = \Sigma_1$ for all $i \leq \tau$ and $\EE Y_i Y_i^\top = \Sigma_2$ for $i>\tau$, for some positive definite matrices $\Sigma_1, \Sigma_2 \in \RR^{p\times p}$. Given any fixed $g \leq t - \tau$ and $\delta \in (0,1)$, define
define the event
\begin{align}
    \mathcal{E}_5 =&  \left\{ \normmop{\frac{1}{\tau}\sum_{i=1}^{\tau} Y_i Y_i^\top - \Sigma_1} \geq \frac{c_2}{3} \normmop{\Sigma_1} \left( \frac{p\vee \log t}{\tau } \vee \sqrt{\frac{p\vee\log t}{\tau}}\right)\right\} \\
    &\bigcup \left\{ \normmop{\frac{1}{t-g-\tau}\sum_{i=\tau+1}^{t-g} Y_i Y_i^\top - \Sigma_2} \geq \frac{c_2}{3} \normmop{\Sigma_2} \left( \frac{p\vee \log t}{t-g-\tau } \vee \sqrt{\frac{p\vee\log t}{t-g-\tau}}\right)\right\}\\
    &\bigcup \left\{ \normmop{\frac{1}{g}\sum_{i=t-g+1}^{t} Y_i Y_i^\top - \Sigma_2} \geq \frac{c_2}{3} \normmop{\Sigma_2} \left( \frac{p\vee \log t}{g} \vee \sqrt{\frac{p\vee\log t}{g}}\right)\right\},
    \end{align}
where $c_2 = 4c_0 \{3 + \log(4/\delta) / \log(2)\}$, and $c_0$ is the constant from Lemma \ref{moenminimaxlemma2} depending only on $u>0$, and the second set is taken as the empty set if $t-g=\tau$. Then we have $\PP(\mathcal{E}_5) \leq\delta$. 

\end{lemma}
\begin{proof}[Proof.]
The proof is similar to that of Lemma \ref{covarianceeventbound1}. By symmetry, and a union bound, we have that
\begin{align}
    \PP\left(\mathcal{E}_5^{\complement} \right) &\leq 3 \PP  \left (\normmop{\frac{1}{\tau}\sum_{i=1}^{\tau}Y_i Y_i^\top - \Sigma_1}  \geq \frac{c_2}{3} \normmop{\Sigma_1} \left( \frac{p \vee \log t}{\tau} \vee \sqrt{\frac{p \vee \log t}{\tau}}   \right) \right).
\end{align}
Due to Lemma \ref{moenminimaxlemma2}, we have\textsuperscript{\ref{note1}}
\begin{align}
    \PP\left(\mathcal{E}_5^{\complement} \right) &\leq 3  \exp\left\{ - \frac{c_2}{3c_0} \left( p \vee \log t\right)\right\}\\
    \PP\left(\mathcal{E}_5^{\complement} \right) &<  3  \exp\left\{ - \frac{c_2}{4c_0} \left( p \vee \log t\right)\right\}\\
    &\leq \frac{3}{4} \delta \\
    &< \delta, 
\end{align}
by the definition of $c_2$, using a similar argument as in the proof of Lemma \ref{covarianceeventbound1}.
\end{proof}

\begin{lemma}\label{covarianceeventbound3}
Let $p\in \NN$ and let $Y_1, Y_2, \ldots$ be $p$-dimensional random variables satisfying Assumption \ref{assmultivariate2} for some $w,u>0$. Assume that $\EE Y_i Y_i^\top = \Sigma_1$ for all $i \in \NN$ and some positive definite matrix $\Sigma_1 \in \RR^{p\times p}$. For any $t\geq 2$ and $g \in [t-1]$, define
\begin{align}
    &\widetilde{\Sigma}_{1,g}^{(t)} = 2^{- \lfloor \log_2 g \rfloor} \sum_{i=1}^{2^{ \lfloor \log_2 g \rfloor}} Y_i Y_i^\top, &\widetilde{\Sigma}_{2,g}^{(t)} = g^{-1} \sum_{i=t-g+1}^t Y_{i} Y_i^\top,
\end{align}
as in \eqref{sigmatilde}, and define $\widehat{\sigma}^{(t)}_{g} = \widetilde{\lambda}_{\max}^1(\widehat{\Sigma}_{1,g}^{(t)})^{1/2}$, where $\widehat{\lambda}_{\max}^s(\cdot)$ is defined in \eqref{conveigdef}.
Let $h(p,t,g,s)$ be defined as in \eqref{hdef}, and for any $\delta\in (0,1)$, define the events
\begin{align}
    \mathcal{E}_6 &= \bigcap_{t=2}^{\infty}  \bigcap_{ g = 1}^{ \lfloor t/2 \rfloor} \left\{ \left(\widehat{\sigma}_{g}^{(t)}\right)^2 \geq c_3 \lambda_{\max}^1({\Sigma_1})   \right\},\\
    \mathcal{E}_7 &= \bigcap_{t=2}^{\infty} \bigcap_{s \in \mathcal{S}} \bigcap_{ g = 1}^{ \lfloor t/2 \rfloor} \left\{ \widehat{\lambda}_{\max}^s ({\widetilde{\Sigma}_{1,g}^{(t)} - \widetilde{\Sigma}_{2,g}^{(t)}})  \leq c_4 \lambda_{\max}^1({\Sigma_1}) h(p,t,g,s) \right\},
\end{align}
where $\lambda_{\max}^s(\cdot)$ is defined in \eqref{firsteigsparse}, $\mathcal{S}$ is given in \eqref{mathcals}, $c_3 = (2e\pi w^2)^{-1}\delta^2 (\delta + 2)^{-2}$, $c_4 = (c_0 / 2)\{ 3 + \log_2(8/\delta) / \log 2\}$, and $c_0$ is the constant from Lemma \ref{moenminimaxlemma2} depending only on $u>0$. Then $\PP(\mathcal{E}_6 \cap \mathcal{E}_7) \geq 1 - \delta$. 
\end{lemma}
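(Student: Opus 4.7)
The plan is to prove $\PP(\mathcal{E}_6^{\complement}) \leq \delta/2$ and $\PP(\mathcal{E}_7^{\complement}) \leq \delta/2$ separately, then combine by a union bound. The overall architecture mirrors the proof of Lemma \ref{covarianceeventbound1}, with two new twists: the operator norm is replaced by the convex relaxations $\widehat{\lambda}_{\max}^s(\cdot)$, and an extra union bound over the sparsity grid $\mathcal{S}$ must be absorbed.

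For $\mathcal{E}_6$, the first observation is that for any symmetric matrix $A$ one has $\widehat{\lambda}_{\max}^1(A) = \max_j |A_{jj}|$, because the constraints $Z \succcurlyeq 0$, $\Tr(Z) = 1$, $\normm{Z}_1 \leq 1$ force $Z$ to be diagonal with nonnegative unit-trace entries. Consequently $(\widehat{\sigma}_g^{(t)})^2 = \max_j (\widehat{\Sigma}_{1,g}^{(t)})_{jj}$ and $\lambda_{\max}^1(\Sigma_1) = \max_j (\Sigma_1)_{jj}$. Picking $j^{\ast} = \arg\max_j (\Sigma_1)_{jj}$ gives $(\widehat{\sigma}_g^{(t)})^2 \geq n^{-1}\sum_{i=1}^n (e_{j^{\ast}}^{\top} Y_i)^2$, where $n = 2^{\lfloor \log_2 g \rfloor}$. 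Since this sum depends only on the first $n$ observations and $n$ takes values in $\{1,2,4,\ldots\}$ as $(t,g)$ vary, I would rewrite $\mathcal{E}_6^{\complement}$ as a union over $n \in \{2^k : k \geq 0\}$, apply Lemma \ref{moenminimaxlemma1} with $v = e_{j^\ast}$ (which invokes Assumption \ref{assmultivariate2-b}) to bound each term by $\exp\{(n/2)\log(2e\pi w^2 c_3)\}$, and sum. With $c_3 = (2e\pi w^2)^{-1}\delta^2(\delta+2)^{-2}$, the resulting series (now over powers of two rather than all integers, so it converges faster than in Lemma \ref{covarianceeventbound1}) is at most $\delta/2$.

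For $\mathcal{E}_7$, I would use that $\widehat{\lambda}_{\max}^s(\cdot)$ satisfies the triangle inequality, since it is the support function of the symmetric convex body $\{Z - Z' : Z, Z' \in N(p,s)\}$. This yields
\begin{align}
\widehat{\lambda}_{\max}^s(\widehat{\Sigma}_{1,g}^{(t)} - \widehat{\Sigma}_{2,g}^{(t)}) \leq \widehat{\lambda}_{\max}^s(\widehat{\Sigma}_{1,g}^{(t)} - \Sigma_1) + \widehat{\lambda}_{\max}^s(\widehat{\Sigma}_{2,g}^{(t)} - \Sigma_1).
\end{align}
Under Assumption \ref{assmultivariate2-c}, Lemma \ref{moenminimaxlemma2} controls each right-hand term by $(c_4/2)\lambda_{\max}^1(\Sigma_1)\,h(p,t,g,s)$ (for the first term, with the effective sample size $2^{\lfloor \log_2 g \rfloor} \geq g/2$ costing only a factor of $\sqrt{2}$ absorbed into $c_4$) with exponentially small failure probability. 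A union bound over $t \geq 2$, $g \in [\lfloor t/2 \rfloor]$, and $s \in \mathcal{S}$ (the last having cardinality $\mathcal{O}(\log p)$) then yields a series of the form $\sum_{t} t\,|\mathcal{S}|\exp\{-(c_4/c_0)(p \vee \log t)\}$; the choice $c_4 = (c_0/2)\{3 + \log_2(8/\delta)/\log 2\}$ is calibrated so that this series is at most $\delta/2$, by the same $t^{-2}$-dominant computation used at the end of the proof of Lemma \ref{covarianceeventbound1}.

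The only genuinely delicate point I anticipate is constant-tracking: verifying that the chosen $c_4$ simultaneously controls the triple union bound over $(t, g, s)$ with the exponent $-c_4/c_0$ leaving sufficient slack after the triangle inequality splits the bound in half. Everything else is bookkeeping, since Assumptions \ref{assmultivariate2-b} and \ref{assmultivariate2-c} feed directly into Lemmas \ref{moenminimaxlemma1} and \ref{moenminimaxlemma2}, respectively, and the structural arguments (diagonal characterization of $\widehat{\lambda}_{\max}^1$, triangle inequality for $\widehat{\lambda}_{\max}^s$) are straightforward.
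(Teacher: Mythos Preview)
Your treatment of $\mathcal{E}_6$ is fine and essentially matches the paper (the paper simply uses the relaxation inequality $\widehat{\lambda}_{\max}^1 \geq \lambda_{\max}^1$ rather than identifying $\widehat{\lambda}_{\max}^1$ with the maximal diagonal entry, and it sums over all $t\in\NN$ rather than only powers of two, but this is cosmetic).

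The real problem is in your argument for $\mathcal{E}_7$. Lemma~\ref{moenminimaxlemma2} bounds $\lambda_{\max}^s(\widehat{\Sigma}-\Sigma)$, \emph{not} the convex relaxation $\widehat{\lambda}_{\max}^s(\widehat{\Sigma}-\Sigma)$. Since $\widehat{\lambda}_{\max}^s \geq \lambda_{\max}^s$ (it is a relaxation), an upper bound on $\lambda_{\max}^s$ does not transfer to $\widehat{\lambda}_{\max}^s$; the inequality runs the wrong way. Your triangle-inequality step is correct, but the subsequent appeal to Lemma~\ref{moenminimaxlemma2} fails to control the quantity you need. This also shows up in two secondary mismatches: Lemma~\ref{moenminimaxlemma2} would give a prefactor $\lambda_{\max}^s(\Sigma_1)$ (or $\normmop{\Sigma_1}$), not the required $\lambda_{\max}^1(\Sigma_1)$, and its rate involves $s\log(ep/s)$ rather than the $s\cdot\log(p\vee t)$ structure of $h(p,t,g,s)$; your claimed exponent $-(c_4/c_0)(p\vee\log t)$ is the one from Lemma~\ref{covarianceeventbound1}, not the one relevant here.

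The paper's fix is to abandon Lemma~\ref{moenminimaxlemma2} entirely for this step and instead use Lemma~\ref{moenminimaxlemma11} with $Y=-A$ to obtain the deterministic bound $\widehat{\lambda}_{\max}^s(A)\leq s\normm{A}_\infty$. One then controls each of the $p^2$ entries of $\widehat{\Sigma}_{i,g}^{(t)}-\Sigma_1$ by Bernstein's inequality, using $\normmotwo{Y_l(j)Y_l(k)}\leq \normmo{Y_l(j)}\normmo{Y_l(k)}\leq u\,\lambda_{\max}^1(\Sigma_1)$ from Assumption~\ref{assmultivariate2-c}. This is precisely what produces the $\lambda_{\max}^1(\Sigma_1)$ prefactor, the linear factor $s$ in $h(p,t,g,s)$, and the $\log(p\vee t)$ in the exponent (the $p^2$ entry-wise union bound is absorbed by the $p$-part of $\log(p\vee t)$). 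The union bound over $(t,g,s,j,k)$ then telescopes to $\delta/2$ with the stated $c_4$.
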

\begin{proof}[Proof.]
 We first show that  $\PP(\mathcal{E}_6^\complement)\leq \delta/2$. Since $\widehat{\lambda}_{\max}^s(\cdot)$ is a convex relaxation of the implicit optimization problem in the definition of $\lambda_{\max}^s(\cdot)$, we have
 \begin{align}
     \widehat{\sigma}_{g}^{(t)} &= \widehat{\lambda}_{\max}^1(\widetilde{\Sigma}_{1,g}^{(t)})^{1/2}\\
     &\geq \lambda_{\max}^1(\widetilde{\Sigma}_{1,g}^{(t)})^{1/2}.
 \end{align}
From a union bound, it follows that 
 \begin{align}
     \PP\left(\mathcal{E}_6^\complement\right) &\leq \PP \left[ \bigcup_{t=1}^{\infty} \bigcup_{g=1}^{\lfloor t/2\rfloor} \left\{    \lambda_{\max}^1(\widetilde{\Sigma}_{1,g}^{(t)}) \leq c_3  \lambda_{\max}^1(\Sigma_1) \right\}\right] \\
     &= \PP \left[ \bigcup_{t=1}^{\infty} \left\{   \lambda_{\max}^1 \left( {t^{-1} \sum_{i=1}^{t} Y_i Y_i^\top} \right) \leq c_3 \lambda_{\max}^1(\Sigma_1) \right\}\right]\\
     &\leq \sum_{t=1}^{\infty} \PP \left\{  \lambda_{\max}^1 \left( {t^{-1} \sum_{i=1}^{t} Y_i Y_i^\top} \right) \leq c_3 \lambda_{\max}^1(\Sigma_1) \right\}.
 \end{align}

Now let $v_1 \in \mathbb{S}^{p-1}_1$ be a vector in the unit sphere with one non-zero entry satisfying $v_1^\top \Sigma_1 v_1 = \lambda_{\max}^1(\Sigma_1)$. For any fixed $t\in \NN$, we then have
\begin{align}
    \lambda_{\max}^1 \left( {t^{-1} \sum_{i=1}^{t} Y_i Y_i^\top}\right) &= \underset{v \in \mathbb{S}^{p-1}_1}{\sup} \ \sum_{i=1}^t (v^\top Y_i)^2\\
    &\geq  \sum_{i=1}^t (v_1^\top Y_i)^2. 
\end{align}
Due to Assumption \ref{assmultivariate2}, Lemma \ref{moenminimaxlemma1} implies that 
\begin{align}
    \PP \left\{  \lambda_{\max}^1 \left( {t^{-1} \sum_{i=1}^{t} Y_i Y_i^\top} \right) \leq c_3 \lambda_{\max}^1(\Sigma_1) \right\} &\leq \PP \left(   \sum_{i=1}^t (v_1^\top Y_i)^2    \leq  t c_3 v_1^\top \Sigma v_1 \right) \\
    &\leq \exp \left\{ - \frac{t}{2}\log\left(\frac{1}{2e\pi w^2 c_3}\right)  \right\}.
\end{align}
We thus have that
\begin{align}
     \PP \left(\mathcal{E}_6^\complement\right)  &\leq \sum_{t=1}^{\infty} \exp \left\{ - \frac{t}{2}\log\left(\frac{1}{2e\pi w^2 c_3}\right)  \right\} \\
     &= \sum_{t=1}^{\infty} (\sqrt{2e\pi w^2 c_3})^t. 
\end{align}
Now, since $2e\pi w^2 c_3<1$ we thus have
\begin{align}
    \PP \left(\mathcal{E}_6^\complement\right)  &\leq \frac{\sqrt{2e\pi w^2 c_3}}{1 - \sqrt{2e\pi w^2 c_3}},
\end{align}
and due to the definition of $c_3$ we have $\PP(\mathcal{E}_6^\complement)\leq \delta/2$.

Next we will show that $\PP(\mathcal{E}_7^{\complement} ) \leq \delta/2$. For any matrix $A = (a_{i,j})_{i\in [k_1], j \in [k_2]}$, let $\normm{A}_{\infty} = \max_{i \in [k_1], j \in [k_2]} |a_{i,j}|$ denote the largest absolute entry of $A$. 
Then for any $t \geq 2$, $g \leq t/2$ and $s \in \mathcal{S}$,  
Lemma \ref{moenminimaxlemma11} implies that\footnote{In the Lemma, set $A = {\widetilde{\Sigma}_{1,g}^{(t)} - \widetilde{\Sigma}_{2,g}^{(t)}}$ and $Y = {\widetilde{\Sigma}_{2,g}^{(t)} - \widetilde{\Sigma}_{1,g}^{(t)}}$.}
\begin{align}
    \widehat{\lambda}_{\max}^s ( {\widetilde{\Sigma}_{1,g}^{(t)} - \widetilde{\Sigma}_{2,g}^{(t)}} ) &\leq s\normm{{\widetilde{\Sigma}_{1,g}^{(t)} - \widetilde{\Sigma}_{2,g}^{(t)}}}_{\infty}\\
    &\leq  s\normm{{\widetilde{\Sigma}_{1,g}^{(t)} - \Sigma_1}}_{\infty} + s\normm{{\widetilde{\Sigma}_{2,g}^{(t)} - \Sigma_1}}_{\infty}.
\end{align}
Now, let $Y_{i,t,g,j,k}$ denote the $(j,k)$-th element of $\widetilde{\Sigma}_{i,g}^{(t)} - \Sigma_1$, for $i = 1,2$ and $j,k \in [p]$, so that  
\begin{align}
    Y_{1,t,g, j,k} &= {2^{-\lfloor \log_2 g  \rfloor}} \sum_{l=1}^{2^{\lfloor \log_2 g  \rfloor}} \left\{ Y_l(j) Y_l(k) - \EE Y_l(j)Y_l(k)\right\},\\
    Y_{2,t,g, j,k} &= \frac{1}{g} \sum_{l=t-g+1}^{t} \left\{ Y_l(j) Y_l(k) - \EE Y_l(j)Y_l(k)\right\},\\
\end{align}
By symmetry and a union bound, we have that
\begin{align}
    \PP(\mathcal{E}_{7}^\complement) &\leq \sum_{t=2}^{\infty} \sum_{s \in \mathcal{S}} \sum_{g = 1}^{ \lfloor t/2 \rfloor}\PP \left\{ \lambda_{\max}^s ({\widetilde{\Sigma}_{1,g}^{(t)} - \widetilde{\Sigma}_{2,g}^{(t)}})  > c_4 \lambda_{\max}^1({\Sigma_1}) h(p,t,g,s) \right\} \\
    \leq & 2 \sum_{t=2}^{\infty}  \sum_{s \in \mathcal{S}} \sum_{g =1}^{\lfloor t/2 \rfloor} \sum_{j,k \in [p]}  \PP \left\{ | Y_{1,t,g,j,k}| > \frac{c_4}{2s} \lambda_{\max}^1(\Sigma_1) h(p,t,g,s)    \right\}. \label{tihi123}
\end{align}
Due to Lemma 2.7.7 in \citet{Vershynin_2018} and Assumption \ref{assmultivariate2}, we have that
\begin{align}
    \normm{Y_l(j) Y_l(k)}_{\Psi_1} &\leq \normmo{Y_l(j)} \normmo{Y_l(k)}\\
    &\leq u \lambda_{\max}^1(\Sigma_1),
\end{align}
for all $i \in \NN$, $s \in [p]$ and $(j,k) \in [p]\times [p]$, where we recall that $\normm{\cdot}_{\Psi_1}$ denotes the sub-exponential norm of a univariate random variable, defined by $\normm{X}_{\Psi_1} = \inf\{x > 0: \EE \exp(|X|/x) \leq 2\}$.  

For any $x>0$, Bernstein's Inequality \citep[Theorem 2.8.1 in][]{Vershynin_2018} therefore implies that
    \begin{align}
        &\PP \left[ | Y_{1,t,g,j,k}| \geq 2 c u \lambda_{\max}^1 (\Sigma) \left( \sqrt{\frac{x}{g}} \vee \frac{x}{g}\right) \right]\\
        &\leq \PP \left[ | Y_{1,t,g,j,k}| \geq  c u \lambda_{\max}^1 (\Sigma) \left( \sqrt{\frac{x}{2^{\lfloor \log_2 g \rfloor}}} \vee \frac{x}{2^{\lfloor \log_2 g \rfloor}}\right) \right] \\
        &\leq 2e^{-x},
    \end{align}
for any $x>0$ and some absolute constant $c\geq 1$, where we used that $2^{\lfloor \log_2 g \rfloor} \geq g/2$.  Taking $c_0 = 16cu$ (the same constant as in Lemma \ref{moenminimaxlemma2}, for convenience) and  $x = 4 c_4 c_0^{-1} \log (p \vee t)$, we obtain
    \begin{align}
         &\PP \left\{ | Y_{1,t,j,k}| > \frac{c_4}{2s} \lambda_{\max}^1(\Sigma) h(p,t,g,s)    \right\}  \\ =& \PP \left[| Y_{1,t,j,k}| > \frac{c_4}{2} \lambda_{\max}^1(\Sigma) \left\{ \frac{\log (p\vee t)}{g} \vee  \sqrt{\frac{\log (p\vee t)}{g}} \right\}  \right] \\ 
         \leq & \PP \left[ | Y_{1,t,g,j,k}| \geq 2 c u \lambda_{\max}^1 (\Sigma) \left( \sqrt{\frac{x}{g}} \vee \frac{x}{g}\right) \right]\\
         \leq &2e^{-x}\\
         = & 2 \exp\left(- \frac{4c_4}{c_0}   \log(p \vee t) \right)\\
         \leq & 2 \exp\left\{- \frac{2c_4}{c_0}  \left(\log p  + \log t\right)\right\},
    \end{align}
using that $\log (p \vee t)\geq (1/2)\log p + (1/2) \log t$. 

Inserting into \eqref{tihi123}, we obtain
\begin{align}
     \PP(\mathcal{E}_{7}^\complement)  &\leq 2 \sum_{t=2}^{\infty}  \sum_{s \in \mathcal{S}} \sum_{g =1}^{\lfloor t/2 \rfloor} \sum_{j,k \in [p]}  \PP \left\{ | Y_{1,t,g,j,k}| > \frac{c_4}{2s} \lambda_{\max}^1(\Sigma_1) h(p,t,g,s)    \right\}\\
     &\leq 2 \sum_{t=2}^{\infty}  \sum_{s \in \mathcal{S}} \sum_{g =1}^{\lfloor t/2 \rfloor} \sum_{j,k \in [p]}  2 \exp\left\{- \frac{2c_4}{c_0}  \left( \log p  + \log t\right)\right\}\\
     &\leq  4 |\mathcal{S}| \sum_{t=2}^{\infty} |\mathcal{G}^{(t)}|   p^2 t^{- 2c_4 /c_0} p^{-2 c_4/c_0}\\
     &\leq 4 |\mathcal{S}|\sum_{t=2}^{\infty} t p^2 t^{- 2c_4 /c_0} p^{-2 c_4/c_0}\\
     &\leq 4 \frac{|\mathcal{S}|}{p}\sum_{t=2}^{\infty} t^{1  - 2c_4 /c_0} \\
     &\leq 4 \sum_{t=2}^{\infty} t^{-2} t^{3  - 2c_4 /c_0} \\
     &\leq 4 \cdot 2^{3  - 2c_4 /c_0} \sum_{t=2}^{\infty} t^{-2}  \\
     &\leq 4 \cdot  2^{3  - 2c_4 /c_0} \\
     &\leq \delta/2, 
\end{align}
where we in the fifth inequality used that $p^2 p ^{-2c_4 / c_0} \leq p^{-1}$ since $c_4 \geq (3/2)c_0$, in the seventh inequality used that $3 - 2c_4/c_0 <0$, and in the last inequality used the definition of $c_4$. The proof is complete. 
\end{proof}

\begin{lemma}\label{covarianceeventbound4}

Let $p \in \NN$, $\tau \in \NN$, $s \in [p]$, $\tau <t\leq 2\tau$ and let $Y_1, \ldots, Y_{t}$ be $p$-dimensional random variables satisfying Assumption \ref{assmultivariate2} for some $w,u>0$. Assume that $\EE Y_i Y_i^\top = \Sigma_1$ for all $i \leq \tau$ and $\EE Y_i Y_i^\top = \Sigma_2$ for $i>\tau$, for some positive definite matrices $\Sigma_1, \Sigma_2 \in \RR^{p\times p}$. Given any fixed $g \leq t - \tau$, define
\begin{align}
    &\widehat{\Sigma}_{1,g}^{(t)} = 2^{- \lfloor \log_2 g \rfloor} \sum_{i=1}^{2^{ \lfloor \log_2 g \rfloor}} Y_i Y_i^\top, &\widehat{\Sigma}_{2,g}^{(t)} = g^{-1} \sum_{i=t-g+1}^t Y_{i} Y_i^\top.
\end{align}
Let $h(p,t,g,s)$ be defined as in \eqref{hdef}, and for any $\delta\in (0,1)$, define the events
\begin{align}
    \mathcal{E}_8 &=  \bigcap_{i=1,2} \left\{ \widehat{\lambda}_{\max}^{s} (\widehat{\Sigma}_{i,g}^{(t)} - \Sigma_i)  \leq \frac{c_4}{2} \lambda_{\max}^{1}({\Sigma_i}) h(p,t,g,s) \right\},\\
    \mathcal{E}_9 &=  \left\{ \widehat{\lambda}_{\max}^{1} (\widehat{\Sigma}_{1,g}^{(t)} - \Sigma_1)  \leq \frac{c_4}{2} \lambda_{\max}^{1}({\Sigma_1}) h(p,t,g,1) \right\},
    \end{align}
where $\widehat{\lambda}_{\max}^s(\cdot)$ is defined in \eqref{conveigdef}, $\lambda_{\max}^s(\cdot)$ is defined in \eqref{firsteigsparse}, $c_4 = (c_0 / 2)\{ 3 + \log_2(8/\delta) / \log 2\}$, and $c_0$ is the constant from Lemma \ref{moenminimaxlemma2} depending only on $u>0$.  Then $\PP(\mathcal{E}_8\cap \mathcal{E}_9) \geq 1 - \delta$. 
\end{lemma}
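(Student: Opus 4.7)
The proof proposal closely parallels the argument used for Lemma \ref{covarianceeventbound3} in the proof of Theorem \ref{theorem6}, but is in fact simpler because $t$, $g$, and $s$ are now fixed rather than ranging over large index sets. The plan is therefore to control each of the three tail events $\{\widehat{\lambda}_{\max}^s(\widehat{\Sigma}_{1,g}^{(t)} - \Sigma_1) > (c_4/2)\lambda_{\max}^1(\Sigma_1) h(p,t,g,s)\}$, $\{\widehat{\lambda}_{\max}^s(\widehat{\Sigma}_{2,g}^{(t)} - \Sigma_2) > (c_4/2)\lambda_{\max}^1(\Sigma_2) h(p,t,g,s)\}$, and the analogous event appearing in $\mathcal{E}_9$ with $s$ replaced by $1$, each with probability at most $\delta/3$, and then to conclude by a union bound.

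First I would exploit the hypothesis $g \leq \tau \wedge (t-\tau)$ to observe that $\widehat{\Sigma}_{1,g}^{(t)}$ is an empirical average over i.i.d.\ observations with covariance $\Sigma_1$ (since $2^{\lfloor \log_2 g \rfloor} \leq g \leq \tau$), and $\widehat{\Sigma}_{2,g}^{(t)}$ over i.i.d.\ observations with covariance $\Sigma_2$. Next, I would apply Lemma \ref{moenminimaxlemma11} to get the deterministic bound $\widehat{\lambda}_{\max}^s(A) \leq s \| A \|_{\infty}$ for any symmetric $A$, reducing the three tail events to entry-wise control of $\widehat{\Sigma}_{i,g}^{(t)} - \Sigma_i$ for $i=1,2$.

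Then I would apply Bernstein's inequality entry by entry, exactly as in the proof of Lemma \ref{covarianceeventbound3}: Assumption \ref{assmultivariate2-c} combined with Lemma 2.7.7 in \citet{Vershynin_2018} yields $\lVert Y_l(j) Y_l(k) - \EE Y_l(j) Y_l(k) \rVert_{\Psi_1} \leq u \lambda_{\max}^1(\Sigma_i)$, so that each centred entry of $\widehat{\Sigma}_{i,g}^{(t)} - \Sigma_i$ concentrates at rate $\lambda_{\max}^1(\Sigma_i)(\sqrt{\log(p\vee t)/g} \vee \log(p\vee t)/g)$ with probability $1 - 2(p \vee t)^{-2c_4/c_0}$, using that $2^{\lfloor \log_2 g \rfloor} \geq g/2$. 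A union bound over the $p^2$ entries, followed by multiplication by $s$ (or by $1$ for the $\mathcal{E}_9$ event) then yields the desired bounds on the convex-relaxed sparse eigenvalues, provided that $c_4/c_0$ is chosen large enough that $3 \cdot 2 \cdot p^2 (p\vee t)^{-2c_4/c_0} \leq \delta$; this is guaranteed by the definition $c_4 = (c_0/2)\{3 + \log_2(8/\delta)/\log 2\}$, since then $2c_4/c_0 = 3 + \log_2(8/\delta)$ gives $p^2 (p \vee t)^{-2c_4/c_0} \leq p^{-1}(p\vee t)^{-\log_2(8/\delta)} \leq \delta/8$ for $p,t \geq 2$.

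I expect no substantial obstacle: the argument is essentially a single-index specialisation of the summation used in Lemma \ref{covarianceeventbound3}, and the bookkeeping of constants is dictated entirely by the definition of $c_4$. The only minor care needed is to handle the $s=1$ case separately so that the $s$-factor in the entry-wise bound is replaced by $1$ in the event $\mathcal{E}_9$, but this is automatic from Lemma \ref{moenminimaxlemma11}.
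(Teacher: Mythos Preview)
Your proposal is correct and follows essentially the same route as the paper's proof: reduce $\widehat{\lambda}_{\max}^s$ to an entrywise maximum via Lemma~\ref{moenminimaxlemma11}, apply Bernstein's inequality entry by entry using the sub-exponential bound from Assumption~\ref{assmultivariate2-c}, and conclude by a union bound over the $p^2$ entries with the constant $c_4$ chosen exactly so the resulting sum is at most $\delta$. The paper's proof is slightly terser (it invokes the computation already done in Lemma~\ref{covarianceeventbound3} rather than repeating it) and allocates the probability budget as $\delta/2$ rather than $\delta/3$ per event, but the argument is the same.
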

\begin{proof}[Proof.]
The proof of Lemma \ref{covarianceeventbound4} is very similar to the second part of the proof of Lemma \ref{covarianceeventbound3}. For any matrix $A = (a_{i,j})_{i\in [k_1], j \in [k_2]}$, define $\normm{A}_{\infty} = \max_{i \in [k_1], j \in [k_2]} |a_{i,j}|$ to be the largest absolute entry in $A$. 
Then 
Lemma \ref{moenminimaxlemma11} implies that
\begin{align}
    \widehat{\lambda}_{\max}^s ( {\widehat{\Sigma}_{i,g}^{(t)} - \Sigma_i} ) &\leq s\normm{{\widehat{\Sigma}_{i,g}^{(t)} - \Sigma_i}}_{\infty},
\end{align}
for $i=1,2$. 
Now, let $Y_{i,t,g,j,k}$ denote the $(j,k)$-th element of $\widehat{\Sigma}_{i,g}^{(t)} - \Sigma_i$, for $i = 1,2$, and $j,k \in [p]$, so that  
\begin{align}
    Y_{1,t,g, j,k} &= {2^{-\lfloor \log_2 g  \rfloor}} \sum_{l=1}^{2^{\lfloor \log_2 g  \rfloor}} \left\{ Y_l(j) Y_l(k) - \EE Y_l(j)Y_l(k)\right\},\\
    Y_{2,t,g, j,k} &= \frac{1}{g} \sum_{l=t-g+1}^{t} \left\{ Y_l(j) Y_l(k) - \EE Y_l(j)Y_l(k)\right\},\\
\end{align}
using that $g \leq \tau \wedge(t-\tau)$. 
By symmetry and a union bound, we have that
\begin{align}
    \PP(\mathcal{E}_{8}^\complement) 
    & \leq  2  \sum_{j,k \in [p]}  \PP \left\{ | Y_{1,t,g,j,k}| > \frac{c_4}{2s} \lambda_{\max}^1(\Sigma_1) h(p,t,g,s)    \right\},
    \intertext{and}
    \PP(\mathcal{E}_{9}^\complement) 
    & \leq   \sum_{j,k \in [p]}  \PP \left\{ | Y_{1,t,g,j,k}| > \frac{c_4}{2} \lambda_{\max}^1(\Sigma_1) h(p,t,g,1)    \right\}.
\end{align}
Arguing in a precisely similar fashion as in the proof of Theorem \ref{covarianceeventbound3}, one concludes that $\PP(\mathcal{E}_8^\complement \cup \mathcal{E}_9^\complement) \leq \delta/2 < \delta$, 
and the proof is complete. 
\end{proof}

The following Lemma is due to \citet{moen2023efficient}.
\begin{lemma}[\citealt{moen2023efficient}, Lemma F.1]\label{palemma1}
For any $a\geq 0$, define $\nu_a = \EE\left( Z^2 \ \mid \ |Z|\geq a\right)$ where $Z \sim \N(0,1)$. Then
$$
a^2 + 1 \leq \nu_a \leq a^2+2.
$$
\end{lemma}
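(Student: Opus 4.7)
The plan is to reduce $\nu_a$ to a clean closed-form expression and then invoke the classical Mills ratio inequalities. Let $\phi$ and $\Phi$ denote the density and CDF of the standard normal. By symmetry of $Z \sim \N(0,1)$, $\nu_a = \EE(Z^2 \mid Z \geq a)$ for $a \geq 0$, so I would first compute
\begin{align}
\EE(Z^2 \mid Z \geq a) = \frac{\int_a^\infty z^2 \phi(z)\,dz}{1 - \Phi(a)}.
\end{align}
Using the identity $z\phi(z) = -\phi'(z)$ and integration by parts, $\int_a^\infty z^2 \phi(z)\,dz = a\phi(a) + (1 - \Phi(a))$, and therefore
\begin{align}
\nu_a = 1 + \frac{a\phi(a)}{1 - \Phi(a)}.
\end{align}

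Given this, the lemma reduces to showing $a^2 \leq \frac{a\phi(a)}{1 - \Phi(a)} \leq a^2 + 1$ for $a \geq 0$. The case $a = 0$ is trivial since the middle quantity equals $0$. For $a > 0$ I would invoke the standard Mills ratio bounds
\begin{align}
\frac{a}{a^2 + 1}\phi(a) \leq 1 - \Phi(a) \leq \frac{\phi(a)}{a},
\end{align}
which can be verified by elementary calculus (both sides differ from $1 - \Phi(a)$ by functions whose derivatives keep them ordered on $(0,\infty)$, and both sides and $1-\Phi$ vanish at infinity). The upper Mills bound yields $\frac{a\phi(a)}{1-\Phi(a)} \geq a^2$, and the lower Mills bound yields $\frac{a\phi(a)}{1-\Phi(a)} \leq a^2 + 1$, which together give the claim.

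The argument is short and contains no real obstacle; the only care needed is verifying the Mills inequalities (if one does not want to quote them). This can be done in a few lines by showing that $f(a) = (a^2+1)(1-\Phi(a)) - a\phi(a)$ is nonnegative and $g(a) = \phi(a) - a(1-\Phi(a))$ is nonnegative via differentiating and using the limits at infinity, so no new machinery is required beyond standard Gaussian calculus.
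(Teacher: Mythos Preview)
Your proof is correct. The paper does not give its own proof of this lemma but simply cites it from \citet{moen2023efficient}; your derivation via the identity $\nu_a = 1 + a\phi(a)/(1-\Phi(a))$ together with the standard Mills ratio bounds is the natural self-contained argument, and each step checks out (in particular, your suggested verification of the Mills bounds via the monotonicity of $f$ and $g$ works exactly as you describe, since $g'(a)=-(1-\Phi(a))<0$ and $f'(a)=-2g(a)<0$ with both vanishing at infinity).
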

The following Lemma is due to \citet{liu_minimax_2021}. 
\begin{lemma}[\citealt{liu_minimax_2021}, Lemma 5]\label{lemmaliu}
Let $Z_i \iid \N(0,1)$ for $i \in [p]$, where $p \in \NN$. Let $a\geq 0$ and define $\nu_a = \EE\left( Z^2 \ \mid \ |Z|\geq a\right)$. Then for all $x>0$, 
$$
\PP \left[ \sum_{i=1}^p (Z_i^2 - \nu_a) \ind{\left( |Z_i| \geq a\right)} \geq 9 \left \{  \left( p e^{-a^2/2}x\right)^{1/2} + x \right\}\right] \leq e^{-x}.
$$
\end{lemma}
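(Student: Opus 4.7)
The plan is to prove this via a Bernstein-type bound for a sum of i.i.d.\ centered sub-exponential variables $W_i = (Z_i^2 - \nu_a)\ind(|Z_i|\geq a)$. First I would record that $\EE W_i = 0$: by the definition $\nu_a = \EE[Z^2 \mid |Z|\geq a]$, the indicator-weighted quadratic term is exactly compensated by $\nu_a \PP(|Z|\geq a)$. So we are dealing with a sum of i.i.d.\ centered random variables, and the right-hand form $9\{(pe^{-a^2/2}x)^{1/2} + x\}$ strongly suggests a Bernstein inequality with variance proxy $v \asymp p e^{-a^2/2}$ and range/sub-exponential parameter $K \asymp 1$.

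The core computation is therefore to bound $\log \EE e^{\lambda W_i}$ for $\lambda$ in a range of the form $|\lambda| \leq \lambda_0$ with $\lambda_0$ an absolute constant. Writing $q = \PP(|Z|\geq a)$, I would use
\[
\EE[e^{\lambda Z^2}\ind(|Z|\geq a)] = 2\int_a^\infty \frac{1}{\sqrt{2\pi}} e^{-(1-2\lambda)z^2/2}\,dz = (1-2\lambda)^{-1/2}\,\PP(|Z|\geq a\sqrt{1-2\lambda}),
\]
valid for $\lambda<1/2$, so that
\[
\EE e^{\lambda W_i} = 1-q + (1-2\lambda)^{-1/2} e^{-\lambda\nu_a}\,\PP(|Z|\geq a\sqrt{1-2\lambda}).
\]
The Mills-ratio estimate $\PP(|Z|\geq b)\leq e^{-b^2/2}$, together with Lemma \ref{palemma1} ($a^2+1\leq \nu_a\leq a^2+2$), should allow me to absorb the factors $(1-2\lambda)^{-1/2} e^{-\lambda\nu_a}$ against $\PP(|Z|\geq a\sqrt{1-2\lambda})/\PP(|Z|\geq a)$ and peel off a $\lambda^2$ contribution. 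The target is the bound
\[
\EE e^{\lambda W_i} \leq \exp\!\bigl(C_0 \lambda^2 e^{-a^2/2}\bigr) \qquad \text{for } |\lambda|\leq \lambda_0,
\]
with absolute constants $C_0$ and $\lambda_0$, which is the standard Bernstein MGF bound for a centered summand with variance proxy $e^{-a^2/2}$ and constant-order sub-exponential norm.

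Given this, the proof finishes by a Chernoff argument: $\PP(\sum_i W_i \geq t)\leq \exp(-\lambda t + p C_0 \lambda^2 e^{-a^2/2})$ for $|\lambda|\leq\lambda_0$, optimising in $\lambda$ over the two regimes $\lambda = t/(2 p C_0 e^{-a^2/2})$ and $\lambda = \lambda_0$. Setting $t = c_1\sqrt{p e^{-a^2/2}x} + c_2 x$ and tuning the constants so that the exponent is $\leq -x$ yields precisely the claimed form, and with a modest slack one can certainly bring both constants below $9$. The main technical obstacle is the MGF estimate for the truncated Gaussian quadratic $(Z^2-\nu_a)\ind(|Z|\geq a)$: one has to combine the Gaussian integral identity above with sharp Mills-ratio comparisons so that the indicator contributes the expected $e^{-a^2/2}$ variance factor cleanly, without picking up stray powers of $a$ that would inflate the bound. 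Once that MGF estimate is in hand, the rest is a routine Bernstein optimisation.
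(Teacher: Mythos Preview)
The paper does not give its own proof of this lemma: it is quoted verbatim as Lemma~5 of \citet{liu_minimax_2021} and used as a black box. There is therefore nothing in the present paper to compare your argument against.

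That said, your outline is the standard route and is essentially how the result is obtained in the cited reference. Each summand $W_i=(Z_i^2-\nu_a)\ind(|Z_i|\geq a)$ is centered, has variance of order $\PP(|Z|\geq a)\asymp e^{-a^2/2}$ (since the conditional law of $Z^2-\nu_a$ given $|Z|\geq a$ has bounded second moment uniformly in $a$), and has sub-exponential tails with an absolute-constant $\Psi_1$ norm. A Bernstein bound then yields exactly the two-term deviation $C\{(pe^{-a^2/2}x)^{1/2}+x\}$. Your identification of the MGF calculation as the only delicate step is accurate: the identity $\EE[e^{\lambda Z^2}\ind(|Z|\geq a)]=(1-2\lambda)^{-1/2}\PP(|Z|\geq a\sqrt{1-2\lambda})$ combined with $\nu_a\in[a^2+1,a^2+2]$ from Lemma~\ref{palemma1} does give $\EE e^{\lambda W_i}\leq 1+C_0\lambda^2 e^{-a^2/2}$ for $|\lambda|\leq\lambda_0$, after a short Taylor expansion in $\lambda$; no stray powers of $a$ survive because the $e^{-\lambda\nu_a}$ factor cancels the leading $e^{\lambda a^2}$ growth of $\PP(|Z|\geq a\sqrt{1-2\lambda})/\PP(|Z|\geq a)$. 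The remaining Chernoff optimisation is routine, and the constant $9$ is comfortable.
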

The following Lemma is due to \citet{moen2023efficient}
\begin{lemma}[\citealt{moen2023efficient}, Lemma F.3]\label{palemma2}
Let $Z_i \iid \N(0,1)$ for $i \in [p]$, where $p \in \NN$. Let $a\geq 1$ and define $\nu_a = \EE\left( Z_1^2 \ \mid \ |Z_1|\geq a\right)$. Then for all $x>0$, 
$$
\PP \left[  \sum_{i=1}^p (Z_i^2 - \nu_a) \ind{\left( |Z_i| \geq a\right)} \leq - 5 \left \{  \left( p e^{-a^2/2}x\right)^{1/2} + x \right\}\right] \leq e^{-x}.
$$
\end{lemma}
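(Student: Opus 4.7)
The plan is to recognise the random variables $W_i := (Z_i^2-\nu_a)\mathbbm{1}\{|Z_i|\geq a\}$ as i.i.d.\ centred summands (the centring $\EE W_i = 0$ is immediate from the definition $\nu_a = \EE[Z^2 \mid |Z|\geq a]$), so that $S := \sum_{i=1}^p W_i$ has mean zero and we are seeking a lower tail bound. The key advantage for the lower tail, which is why the constant $5$ suffices here where Lemma~\ref{lemmaliu} needed $9$, is that $W_i$ is bounded below: by Lemma~\ref{palemma1} we have $\nu_a \leq a^2+2$, so on $\{|Z_i|\geq a\}$, $W_i \geq a^2 - \nu_a \geq -2$, and $W_i = 0$ elsewhere, giving the deterministic bound $-W_i \leq 2$ a.s.

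The second step is a variance estimate. Using the standard Gaussian tail integrals
\begin{align}
\EE[Z^2 \mathbbm{1}\{|Z|\geq a\}] &= 2a\phi(a) + p_a, \\
\EE[Z^4 \mathbbm{1}\{|Z|\geq a\}] &= 2(a^3+3a)\phi(a) + 3p_a,
\end{align}
(where $\phi$ is the standard normal density and $p_a = \PP(|Z|\geq a)$), together with the identity $\nu_a - 1 = 2a\phi(a)/p_a$, a straightforward algebraic simplification gives $\Var(W_i) = p_a\bigl[\nu_a(a^2+3-\nu_a) - a^2\bigr]$. Using the refined Mills-ratio estimate $\nu_a = a^2 + 2 - O(1/a^2)$ (or a direct numerical check at small $a$ combined with the asymptotic for large $a$), the bracketed factor is bounded above by an absolute constant $C_0$, so $\Var(W_i) \leq C_0 p_a \leq C_0 e^{-a^2/2}$ uniformly in $a\geq 1$.

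With these two ingredients in hand, the third step is to apply the one-sided Bernstein inequality to the i.i.d.\ centred random variables $-W_i$, which satisfy $-W_i \leq 2$ a.s.\ and $\sum_i \Var(-W_i) \leq C_0 p e^{-a^2/2}$:
\begin{align}
\PP(-S \geq t) \leq \exp\!\left(-\frac{t^2/2}{C_0 p e^{-a^2/2} + 2t/3}\right).
\end{align}
Substituting $t = 5\{(pe^{-a^2/2} x)^{1/2} + x\}$ and verifying, via the elementary inequality $(u+v)^2 \geq u^2 + v^2$ applied to the two summands, that the right-hand side is bounded by $e^{-x}$ for an appropriate absolute constant $C_0$, concludes the argument.

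The main technical obstacle is the uniform-in-$a$ variance bound $\Var(W_i) \leq C_0 e^{-a^2/2}$. A naive expansion leaves a polynomial factor in $a$ because both $\EE[Z^4\mathbbm{1}\{|Z|\geq a\}]$ and $\nu_a^2 p_a$ are of order $a^3\phi(a)$; only after the two are subtracted and the second-order Mills-ratio expansion is invoked does the $a$-dependence cancel, leaving a bounded multiple of $e^{-a^2/2}$. Once that estimate is secured, the remainder of the argument is a direct application of a standard one-sided Bernstein bound.
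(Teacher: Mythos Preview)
The paper does not prove this lemma; it is quoted from \citet{moen2023efficient} without argument. Your approach is correct and is the natural one for the lower tail: the key structural advantage over Lemma~\ref{lemmaliu} is precisely the deterministic lower bound $W_i \geq -2$, which makes the one-sided Bernstein inequality available. Your variance identity $\Var(W_i)=p_a\bigl[\nu_a(a^2+3-\nu_a)-a^2\bigr]$ is correct; writing $c=\nu_a-a^2\in[1,2]$ this equals $p_a\bigl[(2-c)a^2+c(3-c)\bigr]$, and the second-order Mills expansion $2-c\sim 2/a^2$ is indeed needed to keep $(2-c)a^2$ bounded. A quick numerical check confirms $\sup_{a\ge 1}\Var(W_i)/e^{-a^2/2}\approx 1.5$ (attained near $a=1$), well under the threshold $C_0\le 12.5$ required for the constant $5$ to close the Bernstein substitution $t=5\{(pe^{-a^2/2}x)^{1/2}+x\}$. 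The sketch would benefit from making that last constant-tracking explicit, but the argument is sound.
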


The following Lemma is due to \citet{birge_alternative_2001}.
\begin{lemma}[\citealt{birge_alternative_2001}, Lemma 8.1]\label{birgelemma}
Let $Y \sim \chi_p^2(\Psi)$ have a non-central Chi Square distribution with $p$ degrees of freedom and non-centrality parameter $\Psi\geq0$. Then, for any $x>0$, we have that
\begin{align}
\PP\left[     Y\geq p + \Psi  + 2 \left\{x(p + 2\Psi)\right\}^{1/2} +2x      \right] &\leq e^{-x},
\intertext{and,}
\PP\left[     Y\leq p + \Psi  - 2\left\{x(p + 2\Psi)\right\}^{1/2}       \right] &\leq e^{-x},
\end{align}
\end{lemma}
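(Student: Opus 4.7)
The plan is to apply a Chernoff-type argument to the moment generating function (MGF) of the non-central chi-squared distribution. Writing $Y = \sum_{i=1}^p (Z_i + \mu_i)^2$ with $Z_i \iid \N(0,1)$ and $\sum_{i=1}^p \mu_i^2 = \Psi$, a routine calculation yields
\[
M_Y(\lambda) = \EE e^{\lambda Y} = (1-2\lambda)^{-p/2} \exp\left\{\frac{\lambda \Psi}{1-2\lambda}\right\}, \qquad \lambda < 1/2.
\]
Markov's inequality applied to $e^{\lambda Y}$ then gives, for every $\lambda \in (0, 1/2)$,
\[
\PP(Y \geq y) \leq \exp\left\{-\lambda y + \frac{\lambda \Psi}{1 - 2\lambda} - \frac{p}{2} \log(1-2\lambda)\right\}.
\]

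To handle the optimization cleanly, I would substitute $s = 1/(1-2\lambda) \in (1, \infty)$, so that $\lambda = (s-1)/(2s)$ and the exponent becomes
\[
\varphi(s) = -\frac{s-1}{2s}(y - \Psi s) - \frac{p}{2} \log s.
\]
Setting $\varphi'(s)=0$ yields the quadratic $\Psi s^2 + ps - y = 0$, whose positive root $s^\ast$ is the minimizer; substituting back, the claim reduces to the algebraic verification that $\varphi(s^\ast) \leq -x$ when $y = p + \Psi + 2\sqrt{x(p+2\Psi)} + 2x$. This step exploits the inequality $-\log s \leq -(s-1) + (s-1)^2/(2s)$ to upper-bound the logarithmic term by a quadratic, and then uses the defining relation $y = p s^\ast + \Psi (s^\ast)^2$ to collect terms.

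For the lower tail, a symmetric argument with $\lambda < 0$ (equivalently $s \in (0,1)$) applies; the MGF is finite for all negative $\lambda$, and the $2x$ correction is absent because the lower tail of a chi-squared distribution lacks the sub-exponential component present on the upper side. The main obstacle in both cases is the algebraic bookkeeping required to obtain the precise constant $2$ in front of $\sqrt{x(p+2\Psi)}$ and to ensure that the sub-Gaussian and sub-exponential contributions combine into exactly the stated form, rather than the looser bound $2\sqrt{xp} + 2\sqrt{2\Psi x} + 2x$ that would follow from a naive decomposition $Y = \sum Z_i^2 + 2\sum \mu_i Z_i + \Psi$ together with independent tail bounds for the two random parts (noting $\sqrt{a+b}\leq\sqrt{a}+\sqrt{b}$ goes the wrong way). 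This tight calculation is classical, carried out by Laurent and Massart (2000) and formalised by Birge (2001).
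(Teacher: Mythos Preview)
The paper does not prove this lemma at all; it simply cites it as \citet{birge_alternative_2001}, Lemma 8.1, and uses it as a black box. Your proposal sketches the standard Cram\'er--Chernoff argument via the non-central $\chi^2$ MGF, which is precisely the route taken in the original references (Laurent--Massart, Birg\'e), so there is nothing to compare against in this paper and your outline is correct.
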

The following Lemma is due to \citet{moen2024minimax}. 
\begin{lemma}[\citealt{moen2024minimax}, Lemma 1]\label{moenminimaxlemma1}
Let $Y_1, \ldots, Y_n$ be independent random variables, and assume that each $Y_i /\sigma$ has a continuous density bounded above by $w$ for $i=1, \ldots, n\in \NN$ and some $w>0$. Let $S = \sum_{i=1}^n Y_i^2$. Then for any $x>0$ we have
\begin{align}
    \PP\left(  S \leq \sigma^2 x\right) &\leq \exp \left[ \frac{n}{2} \left\{1 + \log\left(2\pi w^2\right) - \log\left(\frac{n}{x}\right) \right\} \right].
\end{align}
\end{lemma}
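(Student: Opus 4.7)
Lemma \ref{moenminimaxlemma1} is an anti-concentration bound on the lower tail of a sum of independent squared variables, relying only on an absolute upper bound $w$ on the density of each $Y_i/\sigma$. I would prove it via the standard Chernoff device applied in the lower tail, combined with a direct integration bound exploiting the density hypothesis, and finish with a one-parameter optimisation.

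First I would apply the exponential Markov inequality in the lower tail: for any $\lambda > 0$,
\begin{align}
\PP(S \leq \sigma^2 x) \;=\; \PP\!\left(e^{-\lambda S/\sigma^2} \geq e^{-\lambda x}\right) \;\leq\; e^{\lambda x}\, \EE\!\left[ e^{-\lambda S/\sigma^2} \right] \;=\; e^{\lambda x}\prod_{i=1}^n \EE\!\left[ e^{-\lambda Y_i^2/\sigma^2}\right],
\end{align}
the last step by independence. Letting $f_i$ denote the density of $Y_i/\sigma$, the density assumption $f_i \leq w$ gives
\begin{align}
\EE\!\left[ e^{-\lambda Y_i^2/\sigma^2}\right] \;=\; \int_{\RR} f_i(y)\, e^{-\lambda y^2}\, dy \;\leq\; w \int_{\RR} e^{-\lambda y^2}\, dy \;=\; w\sqrt{\pi/\lambda},
\end{align}
using the Gaussian integral identity. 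Combining these two displays yields $\PP(S \leq \sigma^2 x) \leq \exp\{\lambda x + (n/2)\log(\pi w^2/\lambda)\}$ for every $\lambda > 0$.

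Finally I would optimise the free parameter. The exponent $\lambda x + (n/2)\log(\pi w^2/\lambda)$ is convex in $\lambda$ with minimiser $\lambda_\star = n/(2x)$, at which it simplifies to $(n/2)\{1 + \log(2\pi w^2 x/n)\} = (n/2)\{1 + \log(2\pi w^2) - \log(n/x)\}$, matching the stated bound exactly. There is essentially no obstacle: the argument is a clean two-line Chernoff computation, and the only mild subtlety is recognising that the density upper bound reduces each per-sample Laplace transform to a single explicit Gaussian integral, so no tail or moment information about $Y_i$ beyond the density bound is ever needed.
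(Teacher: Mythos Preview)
Your proof is correct. The paper does not supply its own proof of this lemma but simply cites it from \citet{moen2024minimax}; your Chernoff argument with the density bound reducing each factor to a Gaussian integral is the standard route and yields the stated inequality exactly.
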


In the following, we let $\lambda_{\max}^s(A) = \sup_{v \in \mathbb{S}^{p-1}_s} \ | v^\top A v |$ denote the largest $s$-sparse eigenvalue of a symmetric matrix $A\in \RR^{p\times p}$ for any $s\in[p]$, as in \eqref{firsteigsparse}.  The following Lemmas are due to \citet{moen2024minimax}. 
\begin{lemma}[\citealt{moen2024minimax}, Lemma 2]\label{moenminimaxlemma2}
    Fix any $p\in \NN$ and $s\in [p]$, and let $Y_i$ be centred and independent $p$-dimensional sub-Gaussian random variables with $\EE Y_i Y_i^\top = \Sigma$, for $i=1, \ldots, n$ and some $\Sigma \in \RR^{p \times p}$.  Assume further that $\normmo{Y_i}^2 \leq u \normmop{\Sigma}$ for all $i$ and some $u>0$. Let $\widehat{\Sigma} = n^{-1} \sum_{i=1}^n Y_i Y_i^\top$. There exists a constant $c_0>0$ depending only on $u$, such that, for all $x\geq 1$, we have
    \begin{align}
        &\PP \left[\lambda_{\max}^s (\widehat{\Sigma}- \Sigma) \geq c_0 \normm{\Sigma}_{\mathrm{op}} \left\{   \sqrt{\frac{s\log(ep/s)}{n}}\vee \frac{s\log(ep/s)}{n} \vee \sqrt{\frac{x}{n}} \vee \frac{x}{n}  \right\}  \right] \\
        \leq & e^{-x}. \label{eqberthet}
    \end{align}
    Moreover, if $\normmo{v^\top Y_i}^2 \leq u ( v^\top \Sigma v)$ for any $v \in \mathbb{S}^{p-1}$, the factor $\normm{\Sigma}_{\mathrm{op}}$ on the left hand side of \eqref{eqberthet} can be replaced by $\lambda_{\max}^s ({\Sigma})$ as defined in \eqref{firsteigsparse}. 
\end{lemma}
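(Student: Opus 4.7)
The plan is to control $\lambda_{\max}^s(\widehat{\Sigma}-\Sigma)$ via its variational representation
$\lambda_{\max}^s(\widehat{\Sigma}-\Sigma)=\sup_{v\in\mathbb{S}^{p-1}_s} |v^\top(\widehat{\Sigma}-\Sigma)v|$
and to reduce the supremum to a discrete union bound via a covering argument. For each fixed $v\in\mathbb{S}^{p-1}_s$, the random variable
$v^\top(\widehat{\Sigma}-\Sigma)v = n^{-1}\sum_{i=1}^n\{(v^\top Y_i)^2-\EE(v^\top Y_i)^2\}$
is an average of independent centred sub-exponential variables: indeed $v^\top Y_i$ is sub-Gaussian with $\normmo{v^\top Y_i}^2 \leq \normmo{Y_i}^2 \leq u\normmop{\Sigma}$, so by Lemma 2.7.7 of \citet{Vershynin_2018} each summand has sub-exponential norm bounded by a constant multiple of $u\normmop{\Sigma}$. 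Bernstein's inequality then yields, for every $y>0$,
\begin{align}
\PP\left\{|v^\top(\widehat{\Sigma}-\Sigma)v|\geq c\, u\normmop{\Sigma}\left(\sqrt{y/n}\vee y/n\right)\right\}\leq 2e^{-y},
\end{align}
for some absolute constant $c>0$.

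To promote this pointwise bound to a uniform bound on $\mathbb{S}^{p-1}_s$, I would build an $\epsilon$-net $\mathcal{N}_s\subseteq \mathbb{S}^{p-1}_s$ for the sparse sphere with $\epsilon=1/4$. Decomposing over supports $S\subseteq[p]$ of size $s$, each $s$-dimensional unit ball admits a $(1/4)$-net of cardinality at most $9^s$, giving $|\mathcal{N}_s|\leq \binom{p}{s}9^s\leq (9ep/s)^s$. A standard approximation inequality for quadratic forms then shows
$\sup_{v\in\mathbb{S}^{p-1}_s}|v^\top A v| \leq 2\sup_{v\in\mathcal{N}_s}|v^\top A v|$
for any symmetric $A$ (by writing $v=v_0+(v-v_0)$ with $v_0\in\mathcal{N}_s$ nearest, noting that $v-v_0$ is $2s$-sparse and has norm $\leq 1/4$, and iterating).

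Combining the Bernstein bound with a union bound over $\mathcal{N}_s$, with $y= x + s\log(9ep/s)+\log 2$, gives that with probability at least $1-e^{-x}$,
\begin{align}
\sup_{v\in\mathcal{N}_s}|v^\top(\widehat{\Sigma}-\Sigma)v| \leq c\,u\normmop{\Sigma}\left\{\sqrt{\frac{s\log(ep/s)+x}{n}}\vee \frac{s\log(ep/s)+x}{n}\right\},
\end{align}
absorbing the logarithmic constants into $c$. Splitting the square-root and linear terms, bounding the doubling factor from the net, and collecting all absolute constants into a single $c_0$ depending only on $u$ delivers the claimed four-term maximum and proves the first assertion.

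For the refinement under the stronger assumption $\normmo{v^\top Y_i}^2\leq u(v^\top\Sigma v)$, the single modification is to use this sharper sub-Gaussian norm bound inside Bernstein's inequality, which replaces the $\normmop{\Sigma}$ factor by $v^\top\Sigma v$; since this is then uniformly bounded by $\lambda_{\max}^s(\Sigma)$ over all $v\in\mathbb{S}^{p-1}_s$ (indeed, over all $v$ in the $2s$-sparse sphere used in the net argument, which is absorbed by adjusting the constant), the same covering and union-bound argument yields the improved bound with $\normmop{\Sigma}$ replaced by $\lambda_{\max}^s(\Sigma)$. The main technical obstacle I anticipate is carefully verifying that the net-approximation step for $s$-sparse vectors truly stays within $2s$-sparse vectors (so the sparse eigenvalue bound on $v^\top\Sigma v$ continues to apply with the right sparsity up to a constant), and tracking constants so that they depend only on $u$; neither is deep but both require care.
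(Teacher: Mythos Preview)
The paper does not give its own proof of this lemma; it is quoted from \citet{moen2024minimax} and simply cited as an auxiliary result. Your covering-plus-Bernstein argument is the standard route to such sparse-eigenvalue concentration bounds (essentially the argument behind the label \texttt{eqberthet}, pointing to the Berthet--Rigollet line of results), and it is correct in outline. One small simplification: if you build the $\epsilon$-net support-by-support, then for any $v\in\mathbb{S}^{p-1}_s$ the nearest net point $v_0$ shares the \emph{same} support as $v$, so $v-v_0$ is $s$-sparse, not $2s$-sparse. This removes the technical concern you flag at the end, and in particular makes the refinement step clean: $v^\top\Sigma v\leq\lambda_{\max}^s(\Sigma)$ holds for every vector appearing in the approximation, so no constant-inflation from doubling the sparsity is needed.
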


\begin{lemma}[\citealt{moen2024minimax}, Lemma 11]\label{moenminimaxlemma11}
    Let $A = (a_{i,j})_{i,j \in [p]} \in \RR^{p\times p}$ be a symmetric matrix, and define $\normm{A}_{\infty} = \max_{i, j \in [p]} |a_{i,j}|$. Then we have
    \begin{align}
      \underset{\substack{Z \in N(p,s) }}{\sup} \ \mathrm{Tr}(AZ)  &\leq \underset{Y \in \mathrm{Sym}(p)}{\inf} \ \underset{\substack{Z \in \PSD(p)\\ \Tr(Z)=1}}{\sup} \Tr Z(A+Y) + s \normm{Y}_{\infty}\\
    &\leq  \underset{Y \in \mathrm{Sym}(p)}{\inf} \ \lambda_{\max}^p (A+Y) + s \normm{Y}_{\infty},
\end{align}
where $ N(p,s) = \{Z \in \PSD(p) \ ; \ \mathrm{Tr}(Z)=1, \normm{Z}_1\leq s\}$ and $\lambda_{\max}^p(\cdot)$ is defined in \eqref{firsteigsparse}. 
\end{lemma}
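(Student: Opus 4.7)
The plan is to establish both inequalities by a Lagrangian/duality-style argument, exploiting the fact that the constraint $\|Z\|_1 \leq s$ in the definition of $N(p,s)$ can be ``priced out'' using an entrywise Hölder inequality.

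First I would derive the pointwise identity $\Tr(AZ) = \Tr\!\big((A+Y)Z\big) - \Tr(YZ)$ for any symmetric $Y$ and any feasible $Z \in N(p,s)$. The key observation is then that the error term $\Tr(YZ) = \sum_{i,j} Y_{ij}Z_{ij}$ is an entrywise inner product, so by Hölder's inequality
\begin{align}
|\Tr(YZ)| \leq \|Y\|_{\infty}\,\|Z\|_1 \leq s\,\|Y\|_{\infty},
\end{align}
where the second inequality uses the defining constraint $\|Z\|_1 \leq s$ of $N(p,s)$. This yields $\Tr(AZ) \leq \Tr((A+Y)Z) + s\|Y\|_{\infty}$ uniformly in $Z\in N(p,s)$.

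Next I would take the supremum over $Z\in N(p,s)$, using the inclusion $N(p,s) \subseteq \{Z \in \PSD(p) : \Tr(Z)=1\}$ to enlarge the feasible region (this is where the constraint $\|Z\|_1 \leq s$ is dropped, having already been absorbed into the additive penalty $s\|Y\|_{\infty}$). Since the resulting upper bound holds for every $Y \in \mathrm{Sym}(p)$, taking the infimum over $Y$ produces the first stated inequality.

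For the second inequality I would use the standard SDP characterization of the maximum eigenvalue: for any symmetric matrix $M$, the supremum of $\Tr(MZ)$ over $Z \succeq 0$ with $\Tr(Z)=1$ is attained at a rank-one matrix $Z = vv^\top$ with $\|v\|_2 = 1$, and equals $\lambda_{\max}(M) = \sup_{v \in \mathbb{S}^{p-1}} v^\top M v$. Applying this with $M = A+Y$ and noting that $\lambda_{\max}(M) \leq \sup_{v \in \mathbb{S}^{p-1}} |v^\top M v| = \lambda_{\max}^p(M)$ by definition \eqref{firsteigsparse}, then taking the infimum over $Y$, gives the second inequality. There is no substantive obstacle here — the argument is essentially a one-line Lagrangian relaxation followed by an entrywise Hölder bound; the only thing to be careful about is that $\lambda_{\max}^p$ involves an absolute value while the SDP relaxation yields the signed top eigenvalue, but the inequality goes in the desired direction.
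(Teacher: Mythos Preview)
Your argument is correct and is the standard Lagrangian-relaxation / weak-duality proof of this bound. Note, however, that the present paper does not actually prove this lemma: it is quoted verbatim from \citet{moen2024minimax} (Lemma 11) without proof, so there is no ``paper's own proof'' to compare against. Your approach --- pricing out the $\ell_1$ constraint via the entrywise H\"older inequality $|\Tr(YZ)|\leq \|Y\|_\infty\|Z\|_1\leq s\|Y\|_\infty$, then enlarging the feasible set and invoking the SDP characterisation of the top eigenvalue --- is exactly the natural argument and is almost certainly what the cited source does as well.
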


\begin{lemma}[\citealt{moen2024minimax}, Lemma 12]\label{tullelemma}
Let $\lambda_{\max}^s(\cdot)$ be defined as in \eqref{firsteigsparse} and let $A\in \RR^{p\times p}$ be a symmetric positive definite matrix. Then $\lambda_{\max}^s(A)$ is non-decreasing in $s$, and for any $s_0/2 \leq s\leq s_0\leq p $ we have $\lambda_{\max}^{s_0}(A) \leq 4 \lambda_{\max}^{s}(A)$, where $\lambda_{\max}^{s}(\cdot)$ is defined in \eqref{firsteigsparse}. Moreover, if $A_1, A_2 \in \RR^{p\times p}$ are two symmetric positive definite matrices, then $\lambda_{\max}^s(A_1  - A_2) \leq \lambda_{\max}^s(A_1) \vee \lambda_{\max}^s (A_2)$ for any $s \in [p]$.
\end{lemma}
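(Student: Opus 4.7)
The plan is to establish the three conclusions of Theorem~\ref{theorem6} (computational costs, false alarm control, and the detection delay bound) in the same order as for Theorem~\ref{theorem5}, exploiting the parallel structure but substituting the approximated sparse eigenvalue $\widehat{\lambda}_{\max}^{s}(\cdot)$ for the operator norm throughout. For the computational costs, I would verify Assumptions~\ref{ass1}--\ref{ass3} with $G^{(t)}=G^{(t)}_{\mathrm{dyn}}$ and summary statistic consisting of the cumulative second-moment sums needed to form $\widehat{\Sigma}_{1,g}^{(t)}$ and $\widehat{\Sigma}_{2,g}^{(t)}$, exactly as in the proof of Theorem~\ref{theorem5}. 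The only difference is that evaluating $T^{(t)}_g$ now requires solving, for each $s\in\mathcal{S}$, a convex semidefinite program defining $\widehat{\lambda}_{\max}^{s}(\cdot)$; by invoking first-order methods \citep[e.g.][]{bach2010convex} this costs $\mathcal{O}(p^{a_0})$ for some $a_0\in\NN$, and since $|\mathcal{S}|=\mathcal{O}(\log p)$ the update cost is $\mathcal{O}(p^{a_0+1}\log t)$ while the storage cost is $\mathcal{O}(p^2\log t)$, giving the claim with $a=a_0+1$.

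For false alarm control, I would assume $\tau=\infty$ and invoke Lemma~\ref{covarianceeventbound3} to obtain that the intersection $\mathcal{E}_6\cap\mathcal{E}_7$ holds with probability at least $1-\delta$. On this intersection, dividing the upper bound on $\widehat{\lambda}_{\max}^{s}(\widehat{\Sigma}_{1,g}^{(t)}-\widehat{\Sigma}_{2,g}^{(t)})$ by the lower bound on $(\widehat{\sigma}^{(t)}_g)^2$ yields $(c_4/c_3)h(p,t,g,s)$ simultaneously for every $t\geq 2$, $g\leq t/2$, $s\in\mathcal{S}$, matching the form of $\xi^{(t)}_{g,s}$ in \eqref{xisparsecovar}. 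Choosing $C_1=c_4/c_3+1$ (depending only on $\delta,w,u$) and $\lambda\geq C_1$ then forces every $T^{(t)}_g$ to vanish on $\mathcal{E}_6\cap\mathcal{E}_7$, so $\tauhat=\infty$ there and $\mathrm{FA}(\tauhat)\leq\delta$.

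For the detection delay, I would set $d=\lceil C_2 k^2\log(p\vee e\tau)/\omega_k^2\rceil$ with $C_2=32\{c_4+\lambda(c_4/2+1)\}^2+1$ and $t=\tau+d$. Under the signal strength assumption, $d\leq \tau$ and, using Lemma~\ref{gridlemma}, there exists $g_0\in G^{(t)}$ with $d/2\leq g_0\leq d\leq t/2$. Since $\omega_k\leq 1$ by Lemma~\ref{tullelemma}, the lower bound on $d$ forces $g_0\geq p\vee \log t$, so $h(p,t,g_0,s_0)$ collapses to $s_0 g_0^{-1/2}\{\log(p\vee t)\}^{1/2}$. Applying Lemma~\ref{covarianceeventbound4} with $s_0\in\mathcal{S}$ chosen so that $k/2\leq s_0\leq k$ (this is why $\mathcal{S}$ is dyadic), I would obtain $\mathcal{E}_8\cap\mathcal{E}_9$ with probability at least $1-\delta$. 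On that event the triangle inequality for $\widehat{\lambda}_{\max}^{1}(\cdot)$ combined with Lemma~\ref{moenminimaxlemma11} (noting that the largest absolute entry of $\Sigma_1$ sits on the diagonal and therefore coincides with $\lambda_{\max}^{1}(\Sigma_1)$) yields $(\widehat{\sigma}^{(t)}_{g_0})^2\leq (c_4/2+1)\lambda_{\max}^{1}(\Sigma_1)$, while the reverse triangle inequality for $\widehat{\lambda}_{\max}^{s_0}(\cdot)$ combined with the relaxation bound $\widehat{\lambda}_{\max}^{s_0}\geq \lambda_{\max}^{s_0}$ and Lemma~\ref{tullelemma} (to pass from $\lambda_{\max}^{s_0}$ to $\lambda_{\max}^{k}/4$) gives a lower bound on $\widehat{\lambda}_{\max}^{s_0}(\widehat{\Sigma}_{1,g_0}^{(t)}-\widehat{\Sigma}_{2,g_0}^{(t)})$ proportional to $\lambda_{\max}^{k}(\Sigma_1-\Sigma_2)$ minus a noise term of order $k\sqrt{\log(p\vee e\tau)/d}$.

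The main obstacle is the final algebraic manipulation: combining these two bounds to express the rejection margin in units of $\omega_k$ requires carefully bounding $\lambda_{\max}^{1}(\Sigma_i)\leq \lambda_{\max}^{k}(\Sigma_i)$ (trivially), absorbing the $(c_4/2+1)$ factor from $(\widehat{\sigma}^{(t)}_{g_0})^2$ into a redefined numerical constant, and using $t\leq 2\tau<e\tau$ together with $g_0\geq d/2$ to replace $\log(p\vee t)/g_0$ by $2\log(p\vee e\tau)/d$. Factoring out $\lambda_{\max}^{k}(\Sigma_1)\vee\lambda_{\max}^{k}(\Sigma_2)$ then reduces the sign of the rejection margin to the inequality $\omega_k/4>\sqrt{2}\{c_4+\lambda(c_4/2+1)\}k\sqrt{\log(p\vee e\tau)/d}$, which holds by the definition of $d$ and $C_2$. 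Hence $T^{(t)}_{g_0}=1$ on $\mathcal{E}_8\cap\mathcal{E}_9$, giving $\tauhat\leq \tau+d$ with probability at least $1-\delta$.
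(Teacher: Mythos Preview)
Your proposal does not address the stated lemma at all. The statement you were asked to prove is Lemma~\ref{tullelemma}, which concerns elementary properties of the sparse eigenvalue functional $\lambda_{\max}^{s}(\cdot)$: that it is non-decreasing in $s$, that $\lambda_{\max}^{s_0}(A)\leq 4\lambda_{\max}^{s}(A)$ whenever $s_0/2\leq s\leq s_0$, and that $\lambda_{\max}^{s}(A_1-A_2)\leq \lambda_{\max}^{s}(A_1)\vee\lambda_{\max}^{s}(A_2)$ for positive definite $A_1,A_2$. Instead, you have written a full proof sketch for Theorem~\ref{theorem6}, the detection-delay result for the sparse covariance change method. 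These are entirely different statements; Lemma~\ref{tullelemma} is one of the ingredients used inside the proof of Theorem~\ref{theorem6}, not the other way around.

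For the record, the paper does not supply its own proof of Lemma~\ref{tullelemma}: it is quoted verbatim as Lemma~12 of \citet{moen2024minimax} and left unproved. A direct argument is short. Monotonicity in $s$ is immediate since $\mathbb{S}^{p-1}_{s}\subseteq\mathbb{S}^{p-1}_{s_0}$ for $s\leq s_0$. For the factor-$4$ bound, take an $s_0$-sparse unit vector $v$ achieving $\lambda_{\max}^{s_0}(A)$, split its support into two disjoint pieces of size at most $\lceil s_0/2\rceil\leq s$, write $v=v_1+v_2$ accordingly, and expand $|v^\top Av|\leq \sum_{i,j}|v_i^\top Av_j|$; each of the four terms is bounded by $\lambda_{\max}^{s}(A)$ (using Cauchy--Schwarz for the cross terms and the fact that $\|v_i\|_2\leq 1$). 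For the last claim, since $A_1,A_2\succ 0$ we have, for any $s$-sparse unit $v$, that $|v^\top(A_1-A_2)v|\leq\max(v^\top A_1 v,\,v^\top A_2 v)\leq \lambda_{\max}^{s}(A_1)\vee\lambda_{\max}^{s}(A_2)$. None of this appears in your proposal.
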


\bibliography{reference}

\end{document}